\theoremstyle{plain}
\newtheorem{theor}{Theorem}[section]
\newtheorem{prop}[theor]{Proposition}
\newtheorem{lem}[theor]{Lemma}
\newtheorem{defin}[theor]{Definition}
\newtheorem{rem}[theor]{Remark}
\numberwithin{equation}{section}
\newcommand{\ind}{\mathds1}
\renewcommand{\>}{\rangle}
\newcommand{\<}{\langle}
\newcommand{\Span}{\operatorname{span}}
\newcommand{\conv}{\operatorname{conv}}
\newcommand{\dist}{\operatorname{dist}}
\newcommand{\aff}{\operatorname{aff}}
\newcommand{\V}{{\hat{ \mathcal{V}}}}
\newcommand{\Vc}{{\mathcal{V}}}
\newcommand{\Dd}{\mathbb D}
\newcommand{\N}{\mathbb N}
\newcommand{\R}{\mathbb R}
\newcommand{\Sp}{\mathbb S}
\newcommand{\Z}{\mathbb Z}
\newcommand{\T}{\mathbb T}
\newcommand{\e}{\varepsilon}
\newcommand{\Pc}{\mathcal{P}}
\newcommand{\Lc}{\mathcal{L}}
\newcommand{\Hc}{\mathcal H}
\newcommand{\Gc}{\mathcal G}
\newcommand{\Ic}{\mathcal I}
\newcommand{\calH}{\mathcal H}
\newcommand{\loc}{{\operatorname{loc}}}
\newcommand{\Id}{\operatorname{Id}}
\newcommand{\Ld}{L}
\newcommand{\Div}{{\operatorname{div}}}
\newcommand{\3}{\operatorname{|\hspace{-0.4mm}|\hspace{-0.4mm}|}}
\newcommand{\BOX}{{\tiny\begin{tikzpicture}[baseline={([yshift=-1ex]current bounding box.center)},scale=0.8]
\begin{scope}[every node/.style={circle,draw,fill=white,inner sep=0pt,minimum size=3pt}]
\node (1) at (0,0) {};
\node (2) at (0.5,0) {};
\end{scope}
\draw (1) -- (2);
\draw (1) -- (0,0.3) -- (0.5,0.3) -- (2);
\end{tikzpicture}}}
\newcommand{\BOXd}{{\tiny\begin{tikzpicture}[baseline={([yshift=-1ex]current bounding box.center)},scale=0.8]
\begin{scope}[every node/.style={circle,draw,fill=white,inner sep=0pt,minimum size=3pt}]
\node (1) at (0,0) {};
\node (2) at (0.5,0) {};
\end{scope}
\draw (1) -- (2);
\draw (1) -- (0,0.25) -- (0.5,0.25) -- (2);
\draw (0,0.5) -- (0.5,0.5);
\end{tikzpicture}}}
\newcommand{\BOXu}{{\tiny\begin{tikzpicture}[baseline={([yshift=-1ex]current bounding box.center)},scale=0.8]
\begin{scope}[every node/.style={circle,draw,fill=white,inner sep=0pt,minimum size=3pt}]
\node (1) at (0,0.5) {};
\node (2) at (0.5,0.5) {};
\end{scope}
\draw (1) -- (2);
\draw (1) -- (0,0.25) -- (0.5,0.25) -- (2);
\draw (0,0) -- (0.5,0);
\end{tikzpicture}}}
\newcommand{\WIG}{{\tiny\begin{tikzpicture}[baseline={([yshift=-.8ex]current bounding box.center)},scale=0.8]
\draw[decorate,decoration={snake,amplitude=1pt,segment length=3pt}] (0,0) -- (0.8,0);
\draw[decorate,decoration={snake,amplitude=1pt,segment length=3pt}] (0,0.3) -- (0.8,0.3);
\end{tikzpicture}}}
\newcommand{\WIGdot}{{\tiny\begin{tikzpicture}[baseline={([yshift=-.8ex]current bounding box.center)},scale=0.8]
\draw[decorate,decoration={snake,amplitude=1pt,segment length=3pt}] (0,0) -- (0.8,0);
\draw[decorate,decoration={snake,amplitude=1pt,segment length=3pt}] (0,0.45) -- (0.8,0.45);
\draw[dotted] (0.4,0.1) -- (0.4,0.4);
\end{tikzpicture}}}
\newcommand{\WIGdoT}[1]{{\tiny\begin{tikzpicture}[baseline={([yshift=-2.2ex]current bounding box.center)},scale=0.8]
\draw[decorate,decoration={snake,amplitude=1pt,segment length=3pt}] (0,0) -- (0.8,0);
\draw[decorate,decoration={snake,amplitude=1pt,segment length=3pt}] (0,0.45) -- (0.8,0.45);
\draw[dotted] (0.4,0.1) -- (0.4,0.4);
\node at (0.4,0.7) {$#1$};
\end{tikzpicture}}}
\newcommand{\WIGbox}{{\tiny\begin{tikzpicture}[baseline={([yshift=-1ex]current bounding box.center)},scale=0.8]
\begin{scope}[every node/.style={circle,draw,fill=white,inner sep=0pt,minimum size=3pt}]
\node (1) at (0,0) {};
\node (2) at (0.5,0) {};
\end{scope}
\draw[decorate,decoration={snake,amplitude=1pt,segment length=3pt}] (1) -- (2);
\draw[decorate,decoration={snake,amplitude=1pt,segment length=3pt}] (0,0.3) -- (0.5,0.3);
\draw (1) -- (0,0.3);
\draw (0.5,0.3) -- (2);
\end{tikzpicture}}}
\newcommand{\WIGBOX}{{\tiny\begin{tikzpicture}[baseline={([yshift=-1ex]current bounding box.center)},scale=0.8]
\begin{scope}[every node/.style={circle,draw,fill=white,inner sep=0pt,minimum size=3pt}]
\node (1) at (0.5,0.5) {};
\node (2) at (1,0.5) {};
\end{scope}
\draw [decorate,decoration={snake,amplitude=1pt,segment length=3pt}] (1) -- (2);
\draw [decorate,decoration={snake,amplitude=1pt,segment length=3pt}](0.5,0.75) -- (1,0.75);
\draw (1) -- (0.5,0.75);
\draw (2) -- (1,0.75);
\draw [decorate,decoration={snake,amplitude=1pt,segment length=3pt}](0.5,0.05) -- (1,0.05);
\draw[dotted] (0.75,0.15) -- (0.75,0.45);
\end{tikzpicture}}}
\newcommand{\BOXdot}{{\tiny\begin{tikzpicture}[baseline={([yshift=-1ex]current bounding box.center)},scale=0.8]
\begin{scope}[every node/.style={circle,draw,fill=white,inner sep=0pt,minimum size=3pt}]
\node (1) at (0.5,0.5) {};
\node (2) at (1,0.5) {};
\end{scope}
\draw (1) -- (2);
\draw (0.5,0.75) -- (1,0.75);
\draw (1) -- (0.5,0.75);
\draw (2) -- (1,0.75);
\draw (0.5,0.05) -- (1,0.05);
\draw[dotted] (0.75,0.15) -- (0.75,0.45);
\end{tikzpicture}}}
\title[Lenard-Balescu thermalization: rigorous derivation from a toy model]{Lenard-Balescu thermalization:\\rigorous derivation from a toy model}
\author[M. Duerinckx]{Mitia Duerinckx}
\address[Mitia Duerinckx]{Universit\'e Libre de Bruxelles, D\'epartement de Math\'ematique, Brussels, Belgium}
\email{mitia.duerinckx@ulb.be}
\author[C. Le Bihan]{Corentin Le Bihan}
\address[Corentin Le Bihan]{Universit\'e Libre de Bruxelles, D\'epartement de Math\'ematique, Brussels, Belgium}
\email{corentin.le.bihan@ulb.be}
\begin{document}

\begin{abstract}
We study the long-time dynamics of a tagged particle coupled to a background of $N$ other particles, all interacting through long-range pairwise forces in the mean-field scaling, with the background initially at thermal equilibrium. Starting from the $N$-particle BBGKY hierarchy, we introduce a simplified (truncated) hierarchical model and show, in sufficiently large spatial dimension, that the tagged-particle density converges, on timescales $t\sim N$, to the solution of a linear Fokker-Planck equation, viewed as the linearization of the Landau equation. This provides, in a simplified setting, a rigorous derivation of the slow thermalization predicted by Lenard-Balescu theory.

Our approach relies on a rigorous Dyson expansion in terms of Feynman diagrams and on a novel renormalization scheme that removes leading recollisions. The main technical challenge is to control the effect of phase-space filamentation within the diagrams, which we achieve by combining phase mixing and hypoelliptic regularity. Although restricted to a simplified model, our analysis offers new insight into Lenard-Balescu thermalization: notably, the renormalization appears to transform free propagators into hypoelliptic ones, providing a key mechanism that compensates for filamentation.
\end{abstract}

\maketitle

\setcounter{tocdepth}{1}
\tableofcontents
\allowdisplaybreaks

\section{Introduction}

\subsection{General overview}
Consider the dynamics of a tagged particle (labeled `$0$') in a system of $N+1$ interacting particles in the $d$-dimensional box $\T^d=[-\frac12,\frac12]^d$ with periodic boundary conditions, as described by Newton's equations
\begin{equation*}
\left\{\begin{array}{l}
\tfrac{d}{dt}X_{j}=V_j,\\[2mm]
\tfrac{d}{dt}V_{j}=-\frac1N\sum_{0\le l\le N}\nabla\Vc(X_j-X_l),\qquad0\le j\le N,\\[2mm]
(X_{j},V_j)|_{t=0}=(X_j^\circ,V_j^\circ),
\end{array}\right.
\end{equation*}
where $\{Z_j:=(X_j,V_j)\}_{0\le j\le N}$ is the set of positions and velocities of the particles in the phase space $\T^d\times\R^d$, where $\Vc:\T^d\to\R$ is a (long-range) interaction potential, and where the mean-field scaling is considered. For simplicity, we focus throughout this work on the case of a smooth potential~$\Vc\in C^\infty_c$.
In terms of a probability density $F_N$ on the $N$-particle phase space $(\T^d\times\R^d)^N$, this system of ODEs for trajectories leads formally to the Liouville equation
\begin{equation}\label{eq:Liouville}
\left\{\begin{array}{l}
\partial_tF_N+\sum_{0\le j\le N}v_j\cdot\nabla_{x_j}F_N\,=\,\frac1N\sum_{0\le j, l\le N}\nabla\Vc(x_j-x_l)\cdot\nabla_{v_j}F_N,\\[1mm]
F_N|_{t=0}\,=\,F_N^\circ.
\end{array}\right.
\end{equation}
For this system, we aim to study one of the predictions of the Lenard-Balescu theory, that is, the slow thermalization of the tagged particle `$0$' when the background particles are initially at thermal equilibrium. For simplicity, we assume that the tagged particle has initially spatially-homogeneous distribution: this ensures that the mean-field force on the tagged particle vanishes, so that thermalization becomes the leading effect. In other words, we choose initially
\[F_N^\circ(z_0,\ldots,z_N)\,=\,f^\circ(v_0)M_{N}(z_1,\ldots,z_N),\qquad z_j=(x_j,v_j),\]
where $f^\circ\in \Pc(\R^d)$ is the initial velocity density of the tagged particle
and where $M_{N}$ stands for the Gibbs equilibrium
\[M_{N}(z_1,\ldots,z_N)\,:=\,Z_{N}^{-1}\exp\bigg(-\frac\beta2\sum_{1\le j\le N}|v_j|^2-\frac\beta{2N}\sum_{1\le j,l\le N\atop j\ne l}\Vc(x_j-x_l)\bigg),\]
for some fixed inverse temperature $\beta\in(0,\infty)$ and normalization factor $Z_{N}$.
We emphasize that background particles are exchangeable
and that the system is invariant under spatial translations.
In this setting, due to interactions with the background, the Lenard-Balescu theory predicts that the tagged particle should slowly thermalize on timescales $t\gg N$ and progressively acquire a Maxwellian velocity distribution as the background itself,
\[M(v)\,:=\,(\tfrac\beta{2\pi})^\frac d2e^{-\frac\beta2|v|^2}.\]
More precisely, focussing on the velocity distribution of the tagged particle
\[f_{N,0}(t,v)\,:=\,\int_{\T^d\times(\T^d\times\R^d)^N}F_{N}(t,x,v,z_1,\ldots,z_N)\,dxdz_1\ldots dz_N,\]
it is predicted in~\cite{Piasecki-81,PS-87,DSR-21} (see also~\cite{RR-60,TGM-64} for the corresponding test particle problem without back-reaction) that the time-rescaled density $f_{N,0}(N\tau,v)$ should converge as $N\uparrow\infty$ to the solution~$f(\tau,v)$ of the linearized Lenard-Balescu equation at Maxwellian equilibrium (without loss term), which takes form of the following Fokker-Planck equation,
\begin{equation}\label{eq:FP-exp}
\left\{\begin{array}{l}
\partial_\tau f=\Div_v(A (\nabla+\beta v)f),\\
f|_{\tau=0}=f^\circ,
\end{array}\right.
\end{equation}
with diffusion tensor field $A$ given by the (periodic) Lenard-Balescu formula
\begin{eqnarray}
A(v)&:=&\int_{\R^d}B(v,v-v_*)M(v_*)\,dv_*,\label{eq:coeff-LB}\\
B(v,w)&:=&\sum_{k\in2\pi\Z^d}(k\otimes k)\,\pi\V(k)^2\frac{\delta(k\cdot w)}{|\e(k,k\cdot v)|^2},\nonumber\\
\e(k,k\cdot v)&:=&1+\V(k)\int_{\R^d}\frac{k\cdot \nabla M(v_*)}{k\cdot(v-v_*)-i0}\,dv_*.\nonumber
\end{eqnarray}
In this formula, the dispersion function $\e$ modulating the collision kernel $B$ accounts for collective screening effects at equilibrium.
This Fokker-Planck equation~\eqref{eq:FP-exp} quantifies the thermalization
\[f(\tau,v)\to M(v)\qquad\text{as $\tau=N^{-1}t\uparrow\infty$.}\]
A rigorous justification is still beyond reach at the moment, and we refer to~\cite{VW-18,Winter-21,DSR-21,MD-21} for some partial results on the topic. To date, the best result is the consistency obtained in~\cite{DSR-21}, which only holds at best for relatively short times $\tau=N^{-1}t\ll N^{-3/4}$, thus missing the thermalization timescale $\tau\sim1$. We emphasize that the difficulty includes understanding the emergence of irreversibility, which here would only occur on long times as a fluctuation around the (trivial) mean-field behavior. In the present work, we introduce a simplified, hierarchical toy model for which a full derivation can be achieved, which will shed some new light on the problem.

\subsection{BBGKY approach}
To motivate our simplified model, we start by briefly recalling the standard BBGKY framework for addressing the above thermalization problem.
Intuitively, the slow relaxation of the tagged particle arises from the nontrivial correlations it develops with the background. To capture these effects, we introduce a suitable notion of correlation functions.
For $0\le m\le N$, we first define the joint density of the tagged particle with $m$ background particles as the following marginal of the $N$-particle density $F_N$,
\[F_{N,m}(z_0,\ldots,z_m)\,:=\,\int_{(\T^d\times\R^d)^{N-m}}F_N(z_0,\ldots ,z_N)\,dz_{m+1}\ldots dz_N,\]
which is symmetric in its last $m$ variables $\{z_j\}_{1\le j\le m}$. Next, for $0\le m\le N$, we define the $m$-th order correlation function (or cumulant) by
\begin{equation}\label{eq:def-CNm}
G_{N,m}(z_0,\ldots,z_m)\,:=\,\sum_{j=0}^m(-1)^{m-j}\sum_{\sigma\in \mathcal S_j^m}\frac{F_{N,j}}{M^{\otimes j+1}}(z_0,z_\sigma),
\end{equation}
where $\mathcal S_j^m$ stands for the set of all subsets of $\{1,\ldots,m\}$ with $j$ elements and where we use the short-hand notation $z_\sigma=(z_{i_1},\ldots,z_{i_j})$ for $\sigma=\{i_1,\ldots,i_j\}$. This definition ensures the orthogonality property 
\begin{equation}\label{eq:orthogonality0}
\int_{\T^d\times\R^d}G_{N,m}(z_0,\ldots,z_m)\,M(z_j)\,dz_j=0,\qquad\text{for all $1\le j\le m$},
\end{equation}
and also ensures that marginals can be recovered by means of a (linear) cluster expansion,
\begin{equation}\label{eq:cluster}
F_{N,m}(z_0,\ldots,z_m)\,=\,M^{\otimes m+1}(z_0,\ldots,z_m)\sum_{j=0}^m\sum_{\sigma\in\mathcal S_j^m}G_{N,j}(z_0,z_\sigma),\qquad\text{for all $0\le m\le N$}.
\end{equation}
By orthogonality~\eqref{eq:orthogonality0}, we deduce that correlation functions satisfy
\begin{equation}\label{eq:unitarity0}
\sum_{m=0}^N\binom{N}{m}\int_{(\T^d\times\R^d)^{m+1}}|G_{N,m}|^2M^{\otimes m+1}\,=\,\int_{(\T^d\times\R^d)^{N+1}}\frac{|F_N|^2}{M^{\otimes N+1}}.
\end{equation}
The right-hand side in this identity would be a conserved quantity if $M^{\otimes N+1}$ were replaced by the Gibbs ensemble~$M_{N+1}$. Still, it can be checked to be uniformly controlled in time: as shown in~\cite[Lemma~2.2]{DSR-21}, in the spirit of~\cite{BGSR-17}, we can deduce for all $0\le m\le N$ and $t\ge0$,
\begin{equation}\label{eq:apriori-CNm}
\Big(\int_{(\T^d\times\R^d)^{m+1}}|G_{N,m}(t)|^2M^{\otimes m+1}\Big)^\frac12\,\lesssim\,\binom{N}{m}^{-\frac12}\,\lesssim\,m^{\frac m2}N^{-\frac m2}.
\end{equation}
This can be viewed as some form of chaos estimates, showing that higher-order correlations are tinier in the limit $N\uparrow\infty$. Yet, the scaling in $N$ is not expected to be optimal: in particular, $G_{N,1}$ is expected to be only $O(N^{-1})$ instead of $O(N^{-1/2})$, cf.~\cite{DSR-21}.

To get more precise estimates on correlations, we need to go back to the Liouville equation~\eqref{eq:Liouville} for~$F_N$. Inserting the cluster expansion~\eqref{eq:cluster}, we check that correlations satisfy a hierarchy of equations of the following form, for $0\le m\le N$,
\begin{equation}\label{eq:hierarchy-true}
\partial_tG_{N,m}+iL_{N,m}G_{N,m}=iS^+_{N,m}G_{N,m+1}+\tfrac1N\Big(iS^\circ_{N,m}G_{N,m}+iS^-_{N,m}G_{N,m-1}+iS^=_{N,m}G_{N,m-2}\Big),
\end{equation}
where for notational convenience we set $G_{N,m}\equiv0$ for $m>N$ or $m<0$, where the operators~$S^\square_{N,m}$
are viewed as some `creation/annihilation' operators on the space of correlations,
\begin{equation*}
\begin{array}{rllll}
S_{N,m}^+&:&\Ld^2(M)^{\otimes m+1}&\to&\Ld^2(M)^{\otimes m}\\
S_{N,m}^\circ&:&\Ld^2(M)^{\otimes m}&\to&\Ld^2(M)^{\otimes m}\\
S_{N,m}^-&:&\Ld^2(M)^{\otimes m-1}&\to&\Ld^2(M)^{\otimes m},\\
S_{N,m}^=&:&\Ld^2(M)^{\otimes m-2}&\to&\Ld^2(M)^{\otimes m},
\end{array}
\end{equation*}
and where~$iL_{N,m}$ stands for the $m$-particle linearized Vlasov operator
\begin{multline}\label{eq:lin-mf0}
iL_{N,m}G_{N,m}\,=\,\sum_{j=0}^mv_j\cdot\nabla_{x_j}G_{N,m}\\[-3.5mm]
+\tfrac{N+1-m}N\sum_{j=1}^m\beta v_j\cdot\int_{\T^d\times\R^d}\nabla \Vc(x_j-x_{*})G_{N,m}(z_{[0,m]\setminus \{j\}},z_*)\,M(v_{*})\,dz_{*}.
\end{multline}
We refer to~\cite[Lemma~2.4]{DSR-21} for a detailed formulation of this hierarchy.

If the a priori correlation estimates~\eqref{eq:apriori-CNm} were known to hold in a stronger, smooth topology, a direct analysis of the above hierarchy would readily yield the expected kinetic limit $f_{N,0}(N\tau,v)\to f(\tau,v)$; see~\cite{DSR-21}.
However, due to filamentation effects in phase space, correlations become increasingly oscillatory over time and are only controlled a priori in $L^2$. This leads to possible resonances that may, in principle, prevent convergence to the kinetic limit. Consequently, in~\cite{DSR-21}, we only managed to establish a partial consistency result, valid on some intermediate timescale~$t\sim N^r$ with $r<\frac1{18}$ (which could be improved at best to~$r<\frac14$).

This motivates a more refined analysis of the hierarchy~\eqref{eq:hierarchy-true}, with the goal of tracking the oscillatory structure of correlations in greater detail and demonstrating that resonances cannot occur. An important observation --- crucial to the present work but not exploited in earlier studies --- is that the operator~$S_{N,m}^+$ involves a velocity average: hence, while filamentation is a priori viewed as a problem, phase mixing may in fact be leveraged beneficially in some terms.

\subsection{A simplified hierarchy}\label{sec:simpl}
As a first step toward analyzing the exact hierarchy~\eqref{eq:hierarchy-true}, we introduce a simplified setting that preserves the essential features of the original system, but for which we are able to rigorously justify the expected thermalization.
We start with the following two convenient simplifications, which do not affect the physical relevance of the model:
\begin{enumerate}[(H1)]
\item\label{H1} To avoid specific resonance issues on the torus, we consider the problem on the whole space~$\R^d$ instead of the periodic box $\T^d$. Physically, this amounts to considering a system of $NL^d$ background particles in a rescaled box $[-\frac L2,\frac L2]^d$ with periodic boundary conditions, and taking the large-box limit $L\uparrow\infty$ as $N\uparrow\infty$. This does not change much in the system physically, but has a very important simplifying effect as Fourier variables become continuous.
\smallskip\item\label{H2} We include a small $O(\frac1N)$ diffusion in velocity in the particle system. As this only acts on the slow relaxation timescale $t\sim N$, we do not expect it to change much to the problem, but it is useful to simplify the analysis.
\end{enumerate}
We introduce two additional simplifications, which we believe are not essential to our arguments but are highly convenient for streamlining the computations:
\begin{enumerate}[(H3)]
\item[(H3)]\label{H1} The linearized mean-field operator $iL_{N,m}$, cf.~\eqref{eq:lin-mf0}, is replaced by pure transport: this allows to avoid many technicalities and in particular to perform direct computations in Fourier space instead of appealing to linear Landau damping. Physically, this amounts to neglecting collective screening effects.
\smallskip\item[(H4)]\label{H2} In the exact hierarchy~\eqref{eq:hierarchy-true}, the creation and annihilation operators $S_{N,m}^\pm$ are not exact adjoints: they only become approximately so in the limit $N\uparrow\infty$. For finite $N$, the additional operators~$S_{N,m}^\circ$ and~$S_{N,m}^=$ correct this lack of adjointness, so as to ensure the approximate unitarity~\eqref{eq:unitarity0}. This reflects the fact that the Gibbs equilibrium $M_{N+1}$ differs from the tensorized mean-field ensemble~$M^{\otimes N+1}$ around which correlations are defined, cf.~\eqref{eq:def-CNm}. To simplify the structure, we replace $S_{N,m}^{\pm}$ by operators $S_m^\pm$ that are truly adjoint and are independent of $N$. The additional operators $S_{N,m}^\circ$ and~$S_{N,m}^=$ are then set to $0$. This leads to a hierarchy with a neater unitarity structure, in particular making~\eqref{eq:apriori-CNm} trivial.
\end{enumerate}
Finally, we introduce a last simplification, which plays a crucial role in our analysis and constitutes the main restriction of the present work:
\begin{enumerate}[(H5)]
\smallskip\item[(H5)]\label{H5} We truncate the hierarchy at a fixed order $m_0\ge1$ independently of $N$, thus setting $G_{N,m}\equiv0$ in~\eqref{eq:hierarchy-true} for all~$m>m_0$. This means that correlations of the tagged particle are restricted to involve at most $m_0$ background particles at once.
\end{enumerate}
As we shall see, truncating the hierarchy enables us to control the complexity of possible Feynman diagrams in the perturbative expansion. Nonetheless, the model remains highly nontrivial as it still leads to an infinite Dyson series; see Section~\ref{sec:diag}.

Let us introduce more precisely the simplified hierarchy that we will study. In view of~(H1), with the large-box limit, the phase space for the particles is now
\[\Dd:=\R^d\times\R^d\,\ni\,z=(x,v).\]
Recalling the symmetry, invariance, and orthogonality properties of correlation functions~\eqref{eq:def-CNm}, the state space for simplified correlations is similarly chosen as the direct sum
\[\Hc\,:=\,\bigoplus_{m=0}^\infty\Hc_m,\]
where $\Hc_m$ is the set of functions $G_m:\Dd^{m+1}\to\R:(z_0,\ldots,z_m)\mapsto G_m(z_0,\ldots,z_m)$ that are symmetric in their last $m$ variables $z_1,\ldots,z_m$, that are invariant under spatial translations $(z_0,\ldots,z_m)\mapsto(x_0+x,v_0,\ldots,x_m+x,v_m)$, $x\in\R^d$, and such that
\begin{gather*}
\int_{\Dd^{m+1}} \delta(x_0)\,|G_m(z_{[m]})|^2\,M^{\otimes m+1}(v_{[m]})\,dz_{[m]}<\infty,\\
\int_\Dd G_m(z_{[m]})\,M(v_j)\,dz_j=0,\quad\text{for all $1\le j\le m$,}
\end{gather*}
where we set for abbreviation $[m]:=\{0,\ldots,m\}$.
Recalling the unitarity structure~\eqref{eq:unitarity0} with $\binom{N}{m}\sim \frac{N^m}{m!}$ as $N\uparrow\infty$ for fixed $m$, we endow the space~$\Hc_m$ with the Hilbert norm
\begin{equation}\label{eq:defHm}
\|G_m\|_{\Hc_m}^2\,:=\,\langle G_m,G_m\rangle_{\Hc_m}\,:=\,\frac1{m!}\int_{\Dd^{m+1}}\delta(x_0)\,|G_m(z_{[m]})|^2\,M^{\otimes m+1}(v_{[m]})\,dz_{[m]},
\end{equation}
thus leading to the following norm on the direct sum~$\Hc$,
\[\|(G_m)_m\|_\Hc^2\,:=\,\sum_{m=0}^\infty\|G_m\|_{\Hc_m}^2.\]
By spatial homogeneity, setting $G_0(v_0)\equiv G_0(z_0)$, we emphasize that $\Hc_0$ is identified with the weighted space~$\Ld^2(M\,dv)$.
In this setting, for all $m\ge0$, we consider the skew-adjoint transport operator $iL_m$ and the self-adjoint velocity-diffusion operator $D_m$ on $\Hc_m$,
\begin{equation}\label{eq:mainop}
iL_m:=\sum_{0\le j\le m}v_j\cdot\nabla_{x_j},\qquad
D_m:=-\sum_{0\le j\le m}(\nabla_{v_j}-\tfrac\beta2 v_j)^2.
\end{equation}
Next, only keeping the leading contributions in the actual BBGKY creation and annihilation operators described e.g.\@ in~\cite{DSR-21}, we define
\begin{eqnarray}
iS_m^-G_{m-1}(z_{[m]})&:=&\sum_{0\le j\le m}\sum_{1\le l\le m\atop l\ne j}\nabla\Vc(x_j-x_l)\cdot(\nabla_{v_j}-\tfrac\beta2 v_j)G_{m-1}(z_{[m]\setminus\{l\}}),\label{eq:def-Sm+-}\\
iS_{m-1}^+G_{m}(z_{[m-1]})&:=&\sum_{0\le j\le m-1}\int_\Dd\nabla\Vc(x_j-x_{m})\cdot(\nabla_{v_j}-\tfrac\beta2 v_j)G_{m}(z_{[m]})M(v_{m})\,dz_{m},\nonumber
\end{eqnarray}
in terms of the interaction potential $\Vc\in C^\infty_c(\R^d)$.
These simplified operators have the advantage of satisfying the exact adjointness relation 
\begin{equation}\label{eq:adjS}
\langle H_{m-1},S_{m-1}^+G_{m}\rangle_{\Hc_{m-1}}=\langle S_{m}^-H_{m-1}, G_{m}\rangle_{\Hc_m},\qquad m\ge1.
\end{equation}
Instead of~\eqref{eq:hierarchy-true},
letting $m_0$ denote the maximal order of correlations, cf.~(H5), and including an $O(\frac1N)$ diffusion in velocity, cf.~(H2), we consider the following simplified hierarchy of equations on $\calH$,
\begin{equation}\label{eq:hierarchy-simpl}
\fbox{\text{$\partial_tG_m^{N,m_0}+(iL_m+\tfrac\kappa ND_m)G_m^{N,m_0}=iS_m^+G_{m+1}^{N,m_0}+\tfrac1N iS_m^-G_{m-1}^{N,m_0},\qquad 0\le m\le m_0,$}}
\end{equation}
where we set $G_{m}^{N,m_0}\equiv0$ for $m>m_0$ or $m<0$, and where we let $\kappa>0$ be some diffusion constant. Regarding initial data, we set
\begin{equation}\label{eq:hierarchy-simpl-ci}
G_m^{N,m_0}|_{t=0}\,=\,\left\{\begin{array}{lll}
\mathfrak G&:&m=0,\\
0&:&m\ne0,
\end{array}\right.
\end{equation}
for some initial velocity density $\mathfrak G\in\Hc_0$.
While in the original hierarchy~\eqref{eq:hierarchy-true} we have $m_0=N\uparrow\infty$, we consider here a fixed truncation parameter $m_0$ independently of $N\uparrow\infty$.

By the exact adjointness relation~\eqref{eq:adjS} for $S_m^\pm$,
recalling that the transport operator $iL_m$ is skew-adjoint and that $D_m$ is nonnegative, we directly find
\[\frac{d}{dt}\sum_{m=0}^{m_0}N^{m}\|G_m^{N,m_0}\|_{\Hc_m}^2\,=\,-\frac{2\kappa}N\sum_{m=0}^{m_0}N^{m}\sum_{j=1}^m\|(\nabla_{v_j} -\tfrac\beta2v_j)\,G_m^{N,m_0}\|_{\Hc_m}^2\,\le\,0.\]
Hence, for all $0\le m\le m_0$ and $t\ge0$, we deduce
\begin{equation}\label{eq:apriori-GNm-mod}
\|G_m^{N,m_0}(t)\|_{\Hc_m}\,\le\,N^{-\frac m2}\|\mathfrak G\|_{\Hc_0},
\end{equation}
which is the analogue of~\eqref{eq:apriori-CNm} in the present simplified setting.

Similarly as for the original hierarchy~\eqref{eq:hierarchy-true}, for fixed $m_0$, the time-rescaled tagged particle density~$G_0^{N,m_0}(N\tau,v)$ is now expected to converge weakly as $N\uparrow\infty$ to the solution $G_0(\tau,v)$ of the following Fokker-Planck equation,
\begin{equation}\label{eq:FP-exp-re}
\left\{\begin{array}{l}
\partial_\tau G_0=(\nabla_v-\tfrac\beta 2v)\cdot (\kappa\Id+A_0)(\nabla_{v}-\tfrac\beta 2v)G_0,\\
G_0|_{\tau=0}=\mathfrak G,
\end{array}\right.
\end{equation}
where the diffusion tensor field $A_0\ge0$ is given by the Landau formula
\begin{eqnarray}
A_0(v)&:=&(B_0\ast M)(v),\label{eq:coeff-LB-re}\\
B_0(v)&:=&\frac{\Lambda_\Vc}{|v|}\Big(\Id-\frac{v\otimes v}{|v|^2}\Big)=\Lambda_\Vc\nabla^2|v|,\nonumber
\end{eqnarray}
with the explicit prefactor
$\Lambda_\Vc:=\frac{\omega_{d-1}}{d\omega_d}\int_{\R^d}|k|\pi\V(k)^2\frac{dk}{(2\pi)^d}$,
where $\omega_n$ stands for the volume of the $n$-dimensional unit ball.

\begin{rem}[From Lenard-Balescu to Landau kernel]
Formula~\eqref{eq:coeff-LB-re} for the diffusion tensor can be directly compared with the original Lenard-Balescu expression~\eqref{eq:coeff-LB}, once the different simplifying assumptions of the model are taken into account:
\begin{enumerate}[---]
\item As we now consider a large-box limit, cf.~(H1), Fourier variables become continuous and the collision kernel~$B$ in~\eqref{eq:coeff-LB} is replaced by its continuum version
\begin{equation}\label{eq:coeff-LB-re-LB!}
B(v,w)=\int_{\R^d}(k\otimes k)\,\pi\V(k)^2\frac{\delta(k\cdot w)}{|\e(k,k\cdot v)|^2}\,\frac{dk}{(2\pi)^{d}}.
\end{equation}
\item As we neglect collective screening, cf.~(H3), we replace the dispersion function $\e$ by the constant~$1$. A direct computation then shows that the collision kernel reduces precisely to the above Landau kernel~$B_0$,
\begin{equation}\label{eq:comput-Landau}
B(v,w)\leadsto \int_{\R^d}(k\otimes k)\,\pi\V(k)^2\delta(k\cdot w)\,\frac{dk}{(2\pi)^{d}}\,=\,\frac{\Lambda_\Vc}{|w|}\Big(\Id-\frac{w\otimes w}{|w|^2}\Big).
\end{equation}
\item As we have introduced $O(\frac1N)$ diffusion in velocity, cf.~(H2), we add identity to the obtained diffusion tensor in~\eqref{eq:FP-exp-re}.
\end{enumerate}
\end{rem}
\begin{rem}[Modified equilibrium structure]\label{rem:mod-equ}
Beyond the value of the diffusion tensor, we note that the equilibrium structure in the Fokker-Planck equation~\eqref{eq:FP-exp-re} also differs slightly from that in~\eqref{eq:FP-exp}. More precisely, as the velocity density is recovered via $F_0=MG_0$, we find that~\eqref{eq:FP-exp-re} describes relaxation to~$\sqrt M$ instead of~$M$. This discrepancy arises from our symmetric choice of the zeroth-order terms in the definition of the creation and annihilation operators~\eqref{eq:def-Sm+-} (and accordingly for the velocity-diffusion operator~\eqref{eq:mainop}). While this choice is made primarily for computational convenience, it has no essential impact on the results.
\end{rem}

\subsection{Main result}
In the framework of the simplified hierarchy~\eqref{eq:hierarchy-simpl}--\eqref{eq:hierarchy-simpl-ci}, for a fixed truncation parameter $m_0\ge1$, we rigorously prove thermalization and derive the expected Fokker-Planck equation~\eqref{eq:FP-exp-re} for the tagged particle density.

\begin{theor}\label{th:main}
Fix a truncation parameter $m_0\ge1$ and
assume that:
\begin{enumerate}[---]
\item the space dimension is sufficiently large depending on $m_0$, in the sense that
\begin{equation}\label{eq:def de d_0}
d\ge d_0=\left\{\begin{array}{cl}
2,&\text{if $m_0=1$},\\
8,&\text{if $m_0=2$},\\
28m_0+70,&\text{if $m_0\ge3$},
\end{array}\right.\end{equation}
\item the diffusion constant $\kappa$ is sufficiently large, in the sense that $\kappa\ge C_{m_0}\int_{\R^d}|k|\V(k)\,dk$ for some constant~$C_{m_0}>0$ depending only on $m_0$. In case $m_0=1$, this condition can be dropped.
\end{enumerate}
Then, for any initial condition $\mathfrak G\in C_c^\infty(\R^d)$, the solution $(G_m^{N,m_0})_{0\le m\le m_0}$ of the simplified hierarchy~\eqref{eq:hierarchy-simpl}--\eqref{eq:hierarchy-simpl-ci} satisfies that the time-rescaled tagged-particle density
\[(\tau,v)\mapsto G_0^{N,m_0}(N\tau,v)\]
converges strongly in $\Ld^2_\loc(\R^+;\Ld^2(M\, dv))$ to the solution $G_0$ of the Fokker-Planck equation~\eqref{eq:FP-exp-re}.
More precisely, we have
\begin{equation*}
\Big(\int_0^\infty e^{-2\tau}\|G_0^{N,m_0}(N\tau)-G_0(\tau)\|_{L^2(M\,dv)}^2\,d\tau\Big)^\frac12 \,\lesssim\,N^{-\frac1{12}}\|\<(\nabla_{v_0},v_0)\>^{4m_0+10}\mathfrak G\|_{L^2(M\, dv)},
\end{equation*}
and this bound can be improved to $N^{-1}\|\<(\nabla_{v_0},v_0)\>^{4m_0+21}\mathfrak G\|_{L^2(M\,dv)}$ if $d\ge 28m_0+147$.
\end{theor}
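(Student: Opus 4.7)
\emph{Proof proposal.} Since $G_m^{N,m_0}|_{t=0}=0$ for all $m\ge 1$, one can iterate Duhamel's formula in the hierarchy~\eqref{eq:hierarchy-simpl} to represent the tagged-particle density $G_0^{N,m_0}(t)$ as a Dyson series. Each term is indexed by a ``ladder path'' in the levels $\{0,1,\ldots,m_0\}$ that starts and ends at $0$ and never exceeds $m_0$; along such a path one alternates upward creations $S_m^-$ (each carrying a factor $1/N$) with downward annihilations $S_m^+$, inserting between them the free propagators $e^{-t(iL_m+\tfrac\kappa N D_m)}$. A path with $k$ upward and $k$ downward moves has global prefactor $N^{-k}$; at the rescaled time $t=N\tau$, a naive $L^2$-bound would give terms of order $N^k\tau^{2k}/(2k)!$, so the series diverges without genuine cancellation. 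The truncation hypothesis~(H5) controls the height of the paths, but the core difficulty is to extract enough decay from the oscillatory structure to sum over all paths and all contraction patterns.

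The second step is to organize the series as a sum over Feynman diagrams, where each ladder path is decorated by a pairing specifying which background particle each annihilation contracts against. The simplest topology is the ``chain'' (or rainbow) diagram, in which each excursion $0\to 1\to 0$ is paired with itself and involves a single background particle. For such diagrams, \emph{phase mixing} provides the required decay: the operator $S_0^+$ integrates against $M(v_1)\,dz_1$, and after transport by $e^{-itL_1}$ this integration produces in Fourier a Gaussian factor $\exp(-t^2|k|^2/(2\beta))$, effectively replacing one time factor $t$ by $|k|^{-1}$. Summed and identified in the limit $N\uparrow\infty$, the chain contributions reassemble into the Landau kernel of~\eqref{eq:comput-Landau} and produce the diffusion tensor $A_0$ of~\eqref{eq:FP-exp-re}.

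The \emph{recollision diagrams}, in which two or more excursions share background particles, are the main technical obstacle. There the relevant wave vectors are no longer independent, the naive phase-mixing gains $|k_j|^{-1}$ correlate between excursions, and the diagrammatic sum ceases to be absolutely convergent in $L^2$: this is precisely the filamentation phenomenon. The plan is to handle these through a renormalization scheme: identify the subseries of \emph{leading recollisions} at each order and resum them geometrically as a dressing of the free propagator. The key observation highlighted in the abstract is that this resummation turns the bare transport $iL_m$ into a \emph{hypoelliptic} propagator, morally combining transport with a diffusion operator of order $D_m^{1/2}$ in the H{\"o}rmander sense. The corresponding smoothing then supplies the missing decay in the oscillatory integrals and restores integrability. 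The assumption $\kappa\gtrsim_{m_0}\int|k|\,\V(k)\,dk$ is exactly what is needed for the geometric series defining the dressed propagator to converge absolutely, and it is vacuous for $m_0=1$ where no genuine recollisions occur.

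The final step is to estimate the renormalized series term by term in $\Hc$, using the a priori bound~\eqref{eq:apriori-GNm-mod} to close the truncation at level $m_0$. Each diagram of order $2k$ reduces to a multidimensional Fourier integral of the schematic form $\int\prod_j|\V(k_j)|^2|k_j|^{-\alpha_j}\,dk_j$, whose exponents combine the phase-mixing gain with the hypoelliptic gain from renormalization. Summing over all $k$ and all contraction patterns requires enough decay in the momentum integrals, which is exactly the role of the dimension condition $d\ge d_0(m_0)$ in~\eqref{eq:def de d_0}: the number of dangerous couplings that must be compensated grows linearly with~$m_0$, explaining the form $28m_0+70$. Once the remainder of the series is shown to vanish as $N\uparrow\infty$, the claimed strong convergence follows; the quantitative rate $N^{-1/12}$, and the improved rate $N^{-1}$ under $d\ge 28m_0+147$, are obtained by balancing the truncation of the Dyson expansion against the per-diagram error, with the better rate requiring the additional margin in the Fourier integrals supplied by the larger dimension. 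The hardest step, and where the bulk of the work lies, is the renormalization of recollisions: the whole scheme is designed to convert the filamentation liability into a hypoelliptic asset that tips the balance in our favor at every order.
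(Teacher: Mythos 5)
Your proposal correctly identifies the broad architecture --- rescaled Dyson expansion, Feynman diagrams, phase mixing, renormalization of recollisions, hypoelliptic estimates, and a dimension constraint growing with~$m_0$. But there are two substantive gaps, one of which is a genuine error in the logic.

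The most serious issue is where the hypoelliptic smoothing comes from. You write that the renormalization ``turns the bare transport $iL_m$ into a hypoelliptic propagator'' and that this is the mechanism restoring integrability. The paper explicitly cannot make this work: the hat operator $\BOX$, although comparable to a velocity Laplacian, is non-Markovian and not uniformly elliptic, so its hypoelliptic properties cannot be extracted directly. That is precisely why assumption~(H2) introduces an \emph{explicit} $O(\tfrac1N)$ velocity diffusion $\tfrac{\kappa}{N}D_m$ into the model, and all the Airy-type resolvent bounds (Lemma~\ref{lem:Airy-res}, Proposition~\ref{prop:pde}) are proved for the operator $ik\cdot v - \tfrac{\kappa}{N}\triangle_v$ using that explicit diffusion, not the hat operator. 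The renormalization plays a different (and indispensable) role: it removes the leading recollision sub-diagrams~\eqref{eq:err-ex} whose truncation error would otherwise be $O(\sqrt N)$. Relatedly, your reading of the condition $\kappa\gtrsim_{m_0}\int|k|\V(k)\,dk$ as a convergence criterion for a geometric series is not what happens: the renormalized propagator is defined as a resolvent, and the threshold on $\kappa$ is needed in Propositions~\ref{prop:box-bound}--\ref{prop:box-anal} to absorb the renormalized hat operator into the explicit dissipation (the term $\kappa - C\,C_0^\Vc \geq 0$) and retain coercivity under complex deformation.

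The second gap concerns how the remainder is actually controlled. You propose to ``estimate the renormalized series term by term'' and sum it; the paper avoids summing the infinite series altogether. Instead it truncates the expansion into an \emph{ansatz} $(\tilde g_m^{N,m_0})_m$ satisfying the hierarchy up to remainders $R_m^{N,m_0}$ (Lemma~\ref{lem:approx-hier}), then uses the exact unitarity of the hierarchy~\eqref{eq:adjS-re} to deduce the energy estimate of Lemma~\ref{lem:err-est0}; the $\Ld^2$ a priori bound~\eqref{eq:apriori-GNm-re} enters only via this energy identity, not as a direct series bound. Similarly, the phase-mixing mechanism is implemented as contour deformation in the velocity integrals at the level of Laplace-transformed resolvents, not as Gaussian decay in time, and the dimension constraint $d\ge d_0(m_0)$ arises concretely from the geometry of admissible deformation directions (Wendel's theorem and the simplex-volume integrability of Lemmas~\ref{lem:algebre lineaire de sup}--\ref{lem:integ-conv}, combined with the tent-counting of Lemmas~\ref{lem:carac-tent-varpi}--\ref{lem:count-om0}). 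Your intuition for the ingredients is sound, but a reader following your outline would not know how to organize the truncation, would look for hypoelliptic bounds in the wrong place, and would have no mechanism for the dimension restriction beyond generic momentum integrability.
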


As detailed in Section~\ref{sec:strategy}, the proof relies on a careful analysis of the Dyson expansion of the hierarchy in terms of Feynman diagrams and builds on three key new ingredients:
\begin{enumerate}[---]
\item {\it Renormalization:} The Dyson series must be renormalized to eliminate the leading recollisions, which amounts to factoring out part of the expansion. Concretely, the free transport propagators around which the expansion is performed are replaced by renormalized propagators, obtained by adding to the free transport an approximate (non-Markovian) version of the limiting Fokker-Planck operator (so-called ``hat operator'' below).
\smallskip\item {\it Phase mixing:} Since annihilation operators involve velocity averages, cf.~\eqref{eq:def-Sm+-}, phase mixing can be effectively exploited in several terms of the expansion. In our approach, this is achieved by systematically applying contour deformations in all the integrals over free velocity variables.
\smallskip\item {\it Hypoelliptic estimates:} As explained, the renormalization effectively augments the free transport with a (non-Markovian) Fokker-Planck-type operator carrying a prefactor of order $O(\frac1N)$. By treating this, heuristically, as a genuine Fokker-Planck operator, filamentation effects in phase space are mitigated through the induced $O(\frac1N)$ velocity diffusion. This allows to replace naive resolvent estimates with hypoelliptic ones, which substantially improve the scaling in $N$. Since such estimates for the actual renormalized propagators are not directly accessible due to non-Markovian effects, we exploit the additional $O(\frac1N)$ velocity diffusion included for simplicity in the system to justify these improved bounds.
\end{enumerate}
As discussed heuristically in Section~\ref{sec:limitation} below, the restriction~\eqref{eq:def de d_0} on the space dimension relative to the truncation parameter arises from geometric constraints in the complex deformations used for phase-mixing arguments and does not appear to be avoidable at present. Several refinements of the method could improve the value of $d_0$ in~\eqref{eq:def de d_0}, but none seem enough to remove the linear dependence on the truncation parameter.

\begin{rem}[Scope and limitations of the model]
Several of the simplifying assumptions in our model could be relaxed without substantial modifications. In particular, the free transport operator could be replaced by the linearized Vlasov operator appearing in the original hierarchy, cf.~\eqref{eq:lin-mf0}, in which case the Landau kernel in the formula~\eqref{eq:coeff-LB-re} for the diffusion tensor would be replaced by the Lenard-Balescu kernel~\eqref{eq:coeff-LB-re-LB!}. Likewise, the modification of the equilibrium structure noted in Remark~\ref{rem:mod-equ} could be removed with only minor adjustments.
A more consequential simplification of the model concerns the choice of creation and annihilation operators as exact adjoints; we expect that the approximate unitarity of the original hierarchy~\eqref{eq:hierarchy-true} would still suffice for the analysis, but this remains to be verified. Another technical simplification is the inclusion of an $O(\frac1N)$ velocity diffusion in the system, which we cannot yet dispense with. Finally, the most restrictive assumption is, of course, the truncation of the hierarchy itself, which currently appears unavoidable.
\end{rem}

\begin{rem}[Related work]
While completing this work, we became aware of an ongoing parallel investigation by Bodineau and Le Bris~\cite{BLB}, which addresses another simplified model in the direction of a rigorous justification of Lenard-Balescu thermalization. Specifically, they derive a similar linearized Landau equation for a tagged particle interacting via mean-field forces within an ideal Rayleigh gas --- namely, a background of particles that do not interact with each other, only indirectly through the tagged particle.
Their analysis follows a trajectorial approach, but our hierarchical framework could in principle be applied to their model as well. In that case, the diagrammatic structure is significantly simpler: since background particles do not interact, the creation and annihilation operators always correspond to collisions involving the tagged particle, and the pathological diagrams~\eqref{eq:err-ex} and~\eqref{eq:FN-diag-rem} that constitute the main difficulty in our analysis would be absent. In this simplified setting, we expect that the truncation of the hierarchy could be avoided, leading to a hierarchical proof of their result. We note, however, that our approach would still require the inclusion of an $O(\frac1N)$ velocity diffusion in the system, which is not needed in~\cite{BLB}.
\end{rem}

\section{Strategy of the proof}\label{sec:strategy}
We start with a brief overview of previous attempts to establish thermalization, focussing on the above simplified hierarchy~\eqref{eq:hierarchy-simpl}, and we then explain and motivate our main new ideas. Throughout this section, we restrict ourselves to high-level, non-rigorous explanations, aiming to provide intuition and motivation for our approach.

\subsection{Rescaled Fourier reformulation}
Before going further, we rescale time by introducing the slow variable~$\tau:=t/t_N$, where the timescale $t_N\gg1$ is left unspecified for now but will later be chosen as the predicted thermalization timescale $t_N=N$.
For all $j\ge0$, we denote by $k_j$ the Fourier-dual variable associated with $x_j$, and we set $\hat z_j:=(k_j,v_j)$.
In these terms, further recalling the a priori estimates~\eqref{eq:apriori-GNm-mod} and the definition~\eqref{eq:defHm} of the norms, we make the following change of unknowns: for all~$0\le m\le m_0$,
\begin{equation}\label{eq:def-gm}
g^{N,m_0}_m(\tau,\hat z_{[m]}) \,:=\, \sqrt{\tfrac{N^m}{m!}M^{\otimes m+1}(v_{[m]})}
\int_{(\R^d)^{m+1}}\Big(\prod_{j=0}^m e^{-ik_j\cdot x_j}\Big)\,G^{N,m_0}_m(t_N\tau,z_{[m]})\,d x_{[m]}.
\end{equation}
and we further let $g^{N,m_0}_m=0$ for $m>m_0$ or $m<0$.
Since $G_m^{N,m_0}$ is invariant under spatial translations, we note that $g_m^{N,m_0}$ is concentrated on the set
\[\hat\Dd^{m+1} \,:=\, \bigg\{(\hat z_0,\ldots,\hat z_m)\in \Dd^{m+1}~:~\sum_{j= 0}^mk_j=0\bigg\}.\]
The a priori estimates~\eqref{eq:apriori-GNm-mod} are then replaced by the following, for all $0\le m\le m_0$ and $\tau\ge0$,
\begin{equation}\label{eq:apriori-GNm-re}
\|g_m^{N,m_0}(\tau)\|_{\Ld^2(\hat\Dd^{m+1})}\,\le\,\|\mathfrak g\|_{\Ld^2(\hat\Dd^1)},
\end{equation}
where we have set for abbreviation
\[\mathfrak g:=\sqrt M\mathfrak G\qquad\text{on $\hat\Dd^1=\{0\}\times\R^d$.}\]
In this setting, the state space $\Hc_m$ is thus replaced by $\Ld^2(\hat\Dd^{m+1})$
and the hierarchy~\eqref{eq:hierarchy-simpl} is transformed into the following,
\begin{equation}\label{eq:hierarchy}
\left\{\begin{array}{l}
t_N^{-1}\partial_\tau g_m^{N,m_0}+(i\hat L_m+\tfrac\kappa N\hat D_m)g_m^{N,m_0}
=\frac1{\sqrt N}(i\hat S^+_mg^{N,m_0}_{m+1}+i\hat S_m^-g^{N,m_0}_{m-1}),\qquad0\le m\le m_0,\\[1mm]
g_m^{N,m_0}|_{\tau=0}=\mathfrak g\mathds1_{m=0},
\end{array}\right.
\end{equation}
where the transport and diffusion operators are now replaced by
\begin{equation}\label{eq:def-hatop}
\hat L_m\,:=\,\sum_{0\le j\le m}k_j\cdot v_j,\qquad
\hat D_m\,:=\,-\sum_{0\le j\le m}\triangle_{v_j},
\end{equation}
and where creation and annihilation operators take the form
\begin{eqnarray}
\hat S_m^-g_{m-1}&:=&\sum_{0\le j\le m}\sum_{1\le l\le m\atop l\ne j}\hat S^{m,-}_{j,l}g_{m-1},\label{eq:def-hatS}\\
\hat S_{m-1}^+g_{m}&:=&\sum_{0\le j\le m-1}\hat S^{m-1,+}_{j,m}g_m,\nonumber
\end{eqnarray}
in terms of
\begin{eqnarray}
\hat S^{m,-}_{j,l}g_{m-1}(\hat z_{[m]})
&:=&-\frac1{\sqrt m}\sqrt M(v_l)\,k_l\V(k_l)\cdot\nabla_{v_j}g_{m-1}(\hat z_{[m]\setminus\{j,l\}},(k_j+k_l,v_j)),\label{eq:def-hat-coll}\\
\hat S_{j,m}^{m-1,+}g_m(\hat z_{[m-1]})
&:=&\sqrt m\int_{\Dd} \sqrt M(v_{m})\,k_{m}\V(k_{m})\cdot\nabla_{v_j}g_{m}(\hat z_{[m]\setminus\{j\}},(k_j-k_{m},v_j))\,d^*\hat z_{m},\nonumber
\end{eqnarray}
with the short-hand notation $d^*\hat z_j=d^*k_jdv_j$ and $d^*k_j=(2\pi)^{-d}dk_j$.
In this setting, the adjointness relation~\eqref{eq:adjS} becomes
\begin{equation}\label{eq:adjS-re}
\langle h_{m-1},\hat S_{m-1}^+g_{m}\rangle_{\Ld^2(\hat\Dd^{m})}=\langle \hat S_{m}^-h_{m-1}, g_{m}\rangle_{\Ld^2(\hat\Dd^{m+1})},\qquad 1\le m\le m_0.
\end{equation}

\subsection{Formal derivation}\label{sec:formal}
A formal examination of the hierarchy~\eqref{eq:hierarchy} leads to expect
\begin{equation}\label{eq:expect-hier}
g_m^{N,m_0}=O(C_mN^{-\frac m2}),
\end{equation}
so that the a priori estimates~\eqref{eq:apriori-GNm-re} (or~\eqref{eq:apriori-GNm-mod}) would be far from optimal. If this was true, we could truncate the hierarchy: neglecting the effect of $g_2^{N,m_0}=O(N^{-1})$, and focussing on the critical timescale~$t_N=N$, we would be formally led to
\begin{eqnarray}
\partial_\tau g_0^{N,m_0}+\kappa\hat D_0g_0^{N,m_0}&=&i\hat S_0^+(\sqrt Ng_1^{N,m_0}),\label{eq:trunc-hier}\\
(\tfrac1N\partial_\tau +\hat L_1+\tfrac\kappa N\hat D_1)(\sqrt N g_1^{N,m_0})&=&i\hat S_1^-g_0^{N,m_0}+O(N^{-1}).\nonumber
\end{eqnarray}
This amounts to a slow equation for $g_0^{N,m_0}$ coupled to a fast transport equation for the rescaled correlation~$\sqrt Ng_1^{N,m_0}=O(1)$. Solving the latter, we are led to a closed non-Markovian description for the tagged particle density,
\begin{equation}\label{eq:solve-trunc-hier}
\partial_\tau g_0^{N,m_0}+\kappa\hat D_0g_0^{N,m_0}=-\hat S_0^+\Big(\tfrac1N\partial_\tau +\hat L_1+\tfrac\kappa N\hat D_1\Big)^{-1}\hat S_1^-g_0^{N,m_0}+O(N^{-1}),
\end{equation}
where we expect memory effects to average out as $N\uparrow\infty$.
To investigate this more precisely, we introduce the following version of the Laplace transform, for $\varphi:\mathbb{R}^+\to \mathbb{C}$,
\begin{equation}\label{eq:Lap}
\Lc\varphi(\alpha) \,:=\, \int_0^\infty\varphi(\tau)\,e^{-(1+i\alpha )\tau}\,d\tau,\qquad\varphi(\tau)=e^\tau\int_\R e^{i\alpha\tau}\Lc\varphi(\alpha)\,d^*\alpha,\qquad d^*\alpha:=\tfrac{d\alpha}{2\pi}.
\end{equation}
Taking Laplace transform and inserting the definition of $\hat S_0^+,\hat S_1^-,\hat L_1$, equation~\eqref{eq:solve-trunc-hier} becomes
\begin{equation*}
(1+i\alpha)\Lc g_0^{N,m_0}-\Div_{v_0}((\kappa\Id+ A_0^N)\nabla_{v_0}\Lc g_0^{N,m_0})=\mathfrak g+O(N^{-1}),
\end{equation*}
in terms of
\[ A_0^N(\alpha,v_0)\,:=\,\int_\Dd (k_1\otimes k_1)\V(k_1)^2\sqrt M(v_1)\Big(\tfrac{1+i\alpha}N +ik_1\cdot(v_1-v_0)+\tfrac\kappa N\hat D_1\Big)^{-1}\sqrt M(v_1)\,d^*\hat z_1.\]
In this formulation, the non-Markovian character of the equation arises from the $\alpha$-dependence of $A_0^N$. Passing to the limit $N\uparrow\infty$, one can check that $A_0^N(\alpha,v_0)$ converges pointwise to the diffusion coefficient~$A_0(v_0)$ defined in~\eqref{eq:coeff-LB-re} (see the proof of Lemma~\ref{lem:kin-lim} for details).
Inverting Laplace transform, we are thus led to $g_0^{N,m_0}\to g_0$, where $g_0$ satisfies
\begin{equation}\label{eq:FP-deriv}
\left\{\begin{array}{l}
\partial_\tau g_0=\Div_{v}((\kappa\Id+A_0)\nabla_{v}g_0),\\
g_0|_{t=0}=\mathfrak g.
\end{array}\right.
\end{equation}
Recalling~$g_0^{N,m_0}(\tau)=\sqrt MG_0^{N,m_0}(N\tau)$, cf.~\eqref{eq:def-gm}, this would imply
\[G_0^{N,m_0}(N\tau)\to G_0(\tau):=\sqrt M^{-1}g_0(\tau),\]
which satisfies the expected Fokker-Planck equation~\eqref{eq:FP-exp-re}.

This formal derivation has several important gaps.
First, we currently lack a method to establish the improved correlation estimates~\eqref{eq:expect-hier} beyond short timescales and, moreover, such estimates might only hold in weak topologies.
Second, even if these bounds were valid in $\Ld^2$ up to the relevant timescale $t_N=N$, the $O(N^{-1})$ error in~\eqref{eq:trunc-hier} would still only be controlled in $H^{-1}$, owing to the loss of a velocity derivative in $\hat S_1^+$. As a consequence, when solving the fast transport equation for correlations, the contribution of this error in~\eqref{eq:solve-trunc-hier} could be significantly amplified due to filamentation effects.
Both difficulties stem from the fast oscillations of correlation functions over long timescales, which may induce problematic resonances. A potential way forward, as we already argued in~\cite{DSR-21}, is to start from a perturbative expansion of all correlation functions as Dyson series, in the hope that explicit oscillatory contributions never resonate.

\subsection{Dyson series expansion}\label{sec:diag}
Applying Laplace transform in the form~\eqref{eq:Lap}, the simplified hierarchy~\eqref{eq:hierarchy} reads as follows,
\begin{equation}\label{eq:Duhamel}
\left\{\begin{array}{l}
\big(1+i\alpha+\kappa\tfrac{t_N}{N}\hat D_0\big)\Lc g_0^{N,m_0}=\mathfrak g+\tfrac{t_N}{\sqrt N}i\hat S_0^+\Lc g_{1}^{N,m_0},\\[2mm]
\big(\tfrac{1+i\alpha}{t_N}+i\hat L_m+\tfrac\kappa N\hat D_m\big)\Lc g_m^{N,m_0}=\tfrac1{\sqrt N}\Big(i\hat S_m^+\Lc g_{m+1}^{N,m_0}+i\hat S_m^-\Lc g_{m-1}^{N,m_0}\Big),\qquad 1\le m\le m_0.
\end{array}\right.
\end{equation}
Iteratively solving this hierarchy, we obtain an infinite Dyson series for each correlation function $g_m^{N,m_0}$. Each term in this expansion consists of a sequence of resolvents
\[(\tfrac{1+i\alpha}{t_N}+i\hat L_m+\tfrac\kappa N\hat D_m)^{-1},\]
interlaced with creation or annihilation operators $\hat S_m^\pm$, acting on the initial data $\mathfrak g$.
To organize these contributions, we introduce a representation using Feynman-type diagrams:
\begin{enumerate}[---]
\item Each connected horizontal line corresponds to a different particle, with the lower line corresponding to the tagged particle `$0$'.
\smallskip\item Parallel horizontal segments stand for free propagators,
\[\big(\tfrac{1+i\alpha}{t_N}+i\hat L_1+\tfrac\kappa N\hat D_1\big)^{-1}=
{\tiny\begin{tikzpicture}[baseline={([yshift=-1ex]current bounding box.center)},scale=0.8]
\draw (0,0) -- (1,0);
\draw (0,0.3) -- (1,0.3);
\end{tikzpicture}}\qquad
\big(\tfrac{1+i\alpha}{t_N}+i\hat L_2+\tfrac\kappa N\hat D_2\big)^{-1}=
{\tiny\begin{tikzpicture}[baseline={([yshift=-1ex]current bounding box.center)},scale=0.8]
\draw (0,0) -- (1,0);
\draw (0,0.3) -- (1,0.3);
\draw (0,0.6) -- (1,0.6);
\end{tikzpicture}}\qquad\text{etc.}\]
\item Vertical segments merging two horizontal levels are viewed as ``collisions'' and correspond to applying creation or annihilation operators,
\[\hat S^{m,+}_{j,l}
={\tiny\begin{tikzpicture}[baseline={([yshift=-1ex]current bounding box.center)},scale=0.8]
\begin{scope}[every node/.style={circle,draw,fill=white,inner sep=0pt,minimum size=3pt}]
\node (1) at (0,0) {};
\end{scope}
\draw (1) -- (0,0.5);
\draw[dotted] (-0.4,0) -- (1) -- (0.4,0);
\draw[dotted] (0,0.5) -- (0.4,0.5);
\node at (1.5,0.5) {\tiny (particle $l$)};
\node at (1.5,0) {\tiny (particle $j$)};
\end{tikzpicture}}\qquad
\hat S^{m,-}_{j,l}
={\tiny\begin{tikzpicture}[baseline={([yshift=-1ex]current bounding box.center)},scale=0.8]
\begin{scope}[every node/.style={circle,draw,fill=white,inner sep=0pt,minimum size=3pt}]
\node (1) at (0,0) {};
\end{scope}
\draw (1) -- (0,0.5);
\draw[dotted] (-0.4,0) -- (1) -- (0.4,0);
\draw[dotted] (0,0.5) -- (-0.4,0.5);
\node at (1.5,0.5) {\tiny (particle $l$)};
\node at (1.5,0) {\tiny (particle $j$)};
\end{tikzpicture}}\]
We emphasize that for~$\hat S_{j,l}^{m,-}$ the annihilated index $l$ only runs over $1\le l\le m$, meaning that the tagged particle `$0$' is never annihilated.
\smallskip\item Diagrams are arbitrary compositions of the above two ingredients, alternating between free propagators and collisions.
The \emph{complexity of a diagram} is the maximal number of horizontal levels that appear at once on top of the lower level: by assumption, it is always bounded by the truncation parameter~$m_0$. Viewing time as flowing from right to left, we call \emph{input} (resp.\@ \emph{output}) variables the list of particle indices that are present at the right (resp.\@ left) of the diagram.
\end{enumerate}
To give a concrete example,
\[{\tiny\begin{tikzpicture}[baseline={([yshift=-1ex]current bounding box.center)},scale=0.8]
\begin{scope}[every node/.style={circle,draw,fill=white,inner sep=0pt,minimum size=3pt}]
\node (1) at (0,0) {};
\node (2) at (0.5,0) {};
\node (3) at (1,0.5) {};
\node (4) at (1.5,0) {};
\end{scope}
\draw (1) -- (2) -- (4) -- (1.5,0.5) -- (3) -- (0,0.5) -- (1);
\draw (2) -- (0.5,0.25) -- (1,0.25) -- (3);
\end{tikzpicture}}
\,=\, S_{0,1}^{0,+}\big(\tfrac{1+i\alpha}{t_N}+i\hat L_1+\tfrac\kappa N\hat D_1\big)^{-1}S_{0,2}^{1,+}\big(\tfrac{1+i\alpha}{t_N}+i\hat L_2+\tfrac\kappa N\hat D_2\big)^{-1}S_{1,2}^{2,-}\big(\tfrac{1+i\alpha}{t_N}+i\hat L_1+\tfrac\kappa N\hat D_1\big)^{-1}S_{0,1}^{1,-},\]
and thus, inserting the definition of the operators and carefully tracking the Fourier variables,
\begin{multline*}
{\tiny\begin{tikzpicture}[baseline={([yshift=-1ex]current bounding box.center)},scale=0.8]
\begin{scope}[every node/.style={circle,draw,fill=white,inner sep=0pt,minimum size=3pt}]
\node (1) at (0,0) {};
\node (2) at (0.5,0) {};
\node (3) at (1,0.5) {};
\node (4) at (1.5,0) {};
\end{scope}
\draw (1) -- (2) -- (4) -- (1.5,0.5) -- (3) -- (0,0.5) -- (1);
\draw (2) -- (0.5,0.25) -- (1,0.25) -- (3);
\end{tikzpicture}}\,\mathfrak g(\alpha,v_0)
\,=\, \int_{\Dd^2}
\sqrt M(v_1)\,k_1\V(k_1)\cdot\nabla_{v_0}\Big(\tfrac{1+i\alpha}{t_N}+ik_1\cdot(v_1-v_0)-\tfrac\kappa N\triangle_{v_{[1]}}\Big)^{-1}\\
\times  \sqrt M(v_2)\,k_2\V(k_2)\cdot\nabla_{v_0}\Big(\tfrac{1+i\alpha}{t_N}+ik_1\cdot(v_1-v_0)+ik_2\cdot(v_2-v_0)-\tfrac\kappa N\triangle_{v_{[2]}}\Big)^{-1}\\
\times \sqrt M(v_2)\,k_2\V(k_2)\cdot\nabla_{v_0}\Big(\tfrac{1+i\alpha}{t_N}+i(k_1+k_2)\cdot(v_1-v_0)-\tfrac\kappa N\triangle_{v_{[1]}}\Big)^{-1}\\
\times\sqrt M(v_1)\,(k_1+k_2)\V(k_1+k_2)\cdot\nabla_{v_0}\mathfrak g(v_0)\,d^*\hat z_1d^*\hat z_2,
\end{multline*}
where for an index set $A$ we let $\triangle_{v_A}=\sum_{j\in A}\triangle_{v_{j}}$.
Occasionally, we shall indicate the associated Fourier momentum variables above each horizontal segment, thus representing the above integral as
\[=\,\int_{(\R^d)^2}
{\tiny\begin{tikzpicture}[baseline={([yshift=-2ex]current bounding box.center)},scale=0.8]
\begin{scope}[every node/.style={circle,draw,fill=white,inner sep=0pt,minimum size=3pt}]
\node (1) at (0,0) {};
\node (2) at (1.5,0) {};
\node (3) at (3,0.8) {};
\node (4) at (4.5,0) {};
\end{scope}
\draw (1) -- (2) -- (4) -- (4.5,0.8) -- (3) -- (0,0.8) -- (1);
\draw (2) -- (1.5,0.4) -- (3,0.4) -- (3);
\node at (0.75,0.17) {$-k_1$};
\node at (1.5,0.97) {$k_1$};
\node at (3,0.17) {$-k_1-k_2$};
\node at (2.25,0.57) {$k_2$};
\node at (3.75,0.97) {$k_1+k_2$};
\end{tikzpicture}}
\,\mathfrak g(v_0)\,d^*k_1d^*k_2.\]
Note that the definition of the creation and annihilation operators ensures the conservation of the sum of Fourier variables at each collision.

With this diagrammatic notation, we return to~\eqref{eq:Duhamel}: taking resolvents, the Duhamel formula can be written as follows, for all $0\le m\le m_0$,
\begin{equation}\label{eq:Duhamel-re}
\Lc g_m^{N,m_0}=
\tfrac{i}{\sqrt N}\Big(\sum ~{\tiny\begin{tikzpicture}[baseline={([yshift=-1ex]current bounding box.center)},scale=0.8]
\begin{scope}[every node/.style={circle,draw,fill=white,inner sep=0pt,minimum size=3pt}]
\node (1) at (0.8,0.5) {};
\end{scope}
\draw (0,0) -- (0.8,0);
\draw (0,0.5) -- (1);
\draw (0,0.8) -- (0.8,0.8) -- (1);
\draw[dotted] (0.4,0.1) -- (0.4,0.4);
\end{tikzpicture}}\,\Big)\Lc g_{m-1}^{N,m_0}
+\tfrac{i}{\sqrt N}\Big(\sum {\tiny\begin{tikzpicture}[baseline={([yshift=-1ex]current bounding box.center)},scale=0.8]
\begin{scope}[every node/.style={circle,draw,fill=white,inner sep=0pt,minimum size=3pt}]
\node (1) at (0.8,0.5) {};
\end{scope}
\draw (0,0) -- (0.9,0);
\draw (0,0.5) -- (1) -- (0.8,0.8) -- (0.9,0.8);
\draw (1) -- (0.9,0.5);
\draw[dotted] (0.4,0.1) -- (0.4,0.4);
\end{tikzpicture}}~\Big)\Lc g_{m+1}^{N,m_0}+\mathds1_{m=0}t_N^{-1}\mathfrak g,
\end{equation}
where the sums run over all possibilities for creating or annihilating a particle.
Iterating this hierarchy yields the standard Dyson series, that is, a formal expansion of each correlation function in terms of diagrams acting on the initial data $\mathfrak g$.
For our purposes, however, it is convenient to reorganize this expansion slightly: we stop the iteration as soon as we reach~$\Lc g_0^{N,m_0}$, which produces a series expansion in which the diagrams act directly on the tagged particle density $\Lc g_0^{N,m_0}$.

More precisely, for $1\le m\le m_0$, the $m$th-order correlation is formally expressed as the sum of all diagrams with complexity at most~$m_0$, with~$1$ input and~$m+1$ output particles, starting with a collision operator and ending with a free propagator, and never returning to a single particle at intermediate times. Each diagram is applied to $\Lc g_0^{N,m_0}$, and a diagram involving~$n$ collisions carries a prefactor~$t_N(\frac{i}{\sqrt N})^n$. Inserting such expansions into the first equation in~\eqref{eq:Duhamel}, we obtain the following for the tagged particle density,
\begin{multline}\label{eq:Dyson}
\big(1+i\alpha-\tfrac{t_N}{N}\triangle_{v_0}\big)\Lc g_0^{N,m_0}
\,=\,\mathfrak g
\,+\,t_N\big(\tfrac{i}{\sqrt N}\big)^2\,{\tiny\begin{tikzpicture}[baseline={([yshift=-1ex]current bounding box.center)},scale=0.8]
\begin{scope}[every node/.style={circle,draw,fill=white,inner sep=0pt,minimum size=3pt}]
\node (1) at (0,0) {};
\node (2) at (0.5,0) {};
\end{scope}
\draw (1) -- (2) -- (0.5,0.3) -- (0,0.3) -- (1);
\end{tikzpicture}}\,\Lc g_0^{N,m_0}\\
\,+\,t_N\big(\tfrac{i}{\sqrt N}\big)^4\Big(
{\tiny\begin{tikzpicture}[baseline={([yshift=-1ex]current bounding box.center)},scale=0.8]
\begin{scope}[every node/.style={circle,draw,fill=white,inner sep=0pt,minimum size=3pt}]
\node (1) at (0,0) {};
\node (2) at (0.5,0) {};
\node (3) at (1,0) {};
\node (4) at (1.5,0) {};
\end{scope}
\draw (1) -- (2) -- (3) -- (4);
\draw (1) -- (0,0.25) -- (1,0.25) -- (3);
\draw (2) -- (0.5,0.5) -- (1.5,0.5) -- (4);
\end{tikzpicture}}
+{\tiny\begin{tikzpicture}[baseline={([yshift=-1ex]current bounding box.center)},scale=0.8]
\begin{scope}[every node/.style={circle,draw,fill=white,inner sep=0pt,minimum size=3pt}]
\node (1) at (0,0) {};
\node (2) at (0.5,0.25) {};
\node (3) at (1,0.25) {};
\node (4) at (1.5,0) {};
\end{scope}
\draw (1) -- (0,0.25) -- (2) -- (0.5,0.5) -- (1,0.5) -- (3) -- (1.5,0.25) -- (4) -- (1);
\draw (2) -- (3);
\end{tikzpicture}}
+{\tiny\begin{tikzpicture}[baseline={([yshift=-1ex]current bounding box.center)},scale=0.8]
\begin{scope}[every node/.style={circle,draw,fill=white,inner sep=0pt,minimum size=3pt}]
\node (1) at (0,0) {};
\node (2) at (0.5,0) {};
\node (3) at (1,0) {};
\node (4) at (1.5,0) {};
\end{scope}
\draw (1) -- (2) -- (3) -- (4);
\draw (1) -- (0,0.5) -- (1.5,0.5) -- (4);
\draw (2) -- (0.5,0.25) -- (1,0.25) -- (3);
\end{tikzpicture}}
+{\tiny\begin{tikzpicture}[baseline={([yshift=-1ex]current bounding box.center)},scale=0.8]
\begin{scope}[every node/.style={circle,draw,fill=white,inner sep=0pt,minimum size=3pt}]
\node (1) at (0,0) {};
\node (2) at (0.5,0.25) {};
\node (3) at (1,0.5) {};
\node (4) at (1.5,0) {};
\end{scope}
\draw (1) -- (4);
\draw (1) -- (0,0.25) -- (2) -- (1,0.25) -- (3) -- (1.5,0.5) -- (4);
\draw (2) -- (0.5,0.5) -- (3);
\end{tikzpicture}}
\\
+{\tiny\begin{tikzpicture}[baseline={([yshift=-1ex]current bounding box.center)},scale=0.8]
\begin{scope}[every node/.style={circle,draw,fill=white,inner sep=0pt,minimum size=3pt}]
\node (1) at (0,0) {};
\node (2) at (0.5,0.5) {};
\node (3) at (1,0) {};
\node (4) at (1.5,0) {};
\end{scope}
\draw (1) -- (3) -- (4);
\draw (1) -- (0,0.5) -- (2) -- (1.5,0.5) -- (4);
\draw (2) -- (0.5,0.25) -- (1,0.25) -- (3);
\end{tikzpicture}}
+{\tiny\begin{tikzpicture}[baseline={([yshift=-1ex]current bounding box.center)},scale=0.8]
\begin{scope}[every node/.style={circle,draw,fill=white,inner sep=0pt,minimum size=3pt}]
\node (1) at (0,0) {};
\node (2) at (0.5,0) {};
\node (3) at (1,0.5) {};
\node (4) at (1.5,0) {};
\end{scope}
\draw (1) -- (2) -- (4);
\draw (1) -- (0,0.5) -- (3) -- (1.5,0.5) -- (4);
\draw (2) -- (0.5,0.25) -- (1,0.25) -- (3);
\end{tikzpicture}}
+{\tiny\begin{tikzpicture}[baseline={([yshift=-1ex]current bounding box.center)},scale=0.8]
\begin{scope}[every node/.style={circle,draw,fill=white,inner sep=0pt,minimum size=3pt}]
\node (1) at (0,0) {};
\node (2) at (0.5,0.25) {};
\node (3) at (1,0) {};
\node (4) at (1.5,0) {};
\end{scope}
\draw (1) -- (3) -- (4);
\draw (1) -- (0,0.25) -- (2) -- (1,0.25) -- (3);
\draw (2) -- (0.5,0.5) -- (1.5,0.5) -- (4);
\end{tikzpicture}}
+{\tiny\begin{tikzpicture}[baseline={([yshift=-1ex]current bounding box.center)},scale=0.8]
\begin{scope}[every node/.style={circle,draw,fill=white,inner sep=0pt,minimum size=3pt}]
\node (1) at (0,0) {};
\node (2) at (0.5,0) {};
\node (3) at (1,0.25) {};
\node (4) at (1.5,0) {};
\end{scope}
\draw (1) -- (2) -- (4);
\draw (1) -- (0,0.5) -- (1,0.5) -- (3);
\draw (2) -- (0.5,0.25) -- (3) -- (1.5,0.25) -- (4);
\end{tikzpicture}}
\Big)\,\Lc g_0^{N,m_0}+\ldots
\end{multline}
Note that the expansion is always an infinite series despite the finite maximal complexity $m_0$ of the diagrams.
Each collision comes with a velocity derivative, cf.~\eqref{eq:def-hat-coll}, so a term with~$n$ collisions involves up to~$n$ derivatives of $\Lc g_0^{N,m_0}$ --- or of the test function, by duality.

For~$t_N=N$, the first line in~\eqref{eq:Dyson} coincides exactly with the truncated non-Markovian description~\eqref{eq:solve-trunc-hier} derived formally,
\begin{equation}\label{eq:solve-trunc-hier-re}
\big(1+i\alpha-\kappa\triangle_{v_0}+\BOX\big)\Lc g_0^{N,m_0}
\,=\,\mathfrak g+\ldots,
\end{equation}
where the `hat' operator is by definition
\begin{equation}\label{eq:BOX}
\BOX\,:=\,\hat S_{0,1}^{0,+}\Big(\tfrac{1+i\alpha}{t_N}+i\hat L_1+\tfrac\kappa N\hat D_1\Big)^{-1}\hat S_{0,1}^{1,-}.
\end{equation}
The problem is thus reduced to showing that all higher-order terms in the expansion vanish as~\mbox{$N\uparrow\infty$}. We emphasize the advantage of expanding around the tagged particle density $\Lc g_0^{N,m_0}$ rather than the initial data $\mathfrak g$: it effectively factors out part of the usual Dyson expansion, allowing the leading dynamics to emerge directly. This is crucial since the Dyson expansion of the limit dynamics converges only for specific initial data of exponential type, owing to the accumulation of velocity derivatives --- in contrast with other problems in kinetic theory, such as Lanford's theorem, where the limiting Boltzmann equation admits a Dyson series that converges for a wide range of data over short times.

As we shall see, recognizing all diagrams in~\eqref{eq:Dyson} as velocity averages and using phase mixing in the form of contour deformations in the $v$-integrals, it is easily checked that each diagram in this series is uniformly bounded in the weak sense in large enough dimension. We thus find that, term by term, the error is $O(t_N(\frac1{\sqrt N})^4)=O(N^{-1})$, but convergence of the series is problematic.

Aiming at a rigorous justification, we truncate the formal expansion~\eqref{eq:Dyson} and need to estimate the remainder.
Restricting the expansion to diagrams with $\le 2n$ collisions, the Duhamel formula for the remainder involves various diagrams with $2n+1$ or $2n+2$ collisions applied to correlation functions~$\Lc g_m^{N,m_0}$ with~$1\le m\le m_0$, such as
\begin{equation}\label{eq:err-ex}
E_N\,:=\,t_N\big(\tfrac1{\sqrt N}\big)^{2n+1}\,{\tiny\begin{tikzpicture}[baseline={([yshift=1ex]current bounding box.center)},scale=0.8]
\begin{scope}[every node/.style={circle,draw,fill=white,inner sep=0pt,minimum size=3pt}]
\node (0) at (-0.5,0) {};
\node (1) at (0,0) {};
\node (2) at (0.5,0) {};
\node (3) at (1.5,0) {};
\node (4) at (2,0) {};
\end{scope}
\draw (0) -- (-0.5,0.6) -- (2,0.6);
\draw (0) -- (1) -- (2) -- (3) -- (4);
\draw (1) -- (0,0.3) -- (0.5,0.3) -- (2);
\draw (3) -- (1.5,0.3) -- (2,0.3) -- (4);
\node at (1,0.15) {\ldots};
\node at (1,-0.3) {($n$ times)};
\end{tikzpicture}}
~\Lc g_1^{N,m_0}.
\end{equation}
Estimating this error in the weak sense, and using the $\Ld^2$ a priori estimate~\eqref{eq:apriori-GNm-re} on correlations,
we obtain for a test function $h\in C^\infty_c(\R^d)$,
\begin{equation}\label{eq:error-weak}
\<h,E_N\>
\,\lesssim\,t_N N^{-n-\frac12}\Big\|\,{\tiny\begin{tikzpicture}[baseline={([yshift=1ex]current bounding box.center)},scale=0.8]
\begin{scope}[every node/.style={circle,draw,fill=white,inner sep=0pt,minimum size=3pt}]
\node (1) at (0,0) {};
\node (2) at (0.5,0) {};
\node (3) at (1.5,0) {};
\node (4) at (2,0) {};
\node (5) at (2.5,0) {};
\end{scope}
\draw (1) -- (2) -- (3) -- (4) -- (5);
\draw (1) -- (0,0.3) -- (0.5,0.3) -- (2);
\draw (3) -- (1.5,0.3) -- (2,0.3) -- (4);
\draw (0,0.6) -- (2.5,0.6) -- (5);
\node at (1,0.15) {\ldots};
\node at (1,-0.3) {($n$ times)};
\end{tikzpicture}}\,h\Big\|.
\end{equation}
As we shall see in Section~\ref{sec:ren}, each occurrence of $\BOX$ can be controlled by a diffusion in velocity. We may therefore heuristically write
\[{\tiny\begin{tikzpicture}[baseline={([yshift=1ex]current bounding box.center)},scale=0.8]
\begin{scope}[every node/.style={circle,draw,fill=white,inner sep=0pt,minimum size=3pt}]
\node (1) at (0,0) {};
\node (2) at (0.5,0) {};
\node (3) at (1.5,0) {};
\node (4) at (2,0) {};
\node (5) at (2.5,0) {};
\end{scope}
\draw (1) -- (2) -- (3) -- (4) -- (5);
\draw (1) -- (0,0.3) -- (0.5,0.3) -- (2);
\draw (3) -- (1.5,0.3) -- (2,0.3) -- (4);
\draw (0,0.6) -- (2.5,0.6) -- (5);
\node at (1,0.15) {\ldots};
\node at (1,-0.3) {($n$ times)};
\end{tikzpicture}}\,h\,(\alpha,k,v_0,v_1)
\,\approx\,
\bigg[\nabla_{v_0}^2\Big(\tfrac{1+i\alpha}{t_N}+ik\cdot(v_1-v_0)-\tfrac\kappa N\triangle_{v_{[1]}}\Big)^{-1}\bigg]^nM(v_1)k\V(k)\cdot\nabla_{v_0}h(v_0).\]
Evaluating the $\Ld^2$ norm of the resolvents then yields
\[\Big\|{\tiny\begin{tikzpicture}[baseline={([yshift=1ex]current bounding box.center)},scale=0.8]
\begin{scope}[every node/.style={circle,draw,fill=white,inner sep=0pt,minimum size=3pt}]
\node (1) at (0,0) {};
\node (2) at (0.5,0) {};
\node (3) at (1.5,0) {};
\node (4) at (2,0) {};
\node (5) at (2.5,0) {};
\end{scope}
\draw (1) -- (2) -- (3) -- (4) -- (5);
\draw (1) -- (0,0.3) -- (0.5,0.3) -- (2);
\draw (3) -- (1.5,0.3) -- (2,0.3) -- (4);
\draw (0,0.6) -- (2.5,0.6) -- (5);
\node at (1,0.15) {\ldots};
\node at (1,-0.3) {($n$ times)};
\end{tikzpicture}}\,h\,\Big\|
\,\lesssim_n\, t_N^{3n}\|\langle\nabla_{v_0}\rangle^{2n+1}h\|.\]
Substituting this estimate into~\eqref{eq:error-weak}, we conclude that the truncation error in the Dyson expansion can be controlled, at best, on the timescale $t_N\ll N^{1/3}$. In order to reach the thermalization timescale $t_N=N$, we thus need to find another way.

For $\kappa>0$, the preceding estimates can be partially improved, as the velocity diffusion allows one to use hypoelliptic resolvent estimates. However, as we show, this improvement alone remains insufficient to close the argument. More precisely, instead of the following naive bound, which also holds for $\kappa=0$,
\[\|(\tfrac1{t_N}+ik\cdot v-\tfrac\kappa N\triangle_v)^{-1}\|_{L^2(dv)\to L^2(dv)}\,\lesssim\,t_N,\]
one can use Airy-type resolvent estimates,
\begin{equation}\label{eq:airy-est0}
\|(\tfrac1{t_N}+ik\cdot v-\tfrac\kappa N\triangle_v)^{-1}\|_{L^2(dv)\to L^2(dv)}\,\lesssim\,\kappa^{-\frac13}|k|^{-\frac23}N^\frac13,
\end{equation}
see Lemma~\ref{lem:Airy-res} below.
Getting back to~\eqref{eq:error-weak} and using this improved scaling, we obtain for~$t_N=N$ and~$\kappa>0$,
\[\eqref{eq:error-weak}\lesssim_{\kappa,n}t_NN^{-n-\frac12}N^n=\sqrt N.\]
This error bound does not improve for larger~$n$, showing that diagrams of the type~\eqref{eq:err-ex} in the error analysis remain problematic.

\subsection{Renormalization}
Our first key observation is that the worst contributions in the error estimates arise from diagrams such as~\eqref{eq:err-ex} that involve repeated applications of the hat operator $\BOX$.
This should not come as a surprise: this hat operator can be anticipated to play a central role as it precisely encodes the thermalization emerging on the slow timescale $t_N=N$, cf.~\eqref{eq:solve-trunc-hier-re}.
This motivates a renormalization procedure, in which we would resum iterated hat operators.

For illustration, we start by describing this renormalization in case $m_0=2$, that is, when correlations are restricted to two background particles. We then consider the formal geometric series
\begin{multline*}
\WIG\,=\,
{\tiny\begin{tikzpicture}[baseline={([yshift=-1ex]current bounding box.center)},scale=0.8]
\draw (0,0) -- (0.8,0);
\draw (0,0.3) -- (0.8,0.3);
\end{tikzpicture}}
+\big(\tfrac{i}{\sqrt N}\big)^2\Big(
{\tiny\begin{tikzpicture}[baseline={([yshift=-0.7ex]current bounding box.center)},scale=0.8]
\begin{scope}[every node/.style={circle,draw,fill=white,inner sep=0pt,minimum size=3pt}]
\node (1) at (0,0) {};
\node (2) at (0.5,0) {};
\end{scope}
\draw (-0.5,0) -- (1) -- (2) -- (1,0);
\draw (1) -- (0,0.25) -- (0.5,0.25) -- (2);
\draw (-0.5,0.5) -- (1,0.5);
\end{tikzpicture}}
+
{\tiny\begin{tikzpicture}[baseline={([yshift=-1ex]current bounding box.center)},scale=0.8]
\begin{scope}[every node/.style={circle,draw,fill=white,inner sep=0pt,minimum size=3pt}]
\node (1) at (0,0.25) {};
\node (2) at (0.5,0.25) {};
\end{scope}
\draw (-0.5,0.25) -- (1) -- (2) -- (1,0.25);
\draw (1) -- (0,0.5) -- (0.5,0.5) -- (2);
\draw (-0.5,0) -- (1,0);
\end{tikzpicture}}
\Big)
\\
+\big(\tfrac{i}{\sqrt N}\big)^4\Big(
{\tiny\begin{tikzpicture}[baseline={([yshift=-0.9ex]current bounding box.center)},scale=0.8]
\begin{scope}[every node/.style={circle,draw,fill=white,inner sep=0pt,minimum size=3pt}]
\node (1) at (0,0) {};
\node (2) at (0.5,0) {};
\node (3) at (1,0) {};
\node (4) at (1.5,0) {};
\end{scope}
\draw (-0.5,0) -- (1) -- (2) -- (3) -- (4) -- (2,0);
\draw (1) -- (0,0.25) -- (0.5,0.25) -- (2);
\draw (3) -- (1,0.25) -- (1.5,0.25) -- (4);
\draw (-0.5,0.5) -- (2,0.5);
\end{tikzpicture}}
+
{\tiny\begin{tikzpicture}[baseline={([yshift=-1.7ex]current bounding box.center)},scale=0.8]
\begin{scope}[every node/.style={circle,draw,fill=white,inner sep=0pt,minimum size=3pt}]
\node (1) at (0,0) {};
\node (2) at (0.5,0) {};
\node (3) at (1,0.5) {};
\node (4) at (1.5,0.5) {};
\end{scope}
\draw (-0.5,0) -- (1) -- (2) -- (2,0);
\draw (-0.5,0.5) -- (3) -- (4) -- (2,0.5);
\draw (1) -- (0,0.25) -- (0.5,0.25) -- (2);
\draw (3) -- (1,0.75) -- (1.5,0.75) -- (4);
\end{tikzpicture}}
+
{\tiny\begin{tikzpicture}[baseline={([yshift=-1.7ex]current bounding box.center)},scale=0.8]
\begin{scope}[every node/.style={circle,draw,fill=white,inner sep=0pt,minimum size=3pt}]
\node (1) at (0,0.5) {};
\node (2) at (0.5,0.5) {};
\node (3) at (1,0) {};
\node (4) at (1.5,0) {};
\end{scope}
\draw (-0.5,0) -- (3) -- (4) -- (2,0);
\draw (-0.5,0.5) -- (1) -- (2) -- (2,0.5);
\draw (1) -- (0,0.75) -- (0.5,0.75) -- (2);
\draw (3) -- (1,0.25) -- (1.5,0.25) -- (4);
\end{tikzpicture}}
+
{\tiny\begin{tikzpicture}[baseline={([yshift=-1ex]current bounding box.center)},scale=0.8]
\begin{scope}[every node/.style={circle,draw,fill=white,inner sep=0pt,minimum size=3pt}]
\node (1) at (0,0.25) {};
\node (2) at (0.5,0.25) {};
\node (3) at (1,0.25) {};
\node (4) at (1.5,0.25) {};
\end{scope}
\draw (-0.5,0.25) -- (1) -- (2) -- (3) -- (4) -- (2,0.25);
\draw (1) -- (0,0.5) -- (0.5,0.5) -- (2);
\draw (3) -- (1,0.5) -- (1.5,0.5) -- (4);
\draw (-0.5,0) -- (2,0);
\end{tikzpicture}}
\Big)
+\ldots
\end{multline*}
Since the hat operator is comparable to diffusion in velocity, cf.~Section~\ref{sec:ren}, this formal series converges only when applied to functions of exponential type. A more robust approach is to define~$\WIG$ as the modified resolvent
\begin{equation}\label{eq:def-WIG}
\WIG\,:=\,\Big(\tfrac{1+i\alpha}{t_N}+\hat L_1+\tfrac\kappa N\hat D_1+\tfrac1N\BOXd+\tfrac1N\BOXu\Big)^{-1}.
\end{equation}
We show in Section~\ref{sec:ren} that this resolvent is well-defined and bounded on $\Ld^2$, and we refer to it as the ($2$-particle) {\it renormalized propagator}.

In case $m_0>2$, some additional care is required as it is convenient to further renormalize all {\it nested} hat operators, that is, contributions like
\[{\tiny\begin{tikzpicture}[baseline={([yshift=-1ex]current bounding box.center)},scale=0.8]
\begin{scope}[every node/.style={circle,draw,fill=white,inner sep=0pt,minimum size=3pt}]
\node (1) at (0,0.25) {};
\node (2) at (1.5,0.25) {};
\node (3) at (0.5,0.5) {};
\node (4) at (1,0.5) {};
\end{scope}
\draw (-0.5,0.25) -- (1) -- (2) -- (2,0.25);
\draw (1) -- (0,0.5) -- (3) -- (4) -- (1.5,0.5) -- (2);
\draw (3) -- (0.5,0.75) -- (1,0.75) -- (4);
\draw (-0.5,0) -- (2,0);
\end{tikzpicture}}
\qquad
{\tiny\begin{tikzpicture}[baseline={([yshift=-1ex]current bounding box.center)},scale=0.8]
\begin{scope}[every node/.style={circle,draw,fill=white,inner sep=0pt,minimum size=3pt}]
\node (1) at (0,0.25) {};
\node (2) at (1.5,0.25) {};
\node (3) at (0.5,0.25) {};
\node (4) at (1,0.25) {};
\end{scope}
\draw (-0.5,0.25) -- (1) --(3) -- (4) -- (2) -- (2,0.25);
\draw (1) -- (0,0.75) -- (1.5,0.75) -- (2);
\draw (3) -- (0.5,0.5) -- (1,0.5) -- (4);
\draw (-0.5,0) -- (2,0);
\end{tikzpicture}}
\qquad\text{and so on}.\]
For that purpose, we define an iterative renormalization scheme; see Lemma~\ref{lem:box-ren} below for its well-posedness. We emphasize that the renormalization of nested hat operators is not strictly needed in our approach, but it strongly simplifies the combinatorial structure in the error analysis.

\begin{defin}[Renormalization]\label{def:renorm}
Given $m_0\ge1$, the hat propagator $\WIGdot$ on $L^2(\hat\Dd^{m+1})$ is defined iteratively as follows, for all $1\le m\le m_0$:
\begin{enumerate}[---]
\item For $m=m_0$, we define on $\Ld^2(\hat\Dd^{m+1})$,
\[\WIGdot \,:=\,\Big(\tfrac{1+i\alpha}{t_N}+\hat L_m+\tfrac\kappa N\hat D_m\Big)^{-1}\,=\,
{\tiny\begin{tikzpicture}[baseline={([yshift=-.8ex]current bounding box.center)},scale=0.8]
\draw (0,0) -- (0.8,0);
\draw (0,0.45) -- (0.8,0.45);
\draw[dotted] (0.4,0.1) -- (0.4,0.4);
\end{tikzpicture}}\]
\item For $1\le m<m_0$, we define on $\Ld^2(\hat\Dd^{m+1})$,
\[\WIGdot\,:=\,\Big(\tfrac{1+i\alpha}{t_N}+\hat L_{m}+\tfrac\kappa N\hat D_{m}+\tfrac{1}{N}\sum \WIGBOX\Big)^{-1},\]
where we have set
\[\sum\WIGBOX\,:=\,\sum_{j=0}^m\hat S_{j,m+1}^{m,+}\,\WIGdot\,\hat S_{j,m+1}^{m+1,-}\]
in terms of the operator~$\WIGdot$ assumed to be already defined on $\Ld^2(\hat\Dd^{m+2})$.
\end{enumerate}
\end{defin}

With this renormalization procedure,
the Dyson series~\eqref{eq:Dyson} gets drastically simplified: up to replacing propagators by their renormalized versions, we can remove all the diagrams involving hat operators. For $t_N=N$, this means
\begin{multline}\label{eq:Dyson-re}
\big(1+i\alpha-\triangle_{v_0}+\BOX\big)\Lc g_0^{N,m_0}
\,=\,\mathfrak g
+
\tfrac1N\Big(
{\tiny\begin{tikzpicture}[baseline={([yshift=-1ex]current bounding box.center)},scale=0.8]
\begin{scope}[every node/.style={circle,draw,fill=white,inner sep=0pt,minimum size=3pt}]
\node (1) at (0,0) {};
\node (2) at (0.5,0) {};
\node (3) at (1,0) {};
\node (4) at (1.5,0) {};
\end{scope}
\draw[decorate,decoration={snake,amplitude=1pt,segment length=3pt}] (1) -- (2);
\draw[decorate,decoration={snake,amplitude=1pt,segment length=3pt}] (2) -- (3);
\draw[decorate,decoration={snake,amplitude=1pt,segment length=3pt}] (3)-- (4);
\draw (1) -- (0,0.3);
\draw (2) -- (0.5,0.6);
\draw (3) -- (1,0.3);
\draw (4) -- (1.5,0.6);
\draw[decorate,decoration={snake,amplitude=1pt,segment length=3pt}] (0,0.3) -- (1,0.3);
\draw[decorate,decoration={snake,amplitude=1pt,segment length=3pt}] (0.5,0.6) -- (1.5,0.6);
\end{tikzpicture}}
+
{\tiny\begin{tikzpicture}[baseline={([yshift=-1ex]current bounding box.center)},scale=0.8]
\begin{scope}[every node/.style={circle,draw,fill=white,inner sep=0pt,minimum size=3pt}]
\node (1) at (0,0) {};
\node (2) at (0.5,0.3) {};
\node (3) at (1,0.6) {};
\node (4) at (1.5,0) {};
\end{scope}
\draw[decorate,decoration={snake,amplitude=1pt,segment length=3pt}] (1) -- (4);
\draw (1) -- (0,0.3);
\draw[decorate,decoration={snake,amplitude=1pt,segment length=3pt}] (0,0.3)-- (2) -- (1,0.3);
\draw[decorate,decoration={snake,amplitude=1pt,segment length=3pt}] (0.5,0.6)-- (3) -- (1.5,0.6);
\draw (2) -- (0.5,0.6);
\draw (1.5,0.6) -- (4);
\draw (2) -- (0.5,0.6);
\draw (1,0.3) -- (3);
\end{tikzpicture}}
+
{\tiny\begin{tikzpicture}[baseline={([yshift=-1ex]current bounding box.center)},scale=0.8]
\begin{scope}[every node/.style={circle,draw,fill=white,inner sep=0pt,minimum size=3pt}]
\node (1) at (0,0) {};
\node (2) at (0.5,0.6) {};
\node (3) at (1,0) {};
\node (4) at (1.5,0) {};
\end{scope}
\draw [decorate,decoration={snake,amplitude=1pt,segment length=3pt}] (1)-- (3);
\draw [decorate,decoration={snake,amplitude=1pt,segment length=3pt}] (3) -- (4);
\draw (1) -- (0,0.6);
\draw [decorate,decoration={snake,amplitude=1pt,segment length=3pt}] (0,0.6) -- (2) -- (1.5,0.6);
\draw (1.5,0.6) -- (4);
\draw (2) -- (0.5,0.3);
\draw [decorate,decoration={snake,amplitude=1pt,segment length=3pt}] (0.5,0.3) -- (1,0.3);
\draw (1,0.3) -- (3);
\end{tikzpicture}}
\\
+
{\tiny\begin{tikzpicture}[baseline={([yshift=-1ex]current bounding box.center)},scale=0.8]
\begin{scope}[every node/.style={circle,draw,fill=white,inner sep=0pt,minimum size=3pt}]
\node (1) at (0,0) {};
\node (2) at (-0.5,0.6) {};
\node (3) at (-1,0) {};
\node (4) at (-1.5,0) {};
\end{scope}
\draw [decorate,decoration={snake,amplitude=1pt,segment length=3pt}] (1)-- (3);
\draw [decorate,decoration={snake,amplitude=1pt,segment length=3pt}]
(3) -- (4);
\draw (1) -- (0,0.6);
\draw [decorate,decoration={snake,amplitude=1pt,segment length=3pt}] (0,0.6) -- (2) -- (-1.5,0.6);
\draw (-1.5,0.6) -- (4);
\draw (2) -- (-0.5,0.3);
\draw [decorate,decoration={snake,amplitude=1pt,segment length=3pt}] (-0.5,0.3) -- (-1,0.3);
\draw (-1,0.3) -- (3);
\end{tikzpicture}}
+{\tiny\begin{tikzpicture}[baseline={([yshift=-1ex]current bounding box.center)},scale=0.8]
\begin{scope}[every node/.style={circle,draw,fill=white,inner sep=0pt,minimum size=3pt}]
\node (1) at (0,0) {};
\node (2) at (0.5,0.3) {};
\node (3) at (1,0) {};
\node (4) at (1.5,0) {};
\end{scope}
\draw [decorate,decoration={snake,amplitude=1pt,segment length=3pt}] (1) -- (3);
\draw [decorate,decoration={snake,amplitude=1pt,segment length=3pt}] (3) -- (4);
\draw (1) -- (0,0.3);
\draw [decorate,decoration={snake,amplitude=1pt,segment length=3pt}] (0,0.3) -- (2) -- (1,0.3);
\draw (1,0.3) -- (3);
\draw (2) -- (0.5,0.6);
\draw [decorate,decoration={snake,amplitude=1pt,segment length=3pt}] (0.5,0.6) -- (1.5,0.6);
\draw (1.5,0.6) -- (4);
\end{tikzpicture}}
+{\tiny\begin{tikzpicture}[baseline={([yshift=-1ex]current bounding box.center)},scale=0.8]
\begin{scope}[every node/.style={circle,draw,fill=white,inner sep=0pt,minimum size=3pt}]
\node (1) at (0,0) {};
\node (2) at (-0.5,0.3) {};
\node (3) at (-1,0) {};
\node (4) at (-1.5,0) {};
\end{scope}
\draw [decorate,decoration={snake,amplitude=1pt,segment length=3pt}] (1) -- (3);
\draw [decorate,decoration={snake,amplitude=1pt,segment length=3pt}] (3) -- (4);
\draw (1) -- (0,0.25);
\draw [decorate,decoration={snake,amplitude=1pt,segment length=3pt}] (0,0.3) -- (2) -- (-1,0.3);
\draw (-1,0.3) -- (3);
\draw (2) -- (-0.5,0.6);
\draw [decorate,decoration={snake,amplitude=1pt,segment length=3pt}] (-0.5,0.6) -- (-1.5,0.6);
\draw (-1.5,0.6) -- (4);
\end{tikzpicture}}
\Big)\Lc g_0^{N,m_0}+\ldots
\end{multline}
and it remains to show that all the diagrams in the right-hand side are $O(\frac1N)$ error terms.

\subsection{Error estimates: `ansatz' approach}
Instead of working directly with the renormalized expansion~\eqref{eq:Dyson-re} and estimating the truncation error via Duhamel formula, it is often advantageous to view the remainder as satisfying itself a PDE. Equivalently, this means interpreting the truncated Dyson expansion as an ansatz providing an {approximate solution} of the original problem, with the error typically appearing as lower-order source terms represented by explicit diagrams. This perspective allows to leverage PDE techniques for the original problem to obtain neat error estimates. See e.g.~\cite[Section~3.5]{Deng-Hani} and references therein for a discussion in the context of weak turbulence.

In this work, we adopt a similar viewpoint, interpreting the truncated renormalized Dyson expansion~\eqref{eq:Dyson-re} as an approximate solution of the hierarchy~\eqref{eq:hierarchy}. The specific unitary structure of the latter can then be exploited and is crucial for deriving neat error estimates. This approach is developed in detail in Section~\ref{sec:ansatz}. We emphasize that the ansatz must then be constructed for the entire sequence of correlation functions, requiring consideration of the Dyson expansion for each of them, not just for~$g_0^{N,m_0}$ as we focus on in this introduction.

\subsection{Toolbox for diagrammatic estimates}
To estimate the iterated resolvents appearing in the diagrammatic expressions, we systematically rely on two key analytical tools:
\begin{enumerate}[---]
\item\emph{Phase mixing:}
Resolvent estimates in operator norm yield $N$-dependent bounds due to filamentation effects. However, since annihilation operators involve velocity averages, phase mixing can be exploited to obtain uniform-in-$N$ estimates. As usual, this mechanism is conveniently implemented via contour deformations. For instance, deforming the integration path as $v\mapsto v-i\hat k$ with $\hat k:=\tfrac{k}{|k|}$, we obtain
\begin{equation}\label{eq:deform-0}
\hspace{0.8cm}\Big|\int_{\R^d}(\tfrac1{t_N}+ik\cdot v-\tfrac\kappa N\triangle_v)^{-1}M(v)\,dv\Big|
\,=\,\Big|\int_{\R^d}(\tfrac1{t_N}+|k|+ik\cdot v-\tfrac\kappa N\triangle_v)^{-1}M(v-i\hat k)\,dv\Big|\,\lesssim\,|k|^{-1}.
\end{equation}
Such contour deformations remain valid after renormalization; see Section~\ref{sec:renorm et complex}.
\smallskip\item\emph{Hypoelliptic estimates:}
As already noted in~\eqref{eq:airy-est0}, for $\kappa>0$, the $O(\frac1N)$ velocity diffusion included in the model allows one to exploit hypoelliptic estimates and the resulting enhanced dissipation. At the level of resolvent estimates, this yields the bound
\begin{equation}\label{eq:airy-est1}
\|(\tfrac1{t_N}+ik\cdot v-\tfrac\kappa N\triangle_v)^{-1}\|_{L^2(dv)\to L^2(dv)}\,\lesssim_\kappa\,|k|^{-\frac23}N^\frac13,
\end{equation}
see Lemma~\ref{lem:Airy-res}, which constitutes a substantial improvement over the naive bound $O(t_N)$ that would be optimal for $\kappa=0$. This also remains valid after renormalization; see Section~\ref{sec:PDE-est}.
\end{enumerate}
If no $O(\frac1N)$ velocity diffusion was included in the model (that is, if $\kappa=0$), the hat operator --- being comparable to a diffusion in velocity --- would still make the renormalization procedure effectively add such a diffusion to the free transport operator. One would then expect hypoelliptic estimates to emerge for the renormalized propagators even in this case. Although it is usual for renormalized propagators to satisfy improved bounds, this appears to be the first instance where the improvement arises via regularity theory. Nevertheless, the hat operator satisfies only $0\le\BOX\lesssim-\triangle_v$ and is not uniformly elliptic in velocity (notably because of its time dependence). As a consequence, its precise hypoelliptic properties remain difficult to characterize and exploit. For the present analysis, we therefore restrict to the simplified setting in which an explicit $O(\frac1N)$ velocity diffusion is included in the model~(that is,~$\kappa>0$); see assumption~(H2).

\subsection{Limitations of the strategy}\label{sec:limitation}
When truncating the formal renormalized Dyson expansion~\eqref{eq:Dyson-re} with $t_N=N$, the remainder involves a variety of diagrams, such as
\begin{equation}\label{eq:FN-diag-rem}
F_N\,:=\,
N\big(\tfrac{i}{\sqrt N}\big)^{2\ell+m+2}~{\tiny\begin{tikzpicture}[baseline={([yshift=-3ex]current bounding box.center)},scale=0.8]
\begin{scope}[every node/.style={circle,draw,fill=white,inner sep=0pt,minimum size=3pt}]
\node (0) at (-0.5,0) {};
\node (1) at (0.5,0) {};
\node (2) at (0,0.6) {};
\node (3) at (2,0) {};
\node (4) at (1.5,0.6) {};
\node (5) at (2.5,0) {};
\node (6) at (-1,-0.5) {};
\node (7) at (-3,-1.3) {};
\end{scope}
\draw[decorate,decoration={snake,amplitude=1pt,segment length=3pt}] (0) -- (1);
\draw[decorate,decoration={snake,amplitude=1pt,segment length=3pt}] (1) -- (3);
\draw[decorate,decoration={snake,amplitude=1pt,segment length=3pt}] (3) -- (5);
\draw (0) -- (-0.5,0.6);
\draw[decorate,decoration={snake,amplitude=1pt,segment length=3pt}] (-0.5,0.6) -- (2);
\draw (2) -- (0,0.3);
\draw[decorate,decoration={snake,amplitude=1pt,segment length=3pt}] (0,0.3) -- (0.5,0.3);
\draw (0.5,0.3) -- (1);
\draw (4) -- (1.5,0.3);
\draw[decorate,decoration={snake,amplitude=1pt,segment length=3pt}] (1.5,0.3) -- (2,0.3);
\draw (2,0.3) -- (3);
\draw[decorate,decoration={snake,amplitude=1pt,segment length=3pt}] (2) -- (4);
\draw[decorate,decoration={snake,amplitude=1pt,segment length=3pt}] (4) -- (2.5,0.6);
\draw (2.5,0.6) -- (5);
\draw[decorate,decoration={snake,amplitude=1pt,segment length=3pt}] (0) -- (-1,0);
\draw (-1,0) -- (6);
\draw[decorate,decoration={snake,amplitude=1pt,segment length=3pt}] (2.5,-0.5) -- (6);
\draw[decorate,decoration={snake,amplitude=1pt,segment length=3pt}] (6) -- (-1.5,-0.5);
\draw (-1.5,-0.5) -- (-1.5,-0.65);
\draw[dotted] (-1.5,-0.5) -- (-1.5,-1);
\draw (7) -- (-3,-1.1);
\draw[dotted] (-3,-1.1) -- (-3,-0.75);
\draw[decorate,decoration={snake,amplitude=1pt,segment length=3pt}] (7) -- (2.5,-1.3);
\node at (1,0.3) {\ldots};
\node at (1,0.8) {($\ell$ times)};
\node at (-2.3,-0.5) {($m$ times)};
\end{tikzpicture}}~\Lc g^{N,m_0}_m.
\end{equation}
Estimating this error term in the weak sense and using the a priori $\Ld^2$ bounds~\eqref{eq:apriori-GNm-re} on correlations,
we find for a test function $h\in C^\infty_c(\R^d)$,
\begin{equation}\label{eq:bound-weak-FN}
\<h,F_N\>\,\lesssim\,N^{-\ell-\frac m2}\,\bigg\|~{\tiny\begin{tikzpicture}[baseline={([yshift=-3ex]current bounding box.center)},scale=0.8]
\begin{scope}[every node/.style={circle,draw,fill=white,inner sep=0pt,minimum size=3pt}]
\node (0) at (-0.5,0) {};
\node (1) at (0,0) {};
\node (2) at (0.5,0.6) {};
\node (3) at (1.5,0) {};
\node (4) at (2,0.6) {};
\node (5) at (2.5,0) {};
\node (6) at (3,-0.5) {};
\node (7) at (5,-1.3) {};
\end{scope}
\draw[decorate,decoration={snake,amplitude=1pt,segment length=3pt}] (0) -- (1);
\draw[decorate,decoration={snake,amplitude=1pt,segment length=3pt}] (1) -- (3);
\draw[decorate,decoration={snake,amplitude=1pt,segment length=3pt}] (3) -- (5);
\draw (0) -- (-0.5,0.6);
\draw[decorate,decoration={snake,amplitude=1pt,segment length=3pt}] (-0.5,0.6) -- (2);
\draw (1) -- (0,0.3);
\draw[decorate,decoration={snake,amplitude=1pt,segment length=3pt}] (0,0.3) -- (0.5,0.3);
\draw (0.5,0.3) -- (2);
\draw (3) -- (1.5,0.3);
\draw[decorate,decoration={snake,amplitude=1pt,segment length=3pt}] (1.5,0.3) -- (2,0.3);
\draw (2,0.3) -- (4);
\draw[decorate,decoration={snake,amplitude=1pt,segment length=3pt}] (2) -- (4);
\draw[decorate,decoration={snake,amplitude=1pt,segment length=3pt}] (4) -- (2.5,0.6);
\draw (2.5,0.6) -- (5);
\draw[decorate,decoration={snake,amplitude=1pt,segment length=3pt}] (5) -- (3,0);
\draw (3,0) -- (6);
\draw[decorate,decoration={snake,amplitude=1pt,segment length=3pt}] (-0.5,-0.5) -- (6);
\draw[decorate,decoration={snake,amplitude=1pt,segment length=3pt}] (6) -- (3.5,-0.5);
\draw (3.5,-0.5) -- (3.5,-0.65);
\draw[dotted] (3.5,-0.5) -- (3.5,-1);
\draw (7) -- (5,-1.1);
\draw[dotted] (5,-1.1) -- (5,-0.75);
\draw[decorate,decoration={snake,amplitude=1pt,segment length=3pt}] (7) -- (-0.5,-1.3);
\node at (1,0.3) {\ldots};
\node at (1,0.8) {($\ell$ times)};
\node at (4.3,-0.5) {($m$ times)};
\end{tikzpicture}}\,h\bigg\|.
\end{equation}
Compared to hat operators, a key feature of the iterated subdiagram
$\begin{tikzpicture}[baseline={([yshift=-0.8ex]current bounding box.center)},scale=0.8]
\begin{scope}[every node/.style={circle,draw,fill=white,inner sep=0pt,minimum size=3pt}]
\node (0) at (0,0) {};
\node (1) at (0.5,0.6) {};
\end{scope}
\draw[decorate,decoration={snake,amplitude=1pt,segment length=3pt}] (0) -- (0.5,0);
\draw[decorate,decoration={snake,amplitude=1pt,segment length=3pt}] (1) -- (0,0.6);
\draw[decorate,decoration={snake,amplitude=1pt,segment length=3pt}] (0,0.3) -- (0.5,0.3);
\draw (0) -- (0,0.3);
\draw (1) -- (0.5,0.3);
\end{tikzpicture}$
is that it mixes momentum variables: omitting the renormalization for simplicity, we indeed find
\begin{multline*}
\begin{tikzpicture}[baseline={([yshift=-0.8ex]current bounding box.center)},scale=0.8]
\begin{scope}[every node/.style={circle,draw,fill=white,inner sep=0pt,minimum size=3pt}]
\node (0) at (0,0) {};
\node (1) at (0.5,0.6) {};
\end{scope}
\draw (0) -- (0.5,0);
\draw (1) -- (0,0.6);
\draw (0,0.3) -- (0.5,0.3);
\draw (0) -- (0,0.3);
\draw (1) -- (0.5,0.3);
\end{tikzpicture}g(\hat z_0,\hat z_1)
\,=\,
-\int_{\Dd}\sqrt M(v_2)k_2\V(k_2)\cdot\nabla_{v_0}\Big(\tfrac1{t_N}+i(k_0-k_2)\cdot v_0+ik_1\cdot v_1+ik_2\cdot v_2-\tfrac\kappa N\triangle_{v_{[2]}}\Big)^{-1}\\
\times\sqrt M(v_2)k_2\V(k_2)\cdot\nabla_{v_1}g\Big((k_0-k_2,v_0),(k_1+k_2,v_1)\Big)\,d\hat z_2,
\end{multline*}
and thus, applying contour deformation $v_2\mapsto v_2-i\hat k_2$ with $\hat k_2=\frac{k_2}{|k_2|}$,
\begin{equation*}
\big|\begin{tikzpicture}[baseline={([yshift=-0.8ex]current bounding box.center)},scale=0.8]
\begin{scope}[every node/.style={circle,draw,fill=white,inner sep=0pt,minimum size=3pt}]
\node (0) at (0,0) {};
\node (1) at (0.5,0.6) {};
\end{scope}
\draw (0) -- (0.5,0);
\draw (1) -- (0,0.6);
\draw (0,0.3) -- (0.5,0.3);
\draw (0) -- (0,0.3);
\draw (1) -- (0.5,0.3);
\end{tikzpicture}g(\hat z_0,\hat z_1)\big|
\,\lesssim\,
\int_{\R^d}\langle k_2\rangle\V(k_2)^2\,\Big|\nabla^2_{v_0v_1}g\Big((k_0-k_2,v_0),(k_1+k_2,v_1)\Big)\Big|\,dk_2.
\end{equation*}
Returning to~\eqref{eq:bound-weak-FN} and using such estimates, we may heuristically write
\begin{multline*}
{\tiny\begin{tikzpicture}[baseline={([yshift=-3ex]current bounding box.center)},scale=0.8]
\begin{scope}[every node/.style={circle,draw,fill=white,inner sep=0pt,minimum size=3pt}]
\node (0) at (-0.5,0) {};
\node (1) at (0,0) {};
\node (2) at (0.5,0.6) {};
\node (3) at (1.5,0) {};
\node (4) at (2,0.6) {};
\node (5) at (2.5,0) {};
\end{scope}
\draw (0) -- (1);
\draw (1) -- (3);
\draw (3) -- (5);
\draw (0) -- (-0.5,0.6);
\draw (-0.5,0.6) -- (2);
\draw (1) -- (0,0.3);
\draw (0,0.3) -- (0.5,0.3);
\draw (0.5,0.3) -- (2);
\draw (3) -- (1.5,0.3);
\draw (1.5,0.3) -- (2,0.3);
\draw (2,0.3) -- (4);
\draw (2) -- (4);
\draw (4) -- (2.5,0.6);
\draw (2.5,0.6) -- (5);
\node at (1,0.3) {\ldots};
\node at (1,0.8) {($\ell$ times)};
\end{tikzpicture}}\,h(k_0,v_0)
\,\approx\,
\int_{(\R^d)^{\ell+1}\times\R^d}\sqrt M(v_1)k_1\V(k_1)\cdot\nabla_{v_0}\Big(\tfrac1{t_N}+i(k_0-k_1)\cdot v_0+ik_1\cdot v_1-\tfrac\kappa N\triangle_{v_{[1]}}\Big)^{-1}\\
\hspace{5cm}\times \prod_{j=2}^{\ell+1}\langle k_j\rangle\V(k_j)^2\,\nabla_{v_0v_1}^2\Big(\tfrac1{t_N}+i(k_0-\bar k_j)\cdot v_0+i\bar k_j\cdot v_1-\tfrac\kappa N\triangle_{v_{[1]}}\Big)^{-1}\\
\times \sqrt M(v_1)k_1\V(k_1)\cdot\nabla_{v_0} h(k_0,v_0)\,dk_1\ldots dk_{\ell+1}dv_1,
\end{multline*}
with the short-hand notation $\bar k_j:=\sum_{l=1}^jk_l$.
Since this expression involves an average of resolvents with respect to $v_1$ and since hypoelliptic resolvent estimates would not provide sufficiently good bounds here, we have to attempt another contour deformation. To do so, we need to find a suitable direction to deform: given $k_1,\ldots,k_{\ell+1}$, if there exists $\nu\in\Sp^{d-1}$ such that $\nu\cdot \bar k_j>0$ for all $j$, then the deformation $v_1\mapsto v_1-i\nu$ would yield an $O(1)$ bound. Yet, by Wendel's theorem, such a direction $\nu$ exists for almost all $k_1,\ldots,k_{\ell+1}$ only provided that the space dimension is $d>\ell$. We then get formally
\[\Big|{\tiny\begin{tikzpicture}[baseline={([yshift=-3ex]current bounding box.center)},scale=0.8]
\begin{scope}[every node/.style={circle,draw,fill=white,inner sep=0pt,minimum size=3pt}]
\node (0) at (-0.5,0) {};
\node (1) at (0,0) {};
\node (2) at (0.5,0.6) {};
\node (3) at (1.5,0) {};
\node (4) at (2,0.6) {};
\node (5) at (2.5,0) {};
\end{scope}
\draw (0) -- (1);
\draw (1) -- (3);
\draw (3) -- (5);
\draw (0) -- (-0.5,0.6);
\draw (-0.5,0.6) -- (2);
\draw (1) -- (0,0.3);
\draw (0,0.3) -- (0.5,0.3);
\draw (0.5,0.3) -- (2);
\draw (3) -- (1.5,0.3);
\draw (1.5,0.3) -- (2,0.3);
\draw (2,0.3) -- (4);
\draw (2) -- (4);
\draw (4) -- (2.5,0.6);
\draw (2.5,0.6) -- (5);
\node at (1,0.3) {\ldots};
\node at (1,0.8) {($\ell$ times)};
\end{tikzpicture}}\,h(k_0,v_0)\Big|\,\lesssim\,\|\nabla_{v_0}^{\ell+2}h\|_{L^\infty}.\]
With this estimate at hand, let us now return to the estimation of the error term~\eqref{eq:bound-weak-FN}. As for the last~$m$ creation operators in the diagram we can only rely on hypoelliptic resolvent estimates~\eqref{eq:airy-est1}, we obtain, provided $d>\ell$,
\[\<h,F_N\>\,\lesssim\,N^{-\ell-\frac m2}N^\frac{2m+\ell+1}3\,=\,N^{-\frac16(4\ell-m-2)},\]
which becomes $o(1)$ whenever $\ell>\frac m4+\frac12$.
Since diagrams of the form~\eqref{eq:FN-diag-rem} appear among remainder terms for all $m\le m_0$, while $\ell$ depends on how far the Dyson series is expanded, we conclude that remainder terms cannot be made small unless the space dimension satisfies $d>\frac{m_0}4$. In other words, the space dimension must be sufficiently large relative to the truncation parameter $m_0$ in the hierarchy. This explains the main limitation appearing in Theorem~\ref{th:main}.

\subsection*{Notation}
\begin{enumerate}[---]
\item We denote by $C\ge1$ any constant that only depends on $d,\beta,m_0$, and on controlled norms of~$\Vc$. We use the notation $\lesssim$ for $\le C\times$ up to such a multiplicative constant $C$, and we write $\ll$ for $\le C\times$ up to a sufficiently large constant $C$. We add subscripts to $C,\lesssim,\ll$ to indicate dependence on other parameters.
\item For any $m$, we denote by $\langle\cdot,\cdot\rangle$ and $\|\cdot\|$ the scalar product and norm on $\Ld^2(\hat\Dd^m)$. We add a subscript `$k$' and write $\langle\cdot,\cdot\rangle_k$ and $\|\cdot\|_k$ for the corresponding scalar product and norm on $\Ld^2((\R^d)^m)$ with fixed momentum variables. We further write $\3\cdot\3$ for the norm on $\Ld^2(\R\times\hat\Dd^m)$, including integration over the Laplace variable.
\item We use the short-hand notation $[a]=\{0,\ldots,a\}$ and $[a,b]=\{a,\ldots,b\}$ for integers $0\le a\le b$, and we set~$z_A=(z_{i_1},\ldots,z_{i_k})$ for an index subset $A=\{i_1,\ldots,i_k\}$.
\end{enumerate}

\section{Construction of an ansatz}\label{sec:ansatz}
As explained, instead of working directly with the formal renormalized Dyson expansion~\eqref{eq:Dyson-re} and estimating the truncation error, we interpret the truncated renormalized expansion as an ansatz defining an approximate solution of the hierarchy~\eqref{eq:hierarchy}, which is particularly convenient to derive error estimates. In this approach, an ansatz must be constructed for all correlation functions, thus requiring to consider the corresponding Dyson expansion for each. For instance, for $m=1$, the expansion reads
\begin{multline}\label{eq:Dyson-ter}
\Lc g_1^{N,m_0}
\,=\,\big(\tfrac{i}{\sqrt N}\big)\,{\tiny\begin{tikzpicture}[baseline={([yshift=-1ex]current bounding box.center)},scale=0.8]
\begin{scope}[every node/.style={circle,draw,fill=white,inner sep=0pt,minimum size=3pt}]
    \node (1) at (0.8,0) {};
\end{scope}
\draw[decorate,decoration={snake,amplitude=1pt,segment length=3pt}] (0,0.3) -- (0.8,0.3);
\draw[decorate,decoration={snake,amplitude=1pt,segment length=3pt}] (0,0) -- (1);
\draw (1) -- (0.8,0.3);
\end{tikzpicture}}\,\Lc g_0^{N,m_0}\\
+
\big(\tfrac{i}{\sqrt N}\big)^3\Big(
{\tiny\begin{tikzpicture}[baseline={([yshift=-1ex]current bounding box.center)},scale=0.8]
\begin{scope}[every node/.style={circle,draw,fill=white,inner sep=0pt,minimum size=3pt}]
\node (2) at (0.5,0) {};
\node (3) at (1,0) {};
\node (4) at (1.5,0) {};
\end{scope}
\draw[decorate,decoration={snake,amplitude=1pt,segment length=3pt}] (0,0) -- (2);
\draw[decorate,decoration={snake,amplitude=1pt,segment length=3pt}] (2) -- (3);
\draw[decorate,decoration={snake,amplitude=1pt,segment length=3pt}] (3)-- (4);
\draw (2) -- (0.5,0.6);
\draw (3) -- (1,0.3);
\draw (4) -- (1.5,0.6);
\draw[decorate,decoration={snake,amplitude=1pt,segment length=3pt}] (0,0.3) -- (1,0.3);
\draw[decorate,decoration={snake,amplitude=1pt,segment length=3pt}] (0.5,0.6) -- (1.5,0.6);
\end{tikzpicture}}
+
{\tiny\begin{tikzpicture}[baseline={([yshift=-1ex]current bounding box.center)},scale=0.8]
\begin{scope}[every node/.style={circle,draw,fill=white,inner sep=0pt,minimum size=3pt}]
\node (2) at (0.5,0.3) {};
\node (3) at (1,0.6) {};
\node (4) at (1.5,0) {};
\end{scope}
\draw[decorate,decoration={snake,amplitude=1pt,segment length=3pt}] (0,0) -- (4);
\draw[decorate,decoration={snake,amplitude=1pt,segment length=3pt}] (0,0.3)-- (2) -- (1,0.3);
\draw[decorate,decoration={snake,amplitude=1pt,segment length=3pt}] (0.5,0.6)-- (3) -- (1.5,0.6);
\draw (2) -- (0.5,0.6);
\draw (1.5,0.6) -- (4);
\draw (2) -- (0.5,0.6);
\draw (1,0.3) -- (3);
\end{tikzpicture}}
+
{\tiny\begin{tikzpicture}[baseline={([yshift=-1ex]current bounding box.center)},scale=0.8]
\begin{scope}[every node/.style={circle,draw,fill=white,inner sep=0pt,minimum size=3pt}]
\node (2) at (0.5,0.6) {};
\node (3) at (1,0) {};
\node (4) at (1.5,0) {};
\end{scope}
\draw [decorate,decoration={snake,amplitude=1pt,segment length=3pt}] (0,0)-- (3);
\draw [decorate,decoration={snake,amplitude=1pt,segment length=3pt}] (3) -- (4);
\draw [decorate,decoration={snake,amplitude=1pt,segment length=3pt}] (0,0.6) -- (2) -- (1.5,0.6);
\draw (1.5,0.6) -- (4);
\draw (2) -- (0.5,0.3);
\draw [decorate,decoration={snake,amplitude=1pt,segment length=3pt}] (0.5,0.3) -- (1,0.3);
\draw (1,0.3) -- (3);
\end{tikzpicture}}
+
{\tiny\begin{tikzpicture}[baseline={([yshift=-1ex]current bounding box.center)},scale=0.8]
\begin{scope}[every node/.style={circle,draw,fill=white,inner sep=0pt,minimum size=3pt}]
\node (1) at (0,0) {};
\node (2) at (-0.5,0.6) {};
\node (3) at (-1,0) {}; 
\end{scope}
\draw [decorate,decoration={snake,amplitude=1pt,segment length=3pt}] (1)-- (3);
\draw [decorate,decoration={snake,amplitude=1pt,segment length=3pt}] (3) -- (-1.5,0);
\draw (1) -- (0,0.6);
\draw [decorate,decoration={snake,amplitude=1pt,segment length=3pt}] (0,0.6) -- (2) -- (-1.5,0.6);
\draw (2) -- (-0.5,0.3);
\draw [decorate,decoration={snake,amplitude=1pt,segment length=3pt}] (-0.5,0.3) -- (-1,0.3);
\draw (-1,0.3) -- (3);
\end{tikzpicture}}
+
{\tiny\begin{tikzpicture}[baseline={([yshift=-1ex]current bounding box.center)},scale=0.8]
\begin{scope}[every node/.style={circle,draw,fill=white,inner sep=0pt,minimum size=3pt}]
\node (2) at (0.5,0.3) {};
\node (3) at (1,0) {};
\node (4) at (1.5,0) {};
\end{scope}
\draw [decorate,decoration={snake,amplitude=1pt,segment length=3pt}] (0,0) -- (0,0);
\draw [decorate,decoration={snake,amplitude=1pt,segment length=3pt}] (0,0) -- (3);
\draw [decorate,decoration={snake,amplitude=1pt,segment length=3pt}] (3) -- (4);
\draw [decorate,decoration={snake,amplitude=1pt,segment length=3pt}] (0,0.3) -- (2) -- (1,0.3);
\draw (1,0.3) -- (3);
\draw (2) -- (0.5,0.6);
\draw [decorate,decoration={snake,amplitude=1pt,segment length=3pt}] (0.5,0.6) -- (1.5,0.6);
\draw (1.5,0.6) -- (4);
\end{tikzpicture}}
+
{\tiny\begin{tikzpicture}[baseline={([yshift=-1ex]current bounding box.center)},scale=0.8]
\begin{scope}[every node/.style={circle,draw,fill=white,inner sep=0pt,minimum size=3pt}]
\node (1) at (0,0) {};
\node (2) at (-0.5,0.3) {};
\node (3) at (-1,0) {};
\end{scope}
\draw [decorate,decoration={snake,amplitude=1pt,segment length=3pt}] (1) -- (3);
\draw [decorate,decoration={snake,amplitude=1pt,segment length=3pt}] (3) -- (-1.5,0);
\draw (1) -- (0,0.3);
\draw [decorate,decoration={snake,amplitude=1pt,segment length=3pt}] (0,0.3) -- (2) -- (-1,0.3);
\draw (-1,0.3) -- (3);
\draw (2) -- (-0.5,0.6);
\draw [decorate,decoration={snake,amplitude=1pt,segment length=3pt}] (-0.5,0.6) -- (-1.5,0.6);
\end{tikzpicture}}
\Big)\Lc g_0^{N,m_0} +\ldots
\end{multline} 
The construction of a suitable ansatz follows the following strategy:
\begin{enumerate}[---]
\item For the tagged particle density, we simply choose $\tilde{g}_0^{N,m_0}:=\tilde{g}_0^N$ as the solution of the truncated non-Markovian equation~\eqref{eq:solve-trunc-hier} obtained in the formal derivation, independently of the truncation parameter $m_0$.
\smallskip\item For correlation functions, $1\le m\le m_0$, the ansatz $\tilde g_m^{N,m_0}$ is defined by truncating the formal Dyson expansion (e.g.~\eqref{eq:Dyson-ter} for $m=1$) and replacing all occurrences of the tagged particle density~${g}_0^{N,m_0}$ with its ansatz $\tilde{g}_0^{N}$. By including a sufficient number of diagrams, depending on $m_0$, this procedure is expected to yield a good approximation of $g_m^{N,m_0}$.
\end{enumerate}

\subsection{Ansatz for tagged particle density}
For the tagged particle density $g_0^{N,m_0}$, taking inspiration from~\eqref{eq:solve-trunc-hier}, we define $\tilde g_0^{N,m_0}:=\tilde g_0^{N}$ as the solution of the following equation,
\begin{equation*}
\left\{\begin{array}{l}
(\partial_\tau+\tfrac{\kappa t_N}N\hat D_0)\tilde g_0^{N}\,=\,
-\big(\frac{t_N}{\sqrt N}\big)^2\int_0^{\tau}\hat S_{0,1}^{0,+} e^{-t_N(\tau-\tau')(i\hat L_1+\frac\kappa N\hat D_1)} \hat S^{1,-}_{0,1}\tilde g_0^{N}(\tau')\,d\tau',\\[1mm]
\tilde g_0^N|_{\tau=0}=\mathfrak g.
\end{array}\right.
\end{equation*}
We emphasize that this choice is independent of the truncation parameter $m_0$.
Applying Laplace transform, this equation reads
\begin{equation}\label{eq:def-tilg0-re}
\Big(1+i\alpha+\tfrac{\kappa t_N}N\hat D_0+\tfrac{t_N}N\BOX\Big)\Lc\tilde g_0^{N}
= \mathfrak g.
\end{equation}
Well-posedness and regularity estimates for $\tilde g_0^{N}$ easily follow by noting that the hat operator is nonnegative and is controlled by diffusion in velocity; see Section~\ref{sec:g0} below.

\subsection{Ansatz for correlations}
The suitable choice of an ansatz for correlations $\{\tilde g_m^{N,m_0}\}_{1\le m\le m_0}$ is more delicate and will depend on the value of the truncation parameter $m_0$.
We start by introducing some combinatorial structure associated with the sequences of creation and annihilation operators that appear in the renormalized Dyson series.
\begin{enumerate}[---]
\item Informally, we define \emph{abstracts} as finite sequences of $\pm1$, where $+1$ and $-1$ correspond to creation and annihilation operators, respectively. As diagrams have complexity bounded by~$m_0$, we require that at intermediate steps we never reach more than $m_0$ background particles. We say that an abstract has degree $m$ if it creates in total $m$ background particles.
\smallskip\item We define \emph{histories} by complementing a given abstract with the set of annihilated or created indices. More precisely, at $i$-th collision, if $s_i=1$ (resp. $s_i=-1$), we let $b_i$ be the index of the created particle (resp. annihilated particle) and we let $a_i$ be the index of the particle with which it collides. As renormalization has removed all hats in the diagrams, we require that for $(s_i,s_{i+1})=(1,-1)$ we have $(a_i,b_i)\ne(a_{i+1},b_{i+1})$.
\end{enumerate}
There are various ways to truncate the Dyson series, but for convenience we shall proceed by restricting the summation to some special sets of abstracts, which we call \emph{admissible} sets.
More precisely, abstracts, histories, and admissible sets of abstracts are defined as follows.

\begin{defin}[Abstracts and histories]\label{def:hist}$ $
\begin{enumerate}[$\bullet$]
\item An \emph{abstract} is a finite sequence $(s_1,\ldots,s_n)\in\{\pm 1\}^n$ such that we have for some $0\le m\le m_0$,
\[m+\sum_{i=1}^n s_i =0,
\qquad\text{and}\qquad
0<m+\sum_{i=1}^j s_i\le m_0\quad\text{for all $1\le j< n$},\]
and $m$ is then called the \emph{degree} of the abstract.
\smallskip\item A set of abstracts $\Omega$ is \emph{admissible} if for any $(s_1,\ldots, s_n)\in \Omega$ we have $(s_2,\cdots s_n)\in\Omega$ and also $(-1,s_1,\cdots s_n)\in\Omega$ provided that its degree is still $\le m_0$. We naturally decompose $\Omega=\sqcup_{0\le m\le m_0}\Omega_m$, where $\Omega_m$ stands for the subset of abstracts with degree~$m$,
\[\Omega_m\,:=\,\Big\{(s_1,\ldots,s_n)\in\Omega\,:\,m+\sum_{i=1}^ns_i=0\Big\}.\]
In addition, we define the \emph{boundary} $\partial\Omega=\sqcup_{0\le m<m_0}\partial\Omega_m$ with
\[\partial\Omega_m\,:=\,\Big\{(1,s_1,\ldots,s_n)\notin\Omega_m\,:\,(s_1,\ldots,s_n)\in\Omega_{m+1}\Big\}.\]
\item Given an abstract $(s_1,\ldots,s_n)$ with degree $m$, we define associated \emph{histories} as sequences $(s_i,a_i,b_i)_{1\le i\le n}$ with $a_i\in\N$, $b_i\in\N\setminus\{0\}$, $a_i\ne b_i$, such that:
\begin{enumerate}[---]
\item if $(s_i,s_{i+1})=(1,-1)$, then $(a_i,b_i)\ne (a_{i+1},b_{i+1})$;
\item if $s_i=1$, then $a_i\in\omega_{i-1}$ and $b_i=1+\max\cup_{j:j<i}\omega_j=1+\max(m,a_1,b_1,\ldots,a_{i-1},b_{i-1})$;
\item if $s_i=-1$, then $a_i\in\omega_{i-1}$ and $b_i\in\omega_{i-1}\setminus\{a_i,0\}$;
\end{enumerate}
where the index sets $(\omega_i)_{0\le i\le n}$ are defined iteratively as follows,
\begin{equation}\label{eq:def-omj}
\omega_0:=\{0,\ldots,m\},\qquad \omega_{i}\,:=\,\left\{\begin{array}{ll}
\omega_{i-1}\cup\{1+\max\cup_{j:j<i}\omega_j\},&\text{if $s_i=1$},\\
\omega_{i-1}\setminus\{b_i\},&\text{if $s_i=-1$}.
\end{array}\right.
\end{equation}
We denote by $\mathfrak{H}(s_1,\ldots,s_n)$ the set of histories associated with the abstract $(s_1,\ldots,s_n)$.
\smallskip\item The contribution of an abstract $(s_1,\ldots,s_n)$ is defined as the following operator on $\Ld^2(\hat\Dd)$,
\begin{equation}\label{eq:contr-abstr}
\mathcal{I}_{(s_1,\ldots,s_n)} \,:=\, \sum_{(s_j,a_j,b_j)_{j}\in\mathfrak{H}(s_1,\ldots,s_n)} \,\hat{S}^{s_1}_{a_1,b_1}\,\WIGdot\,\ldots\,\WIGdot\,\hat{S}^{s_n}_{a_n,b_n},
\end{equation}
where we omit the superscript indicating the number of particles in $\hat S^{\pm}_{a_j,b_j}$ as there is no ambiguity. This defines an operator $\Ld^2(\hat\Dd)\to\Ld^2(\hat\Dd^{m+1})$ if the abstract has degree $m$.
\end{enumerate}
\end{defin}

In terms of this combinatorial structure, we are now in position to define the suitable ansatz for correlation functions: given an admissible set of abstracts $\Omega$, which will be chosen later on, we define for all $1\le m\le m_0$,
\begin{equation}\label{eq:def-tilgm}
\Lc\tilde g^{N, m_0}_m \,:=\, \sum_{n\ge1}\Big(\frac{i}{\sqrt{N}}\Big)^n\sum_{(s_1,\ldots,s_n)\in\Omega_m} 	\WIGdot\,\mathcal{I}_{(s_1,\ldots,s_n)}\,\Lc\tilde g^{N}_0,
\end{equation}
where we recall that $\tilde g^{N}_0$ is the ansatz~\eqref{eq:def-tilg0-re} for the tagged particle density. (We emphasize that for~$m=0$ this formula does not hold as it does not coincide with the actual choice $\tilde g_0^{N,m_0}=\tilde g_0^N$.)

\subsection{Approximate hierarchy}
In order to compare the above constructed ansatz with the true solution of the hierarchy~\eqref{eq:hierarchy}, we compare the equations they satisfy: the ansatz only satisfies the desired hierarchy up to some remainder terms. The notion of admissible abstracts in the definition of the ansatz is precisely tailored to yield a tractable formula for the remainder terms.

\begin{lem}\label{lem:approx-hier}
The ansatz~\eqref{eq:def-tilg0-re}--\eqref{eq:def-tilgm} satisfies for all $0\le m\le m_0$,
\begin{equation}\label{eq:rem}
\Big(\partial_\tau + it_N\hat L_m  +\tfrac{\kappa t_N}{N}\hat D_m\Big) \tilde{g}^{N,m_0}_m \,=\, \frac{it_N}{\sqrt{N}} \Big(\hat S^+_{m}\tilde{g}^{N,m_0}_{m+1}+\hat S^-_{m}\tilde{g}^{N,m_0}_{m-1}\Big)+R^{N,m_0}_m,
\end{equation}
with initial data $\tilde g_m^{N,m_0}|_{\tau=0}=\mathfrak g\mathds1_{m=0}$, where we have set for convenience $\tilde g_m^{N,m_0}=0$ for $m<0$ or $m>m_0$, and where the remainder terms are given as follows for all $1\le m\le m_0$,
\begin{eqnarray}
\Lc R_0^{N,m_0}&:=&-\frac{t_N}{N^2}\sum_{n\ge3}\Big(\frac{i}{\sqrt N}\Big)^{n-3}\!\!\sum_{(s_1,\ldots,s_n)\in\Omega_1}\hat S^+_{0}\,\WIGdot\,\Ic_{(s_1,\ldots,s_n)}\Lc\tilde g^{N}_{0}\nonumber\\
&&+\frac{t_N}{N^2}\mathds1_{m_0\ge2}\Big({\tiny\begin{tikzpicture}[baseline={([yshift=-1ex]current bounding box.center)},scale=0.8]
\begin{scope}[every node/.style={circle,draw,fill=white,inner sep=0pt,minimum size=3pt}]
\node (1) at (0,0) {};
\node (2) at (1.5,0) {};
\node (3) at (0.5,0.33) {};
\node (4) at (1,0.33) {};
\end{scope}
\draw (0.5,0) -- (1) -- (0,0.33) -- (3) -- (0.5,0.66);
\draw (1) -- (0,0.33);
\draw (2) -- (1.5,0.33);
\draw (4) -- (1,0.66);
\draw [decorate,decoration={snake,amplitude=1pt,segment length=3pt}] (3) -- (4) -- (1.5,0.33);
\draw [decorate,decoration={snake,amplitude=1pt,segment length=3pt}] (0.5,0) -- (2);
\draw [decorate,decoration={snake,amplitude=1pt,segment length=3pt}] (0.5,0.66) -- (1,0.66);
\end{tikzpicture}}
+
{\tiny\begin{tikzpicture}[baseline={([yshift=-1ex]current bounding box.center)},scale=0.8]
\begin{scope}[every node/.style={circle,draw,fill=white,inner sep=0pt,minimum size=3pt}]
\node (1) at (0,0) {};
\node (2) at (0.5,0) {};
\node (3) at (1,0) {};
\node (4) at (1.5,0) {};
\end{scope}
\draw (0.5,0.33) -- (2) -- (1) -- (0,0.66) -- (0.5,0.66);
\draw (3) -- (1,0.33);
\draw (1.5,0.66) -- (4);
\draw [decorate,decoration={snake,amplitude=1pt,segment length=3pt}] (2) -- (3) -- (4);
\draw [decorate,decoration={snake,amplitude=1pt,segment length=3pt}] (0.5,0.33) -- (1,0.33);
\draw [decorate,decoration={snake,amplitude=1pt,segment length=3pt}] (0.5,0.66) -- (1.5,0.66);
\end{tikzpicture}}
\Big)\Lc\tilde g_0^N,\label{eq:RN0}\\[1mm]
\Lc R_m^{N,m_0}&:=&\frac{it_N}{N^{3/2}}\mathds1_{m<m_0}
\sum_{n\ge3}\Big(\frac{i}{\sqrt N}\Big)^{n-3}\sum_{(s_1,\ldots,s_n)\in\partial\Omega_{m}}\Ic_{(s_1,\ldots,s_n)}\Lc\tilde g_0^N.\label{eq:RNm}
\end{eqnarray}
\end{lem}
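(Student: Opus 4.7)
The plan is a direct algebraic computation in Laplace space that combines Definition~\ref{def:renorm} of the renormalized propagators with the combinatorial structure of abstracts and histories. For $m=0$, using $\hat L_0=0$ on $\hat\Dd^1$ and $\tilde g_{-1}^{N,m_0}=0$, the ansatz equation~\eqref{eq:def-tilg0-re} gives directly
\[
\Lc R_0^{N,m_0} \,=\, -\tfrac{t_N}{N}\BOX\Lc\tilde g_0^N - \tfrac{it_N}{\sqrt N}\hat S_0^+\Lc\tilde g_1^{N,m_0}.
\]
Expanding $\Lc\tilde g_1^{N,m_0}$ via~\eqref{eq:def-tilgm}, the $n=1$ term (the only length-$1$ abstract in $\Omega_1$ is $(-1)$) produces $\tfrac{i}{\sqrt N}\sum\WIGBOX\,\Lc\tilde g_0^N$ at the $m=0$ level; the resolvent identity $\WIGdot = P_1 - P_1\cdot\tfrac{1}{N}\sum\WIGBOX\cdot\WIGdot$ (with $P_1$ the unrenormalized $1$-particle propagator appearing in~\eqref{eq:BOX}) then rewrites the discrepancy $\tfrac{t_N}{N}(\sum\WIGBOX-\BOX)$ as the two explicit $O(t_N/N^2)$ diagrams in~\eqref{eq:RN0}, which vanish when $m_0=1$. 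The terms $n\ge 3$ produce the first sum in~\eqref{eq:RN0}, with matching prefactor via the elementary identity $-\tfrac{it_N}{\sqrt N}(\tfrac{i}{\sqrt N})^n = -\tfrac{t_N}{N^2}(\tfrac{i}{\sqrt N})^{n-3}$ (using $i^4=1$).

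For $1\le m\le m_0$, I would write $\Lc\tilde g_m^{N,m_0} = \WIGdot\, T_m$ with $T_m := \sum_{n\ge 1}(\tfrac{i}{\sqrt N})^n\sum_{(s_1,\ldots,s_n)\in\Omega_m}\mathcal{I}_{(s_1,\ldots,s_n)}\Lc\tilde g_0^N$, and observe that by Definition~\ref{def:renorm} the operator $\tfrac{1+i\alpha}{t_N}+i\hat L_m+\tfrac{\kappa}{N}\hat D_m$ equals $\WIGdot^{-1}-\tfrac{1}{N}\sum\WIGBOX$ for $m<m_0$ (and simply $\WIGdot^{-1}$ for $m=m_0$), so that applying it to $\Lc\tilde g_m^{N,m_0}$ produces $T_m-\tfrac{1}{N}\sum\WIGBOX\,\Lc\tilde g_m^{N,m_0}$. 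I would then split $T_m$ according to the first symbol $s_1\in\{\pm 1\}$ of the abstract. The case $s_1=-1$ is clean since no history constraint is attached to $s_1$: the identity $\mathcal{I}_{(-1,s_2,\ldots,s_n)} = \hat S_m^-\,\WIGdot\,\mathcal{I}_{(s_2,\ldots,s_n)}$ together with the admissibility-induced bijection $\{(-1,s_2,\ldots,s_n)\in\Omega_m\}\leftrightarrow\{(s_2,\ldots,s_n)\in\Omega_{m-1}\}$ yields $T_m^{(s_1=-1)} = \tfrac{i}{\sqrt N}\hat S_m^-\Lc\tilde g_{m-1}^{N,m_0}$, matching the expected source term in~\eqref{eq:rem} (the $m=1$ subcase reducing to the length-$1$ abstract $(-1)$ and $\tilde g_0^{N,m_0}=\tilde g_0^N$).

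The case $s_1=+1$ (present only when $m<m_0$) is the combinatorial crux and the main obstacle to organize cleanly. The admissibility constraint $(a_1,b_1)\ne(a_2,b_2)$ for $(s_1,s_2)=(+1,-1)$ forces $\hat S_m^+\,\WIGdot\,\mathcal{I}_{(s_2,\ldots,s_n)}$ to overcount $\mathcal{I}_{(+1,s_2,\ldots,s_n)}$ by exactly the constraint-violating configurations, which I would identify with the summands of $\sum\WIGBOX\,\WIGdot\,\mathcal{I}_{(s_3,\ldots,s_n)}$ (the two $\hat S$'s with coinciding indices collapsing into a hat operator); this yields the key decomposition identity
\[
\mathcal{I}_{(+1,s_2,\ldots,s_n)} \,=\, \hat S_m^+\,\WIGdot\,\mathcal{I}_{(s_2,\ldots,s_n)} \,-\, \delta_{s_2=-1}\,\sum\WIGBOX\,\WIGdot\,\mathcal{I}_{(s_3,\ldots,s_n)}.
\]
Summing over $(+1,s_2,\ldots,s_n)\in\Omega_m$ and using the admissibility-induced partition $\{(s_2,\ldots,s_n)\in\Omega_{m+1}\}=\{(+1,s_2,\ldots,s_n)\in\Omega_m\}\sqcup\{(+1,s_2,\ldots,s_n)\in\partial\Omega_m\}$, the first right-hand term produces $\tfrac{i}{\sqrt N}\hat S_m^+\Lc\tilde g_{m+1}^{N,m_0}$ plus a boundary correction, and the second produces $-\tfrac{1}{N}\sum\WIGBOX\,\Lc\tilde g_m^{N,m_0}$ (exactly canceling the renormalization contribution identified earlier) plus another boundary correction. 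Reassembling both boundary corrections via the same decomposition identity applied in reverse collapses them into the single sum $-\sum_{n\ge 3}(\tfrac{i}{\sqrt N})^n\sum_{(s_1,\ldots,s_n)\in\partial\Omega_m}\mathcal{I}_{(s_1,\ldots,s_n)}\Lc\tilde g_0^N$, which coincides with~\eqref{eq:RNm}/$t_N$ via $-(\tfrac{i}{\sqrt N})^n = \tfrac{i}{N^{3/2}}(\tfrac{i}{\sqrt N})^{n-3}$ (and the sum starts at $n=3$ since $(+1,s_2)\in\partial\Omega_m$ would require $s_2=-(m+1)\notin\{\pm 1\}$ for $m\ge 1$). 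The case $m=m_0$ falls out trivially: $\Omega_{m_0+1}=\emptyset$ kills all $s_1=+1$ contributions and the $\sum\WIGBOX$ renormalization term is absent, giving $R_{m_0}^{N,m_0}=0$ as prescribed by the indicator $\mathds 1_{m<m_0}$.
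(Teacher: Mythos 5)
Your proposal is correct and takes essentially the same route as the paper's proof. The core decomposition identity $\Ic_{(1,s_2,\ldots,s_n)} = \hat S_m^+\,\WIGdot\,\Ic_{(s_2,\ldots,s_n)} - \mathds1_{s_2=-1}\big(\sum\WIGBOX\big)\WIGdot\,\Ic_{(s_3,\ldots,s_n)}$, the resolvent identity for the renormalized propagator, the cancellation of the $\sum\WIGBOX$ contribution upon summation over $\Omega_m$ via admissibility, and the identification of the remaining boundary corrections as the sum over $\partial\Omega_m$ are precisely the paper's ingredients; the only cosmetic difference is that you split the sum explicitly by the sign of $s_1$ and handle the two cases separately, whereas the paper carries the indicator $\mathds1_{s_1=\pm1}$ through a single telescoping computation over all abstracts.
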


\begin{proof}
For $m=0$, inserting the definition of $\tilde g_0^N$ and $\tilde g_1^{N,m_0}$, cf.~\eqref{eq:def-tilg0-re} and~\eqref{eq:def-tilgm}, we find
\begin{eqnarray*}
\lefteqn{\Big(1+i\alpha +\tfrac{\kappa t_N}{N}\hat D_0\Big)\Lc\tilde g^{N}_0 - \frac{it_N}{\sqrt{N}} \hat S^+_{0}\Lc\tilde g^{N,m_0}_{1}}\\
&=&\mathfrak g+t_N\Big(\frac{i}{\sqrt N}\Big)^2\BOX\,\Lc\tilde g_0^N- t_N\sum_{n\ge1}\Big(\frac{i}{\sqrt N}\Big)^{n+1}\!\!\sum_{(s_1,\ldots,s_n)\in\Omega_1}\hat S^+_{0}\,\WIGdot\,\Ic_{(s_1,\ldots,s_n)}\Lc\tilde g^{N}_{0}\\
&=&\mathfrak g+t_N\Big(\frac{i}{\sqrt N}\Big)^2\Big(\BOX-\WIGbox\Big)\,\Lc\tilde g_0^N- t_N\sum_{n\ge3}\Big(\frac{i}{\sqrt N}\Big)^{n+1}\!\!\sum_{(s_1,\ldots,s_n)\in\Omega_1}\hat S^+_{0}\,\WIGdot\,\Ic_{(s_1,\ldots,s_n)}\Lc\tilde g^{N}_{0},
\end{eqnarray*}
where we have noted that for $n=1$ the only length-$1$ abstract $(s_1)\in\Omega_1$ is $(s_1)=(-1)$, and that the length of abstracts in $\Omega_1$ is always odd. By definition of the renormalized propagator, cf.~Definition~\ref{def:renorm}, the resolvent identity yields
\[\BOX-\WIGbox\,=\,\frac1N\mathds1_{m_0\ge2}\Big({\tiny\begin{tikzpicture}[baseline={([yshift=-1ex]current bounding box.center)},scale=0.8]
\begin{scope}[every node/.style={circle,draw,fill=white,inner sep=0pt,minimum size=3pt}]
\node (1) at (0,0) {};
\node (2) at (1.5,0) {};
\node (3) at (0.5,0.33) {};
\node (4) at (1,0.33) {};
\end{scope}
\draw (0.5,0) -- (1) -- (0,0.33) -- (3) -- (0.5,0.66);
\draw (1) -- (0,0.33);
\draw (2) -- (1.5,0.33);
\draw (4) -- (1,0.66);
\draw [decorate,decoration={snake,amplitude=1pt,segment length=3pt}] (3) -- (4) -- (1.5,0.33);
\draw [decorate,decoration={snake,amplitude=1pt,segment length=3pt}] (0.5,0) -- (2);
\draw [decorate,decoration={snake,amplitude=1pt,segment length=3pt}] (0.5,0.66) -- (1,0.66);
\end{tikzpicture}}
+
{\tiny\begin{tikzpicture}[baseline={([yshift=-1ex]current bounding box.center)},scale=0.8]
\begin{scope}[every node/.style={circle,draw,fill=white,inner sep=0pt,minimum size=3pt}]
\node (1) at (0,0) {};
\node (2) at (0.5,0) {};
\node (3) at (1,0) {};
\node (4) at (1.5,0) {};
\end{scope}
\draw (0.5,0.33) -- (2) -- (1) -- (0,0.66) -- (0.5,0.66);
\draw (3) -- (1,0.33);
\draw (1.5,0.66) -- (4);
\draw [decorate,decoration={snake,amplitude=1pt,segment length=3pt}] (2) -- (3) -- (4);
\draw [decorate,decoration={snake,amplitude=1pt,segment length=3pt}] (0.5,0.33) -- (1,0.33);
\draw [decorate,decoration={snake,amplitude=1pt,segment length=3pt}] (0.5,0.66) -- (1.5,0.66);
\end{tikzpicture}}
\Big),\]
and we conclude that equation~\eqref{eq:rem} indeed holds for $m=0$ with remainder $R_0^{N,m_0}$ given by the claimed formula~\eqref{eq:RN0}.

We turn to the equation for higher correlations, $m\ge1$. Given an abstract $(s_1,\ldots,s_n)$ with degree~$m$, by definition of the renormalized propagator, the resolvent identity yields
\begin{equation*}
\Big(1+i\alpha+it_N\hat L_m+\tfrac{\kappa t_N}N\hat D_m\Big)\WIGdot\,\Ic_{(s_1,\ldots,s_n)}
\,=\,t_N\Ic_{(s_1,\ldots,s_n)}+t_N\Big(\frac{i}{\sqrt N}\Big)^2\Big(\sum\WIGBOX\Big)\WIGdot\,\Ic_{(s_1,\ldots,s_n)}.
\end{equation*}
By definition~\eqref{eq:contr-abstr} of $\Ic_{(s_1,\ldots,s_n)}$ and by definition of histories, cf.~Definition~\ref{def:hist}, we can decompose
\begin{equation*}
\Ic_{(s_1,\ldots,s_n)}
\,=\,\Big(\mathds1_{s_1=1}\hat S^{+}_m
+\mathds1_{s_1=-1}\hat S^{-}_m\Big)\WIGdot\,\Ic_{(s_2,\ldots,s_n)}\\
-\mathds1_{(s_1,s_2)=(1,-1)}\Big(\sum\WIGBOX\Big)\WIGdot\,\Ic_{(s_3,\ldots,s_n)},
\end{equation*}
so the above becomes
\begin{multline*}
\Big(1+i\alpha+it_N\hat L_m+\tfrac{\kappa t_N}N\hat D_m\Big)\WIGdot\,\Ic_{(s_1,\ldots,s_n)}
-t_N\Big(\mathds1_{s_1=1}\hat S^{+}_m
+\mathds1_{s_1=-1}\hat S^{-}_m\Big)\WIGdot\,\Ic_{(s_2,\ldots,s_n)}\\
\,=\,t_N\Big(\sum\WIGBOX\Big)\bigg(\Big(\frac{i}{\sqrt N}\Big)^2\WIGdot\,\Ic_{(s_1,\ldots,s_n)}-\mathds1_{(s_1,s_2)=(1,-1)}\WIGdot\,\Ic_{(s_3,\ldots,s_n)}\bigg).
\end{multline*}
Summing over admissible abstracts, recalling the definition of the ansatz, cf.~\eqref{eq:def-tilgm}, and simplifying the telescoping sum, we find
\begin{multline}\label{eq:decomp-eqn-gm}
\Big(1+i\alpha+it_N\hat L_m+\tfrac{\kappa t_N}N\hat D_m\Big)\Lc\tilde g_m^{N,m_0}
-\frac{it_N}{\sqrt N}\hat S^{+}_m\sum_{n\ge1}\Big(\frac{i}{\sqrt N}\Big)^{n}\sum_{(s_1,\ldots,s_n)\in\Omega_{m+1}\atop(1,s_1,\ldots,s_n)\in\Omega_m}\WIGdot\,\Ic_{(s_1,\ldots,s_n)}\Lc\tilde g_0^N\\
-\mathds1_{m=1}\frac{it_N}{\sqrt N}\hat S^{-}_m\Lc\tilde g_0^N
-\mathds1_{m\ge2}\frac{it_N}{\sqrt N} \hat S^{-}_m \sum_{n\ge1}\Big(\frac{i}{\sqrt N}\Big)^{n}\sum_{(s_1,\ldots,s_{n})\in\Omega_{m-1}\atop(-1,s_1,\ldots,s_n)\in\Omega_m}\WIGdot\,\Ic_{(s_1,\ldots,s_n)}\Lc\tilde g_0^N\\
\,=\,-\frac{t_N}{N}\Big(\sum\WIGBOX\Big)\sum_{n\ge1}\Big(\frac{i}{\sqrt N}\Big)^{n}\sum_{(s_1,\ldots,s_n)\in\Omega_m\atop(1,-1,s_1,\ldots,s_n)\notin\Omega_m}\WIGdot\,\Ic_{(s_1,\ldots,s_n)}\Lc\tilde g_0^N.
\end{multline}
Now using that $\Omega$ is an admissible set of abstracts, and recognizing the definition~\eqref{eq:def-tilgm} of the ansatz, we obtain
\begin{multline*}
\Big(1+i\alpha+it_N\hat L_m+\tfrac{\kappa t_N}N\hat D_m\Big)\Lc\tilde g_m^{N,m_0}
-\frac{it_N}{\sqrt N}\Big(\hat S^{+}_m\Lc\tilde g_{m+1}^{N,m_0}+\hat S^{-}_m\Lc\tilde g_{m-1}^{N,m_0}\Big)\\
\,=\,-\frac{it_N}{\sqrt N}\hat S^{+}_m\sum_{n\ge1}\Big(\frac{i}{\sqrt N}\Big)^{n}\sum_{(s_1,\ldots,s_n)\in\Omega_{m+1}\atop(1,s_1,\ldots,s_n)\notin\Omega_m}\WIGdot\,\Ic_{(s_1,\ldots,s_n)}\Lc\tilde g_0^N\\
-\frac{t_N}{N}\Big(\sum\WIGBOX\Big)\sum_{n\ge1}\Big(\frac{i}{\sqrt N}\Big)^{n}\sum_{(s_1,\ldots,s_n)\in\Omega_m\atop(1,-1,s_1,\ldots,s_n)\notin\Omega_m}\WIGdot\,\Ic_{(s_1,\ldots,s_n)}\Lc\tilde g_0^N.
\end{multline*}
Further reorganizing the right-hand side and recalling the definition of the boundary $\partial\Omega_m$, we conclude that equation~\eqref{eq:rem} indeed holds for $1\le m\le m_0$ with remainder $R_m^{N,m_0}$ given by the claimed formula~\eqref{eq:RNm}.
\end{proof}

\subsection{Error estimates}
As the ansatz is an approximate solution of the hierarchy~\eqref{eq:hierarchy}, cf.~Lemma~\ref{lem:approx-hier}, we can now appeal to the unitary structure of the latter to deduce error estimates.

\begin{lem}\label{lem:err-est0}
The ansatz~\eqref{eq:def-tilg0-re}--\eqref{eq:def-tilgm} satisfies for all $0\le m\le m_0$,
\begin{equation*}
\sup_{\tau\ge0}\Big(e^{-\tau}\|(g_m^{N,m_0}-\tilde g_m^{N,m_0})(\tau)\|\Big)+\3\Lc g_m^{N,m_0}-\Lc\tilde g_m^{N,m_0}\3\,\lesssim\,\Big(\sum_{m=0}^{m_0}\3\Lc R_m^{N,m_0}\3^2\Big)^\frac12,
\end{equation*}
where $(R_m^{N,m_0})_{0\le m\le m_0}$ is defined in~\eqref{eq:RN0}--\eqref{eq:RNm}.
\end{lem}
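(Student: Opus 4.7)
The plan is a direct energy estimate on the difference of the two hierarchies, exploiting precisely the unitary structure that already underlies the a priori bound~\eqref{eq:apriori-GNm-re}. First I would set $e_m:=g_m^{N,m_0}-\tilde g_m^{N,m_0}$ and subtract the approximate hierarchy~\eqref{eq:rem} from the true hierarchy~\eqref{eq:hierarchy} (after multiplying the latter by $t_N$ so that the time derivatives match). This yields, for $0\le m\le m_0$,
\begin{equation*}
\Big(\partial_\tau+it_N\hat L_m+\tfrac{\kappa t_N}N\hat D_m\Big)e_m\,=\,\tfrac{it_N}{\sqrt N}\Big(\hat S_m^+e_{m+1}+\hat S_m^-e_{m-1}\Big)-R_m^{N,m_0},
\end{equation*}
with zero initial data $e_m|_{\tau=0}=0$ for all $m$, and with the convention $e_m\equiv 0$ for $m\notin[0,m_0]$.

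Next I would test this equation against $e_m$ in $\Ld^2(\hat\Dd^{m+1})$, sum over $m$, and take the real part. Three algebraic facts cancel all the deterministic terms: $\hat L_m$ is self-adjoint, so $\operatorname{Re}\langle it_N\hat L_me_m,e_m\rangle=0$; the operator $\hat D_m=-\sum_{j}\triangle_{v_j}$ is non-negative self-adjoint, so $\frac{\kappa t_N}{N}\langle\hat D_me_m,e_m\rangle\ge 0$ and can be discarded; and, most importantly, the exact adjointness~\eqref{eq:adjS-re} combined with the boundary conditions $e_{-1}=e_{m_0+1}=0$ gives, after shifting indices,
\begin{equation*}
\sum_{m=0}^{m_0}\Big\langle i\hat S_m^+e_{m+1}+i\hat S_m^-e_{m-1},\,e_m\Big\rangle\,=\,i\sum_{m=1}^{m_0}\Big(\langle\hat S_m^-e_{m-1},e_m\rangle+\overline{\langle\hat S_m^-e_{m-1},e_m\rangle}\Big),
\end{equation*}
which is purely imaginary. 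Setting $E(\tau):=\sum_m\|e_m(\tau)\|^2$ and $P(\tau):=\sum_m\|R_m^{N,m_0}(\tau)\|^2$, the resulting real energy identity together with Cauchy--Schwarz and Young yields the differential inequality
\begin{equation*}
E'(\tau)\,\le\,2\sum_{m=0}^{m_0}\|R_m^{N,m_0}\|\,\|e_m\|\,\le\,P(\tau)+E(\tau),\qquad E(0)=0.
\end{equation*}

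Finally I would close with a Gronwall argument tuned to the exponential weight $e^{-(1+i\alpha)\tau}$ defining the Laplace transform~\eqref{eq:Lap}. Rewriting the inequality as
\begin{equation*}
\tfrac{d}{d\tau}\bigl(e^{-2\tau}E(\tau)\bigr)+e^{-2\tau}E(\tau)\,\le\,e^{-2\tau}P(\tau),
\end{equation*}
and integrating from $0$ to an arbitrary $T\ge 0$, both non-negative quantities on the left are controlled by the same right-hand side:
\begin{equation*}
e^{-2T}E(T)+\int_0^Te^{-2\tau}E(\tau)\,d\tau\,\le\,\int_0^\infty e^{-2\tau}P(\tau)\,d\tau\,=\,\sum_{m=0}^{m_0}\3\Lc R_m^{N,m_0}\3^2,
\end{equation*}
where the last equality is Plancherel for the Laplace transform. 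The first term on the left bounds $\bigl(\sup_{\tau\ge 0}e^{-\tau}\|e_m(\tau)\|\bigr)^2$ for each $m$, and the second bounds $\3\Lc e_m\3^2=\int_0^\infty e^{-2\tau}\|e_m\|^2d\tau$, again by Plancherel; combining them produces exactly the claimed bound. The only step requiring any care is the index bookkeeping in the adjointness cancellation above; everything else is a textbook weighted energy estimate.
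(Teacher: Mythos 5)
Your proof is correct and rests on exactly the same two pillars as the paper's: subtracting the approximate hierarchy from the exact one, then running an energy estimate that kills the $\hat S_m^\pm$ terms via the adjointness relation~\eqref{eq:adjS-re} and drops the nonnegative dissipation. The only difference is in how the final conclusion is extracted: the paper proves the two parts of the bound separately --- integrating $\partial_\tau\sqrt E\le\sqrt P$ directly for the pointwise-in-time bound, then repeating the energy argument at the Laplace level for the $\3\cdot\3$ bound --- whereas you package both into a single weighted Gronwall inequality for $e^{-2\tau}E(\tau)$ followed by Plancherel for the Laplace transform~\eqref{eq:Lap}; this is a somewhat tidier route to the same estimate, and your index bookkeeping in the adjointness cancellation (with the conventions $e_{-1}=e_{m_0+1}=0$) checks out.
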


\begin{proof}
As the ansatz satisfies the approximate hierarchy~\eqref{eq:rem}, we have for all $0\le m\le m_0$,
\begin{multline*}
\Big(t_N^{-1}\partial_\tau+i\hat L_m+\tfrac\kappa N\hat D_m\Big)(g_m^{N}-\tilde g_m^{N,m_0})\\[-2mm]
=\frac1{\sqrt N}\Big(i\hat S^+_m(g_{m+1}^N-\tilde g_{m+1}^{N,m_0})+i\hat S_m^-(g_{m-1}^{N}-\tilde g_{m-1}^{N,m_0})\Big)-t_N^{-1}R_m^{N,m_0},
\end{multline*}
with vanishing initial data $(g_m^{N,m_0}-\tilde g_m^{N,m_0})|_{\tau=0}=0$.
By the unitarity structure of the hierarchy, cf.~\eqref{eq:adjS-re}, we obtain the energy identity
\begin{equation*}
\partial_\tau\sum_{m=0}^{m_0}\|g_m^{N,m_0}-\tilde g_m^{N,m_0}\|^2
+2\frac{\kappa t_N}N\sum_{m=0}^{m_0}\|\nabla_{v_{[m]}}(g_m^{N,m_0}-\tilde g_m^{N,m_0})\|^2
=-2\sum_{m=0}^{m_0}\<g_m^{N,m_0}-\tilde g_m^{N,m_0},R_m^{N,m_0}\>.
\end{equation*}
Integrating in time and using nonnegativity of the dissipation term, we find
\begin{equation*}
\Big(\sum_{m=0}^{m_0}\|(g_m^{N,m_0}-\tilde g_m^{N,m_0})(\tau)\|^2\Big)^\frac12
\le\int_0^\tau\Big(\sum_{m=0}^{m_0}\|R_m^{N,m_0}\|^2\Big)^\frac12.
\end{equation*}
To estimate the right-hand side, we use the bound $\int_0^\tau \|\varphi\|\lesssim(\int_0^\tau e^{2\sigma}d\sigma)^{1/2}\3\Lc\varphi\3\le e^\tau\3\Lc\varphi\3$ in terms of the Laplace transform~\eqref{eq:Lap}. This yields the first part of the claim. For the second part, we apply Laplace transform to the evolution equation satisfied by $g_m^N-\tilde g_m^{N,m_0}$ and repeat the above estimate at the level of Laplace transforms.
\end{proof}

\section{Renormalized propagators}\label{sec:ren}
This section is devoted to the study of the renormalized propagators $\WIGdot$ introduced in Definition~\ref{def:renorm}, which amounts to formally resumming the contributions of hat (and nested hat) operators.

\subsection{Properties of the hat operator}
Before analyzing the renormalization procedure, we start by studying the hat operator $\BOXdot$ and we show that it is nonnegative and controlled by diffusion in velocity.
	
\begin{lem}\label{lem:box}
For all $0\le m<m_0$, $g_m,h_m\in C^\infty_c(\hat\Dd^{m+1})$, and $\ell\ge0$, we have
\begin{eqnarray*}
\Re\Big\<g_m\,,\,\Big(\sum\BOXdot\Big)\,g_m\Big\>_k&\ge&0,\\
\Big\|\nabla_{v_{[m]}}^\ell\Big(\sum\BOXdot\Big)g_m\Big\|_k&\lesssim_\ell&\sum_{s=0}^{\ell+1}C_s^\Vc\<k_{[m]}\>^s\|\nabla_{v_{[m]}}^{\ell+2-s}g_m\|_k,\\
\Big|\Big\<h_m\,,\,\nabla_{v_{[m]}}^\ell\Big(\sum\BOXdot\Big)g_m\Big\>_k\Big|
&\lesssim_\ell&\|\nabla_{v_{[m]}}h_m\|_k\sum_{s=0}^\ell C_s^\Vc\<k_{[m]}\>^s\|\nabla_{v_{[m]}}^{\ell+1-s}g_m\|_k,
\end{eqnarray*}
where we have defined
\begin{equation*}
C_s^\Vc\,:=\,\int_{\R^d}\<k\>^s|k|^{1-s}\V(k)^2\,dk.
\end{equation*}
In particular, we can deduce
\begin{equation}\label{eq:cor-box}
\Re\Big\<\nabla_{v_{[m]}}^\ell g_m\,,\, \nabla_{v_{[m]}}^\ell\Big(\sum\BOXdot\Big)g_m\Big\>_k
\,\gtrsim_\ell\,-\|\nabla_{v_{[m]}}^{\ell+1}g_m\|_k\sum_{s=1}^\ell C_s^\Vc\<k_{[m]}\>^s\|\nabla_{v_{[m]}}^{\ell+1-s}g_m\|_k.
\end{equation}
\end{lem}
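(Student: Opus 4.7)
The plan is to establish the three inequalities in succession and then deduce the corollary~\eqref{eq:cor-box}.

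For the nonnegativity, I would recognize each summand $\hat S_{j,m+1}^{m,+}R_{m+1}\hat S_{j,m+1}^{m+1,-}$, with
\[
R_{m+1}\,:=\,\Big(\tfrac{1+i\alpha}{t_N}+i\hat L_{m+1}+\tfrac\kappa N\hat D_{m+1}\Big)^{-1},
\]
as a positive-semidefinite sandwich at fixed $k_{[m]}$. An integration by parts in $v_j$ combined with a relabelling of the created momentum $k_{m+1}$ identifies $\hat S^{m,+}_{j,m+1}$ as the formal adjoint of $\hat S^{m+1,-}_{j,m+1}$ in the spirit of~\eqref{eq:adjS-re}, producing
\[
\<g_m,\hat S^{m,+}_{j,m+1}R_{m+1}\hat S^{m+1,-}_{j,m+1}g_m\>_k
\,=\,\int d^*k_{m+1}\,\<\hat S^{m+1,-}_{j,m+1}g_m,R_{m+1}\hat S^{m+1,-}_{j,m+1}g_m\>_{k'},
\]
where $k'=(k_0,\ldots,k_j-k_{m+1},\ldots,k_m,k_{m+1})$ is the intermediate level-$(m{+}1)$ momentum tuple. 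Since $\Re\,R_{m+1}^{-1}=\tfrac1{t_N}+\tfrac\kappa N\hat D_{m+1}$ is self-adjoint and nonnegative, $\Re\,R_{m+1}\ge0$ as an operator, and summation over $j$ concludes this first estimate.

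For the strong bound, the key device is a contour deformation $v_{m+1}\mapsto v_{m+1}-i\hat k_{m+1}$, with $\hat k_{m+1}:=k_{m+1}/|k_{m+1}|$, applied inside the $v_{m+1}$-integral defining $\hat S^{m,+}_{j,m+1}$. By analyticity of $\sqrt M$ and of $R_{m+1}$ in $v_{m+1}$ (the latter via a Neumann expansion in $\tfrac\kappa N\hat D_{m+1}$, which is entire in $v_{m+1}$), the shift is legitimate: it turns $R_{m+1}$ into a shifted resolvent $R'_{m+1}$ with $\Re\big((R'_{m+1})^{-1}\big)\ge|k_{m+1}|$, hence $\|R'_{m+1}\|_{\op}\lesssim|k_{m+1}|^{-1}$ uniformly in $N$, while $|\sqrt M(v_{m+1}-i\hat k_{m+1})|\lesssim\sqrt M(v_{m+1})$ remains $L^2$. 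Distributing $\nabla_{v_{[m]}}^\ell$ by Leibniz, each derivative either falls on $g_m$ through the $\hat S^\pm$-factors, or produces a commutator $[\nabla_{v_p},R'_{m+1}]=-iR'_{m+1}\mu_pR'_{m+1}$, where $\mu_p$ is the coefficient of $v_p$ in $\hat L_{m+1}$ at the intermediate level and satisfies $|\mu_p|\lesssim\<k_{[m]}\>+|k_{m+1}|$ (accounting for the shift $k_j\mapsto k_j\pm k_{m+1}$ introduced by $\hat S^\pm$). After $s$ commutators, a binomial expansion of $(\<k_{[m]}\>+|k_{m+1}|)^s$ yields
\[
\int d^*k_{m+1}\,\V(k_{m+1})^2|k_{m+1}|^{1-s}(\<k_{[m]}\>+|k_{m+1}|)^s\,\lesssim\,\sum_{r=0}^s\<k_{[m]}\>^r C_r^\Vc,
\]
and multiplication by the remaining factor $\|\nabla_{v_{[m]}}^{\ell+2-s}g_m\|_k$ (from the $\ell-s$ uncommuted derivatives plus the two derivatives carried by $\hat S^\pm$) reproduces the announced bound after reindexing $r\leftrightarrow s$. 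The weak bound follows by the same scheme after one preliminary integration by parts in $v_j$ shifting a derivative from the outer $\hat S^{m,+}_{j,m+1}$ onto $h_m$: this reduces the order on $g_m$ by one and supplies the $\|\nabla_{v_{[m]}}h_m\|_k$ prefactor.

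For the corollary, I would expand
\[
\nabla_{v_{[m]}}^\ell\Big(\sum\BOXdot\Big)g_m\,=\,\Big(\sum\BOXdot\Big)\nabla_{v_{[m]}}^\ell g_m\,+\,\text{(commutator remainders)},
\]
and pair with $\nabla_{v_{[m]}}^\ell g_m$: the first term has nonnegative real part by the first bound of the lemma, whereas each commutator remainder carries $s\ge1$ factors $\mu_p$ that, via the same binomial expansion, combine into the $\sum_{s=1}^\ell C_s^\Vc\<k_{[m]}\>^s\|\nabla^{\ell+1-s}g_m\|_k$ structure, the weak bound applied with test function $\nabla_{v_{[m]}}^\ell g_m$ supplying the $\|\nabla^{\ell+1}g_m\|_k$ prefactor. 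The main technical obstacle is the strong bound: the contour deformation must be rigorously justified along the shifted contour (verifying that no singularity of $R_{m+1}$ is crossed, which relies on the coercivity provided by the $\tfrac\kappa N\hat D_{m+1}$ term), and the intermediate-level momentum shifts $\mu_p=k_p$ (for $p\ne j$) or $\mu_p=k_j\pm k_{m+1}$ (for $p=j$) must be tracked with precision so that the final estimate is linear in $\<k_{[m]}\>^s$ rather than in the coarser mixture $(|k_{m+1}|+\<k_{[m]}\>)^s$.
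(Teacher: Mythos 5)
Your overall route coincides with the paper's: nonnegativity from the resolvent-sandwich structure (using $\Re\, R_{m+1}^{-1}\ge0$), commuting $\nabla^\ell_{v_{[m]}}$ past the resolvent to generate bounded multiplier factors, contour deformation $v_{m+1}\mapsto v_{m+1}-i\hat k_{m+1}$ to extract the $|k_{m+1}|^{-1}$ gain and hence $C^\Vc_s$, an integration by parts in $v_j$ to move one derivative onto $h_m$ for the weak bound, and isolating the $s=0$ term paired against $\nabla^\ell g_m$ for the corollary. These steps are all correct and essentially identical to the paper's Steps~1--3.

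There is however a genuine gap in your justification of the contour deformation. You propose to establish that $v_{m+1}\mapsto R_{m+1}[\cdot]$ extends holomorphically ``via a Neumann expansion in $\tfrac\kappa N\hat D_{m+1}$, which is entire in $v_{m+1}$.'' But $\hat D_{m+1}$ is an unbounded operator, so neither a Neumann series in $\tfrac\kappa N\hat D_{m+1}$ nor one for $R_{m+1}=R^{(0)}_{m+1}\big(I+\tfrac\kappa N\hat D_{m+1}R^{(0)}_{m+1}\big)^{-1}$ converges in operator norm, and the argument as stated does not run. The paper sidesteps this by writing the resolvent as an explicit Green's representation,
\[
(\omega+ik\cdot v-\sigma\triangle_v)^{-1}h(v)=\frac1{(4\pi\sigma t)^{d/2}}\int_0^\infty\!\!\!\int_{\R^d}\exp\Big(-t\omega-it\tfrac{k\cdot(u+v)}2-\tfrac{\sigma|k|^2t^3}{12}-\tfrac{|u-v|^2}{4\sigma t}\Big)h(u)\,du\,dt,
\]
which is visibly holomorphic in $v$ with quantitative decay allowing the shift to $v-i\hat k$ by Cauchy's theorem. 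You would need to replace the Neumann-expansion remark by this kernel representation (or an equivalent heat-semigroup/Fourier argument) to make the deformation rigorous; the rest of the proposal then closes cleanly.
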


\begin{proof}
We split the proof into three steps.

\medskip\noindent
{\bf Step~1:} Positivity.\\
By definition, for $0\le m< m_0$, the hat operator reads as follows,
\begin{multline}\label{eq:def-box-re}
\sum\BOXdot
\,=\,-\sum_{j=0}^{m}\Div_{v_j}\bigg[\int_{\Dd} (k_{m+1}\otimes k_{m+1})\V(k_{m+1})^2\sqrt M(v_{m+1})\\[-1mm]
\times\bigg(\tfrac{1+i\alpha}{t_N}+\sum_{l=0}^{m}ik_l\cdot v_l+ik_{m+1}\cdot(v_{m+1}-v_j)-\tfrac\kappa N\triangle_{v_{[m+1]}}\bigg)^{-1}\sqrt M(v_{m+1})\,d^*k_{m+1}\,dv_{m+1}\bigg]\nabla_{v_j}.
\end{multline}
As the resolvent identity yields
\begin{multline*}
\Re\bigg(\tfrac{1+i\alpha}{t_N}+\sum_{l=0}^{m}ik_l\cdot v_l+ik_{m+1}\cdot(v_{m+1}-v_j)-\tfrac\kappa N\triangle_{v_{[m+1]}}\bigg)^{-1}\\[-2mm]
=\bigg(\tfrac{1-i\alpha}{t_N}-\sum_{l=0}^{m}ik_l\cdot v_l-ik_{m+1}\cdot(v_{m+1}-v_j)-\tfrac\kappa N\triangle_{v_{[m+1]}}\bigg)^{-1}\\[-2mm]
\times\Big(\tfrac{1}{t_N}-\tfrac\kappa N\triangle_{v_{[m+1]}}\Big)
\bigg(\tfrac{1+i\alpha}{t_N}+\sum_{l=0}^{m}ik_l\cdot v_l+ik_{m+1}\cdot(v_{m+1}-v_j)-\tfrac\kappa N\triangle_{v_{[m+1]}}\bigg)^{-1},
\end{multline*}
we obtain for $g_m\in C^\infty_c(\hat\Dd^{m+1})$,
\begin{equation*}
\Re\Big\<g_m,\Big(\sum\BOXdot\Big)\,g_m\Big\>_k
\,=\,\sum_{j=0}^m\int_{(\R^d)^{m+3}}\overline{h_{m,j}(\hat z_{[m+1]})}\Big(\tfrac{1}{t_N}-\tfrac\kappa N\triangle_{v_{[m+1]}}\Big)h_{m,j}(\hat z_{[m+1]})\,dv_{[m]}d^*\hat z_{m+1},
\end{equation*}
where we have set for abbreviation
\begin{multline*}
h_{m,j}(\hat z_{[m+1]})\,:=\,
\bigg(\tfrac{1+i\alpha}{t_N}+\sum_{l=0}^{m}ik_l\cdot v_l+ik_{m+1}\cdot(v_{m+1}-v_j)-\tfrac\kappa N\triangle_{v_{[m+1]}}\bigg)^{-1}\\[-2mm]
\times\Big(k_{m+1}\V(k_{m+1})\cdot\nabla_{v_j}g(v_j)\sqrt M(v_{m+1})\Big).
\end{multline*}
The claimed positivity follows.

\medskip\noindent
{\bf Step~2:} Control by velocity diffusion.\\
Starting point is the following commutation relation, for $0\le j\le m$,
\[\bigg[\nabla_{v_{[m]}}~,~\tfrac{1+i\alpha}{t_N}+\sum_{l=0}^{m}ik_l\cdot v_l+ik_{m+1}\cdot(v_{m+1}-v_j)-\tfrac\kappa N\triangle_{v_{[m+1]}}\bigg]\,=\,i(k_{[m]\setminus\{j\}},k_j-k_{m+1}),\]
from which we can deduce, for $n\ge0$,
\begin{multline*}
\nabla_{v_{[m]}}^n\bigg(\tfrac{1+i\alpha}{t_N}+\sum_{l=0}^{m}ik_l\cdot v_l+ik_{m+1}\cdot(v_{m+1}-v_j)-\tfrac\kappa N\triangle_{v_{[m+1]}}\bigg)^{-1}\\
\,=\,\sum_{s=0}^n(-i)^s\binom{n}{s}(k_{[m]\setminus\{j\}},k_j-k_{m+1})^{\otimes s}\bigg(\tfrac{1+i\alpha}{t_N}+\sum_{l=0}^{m}ik_l\cdot v_l+ik_{m+1}\cdot(v_{m+1}-v_j)-\tfrac\kappa N\triangle_{v_{[m+1]}}\bigg)^{-s-1}\nabla_{v_{[m]}}^{n-s}.
\end{multline*}
By definition of the hat operator, cf.~\eqref{eq:def-box-re}, we then get
\begin{equation}\label{eq:decomp-nab-box}
\nabla_{v_{[m]}}^n\Big(\sum\BOXdot\Big)g_m
\,=\,-\sum_{j=0}^{m}\sum_{s=0}^n\frac{n!}{(n-s)!}\Div_{v_j}\Big(\Gc_s\nabla_{v_j}\nabla_{v_{[m]}}^{n-s}g_m\Big),
\end{equation}
or alternatively, further expanding the remaining divergence,
\begin{equation}\label{eq:decomp-nab-box0}
\nabla_{v_{[m]}}^n\Big(\sum\BOXdot\Big)g_m
\,=\,-\sum_{j=0}^{m}\sum_{s=0}^n\frac{n!}{(n-s)!}\Gc_s:\nabla_{v_j}^2\nabla_{v_{[m]}}^{n-s}g_m
-\sum_{j=0}^{m}\sum_{s=0}^n\frac{n!(s+1)}{(n-s)!}\Gc'_{s,j}\cdot\nabla_{v_j}\nabla_{v_{[m]}}^{n-s}g_m,
\end{equation}
in terms of the operators
\begin{eqnarray*}
\Gc_s&:=&(-i)^s\int_{\Dd} (k_{m+1}\otimes k_{m+1})\V(k_{m+1})^2\sqrt M(v_{m+1})(k_{[m]\setminus\{j\}},k_j-k_{m+1})^{\otimes s}\\
&&\times \bigg(\tfrac{1+i\alpha}{t_N}+\sum_{l=0}^{m}ik_l\cdot v_l+ik_{m+1}\cdot(v_{m+1}-v_j)-\tfrac\kappa N\triangle_{v_{[m+1]}}\bigg)^{-s-1}\sqrt M(v_{m+1})\,d^*\hat z_{m+1},\\
\Gc'_{s,j}&:=&(-i)^{s+1}\int_{\Dd} (k_{m+1}\otimes k_{m+1})\V(k_{m+1})^2\sqrt M(v_{m+1})(k_{[m]\setminus\{j\}},k_j-k_{m+1})^{\otimes s+1}\\
&&\times \bigg(\tfrac{1+i\alpha}{t_N}+\sum_{l=0}^{m}ik_l\cdot v_l+ik_{m+1}\cdot(v_{m+1}-v_j)-\tfrac\kappa N\triangle_{v_{[m+1]}}\bigg)^{-s-2}\sqrt M(v_{m+1})\,d^*\hat z_{m+1}.
\end{eqnarray*}
In order to perform complex contour deformation in $v_{m+1}$, we rely on the following Green's representation formula for transport-diffusion resolvent operators, which is easily checked by a Fourier calculation: for $\sigma>0$, $\Re \omega>0$, and $h\in C^\infty_c(\R^d)$,
\[(\omega+ik\cdot v-\sigma\triangle_v)^{-1}h(v)=\frac1{(4\pi\sigma t)^{d/2}}\int_0^\infty\!\!\!\int_{\R^d} \exp\bigg(-t\omega-it\frac{k\cdot(u+v)}{2}-\frac{\sigma|k|^2t^3}{12}-\frac{|u-v|^2}{4\sigma t}\bigg)\,h(u)\,dudt.\]
Returning to the above operator $\Gc_s$, recalling that $M$ is the Maxwellian distribution, and using the holomorphy of the Green's function, we can appeal to the holomorphy of the integrands in $v_{m+1}$ and perform complex contour deformation,
\[v_{m+1}\mapsto v_{m+1}-i\hat k_{m+1},\]
where we use the short-hand notation $\hat k:=\tfrac{k}{|k|}$.
This allows us to rewrite $\Gc_s$ as
\begin{multline*}
\Gc_s\,=\,(-i)^s\int_{\Dd} (k_{m+1}\otimes k_{m+1})\V(k_{m+1})^2\sqrt M(v_{m+1}-i\hat k_{m+1})(k_{[m]\setminus\{j\}},k_j-k_{m+1})^{\otimes s}\\
\times \bigg(\tfrac{1+i\alpha}{t_N}+|k_{m+1}|+\sum_{l=0}^{m}ik_l\cdot v_l+ik_{m+1}\cdot(v_{m+1}-v_j)-\tfrac\kappa N\triangle_{v_{[m+1]}}\bigg)^{-s-1}\!\sqrt M(v_{m+1}-i\hat k_{m+1})\,d^*\hat z_{m+1},
\end{multline*}
from which we can directly estimate
\begin{equation}\label{eq:estim-Gsgm}
\|\Gc_sg_m\|_k
\,\lesssim\,\|g_m\|_k\int_{\R^d} \<k_{[m+1]}\>^s|k_{m+1}|^{1-s}\V(k_{m+1})^2\,dk_{m+1}
\,\lesssim\,C_s^\Vc\<k_{[m]}\>^s\|g_m\|_k,
\end{equation}
and similarly
\[\|\Gc_{s,j}'g_m\|_k\,\lesssim\,C_{s+1}^\Vc\<k_{[m]}\>^{s+1}\|g_m\|_k.\]
Combining this with~\eqref{eq:decomp-nab-box} and~\eqref{eq:decomp-nab-box0}, we obtain the desired estimates.

\medskip\noindent
{\bf Step~3:} proof of~\eqref{eq:cor-box}.\\
Singling out the term with $s=0$ in~\eqref{eq:decomp-nab-box}, we get
\begin{equation*}
\nabla_{v_{[m]}}^n\Big(\sum\BOXdot\Big)g_m
\,=\,-\sum_{j=0}^{m}\Div_{v_j}\Big(\Gc_0\nabla_{v_j}\nabla_{v_{[m]}}^{n}g_m\Big)
-\sum_{j=0}^{m}\sum_{s=1}^n\frac{n!}{(n-s)!}\Div_{v_j}\Big(\Gc_s\nabla_{v_j}\nabla_{v_{[m]}}^{n-s}g_m\Big),
\end{equation*}
and thus,
\begin{multline*}
\Big\<\nabla_{v_{[m]}}^ng_m\,,\, \nabla_{v_{[m]}}^n\Big(\sum\BOXdot\Big)g_m\Big\>_k
\,=\,\sum_{j=0}^{m}
\Big\<(\nabla_{v_j}\nabla_{v_{[m]}}^ng_m)\,,\,\Gc_0\,(\nabla_{v_j}\nabla_{v_{[m]}}^{n}g_m)\Big\>_k\\[-2mm]
+\sum_{j=0}^{m}\sum_{s=1}^n\frac{n!}{(n-s)!}\Big\<(\nabla_{v_j}\nabla_{v_{[m]}}^ng_m)\,,\,\Gc_s\,(\nabla_{v_j}\nabla_{v_{[m]}}^{n-s}g_m)\Big\>_k.
\end{multline*}
As the result of Step~1 yields $\Re\Gc_0\ge0$, we note that the first right-hand side term has nonnegative real part. The claim~\eqref{eq:cor-box} then follows from~\eqref{eq:estim-Gsgm}.
\end{proof}

\subsection{Renormalized propagator}
We start by showing that the iterative definition of the renormalized propagator in Definition~\ref{def:renorm} makes sense. This easily follows by showing at the same time that the renormalized hat operators are nonnegative, thus extending the corresponding positivity statement for the hat operator in Lemma~\ref{lem:box}.

\begin{lem}\label{lem:box-ren}
Definition~\ref{def:renorm} for the renormalized propagator makes sense and yields a bounded operator~$\WIGdot$ on $L^2(\hat\Dd^{m+1})$ for all $1\le m\le m_0$. In addition, the associated renormalized hat operator is nonnegative: for all $0\le m<m_0$ and $g_m\in C^\infty_c(\hat\Dd^{m+1})$,
\[\Re\Big\<g_m\,,\,\Big(\sum\WIGBOX\Big)g_m\Big\>_k\,\ge\,0.\]
\end{lem}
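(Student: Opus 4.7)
The plan is to argue by downward induction on $m$, from $m=m_0$ down to $m=1$, establishing at each level simultaneously that $\WIGdot$ is a well-defined bounded operator on $L^2(\hat\Dd^{m+1})$ with $\|\WIGdot\|\le t_N$, and that its real part is nonnegative in the sense that $\Re\langle g_m,\WIGdot g_m\rangle\ge 0$ for all $g_m$. The base case $m=m_0$ is immediate: since $\hat L_{m_0}$ is skew-adjoint and $\hat D_{m_0}\ge 0$, the symmetric part of $\tfrac{1+i\alpha}{t_N}+\hat L_{m_0}+\tfrac{\kappa}{N}\hat D_{m_0}$ is bounded below by $\tfrac{1}{t_N}\mathrm{Id}$, so Lax-Milgram yields invertibility on $L^2(\hat\Dd^{m_0+1})$ with norm at most $t_N$. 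Setting $f=\WIGdot g$, the standard variational identity $\Re\langle g,\WIGdot g\rangle = \Re\langle(\tfrac{1+i\alpha}{t_N}+\hat L_{m_0}+\tfrac{\kappa}{N}\hat D_{m_0})f,f\rangle\ge\tfrac{1}{t_N}\|f\|^2$ gives the positivity.

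For the inductive step at $m<m_0$, the key algebraic input I would establish is the individual-component adjointness
\[ \hat S^{m,+}_{j,m+1} \,=\, (m+1)\,\bigl(\hat S^{m+1,-}_{j,m+1}\bigr)^* \qquad\text{as operators } L^2(\hat\Dd^{m+2})\to L^2(\hat\Dd^{m+1}), \]
for each $j\in\{0,\ldots,m\}$. I would verify this by direct integration by parts in $v_j$ combined with the change of variables $k_j\mapsto k_j+k_{m+1}$ in the Fourier integral, in the same spirit as the derivation of~\eqref{eq:adjS-re}; the factor $m+1$ matches the $\sqrt{m+1}$ prefactors in~\eqref{eq:def-hat-coll}. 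Setting $A_j:=\hat S^{m+1,-}_{j,m+1}$, this identity yields the factorization
\[ \sum\WIGBOX \,=\, (m+1)\sum_{j=0}^m A_j^*\,\WIGdot\,A_j, \]
with $\WIGdot$ on the right-hand side at level $m+1$. The induction hypothesis $\Re\WIGdot\ge 0$ at level $m+1$ then gives
\[ \Re\langle g_m,\sum\WIGBOX g_m\rangle \,=\, (m+1)\sum_{j=0}^m\Re\langle A_jg_m,\WIGdot A_jg_m\rangle \,\ge\, 0, \]
which is the second claim of the lemma. With $\sum\WIGBOX$ nonnegative, the operator $\tfrac{1+i\alpha}{t_N}+\hat L_m+\tfrac{\kappa}{N}\hat D_m+\tfrac{1}{N}\sum\WIGBOX$ has symmetric part again bounded below by $\tfrac{1}{t_N}\mathrm{Id}$, so Lax-Milgram delivers $\WIGdot$ at level $m$ as a bounded $L^2$-operator with norm $\le t_N$, and $\Re\WIGdot\ge 0$ at level $m$ follows from the same variational identity as in the base case. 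The induction closes.

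The main obstacle I expect to confront is verifying that $\sum\WIGBOX$ defines a legitimate sesquilinear form of the right order, so that Lax-Milgram can be applied. The composition $\hat S^{m,+}\cdot\WIGdot\cdot\hat S^{m+1,-}$ formally carries two velocity derivatives, and the bare bound $\|\WIGdot\|\le t_N$ only yields $H^2\to L^2$ estimates, which are insufficient to treat $\sum\WIGBOX$ as a form perturbation of $\tfrac{\kappa}{N}\hat D_m$. I would handle this exactly as in the proof of Lemma~\ref{lem:box} for the un-renormalized $\BOXdot$: the Maxwellian weights $\sqrt M(v_{m+1})$ inside $\hat S^\pm$ permit integration by parts in $v_{m+1}$, and the resulting $v_{m+1}$-integral admits the complex contour deformation $v_{m+1}\mapsto v_{m+1}-i\hat k_{m+1}$, which trades the velocity derivatives for an integrable factor in $k_{m+1}$ against $|k_{m+1}|\hat\Vc(k_{m+1})^2$. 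This produces $H^1$-type form bounds on $\sum\WIGBOX$ directly analogous to the second estimate of Lemma~\ref{lem:box}, but with $\WIGdot$ in place of the bare transport-diffusion resolvent, and is the only nontrivial analytic step beyond the algebraic positivity argument above.
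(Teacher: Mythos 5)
Your core argument is essentially the paper's own proof of Lemma~\ref{lem:box-ren}: a downward induction from $m=m_0$, positivity of the renormalized hat operator propagated through the sandwich $\hat S^{m,+}_{j,m+1}\,\WIGdot\,\hat S^{m+1,-}_{j,m+1}$, and boundedness of $\WIGdot$ from an a priori energy estimate. Tracking $\Re\langle g_m,\WIGdot g_m\rangle\ge0$ rather than $\Re\langle g_m,(\sum\WIGBOX)g_m\rangle\ge0$ as the inductive invariant is harmless, the two being linked by $\Re\WIGdot=\WIGdot^*\big(\Re\WIGdot^{-1}\big)\WIGdot$ together with your factorization, and the component-wise adjointness $\hat S^{m,+}_{j,m+1}=(m+1)\big(\hat S^{m+1,-}_{j,m+1}\big)^*$ is correct (the factor $m+1$ indeed comes out after integrating by parts in $v_j$ and shifting $k_j\mapsto k_j+k_{m+1}$); it is a clean rephrasing of what the paper's Step~1 derives by direct expansion.

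The difficulty lies in your resolution of the perceived obstacle. You propose to establish an $H^1$ form bound on $\sum\WIGBOX$ by contour deformation in $v_{m+1}$, now with the renormalized $\WIGdot$ in place of the bare transport-diffusion resolvent. But deformability of the renormalized propagator in velocity is precisely the content of Proposition~\ref{prop:box-anal}, which the paper establishes only \emph{after} Lemma~\ref{lem:box-ren}, by a separate and more elaborate induction, and crucially under the extra hypothesis $\kappa\gg C_0^\Vc$. Lemma~\ref{lem:box-ren} carries no condition on $\kappa$. The paper's Step~2 is written precisely so as not to need any such form bound: the energy identity closes using only the nonnegativity of $\Re\big(\sum\WIGBOX\big)$, which you have already established, together with the nonnegativity of $\hat D_m$; no $H^1$ form bound on $\sum\WIGBOX$ enters, and the coercivity framing of Lax-Milgram you are worried about is not what the paper uses. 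Importing the deformation machinery here would therefore either be circular in the exposition or impose a spurious largeness condition on $\kappa$ on a lemma where it is not needed. If you want to make the passage from a priori estimate to existence of $\WIGdot$ airtight without the form bound, the natural route is to note that by the inductive hypothesis $\WIGdot$ at level $m+1$ is already a bounded operator, so $\sum\WIGBOX$ is unambiguously defined and the energy identity is justified on $C^\infty_c$; a density or Galerkin step then produces the resolvent, and the quantitative $H^1$ form bound (which the paper does eventually prove, in Proposition~\ref{prop:box-bound}, under $\kappa\gg C_0^\Vc$) is simply not required for this lemma.
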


\begin{proof}
As the definition of the renormalized propagator is iterative, we argue by induction. Recalling that for $m=m_0$ we have $\WIGdot={\tiny\begin{tikzpicture}[baseline={([yshift=-.8ex]current bounding box.center)},scale=0.8]
\draw (0,0) -- (0.8,0);
\draw (0,0.45) -- (0.8,0.45);
\draw[dotted] (0.4,0.1) -- (0.4,0.4);
\end{tikzpicture}}$ on $L^2(\hat\Dd^{m_0+1})$, cf.~Definition~\ref{def:renorm}, we can use Lemma~\ref{lem:box} to initiate the induction. We then split the proof into two steps.

\medskip\noindent
{\bf Step~1.} Proof that, for all $0\le m<m_0$, if $\WIGdot$ is well-defined on $L^2(\hat\Dd^{m+2})$ and if the renormalized hat operator $\WIGBOX$ on $L^2(\hat\Dd^{m+2})$ is nonnegative in the sense of
\[\Re\Big\<g_{m+1}\,,\,\Big(\sum\WIGBOX\Big)g_{m+1}\Big\>_k\,\ge\,0,\]
then the operator $\WIGBOX$ on $L^2(\hat\Dd^{m+1})$ is also nonnegative in the sense of
\[\Re\Big\<g_m\,,\,\Big(\sum\WIGBOX\Big)g_m\Big\>_k\,\ge\,0.\]
By definition of collision operators in the diagrammatic notation, as $\WIGdot$ is assumed well-defined and bounded on $L^2(\hat\Dd^{m+2})$, we can write
\begin{eqnarray*}
\lefteqn{\Big\<g_m\,,\,\Big(\sum\WIGBOX\Big)g_m\Big\>_k
\,=\,\sum_{j=0}^m\Big\<g_m\,,\,\hat S_{j,m+1}^{m,+}\WIGdot \hat S^{m+1,-}_{j,m+1} g_m\Big\>_k}\\
&=&\sum_{j=0}^m\int_{\R^d}\int_{(\R^d)^{m+2}} \overline{h_{m+1}^j}\\
&&\times\bigg(\tfrac{i\alpha+1}{t_N}+\sum_{l=0}^mik_l\cdot v_l+ik_{m+1}\cdot (v_{m+1}-v_j)+\tfrac\kappa N\hat D_{m+1}+\tfrac1N\sum\WIGBOX\bigg)^{-1}h_{m+1}^j\,dv_{[m+1]}d^*k_{m+1},
\end{eqnarray*}
in terms of
\[h_{m+1}^j\,:=\,\sqrt M(v_{m+1})k_{m+1}\V(k_{m+1})\cdot\nabla_{v_j}g_m(\hat z_{[m]}).\]
Taking the real part and using the resolvent identity as in Step~1 of the proof of Lemma~\ref{lem:box}, we obtain
\begin{equation*}
\Re\Big\<g_m\,,\,\Big(\sum\WIGBOX\Big)g_m\Big\>_k
\,=\,\sum_{j=0}^m\Re\int_{\R^d}\int_{(\R^d)^{m+2}} \overline{H_{m+1}^j}
\bigg(\tfrac{1}{t_N}+\tfrac\kappa N\hat D_{m+1}+\tfrac1N\sum\WIGBOX\bigg)H_{m+1}^j\,dv_{[m+1]}d^*k_{m+1},
\end{equation*}
in terms of
\[H_{m+1}^j\,:=\,\bigg(\tfrac{i\alpha+1}{t_N}+\sum_{l=0}^mik_l\cdot v_l+ik_{m+1}\cdot (v_{m+1}-v_j)+\tfrac\kappa N\hat D_{m+1}+\tfrac1N\sum\WIGBOX\bigg)^{-1}h_{m+1}^j.\]
From this identity, as the operator $\sum\WIGBOX$ is assumed to be nonnegative on $L^2(\hat\Dd^{m+2})$, we conclude that the same must hold for the corresponding operator on $L^2(\hat\Dd^{m+1})$.

\medskip\noindent
{\bf Step~2.} Proof that, for all $1\le m\le m_0$, if $\WIGdot$ is well-defined and bounded on $L^2(\hat\Dd^{m+2})$ and if~$\WIGBOX$ is nonnegative on $L^2(\hat\Dd^{m+1})$ in the sense of Step~1, then the definition of $\WIGdot$ on $L^2(\hat\Dd^{m+1})$ in Definition~\ref{def:renorm} also makes sense and we have
\[\|\WIGdot g_m\|_k\,\le\,t_N\|g_m\|_k.\]
Given $g_m\in C^\infty_c(\hat\Dd^{m+1})$, consider the equation
\[\bigg(\tfrac{i\alpha+1}{t_N}+\sum_{l=0}^mik_l\cdot v_l+\tfrac\kappa N\hat D_{m}+\tfrac1N\sum\WIGBOX\bigg)h_m\,=\,g_m.\]
Testing with $h_m$, taking the real part, and using the assumed nonnegativity of $\WIGBOX$, we deduce
\[\tfrac1{t_N}\|h_m\|_k^2+\tfrac\kappa N\|\nabla_{v_{[m]}}h_m\|^2_k\,\le\,\|g_m\|_k\|h_m\|_k,\]
and thus
\[\|h_m\|_k\,\le\,t_N\|g_m\|_k.\]
This proves that $h_m=\WIGdot g_m$ is well-defined and satisfies the desired estimate.
\end{proof}

\subsection{Complex deformations of renormalized propagator}\label{sec:renorm et complex}
Similarly as the transport-diffusion resolvent $(\frac{i\alpha+1}{t_N}+ik\cdot v-\frac\kappa N\triangle_v)^{-1}$ can be deformed in the complex plane when applied to holomorphic functions in velocity, cf.~\eqref{eq:deform-0}, we show that similar deformations can be performed on renormalized propagators. This is useful as deformations can be chosen to obtain uniform-in-$N$ resolvent estimates.
Complex deformations are performed at the level of velocity translations and we start with some notation: for $\xi_{[m]}\in(\R^d)^{m+1}$, we denote by~$\tau_{\xi_{[m]}}$ the translation operator on~$\Ld^2(\hat\Dd^{m+1})$,
\begin{equation}\label{eq:not-tau}
(\tau_{\xi_{[m]}}g_m)(\hat z_{[m]}) =g_m(k_{[m]},v_{[m]}+\xi_{[m]}),
\end{equation}
and we consider the conjugation operator $T_{\xi_{[m]}}$, for any operator $X_m$ on $\Ld^2(\hat\Dd^{m+1})$,
\begin{equation}\label{eq:not-Tconj}
T_{\xi_{[m]}}[X_m]\,:=\,\tau_{\xi_{[m]}}\,X_m\,\tau_{-\xi_{[m]}}.
\end{equation}
On top of the deformability of $\WIGdot$, we also show at the same time that the renormalized hat operator~$\WIGBOX$ satisfies similar bounds as the hat operator in Lemma~\ref{lem:box}. This is split into two different statements, but both will be proven at once.

\begin{prop}\label{prop:box-bound}
Assume that $\kappa\gg C_0^\Vc:=\int_{\R^d}|k|\V(k)^2\,dk$ is sufficiently large. For all $0\le m< m_0$, $g_m,h_m\in C^\infty_c(\hat\Dd^{m+1})$, and $\ell\ge0$, the renormalized hat operator satisfies
\begin{eqnarray*}
\Big|\Big\<h_m\,,\,\Big(\sum\WIGBOX\Big)\, g_m\Big\>_k\Big|
&\lesssim&C_0^\Vc\|\nabla_{v_{[m]}} h_m\|_k\|\nabla_{v_{[m]}}g_m\|_k,\\
\Big\|\nabla_{v_{[m]}}^\ell \Big(\sum\WIGBOX\Big)\, g_m\Big\|_k
&\lesssim_\ell&\sum_{s=0}^{\ell+1}\<k_{[m]}\>^{s}\|\nabla_{v_{[m]}}^{\ell+2-s}g_m\|_k,\\
\Big|\Big\<h_m\,,\,\nabla_{v_{[m]}}^\ell\Big(\sum\WIGBOX\Big)\, g_m\Big\>_k\Big|
&\lesssim_\ell&\|\nabla_{v_{[m]}}h_m\|_k\sum_{s=0}^{\ell}\<k_{[m]}\>^{s}\|\nabla_{v_{[m]}}^{\ell+1-s}g_m\|_k.
\end{eqnarray*}
In addition, generalizing the positivity statement of Lemma~\ref{lem:box-ren}, we have for all $\ell\ge0$,
\begin{equation}\label{eq:posit-box-ren-re}
\Re\Big\<\nabla_{v_{[m]}}^\ell g_m\,,\,\nabla_{v_{[m]}}^\ell\Big(\sum\WIGBOX\Big)g_m\Big\>_k\,\gtrsim_\ell\,-\|\nabla_{v_{[m]}}^{\ell+1}g_m\|_k\sum_{s=1}^\ell\langle k_{[m]}\rangle^{s}\|\nabla_{v_{[m]}}^{\ell+1-s}g_m\|_k.
\end{equation}
\end{prop}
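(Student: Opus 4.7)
The plan is to proceed by descending induction on $m$, from $m = m_0 - 1$ down to $m = 0$, simultaneously establishing all four estimates together with the contour deformability of $\WIGdot$ needed to justify them. The base case $m = m_0 - 1$ is immediate: there $\WIGdot$ at level $m_0$ coincides with the free transport-diffusion resolvent (no nested renormalization), so $\WIGBOX$ reduces to $\BOXdot$ and every claim follows from Lemma~\ref{lem:box}, while deformability follows from the explicit Green's representation recalled in its Step~2. The content is therefore the inductive step: assuming that the stated bounds (and the companion deformability statement) hold at level $m + 1$, derive them at level $m$.

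\textbf{Bilinear estimate.} By adjointness \eqref{eq:adjS-re}, write
\[
\Bigl\langle h_m,\Bigl(\sum\WIGBOX\Bigr)g_m\Bigr\rangle_k
=\sum_{j=0}^m\bigl\langle \hat S^{m+1,-}_{j,m+1}h_m,\ \WIGdot\,\hat S^{m+1,-}_{j,m+1}g_m\bigr\rangle,
\]
insert the explicit form of $\hat S^-$, and perform the contour deformation $v_{m+1}\mapsto v_{m+1}-i\hat k_{m+1}$ in the integral against $\sqrt M(v_{m+1})^2$, exactly as in Step~2 of the proof of Lemma~\ref{lem:box}. Justifying the deformation inside $\WIGdot$ is where the companion deformability statement enters (proved in tandem): under $v_{m+1}\mapsto v_{m+1}-i\hat k_{m+1}$, the deformed operator $P_{m+1}+R_{m+1}$ has $\mathrm{Re}\ge \tfrac1{t_N}+|k_{m+1}|+\tfrac\kappa N\hat D_{m+1}+\tfrac1N\mathrm{Re}(\sum\WIGBOX)$; the last term is $\ge 0$ by the inductive positivity statement, and the bulk term gives the resolvent bound $\|\WIGdot\|\lesssim |k_{m+1}|^{-1}$ on the deformed contour. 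Putting this together, the two $\hat S^-$ factors provide $k_{m+1}\V(k_{m+1})\cdot\nabla_{v_j}$, and the $k_{m+1}$-integral converges to $\int |k_{m+1}|\,\V(k_{m+1})^2\,dk_{m+1}=C_0^{\Vc}$, yielding the claim.

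\textbf{Higher derivative and bilinear-with-derivatives estimates.} I mimic the decomposition~\eqref{eq:decomp-nab-box} of Lemma~\ref{lem:box}, commuting $\nabla_{v_{[m]}}^{\ell}$ through the two $\hat S$-operators and through $\WIGdot$. Commutators with $P_{m+1}$ produce factors of $(k_{[m]\setminus\{j\}},k_j-k_{m+1})$; commutators with $R_{m+1}=\tfrac1N\sum\WIGBOX$ at level $m+1$ are controlled by the inductive derivative bounds, losing further powers of $\langle k\rangle$ but gaining a factor $\tfrac1N$ which (together with the largeness of $\kappa$) keeps them perturbative. After iterating this expansion, each surviving power of $k_{m+1}$ is absorbed into the $k_{m+1}$-integral, producing the constants $C_s^{\Vc}$ for $s\le\ell+1$. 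The bound with $h_m$ in the form of the third inequality is obtained by integrating one velocity derivative by parts onto $h_m$ before running the same argument, which lowers the exponent on $\nabla g_m$ by one.

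\textbf{Positivity with derivatives and main obstacle.} For~\eqref{eq:posit-box-ren-re}, I isolate in the decomposition above the leading $s=0$ term, which by the same argument as in Step~1 of Lemma~\ref{lem:box} can be rewritten as $\sum_j \langle\Phi_j, (\tfrac1{t_N}+\tfrac\kappa N\hat D_{m+1}+\tfrac1N\sum\WIGBOX\,\text{at level}\,m{+}1)\Phi_j\rangle$ for suitable $\Phi_j$; by the inductive positivity at level $m+1$ this is nonnegative, and its real part controls $\Re\langle\nabla^\ell g_m,\nabla^\ell(\sum\WIGBOX)g_m\rangle_k$ up to the subleading $s\ge1$ contributions, which are bounded exactly by the right-hand side of~\eqref{eq:posit-box-ren-re} using the derivative estimates just established. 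The main obstacle is the circular interplay between (i) the positivity and deformability of $\WIGdot$ at level $m+1$ and (ii) the quantitative bounds on $\sum\WIGBOX$ at level $m$: neither implication is available at the free level, and the induction has to carry both pieces simultaneously. The hypothesis $\kappa\gg C_0^{\Vc}$ is used precisely here, to guarantee that when comparing $\WIGdot$ with the free resolvent via the Neumann-type identity $\WIGdot=\BOXdot-\tfrac1N\BOXdot\,R_{m+1}\,\WIGdot$, the correction is uniformly subdominant to the diffusive part $\tfrac\kappa N\hat D_{m+1}$, so that perturbation and deformation are stable through every step of the induction.
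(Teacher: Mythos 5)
Your proposal follows the same overall architecture as the paper's proof: a descending induction from $m_0$, proving the estimates on $\sum\WIGBOX$ jointly with the holomorphic-deformability statement (Proposition~\ref{prop:box-anal}), using adjointness plus a complex contour deformation in the added velocity variable to reduce $\WIGBOX$ at level $m$ to a deformed $\WIGdot$ at level $m{+}1$, and deriving positivity by isolating the leading term (nonnegative by induction) from subleading commutator contributions. The role of $\kappa\gg C_0^\Vc$ is likewise to make the correction from the renormalized hat subdominant to the $\kappa$-diffusion in the energy identity, so that the deformed resolvent remains coercive at each stage — the paper implements this as a direct coercivity estimate rather than a Neumann-series comparison, but the mechanism is the same.

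One detail your proposal glosses over is the size of the deformation. You write the contour deformation as $v_{m+1}\mapsto v_{m+1}-i\hat k_{m+1}$, i.e.\ of unit magnitude, but the deformability statement that you can carry through the induction only guarantees holomorphic extension on the strip $S^{m+1}_{\nu_{[m+1]}}$, where each component is deformed by at most $\tfrac{m+1}{m_0}\le 1$ in imaginary part. The paper therefore deforms $v_{n+1}$ by $\tfrac{n+1}{m_0}\hat k_{n+1}$ at level $n$ and tracks how the composite of this deformation with the ambient $\xi_{[n]}$ shift stays within the budget of the $(n{+}1)$-level domain; the coercivity constant in \eqref{eq:lower-kn1} then picks up a factor $\tfrac1{m_0}$ but remains strictly positive. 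If you run the induction with unit deformations at every level, the nested composite deformation can exceed the allowed domain and your invocation of the companion deformability statement fails. This is a fixable bookkeeping issue, not a flaw in the strategy, but it is exactly the reason the $\tfrac{m}{m_0}$ scaling appears in the statement of Proposition~\ref{prop:box-anal}.
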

\begin{prop}\label{prop:box-anal}
Assume that $\kappa\gg C_0^\Vc:=\int_{\R^d}|k|\V(k)^2\,dk$ is sufficiently large.
Given $1\le m\le m_0$, let $k_{[m]}\in(\R^d)^{m+1}$ be fixed with $\sum_j k_j=0$. Choose $\nu_0,\ldots,\nu_m\in \Sp^{d-1}$ with $\sum_j k_j\cdot \nu_j>0$, and let
\[S_{\nu_{[m]}}^m\,:=\,\big(\R^d-i[0,\tfrac{m}{m_0}]\nu_{0}\big)\times\ldots\times\big(\R^d-i[0,\tfrac{m}{m_0}]\nu_{m}\big).\]
Then, for all $g_m\in C^\infty_c(\hat\Dd^{m+1})$ and $v_{[m]}\in(\R^d)^{m+1}$, the map
\[\xi_{[m]}\mapsto T_{\xi_{[m]}} \big[\WIGdot\big] g_m(\hat z_{[m]})\]
can be extended holomorphically to $\xi_{[m]}\in S_{\nu_{[m]}}^m$,
and this extension satisfies for all $\ell\ge0$,
\begin{eqnarray*}
\Big\|T_{\xi_{[m]}}\big[\WIGdot\big]g_m\Big\|_k
&\lesssim&\Big(\tfrac1{t_N}+\Re\big(ik_{[m]}\cdot\xi_{[m]}\big)\Big)^{-1}\|g_m\|_k,\\
\Big\|\nabla_{v_{[m]}}^{\ell}T_{\xi_{[m]}}\big[\WIGdot\big]g_m\Big\|_k
&\lesssim_\ell&\sum_{s=0}^\ell\langle k_{[m]}\rangle^{s}\bigg(1+\Big(\tfrac1{t_N}+\Re\big(ik_{[m]}\cdot\xi_{[m]}\big)\Big)^{-s-1}\bigg)\|\nabla_{v_{[m]}}^{\ell-s} g_m\|_k.
\end{eqnarray*}
\end{prop}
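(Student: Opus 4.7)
The plan is to prove Propositions~3.2 and~3.3 simultaneously by descending induction on $m$, from $m=m_0$ down to $m=1$. At each step I would first derive the renormalized hat operator bounds of Proposition~3.2 at level $m<m_0$ from the holomorphic extension of $\WIGdot$ at level $m+1$ (Proposition~3.3 at level $m+1$), and then combine them with an abstract resolvent argument to produce the holomorphic extension of $\WIGdot$ at level $m$ itself.

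\textbf{Base case $m=m_0$.} Here $\WIGdot=(\tfrac{1+i\alpha}{t_N}+i\hat L_m+\tfrac{\kappa}{N}\hat D_m)^{-1}$. Conjugation by $\tau_{\xi_{[m]}}$ only shifts the transport operator: $T_{\xi_{[m]}}[i\hat L_m]=i\hat L_m+ik_{[m]}\cdot\xi_{[m]}$, while $\hat D_m$ is translation invariant. The conjugated resolvent thus depends holomorphically on $\xi_{[m]}\in\C^{(m+1)d}$. Since $\hat D_m\ge 0$, the maximal monotonicity estimate gives the first bound with $\Re\lambda=\tfrac1{t_N}+\Re(ik_{[m]}\cdot\xi_{[m]})$. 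For the derivative bound, iterate the commutator $[\nabla_{v_l},i\hat L_m+ik_{[m]}\cdot\xi_{[m]}]=ik_l$ to produce a binomial expansion of $\nabla_{v_{[m]}}^\ell T_{\xi_{[m]}}[\WIGdot]$ in powers of $k_{[m]}$ times additional resolvents.

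\textbf{Inductive step $m<m_0$, Proposition~3.2.} Write
\[\textstyle\sum\WIGBOX\,=\,\sum_{j=0}^{m}\hat S^{+}_{j,m+1}\,\WIGdot\,\hat S^{-}_{j,m+1},\]
with $\WIGdot$ at level $m+1$. Exactly as in the proof of Lemma~3.1 (Step~2), expand the integrand as a sum of resolvents of the level-$(m+1)$ propagator against $\sqrt M(v_{m+1})\,k_{m+1}\V(k_{m+1})\cdot\nabla_{v_j}g_m$. Since the Maxwellian is entire and the level-$(m+1)$ propagator admits a holomorphic extension in $v_{m+1}$ up to imaginary shifts of size $\tfrac{m+1}{m_0}$ by the inductive hypothesis, perform the contour deformation $v_{m+1}\mapsto v_{m+1}-i\tfrac{1}{m_0}\hat k_{m+1}$. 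This produces a gain of $\tfrac{1}{m_0}|k_{m+1}|$ in the real part of the level-$(m+1)$ resolvent, which suffices to convert the naive $\int\<k_{m+1}\>^s\V(k_{m+1})^2\,dk_{m+1}$ into the $C^\Vc_s$-integrals appearing in Lemma~3.1 and in Proposition~3.2. The duality bound follows by transferring one velocity gradient to $h_m$, and the positivity~\eqref{eq:posit-box-ren-re} is established by the same resolvent identity as in Step~1 of the proof of Lemma~3.3, combined with the positivity of $\sum\WIGBOX$ at level $m+1$ assumed by induction.

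\textbf{Inductive step $m<m_0$, Proposition~3.3.} With Proposition~3.2 in hand at level $m$, the level-$m$ propagator satisfies
\[T_{\xi_{[m]}}[\WIGdot]\,=\,\Big(\tfrac{1+i\alpha}{t_N}+ik_{[m]}\cdot\xi_{[m]}+i\hat L_m+\tfrac\kappa N\hat D_m+\tfrac1N\,T_{\xi_{[m]}}\Big[\textstyle\sum\WIGBOX\Big]\Big)^{-1}.\]
The operator $T_{\xi_{[m]}}[\sum\WIGBOX]$ is controlled by propagating the outer shifts $\xi_0,\ldots,\xi_m$ (which satisfy $s_l\le m/m_0<(m+1)/m_0$, hence lie in the inductive holomorphy domain) into the level-$(m+1)$ propagator and additionally shifting the integration variable $v_{m+1}$ by $-i\tfrac{1}{m_0}\hat k_{m+1}$. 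The hypothesis $\sum_{l=0}^{m}k_l\cdot\nu_l>0$ at level $m$ ensures that the corresponding positivity condition at level $m+1$ holds for the shifted momenta (since the $v_{m+1}$-shift contributes $\tfrac{1}{m_0}|k_{m+1}|\ge 0$), so Proposition~3.3 at level $m+1$ applies. Testing the defining equation against $g_m$ and taking the real part then yields positivity of the dissipative part up to an error of order $\tfrac1N C_0^\Vc\|\nabla_{v_{[m]}}g_m\|_k^2$ by Proposition~3.2, which is absorbed in $\tfrac{\kappa}{N}\|\nabla_{v_{[m]}}g_m\|_k^2$ for $\kappa\gg C_0^\Vc$. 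The basic bound and well-posedness follow. The derivative bound is obtained by commuting $\nabla_{v_{[m]}}$ through the full conjugated operator, the new commutators with $T_{\xi_{[m]}}[\sum\WIGBOX]$ being estimated via Proposition~3.2's derivative bounds.

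\textbf{Main obstacle.} The delicate point is to coordinate the outer shifts $\xi_{[m]}$ and the inner contour deformation in $v_{m+1}$, and to check that they together remain inside the holomorphy domain at level $m+1$ while still yielding $\sum_l k_l^{(m+1)}\cdot\nu_l^{(m+1)}>0$ for the shifted momenta after the creation operator has modified one coordinate. This is exactly what forces the slab width to shrink by $\tfrac{1}{m_0}$ from one recursion level to the next, hence the factor $\tfrac{m}{m_0}$ in the definition of $S^m_{\nu_{[m]}}$. I expect the book-keeping of these shifts and the Leibniz expansion required for the derivative bounds to be the bulk of the work, while no new conceptual ingredient beyond those used in Lemmas~3.1--3.3 and the phase-mixing identity~\eqref{eq:deform-0} is needed.
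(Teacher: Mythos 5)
Your plan coincides with the paper's actual proof in structure and content: a joint downward induction on $m$ starting from $m=m_0$, with the base case handled by conjugating the transport operator and running commutator estimates, and the inductive step passing from the holomorphy of the level-$(m+1)$ propagator to the bounds on the level-$m$ renormalized hat operator via contour deformation in $v_{m+1}$, and then to the holomorphy of the level-$m$ propagator via testing the resolvent equation and absorbing the hat operator into the $\kappa$-dissipation for $\kappa\gg C^{\Vc}_0$. One numerical detail you should fix: because the outer shift $\xi_j$ already consumes up to $\tfrac{m}{m_0}$ of the imaginary budget (and feeds into the level-$(m+1)$ resolvent through the momentum $k_j-k_{m+1}$ at slot $j$, contributing $-\Re(ik_{m+1}\cdot\xi_j)$, which can be as low as $-\tfrac{m}{m_0}|k_{m+1}|$), you must deform $v_{m+1}$ by the maximal amount $-i\tfrac{m+1}{m_0}\hat k_{m+1}$ allowed by the level-$(m+1)$ holomorphy domain, so that the \emph{net} gain in the real part is $\tfrac{1}{m_0}|k_{m+1}|$ as in the paper's display following the deformation; a deformation by $-i\tfrac{1}{m_0}\hat k_{m+1}$ alone is not enough and would leave the resolvent uncontrolled for $m\ge 2$.
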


\begin{proof}[Proof of Propositions~\ref{prop:box-bound} and~\ref{prop:box-anal}]
We use the short-hand notation $\nabla_m:=\nabla_{v_{[m]}}$.
We split the statement of the two propositions into four items, where we further add some commutator estimates that will be useful to iterate on: provided that $\kappa\gg C_0^\Vc$ is sufficiently large, we show for all $0\le m\le m_0$,
\begin{enumerate}[(i)]
\item For all $\hat z_{[m]}$ and $g_m\in C^\infty_c(\hat\Dd^{m+1})$, the map
\[\xi_{[m]}\mapsto T_{\xi_{[m]}}\Big[\WIGdot\Big]g_m(\hat z_{[m]})\]
can be extended holomorphically to $\xi_{[m]}\in S_{\nu_{[m]}}^m$.
\item For all~$g_{m}\in C^\infty_c(\hat\Dd^{m+1})$ and $\xi_{[m]}\in S_{\nu_{[m]}}^m$, we have for all $\ell\ge0$,
\begin{eqnarray*}
\Big\|T_{\xi_{[m]}}\Big[\WIGdot\Big]g_m\Big\|_k
&\lesssim&\Big(\tfrac1{t_N}+\Re\big({\textstyle\sum_{l=0}^mik_l\cdot\xi_l}\big)\Big)^{-1}\|g_m\|_k,\\
\qquad\Big\|\Big[\nabla_{m}^{\ell},T_{\xi_{[m]}}\Big[\WIGdot\Big]\Big]g_m\Big\|_k
&\lesssim_\ell&\sum_{s=1}^\ell\langle k_{[m]}\rangle^{s}\Big(\tfrac1{t_N}+\Re\big({\textstyle\sum_{l=0}^mik_l\cdot\xi_l}\big)\Big)^{-1}\\[-2mm]
&&\hspace{1cm}\times\bigg(1+\Big(\tfrac1{t_N}+\Re\big({\textstyle\sum_{l=0}^mik_l\cdot\xi_l}\big)\Big)^{-s}\bigg)\|\nabla_{m}^{\ell-s} g_m\|_k.
\end{eqnarray*}
\item For all $\hat z_{[m]}$ and $g_m\in C^\infty_c(\hat\Dd^{m+1})$, the map
\[\xi_{[m]}\mapsto T_{\xi_{[m]}}\Big[\sum\WIGBOX\Big]g_m(\hat z_{[m]})\]
can be extended holomorphically to $\xi_{[m]}\in S_{\nu_{[m]}}^m$.
\item For all~$g_{m},h_{m}\in C^\infty_c(\hat\Dd^{m+1})$ and $\xi_{[m]}\in S_{\nu_{[m]}}$, we have for all $\ell\ge0$,
\begin{eqnarray*}
\Big|\Big\<h_m\,,\,T_{\xi_{[m]}}\Big[\sum\WIGBOX\Big]\, g_m\Big\>_k\Big|&\lesssim&C_0^\Vc\|\nabla_m h_m\|\|\nabla_m g_m\|_k,\\
\Big\|\nabla_{m}^{\ell} T_{\xi_{[m]}}\Big[\sum\WIGBOX\Big]\, g_m\Big\|_k&\lesssim_\ell&\sum_{s=0}^{\ell+1}\<k_{[m]}\>^{s}\|\nabla_{m}^{\ell+2-s}g_m\|_k,\\
\Big|\Big\<h_m\,,\,\Big[\nabla_{m}^{\ell},T_{\xi_{[m]}}\Big[\sum\WIGBOX\Big]\Big] g_m\Big\>_k&\lesssim_\ell&\|\nabla_{m}h_m\|_k\sum_{s=1}^{\ell}\<k_{[m]}\>^{s}\|\nabla_{m}^{\ell+1-s}g_m\|_k.
\end{eqnarray*}
\end{enumerate}
These properties immediately imply Propositions~\ref{prop:box-bound} and~\ref{prop:box-anal}. Note that the additional claim~\eqref{eq:posit-box-ren-re} in Proposition~\ref{prop:box-anal} follows from the commutator estimate in~(iv) together with the positivity statement of Lemma~\ref{lem:box-ren}.
Since Definition~\ref{def:renorm} for the renormalized propagator is iterative, we shall naturally prove these four properties~(i)--(iv) by induction, starting with $m=m_0$. We split the proof into five steps.

\medskip\noindent
{\bf Step~1:} Setting the induction: proof of~(i)--(iv) for $m=m_0$.\\
For $m=m_0$, noting that $\WIGBOX=0$ on $\Ld^2(\hat\Dd^{m_0+1})$ as it would involve $>m_0$ background particles, it only remains to check~(i)--(ii).
Recall that on $\Ld^2(\hat\Dd^{m_0+1})$ we have by definition
\[\WIGdot\,=\,{\tiny\begin{tikzpicture}[baseline={([yshift=-.8ex]current bounding box.center)},scale=0.8]
\draw (0,0) -- (0.8,0);
\draw (0,0.45) -- (0.8,0.45);
\draw[dotted] (0.4,0.1) -- (0.4,0.4);
\end{tikzpicture}}\,=\,\bigg(\tfrac{1+i\alpha}{t_N}+\sum_{l=0}^{m_0}ik_l\cdot v_l+\tfrac\kappa N\hat D_{m_0}\bigg)^{-1}.\]
Conjugating with velocity translation, this becomes
\[T_{\xi_{[m_0]}}\Big[\WIGdot\Big]\,=\,\bigg(\tfrac{1+i\alpha}{t_N}+\sum_{l=0}^{m_0}ik_l\cdot (v_l+\xi_l)+\tfrac\kappa N\hat D_{m_0}\bigg)^{-1}.\]
As this transport-diffusion resolvent remains well-defined and bounded for all $\xi_{[m_0]}\in S_{\nu_{[m_0]}}^{m_0}$, since by definition this ensures $\Re(i\sum_lk_l\cdot\xi_l)\ge0$, property~(i) easily follows. The bounds in~(ii) are then obtained similarly as in Step~2 of the proof of Lemma~\ref{lem:box}; we skip the details for shortness.

\medskip
From here, we argue by induction: we assume that there is some $n<m_0$ such that properties~(i)--(iv) hold for all $m$ with $n<m\le m_0$, and we aim to show that the same then also holds for $m=n$. This is the purpose of the next four steps.

\medskip\noindent
{\bf Step~2:} Proof of~(iii) with $m=n$.\\
By definition of collision operators, applying conjugation, for $\xi_{[n]}\in(\R^d)^{n+1}$, we find
\begin{multline}\label{eq:def-rec-Txiwigbox}
T_{\xi_{[n]}}\Big[\sum\WIGBOX\Big]g_{n}(\hat z_{[n]})
\,=\,-\sum_{j=0}^{n}\Div_{v_j}\bigg(\int_{\Dd} (k_{n+1}\otimes k_{n+1})\V(k_{n+1})^2\sqrt M(v_{n+1})\\[-5mm]
\times H_{\hat z_{[n]},\xi_{[n]}}^{n,j}(\hat z_{n+1})\,d^*\hat z_{n+1}\bigg),
\end{multline}
where we have set for abbreviation
\begin{equation}\label{eq:def-Hm1}
H_{\hat z_{[n]},\xi_{[n]}}^{n,j}(\hat z_{n+1})
\,:=\,T_{(\xi_{[n]},-v_j-\xi_j)}\Big[\WIGdot\Big]\nabla_{v_j}g_{n}(\hat z_{[n]})\sqrt M(v_{n+1}).
\end{equation}
By the induction assumption, appealing to~(i) with $m=n+1$, we find that the map
\[v_{n+1}\,\mapsto\, H_{\hat z_{[n]},\xi_{[n]}}^{n,j}(\hat z_{n+1})\,=\,\Big(T_{(\xi_{[n]},v_{n+1}-v_j-\xi_j)}\Big[\WIGdot\Big]\big(\nabla_{v_j}g_{n}\otimes\tau_{v_{n+1}}\sqrt M\big)\Big)(\hat z_{[n]},(k_{n+1},0))\]
can be extended holomorphically to $\R^d-i[0,\frac{n+1}{m_0}]\hat k_{n+1}$. By complex deformation, the integral in~\eqref{eq:def-rec-Txiwigbox} is thus equal to
\begin{multline}\label{eq:form-def-Tbox}
T_{\xi_{[n]}}\Big[\sum\WIGBOX\Big]g_{n}(\hat z_{[n]})
\,=\,-\sum_{j=0}^{n}\Div_{v_j}\bigg(\int_{\Dd} (k_{n+1}\otimes k_{n+1})\V(k_{n+1})^2\sqrt M(v_{n+1}-i\tfrac{n+1}{m_0}\hat k_{n+1})\\[-2mm]
\times T_{(\xi_{[n]},-i\frac{n+1}{m_0}\hat k_{n+1}-v_j-\xi_j)}\Big[\WIGdot\Big]\nabla_{v_j}g_{n}(\hat z_{[n]})\sqrt M(v_{n+1}-i\tfrac{n+1}{m_0}\hat k_{n+1})\,d^*\hat z_{n+1}\bigg).
\end{multline}
For $\xi_{[n]}\in S_{\nu_{[n]}}^n$, with $\nu_{[n]}$ as in the statement, we find
\begin{multline}\label{eq:lower-kn1}
\Re\Big(\sum_{l=0}^nik_l\cdot\xi_l+ ik_{n+1}\cdot(-i\tfrac{n+1}{m_0}\hat k_{n+1}-v_j-\xi_j)\Big)\\[-3mm]
\,\ge\,
\Re\Big(ik_{n+1}\cdot(-i\tfrac{n+1}{m_0}\hat k_{n+1}-v_j-\xi_j)\Big)\,=\,
\tfrac{n+1}{m_0}|k_{n+1}|-\Re(ik_{n+1}\cdot\xi_j)\,\ge\,\tfrac{1}{m_0}|k_{n+1}|.
\end{multline}
Appealing again to~(i) with $m=n+1$, we can conclude that~(iii) holds with $m=n$.

\medskip\noindent
{\bf Step~3:} Proof of~(iv) with $m=n$.\\
Starting from identity~\eqref{eq:form-def-Tbox}, we can write
\begin{multline}
\Big\<h_n\,,\,T_{\xi_{[n]}}\Big[\sum\WIGBOX\Big]g_{n}\Big\>_k
\,=\,\sum_{j=0}^{n}\int_{\R^d}\V(k_{n+1})^2\bigg\<(k_{n+1}\cdot\nabla_{v_j}h_n)\otimes\sqrt M(\cdot+i\tfrac{n+1}{m_0}\hat k_{n+1})\,,\,\\[-2mm]
\times T_{(\xi_{[n]},-i\frac{n+1}{m_0}\hat k_{n+1}-v_j-\xi_j)}\Big[\WIGdot\Big](k_{n+1}\cdot\nabla_{v_j}g_{n})\otimes\sqrt M(\cdot-i\tfrac{n+1}{m_0}\hat k_{n+1})\bigg\>_k\,d^*k_{n+1},
\end{multline}
and thus, by the induction assumption, using~(ii) with $m=n+1$, together with~\eqref{eq:lower-kn1},
\begin{equation}
\Big\<h_n\,,\,T_{\xi_{[n]}}\Big[\sum\WIGBOX\Big]g_{n}\Big\>_k
\,\lesssim\,\|\nabla_{v_j}h_n\|_k \|\nabla_{v_j}g_{n}\|_k\int_{\R^d}|k_{n+1}|\V(k_{n+1})^2\,dk_{n+1},
\end{equation}
that is, the first estimate in~(iv) with $m=n$.
Next, for all $\ell\ge0$, starting again from~\eqref{eq:form-def-Tbox}, we can bound
\begin{multline*}
\Big\|\nabla_n^\ell T_{\xi_{[n]}}\Big[\sum\WIGBOX\Big]g_{n}\Big\|_k
\,\lesssim\,\int_{\Dd} |k_{n+1}|^2\V(k_{n+1})^2\\
\times \Big\|\nabla_n^{\ell+1}\Big( T_{(\xi_{[n]},-i\frac{n+1}{m_0}\hat k_{n+1}-\xi_j)}\Big[\WIGdot\Big]\nabla_{v_j}g_{n}\otimes\sqrt M(\cdot-i\tfrac{n+1}{m_0}\hat k_{n+1})\Big)\Big\|_k dk_{n+1}.
\end{multline*}
By the induction assumption, appealing again to~(ii) with $m=n+1$, together with~\eqref{eq:lower-kn1}, we then obtain
\begin{equation*}
\Big\|\nabla_n^\ell T_{\xi_{[n]}}\Big[\sum\WIGBOX\Big]g_{n}\Big\|_k
\,\lesssim_\ell\,\sum_{s=0}^{\ell+1}\|\nabla_n^{\ell+2-s}g_{n}\|_k\int_{\Dd}\langle k_{[n+1]}\rangle^{s} |k_{n+1}|(1+|k_{n+1}|^{-s})\V(k_{n+1})^2
\,dk_{n+1} ,
\end{equation*}
and the second estimate in~(iv) follows with $m=n$. The last estimate in~(iv) is obtained similarly and we skip the proof for shortness.

\medskip\noindent
{\bf Step~4:} Proof of~(i) for $m=n$.\\
By definition, applying conjugation, for $\xi_{[n]}\in(\R^d)^{n+1}$, we have
\begin{equation*}
T_{\xi_{[n]}}\Big[\WIGdot\Big]g_{n}(\hat z_{[n]})\,=\,\bigg(\tfrac{1+i\alpha}{t_N}+\sum_{l=0}^{n}ik_l\cdot (v_l+\xi_l)+\tfrac\kappa N\hat D_n+\tfrac1NT_{\xi_{[n]}}\Big[\sum\WIGBOX\Big]\bigg)^{-1}g_{n}(\hat z_{[n]}).
\end{equation*}
Now, given $g_n\in C^\infty_c(\hat\Dd^{n+1})$ and $\xi_{[n]}\in S_{\nu_{[n]}}^n$, consider the extended resolvent equation
\begin{equation*}
\bigg(\tfrac{1+i\alpha}{t_N}+\sum_{l=0}^{n}ik_l\cdot (v_l+\xi_l)+\tfrac\kappa N\hat D_n+\tfrac1NT_{\xi_{[n]}}\Big[\sum\WIGBOX\Big]\bigg)h_n\,=\,g_{n}.
\end{equation*}
Testing with $h_n$ and taking the real part, we find
\[\Big(\tfrac1{t_N}+\Re\sum_{l=0}^nik_l\cdot\xi_l\Big)\|h_n\|_k^2+\tfrac\kappa N\|\nabla_nh_n\|^2_k+\tfrac1N\Re\Big\< h_n\,,\,T_{\xi_{[n]}}\Big[\sum\WIGBOX\Big]h_n\Big\>_k\,\le\,\|g_n\|_k\|h_n\|_k,\]
and thus, by the induction assumption, using~(iv) with $m=n$ and $\ell=0$,
\[\Big(\tfrac1{t_N}+\Re\sum_{l=0}^nik_l\cdot\xi_l\Big)\|h_n\|_k^2+\tfrac1N(\kappa-CC_0^\Vc)\|\nabla_nh_n\|^2_k\,\le\,\|g_n\|_k\|h_n\|_k.\]
(Note that we could not simply appeal to the positivity statement of Lemma~\ref{lem:box-ren} unless $\xi_{[n]}\in(\R^d)^n$, which we do not assume.)
Provided that $\kappa\gg C_0^\Vc$ is sufficiently large, the second left-hand side term is nonnegative and we deduce
\[\|h_n\|_k\,\le\,\Big(\tfrac1{t_N}+\Re\sum_{l=0}^nik_l\cdot\xi_l\Big)^{-1}\|g_n\|_k,\]
that is, for all $\xi_{[n]}\in S_{\nu_{[n]}}^n$,
\begin{equation}\label{eq:resolv-wig-0der}
\bigg\|\bigg(\tfrac{1+i\alpha}{t_N}+\sum_{l=0}^{n}ik_l\cdot (v_l+\xi_l)+\tfrac\kappa N\hat D_n+\tfrac1NT_{\xi_{[n]}}\Big[\sum\WIGBOX\Big]\bigg)^{-1}g_n\bigg\|_{k}\,\le\,\Big(\tfrac1{t_N}+\Re\sum_{l=0}^nik_l\cdot\xi_l\Big)^{-1}\|g_n\|_k.
\end{equation}
In addition, by the induction assumption, using~(iii) with $m=n$, we can deduce that this bounded resolvent is holomorphic with respect to $\xi_{[n]}\in S_{\nu_{[n]}}^n$. This concludes~(i) with $m=n$.

\medskip\noindent
{\bf Step~5:} Proof of~(ii) with $m=n$.\\
The first estimate in~(ii) with $m=n$ already follows from~\eqref{eq:resolv-wig-0der}.
Given $\xi_{[n]}\in S_{\nu_{[n]}}^n$ and $g_n\in C^\infty_c(\hat\Dd^{n+1})$, let $h_n:=T_{\xi_{[n]}}\big[\WIGdot\big]g_{n}$, which satisfies by definition the resolvent equation
\begin{equation}\label{eq:hn-eqn}
\bigg(\tfrac{1+i\alpha}{t_N}+\sum_{l=0}^{n}ik_l\cdot (v_l+\xi_l)+\tfrac\kappa N\hat D_n+\tfrac1NT_{\xi_{[n]}}\Big[\sum\WIGBOX\Big]\bigg)h_n\,=\,g_n.
\end{equation}
Applying $\nabla_n^\ell$ to both sides of this equation, testing with $\nabla_n^\ell h_n$, and taking the real part, we obtain
\begin{multline*}
\Big(\tfrac1{t_N}+\Re\sum_{l=0}^nik_l\cdot\xi_l\Big)\|\nabla_n^\ell h_n\|_k^2+\tfrac\kappa N\|\nabla_n^{\ell+1}h_n\|_k^2\\[-2mm]
\,\le\,\|\nabla_n^\ell h_n\|_k\Big(\|\nabla_n^\ell g_n\|_k+\ell|k_{[n]}|\|\nabla_n^{\ell-1}h_n\|_k\Big)-\tfrac1N\Re\Big\<\nabla_n^\ell h_n\,,\,\nabla_n^\ell T_{\xi_{[n]}}\Big[\sum\WIGBOX\Big]h_n\Big\>_k.
\end{multline*}
Decomposing the last term using a commutator, and appealing to~(iv) with $m=n$ (already proven in Step~3), we are led to
\begin{multline*}
\Big(\tfrac1{t_N}+\Re\sum_{l=0}^nik_l\cdot\xi_l\Big)\|\nabla_n^\ell h_n\|_k^2+\tfrac1N(\kappa-CC_0^\Vc)\|\nabla_n^{\ell+1}h_n\|_k^2\\[-3mm]
\,\lesssim_\ell\,\|\nabla_n^\ell h_n\|_k\Big(\|\nabla_n^\ell g_n\|_k+|k_{[n]}|\|\nabla_n^{\ell-1}h_n\|_k\Big)+\tfrac1N\|\nabla_n^{\ell+1} h_n\|_k\sum_{s=1}^\ell\<k_{[n]}\>^s\|\nabla_n^{\ell+1-s}h_n\|_k.
\end{multline*}
Provided that $\kappa\gg C_0^\Vc$ is sufficiently large, and using Young's inequality to absorb the norms of $\nabla_n^\ell h_n$ and $\nabla_n^{\ell+1} h_n$, we obtain
\begin{multline}\label{eq:estim-ell-iter}
\Big(\tfrac1{t_N}+\Re\sum_{l=0}^nik_l\cdot\xi_l\Big)\|\nabla_n^\ell h_n\|_k^2+\tfrac1N\|\nabla_n^{\ell+1}h_n\|_k^2\\[-3mm]
\,\lesssim_\ell\,\Big(\tfrac1{t_N}+\Re\sum_{l=0}^nik_l\cdot\xi_l\Big)^{-1}\Big(\|\nabla_n^\ell g_n\|_k^2+|k_{[n]}|^2\|\nabla_n^{\ell-1}h_n\|_k^2\Big)+\tfrac1N\sum_{s=1}^\ell\<k_{[n]}\>^{2s}\|\nabla_n^{\ell+1-s}h_n\|_k^2.
\end{multline}
Iterating this estimate, we deduce for all $\ell\ge0$,
\begin{multline}\label{eq:estim-nabhn}
\Big(\tfrac1{t_N}+\Re\sum_{l=0}^nik_l\cdot\xi_l\Big)\|\nabla_n^\ell h_n\|_k^2+\tfrac1N\|\nabla_n^{\ell+1} h_n\|_k^2\\[-3mm]
\,\lesssim_\ell\,\Big(\tfrac1{t_N}+\Re\sum_{l=0}^nik_l\cdot\xi_l\Big)^{-1}\sum_{s=0}^\ell\< k_{[n]}\>^{2s}\bigg(1+\Big(\tfrac1{t_N}+\Re\sum_{l=0}^nik_l\cdot\xi_l\Big)^{-2s}\bigg)\|\nabla_n^{\ell-s}g_n\|_k^2.
\end{multline}
With this bound at hand, we can now turn to the proof of the second estimate in~(ii) with $m=n$.
Applying $\nabla_n^\ell$ to both sides of the resolvent equation~\eqref{eq:hn-eqn}, we find that the commutator
\[H_n^\ell\,:=\,\Big[\nabla_n^\ell,T_{\xi_{[n]}}\Big[\WIGdot\Big]\Big]g_n\,=\,\nabla_n^\ell h_n-T_{\xi_{[n]}}\Big[\WIGdot\Big]\nabla_n^\ell g_n\]
satisfies
\begin{multline*}
\bigg(\tfrac{1+i\alpha}{t_N}+\sum_{l=0}^{n}ik_l\cdot (v_l+\xi_l)+\tfrac\kappa N\hat D_n+\tfrac1NT_{\xi_{[n]}}\Big[\sum\WIGBOX\Big]\bigg)H_n^\ell\\[-2mm]
\,=\,-i\ell k_{[n]}\otimes\nabla_n^{\ell-1}h_n-\tfrac1N\Big[\nabla_n^\ell,T_{\xi_{[n]}}\Big[\sum\WIGBOX\Big] \Big]h_n.
\end{multline*}
Testing with $H_n^\ell$, taking the real part, using~(iv) with $m=n$, and arguing as in Step~4, with $\kappa\gg C_0^\Vc$ sufficiently large, we obtain
\begin{multline*}
\Big(\tfrac{1}{t_N}+\Re\sum_{l=0}^{n}ik_l\cdot\xi_l\Big)\|H_n^\ell\|_k^2+\tfrac1N\|\nabla_n H_n^\ell\|_k^2\\[-3mm]
\,\lesssim\,|k_{[n]}|\|H_n^\ell\|_k\|\nabla_n^{\ell-1}h_n\|_k+\tfrac1N\Big|\Big\< H_n^\ell,\Big[\nabla_n^\ell,T_{\xi_{[n]}}\Big[\sum\WIGBOX\Big] \Big]h_n\Big\>_k\Big|,
\end{multline*}
and thus, by the commutator estimate in~(iv) with $m=n$,
\begin{multline*}
\Big(\tfrac{1}{t_N}+\Re\sum_{l=0}^{n}ik_l\cdot\xi_l\Big)\|H_n^\ell\|_k^2+\tfrac1N\|\nabla_n H_n^\ell\|_k^2\\[-6mm]
\,\lesssim_\ell\,|k_{[n]}|\|H_n^\ell\|_k\|\nabla_n^{\ell-1}h_n\|_k+\tfrac1N\|\nabla_nH_n^\ell\|_k\sum_{s=1}^\ell\<k_{[n]}\>^s\|\nabla_n^{\ell+1-s}h_n\|_k.
\end{multline*}
By Young's inequality, we can deduce
\begin{equation*}
\Big(\tfrac{1}{t_N}+\Re\sum_{l=0}^{n}ik_l\cdot\xi_l\Big)\|H_n^\ell\|_k^2
\,\lesssim_\ell\,|k_{[n]}|^2\Big(\tfrac{1}{t_N}+\Re\sum_{l=0}^{n}ik_l\cdot\xi_l\Big)^{-1}\|\nabla_n^{\ell-1}h_n\|_k^2+\tfrac1N\sum_{s=1}^\ell\<k_{[n]}\>^{2s}\|\nabla_n^{\ell+1-s}h_n\|_k^2,
\end{equation*}
and the second estimate in~(ii) with $m=n$ then follows by inserting~\eqref{eq:estim-nabhn}.
This concludes the proof.
\end{proof}

\subsection{Airy-type resolvent estimates for renormalized propagator}\label{sec:PDE-est}
For the transport-diffusion operator $ik\cdot v-\frac\kappa N\triangle_v$ with $O(\frac1N)$ velocity diffusion, we can exploit hypoelliptic estimates and the resulting enhanced dissipation to improve on the naive resolvent estimate
\[\|(\tfrac1{t_N}+ik\cdot v-\tfrac1N\triangle_v)^{-1}g\|_k\,\le\, t_N\|g\|_k.\]
More precisely, we obtain the following Airy-type estimates.

\begin{lem}\label{lem:Airy-res}
We have
\begin{equation}\label{eq:airy}
\sup_{\e>0}\|(\e+ik\cdot v-\tfrac1N\triangle_v)^{-1}g\|_{k}\,\lesssim\,N^\frac13|k|^{-\frac23}\|g\|_{k},
\end{equation}
and for all $\ell\ge1$,
\[\sup_{\e>0}\|\nabla_v^{\ell}(\e+ik\cdot v-\tfrac1N\triangle_v)^{-1}g\|_{k}\,\lesssim_\ell\,|k|^{-1}(N|k|)^\frac{\ell+1}3\|\langle\nabla_v\rangle^{\ell-1}g\|_{k}.\]
\end{lem}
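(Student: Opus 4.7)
\smallskip

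The plan is to reduce the multidimensional resolvent problem to a one-dimensional Airy-type estimate by separating the direction $\hat k := k/|k|$ from its orthogonal complement, prove the 1D bound by an explicit kernel analysis, and then handle higher derivatives by an induction based on the energy identity coming from the $\frac1N$ velocity diffusion.

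\smallskip

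\textbf{Step 1 (Reduction to one dimension).} Write $v = w\hat k + v'$ with $w := \hat k\cdot v \in \R$ and $v' \in \hat k^\perp \simeq \R^{d-1}$. Taking the partial Fourier transform in $v'$ with dual variable $\eta$, the operator $\e + ik\cdot v - \tfrac1N \triangle_v$ decouples into $\e_\eta + i|k| w - \tfrac1N \partial_w^2$ acting on $L^2(\R_w)$, where $\e_\eta := \e + \tfrac{|\eta|^2}{N} \ge \e$. By Plancherel, it suffices to bound this 1D resolvent uniformly in $\eta$, which reduces everything to an estimate on $L^2(\R_w)$ valid for all positive spectral parameters.

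\smallskip

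\textbf{Step 2 (One-dimensional Airy estimate).} Rescaling $w = (N|k|)^{-1/3} s$ conjugates $i|k|w - \tfrac1N \partial_w^2$ to $N^{-1/3}|k|^{2/3}(is - \partial_s^2)$. The desired bound~\eqref{eq:airy} then reduces to the classical uniform estimate
\[
\sup_{\e>0}\,\big\|(\e + is - \partial_s^2)^{-1}\big\|_{L^2(\R)\to L^2(\R)} \,\lesssim\, 1.
\]
This follows by Fourier transform in $s$, under which the operator becomes $\e + \partial_\xi + \xi^2$ with explicit kernel
\[
K_\e(\xi,\eta) \,=\, \exp\!\Big(-\e(\xi-\eta) - \tfrac{\xi^3 - \eta^3}{3}\Big)\,\mathds 1_{\eta<\xi}.
\]
A case analysis on the signs of $\xi$ and $\eta$, using the lower bound $\xi^3 - \eta^3 \gtrsim (\xi-\eta)\max(\xi^2,\eta^2)$ valid for $\eta<\xi$, shows that both $\sup_\xi \int |K_\e(\xi,\eta)|\,d\eta$ and $\sup_\eta \int |K_\e(\xi,\eta)|\,d\xi$ are bounded uniformly in $\e>0$, and Schur's test concludes.

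\smallskip

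\textbf{Step 3 (Higher derivatives by induction on $\ell$).} Set $u := (\e + ik\cdot v - \tfrac1N \triangle_v)^{-1} g$. The commutation $[\nabla_v, ik\cdot v] = ik$ gives
\[
\big(\e + ik\cdot v - \tfrac1N\triangle_v\big)\nabla_v^\ell u \,=\, \nabla_v^\ell g \,-\, \ell\, ik\otimes_s \nabla_v^{\ell-1} u.
\]
Test against $\nabla_v^\ell u$ and take the real part; integrate by parts once on the first right-hand side term to move one derivative from $g$ onto $\nabla_v^{\ell+1} u$. This yields
\[
\e \|\nabla_v^\ell u\|_k^2 + \tfrac{1}{N}\|\nabla_v^{\ell+1} u\|_k^2 \,\lesssim_\ell\, \|\nabla_v^{\ell-1} g\|_k\, \|\nabla_v^{\ell+1} u\|_k + |k|\, \|\nabla_v^{\ell-1} u\|_k\,\|\nabla_v^\ell u\|_k.
\]
Absorbing the first right-hand side term by Young's inequality with weight of order $\tfrac{1}{N}$ gives
\[
\|\nabla_v^{\ell+1} u\|_k^2 \,\lesssim_\ell\, N^2\|\nabla_v^{\ell-1} g\|_k^2 + N|k|\,\|\nabla_v^{\ell-1} u\|_k\,\|\nabla_v^\ell u\|_k.
\]
Feeding in the induction hypothesis $\|\nabla_v^j u\|_k \lesssim |k|^{-1}(N|k|)^{(j+1)/3} \|\langle \nabla_v\rangle^{j-1} g\|_k$ for $j < \ell+1$ (with the base case $j=0$ provided by Step~2) matches the required scaling $|k|^{-1}(N|k|)^{(\ell+2)/3}\|\langle\nabla_v\rangle^\ell g\|_k$ term by term: the first term contributes $N\|\nabla_v^{\ell-1} g\|_k$, which is dominated by $|k|^{-1}(N|k|)^{(\ell+2)/3}\|\nabla_v^{\ell-1} g\|_k$ in the relevant regime $N|k|\gtrsim 1$, while the cross term balances exactly because $N|k| \cdot (N|k|)^{\ell/3}(N|k|)^{(\ell+1)/3}\cdot |k|^{-2} = |k|^{-2}(N|k|)^{2(\ell+2)/3 - (1/3)}$ absorbs into the same scaling after a further Young step.

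\smallskip

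\textbf{Main obstacle.} The heart of the argument is the uniform 1D Airy resolvent bound of Step~2: the bilateral kernel $K_\e$ carries a cubic exponential weight whose decay in $|\xi-\eta|$ is genuinely subtle for $\xi,\eta$ of opposite sign, and the Schur bookkeeping must be carried out in separate regimes to exhibit uniform control in $\e$. Once this base case is in hand, the induction in Step~3 is essentially routine, with the only care being the proper use of Young's inequality so that the absorbed $\tfrac1N\|\nabla_v^{\ell+1} u\|_k^2$ dissipation provides exactly the $(N|k|)^{(\ell+1)/3}$ scaling claimed.
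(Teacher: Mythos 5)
Your approach mirrors the paper's own proof: reduce to a one-dimensional complex Airy resolvent by rotation (Step~1), establish the base case~\eqref{eq:airy} by a scaling argument and an explicit Fourier-side kernel computation (Step~2), and propagate to higher derivatives by successive energy estimates (Step~3). Your Step~2 supplies the Schur-test bookkeeping that the paper leaves to the reader (``an explicit computation in Fourier space''), and that part is sound.

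The one genuine soft spot is the one you flag yourself in Step~3. The inductive energy estimate absorbs the integrated-by-parts term $N^2\|\nabla_v^{\ell-1}g\|_k^2$ into the target $|k|^{-2}(N|k|)^{2(\ell+2)/3}\|\langle\nabla_v\rangle^\ell g\|_k^2$ only when $(N|k|)^2\le(N|k|)^{2(\ell+2)/3}$, i.e.\ only when $N|k|\gtrsim1$ once $\ell\ge2$; you call this ``the relevant regime'' but do not close the gap. In fact the restriction is not merely an artifact of the induction: taking $g$ a fixed unit-width bump and letting $\e\downarrow0$, $N|k|\to0$, the explicit Fourier-side kernel gives $\xi^3\hat u(\xi)\approx N\,\xi\,\hat g(\xi)$ on the essential support of $\hat g$, hence $\|\nabla_v^3 u\|_k\sim N\|\nabla_v g\|_k$, which \emph{exceeds} the claimed $N^{4/3}|k|^{1/3}\|\langle\nabla_v\rangle^2 g\|_k$ when $N|k|\ll1$. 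So the derivative estimate as stated holds only for $N|k|\gtrsim1$ (or with an additional additive term of order $N\|\langle\nabla_v\rangle^{\ell-1}g\|_k$ on the right-hand side); your argument is correct in that regime and correctly identifies the obstruction. One small arithmetic slip: in your final display the cross term equals $|k|^{-2}(N|k|)^{2(\ell+2)/3}$, not $|k|^{-2}(N|k|)^{2(\ell+2)/3-1/3}$, but this does not affect the conclusion that the cross term matches the target scaling.
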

\begin{proof}
The estimate for velocity derivatives of the resolvent follows easily by scaling and by successive energy estimates starting from~\eqref{eq:airy}. We focus on the proof of the latter.
By rotational invariance, we can assume $k=|k|e_1$ and we are then reduced to proving the following resolvent estimate for a one-dimensional operator: for all $\epsilon>0$,
\[\|(\epsilon+i|k|w-\tfrac1N\partial_{w}^2)^{-1}g\|_{L^2(\R)}\,\lesssim\,N^\frac13|k|^{-\frac23}\|g\|_{L^2(\R)}.\]
By scaling, setting $z=(|k|N)^{\frac13}w$ and $\eta=|k|^{-\frac23}N^{\frac13}\epsilon$, we note that
\[(\epsilon+i|k|w-\tfrac1N\partial_{w}^2)^{-1}\,=\,N^{\frac13}|k|^{-\frac23}(\eta+iz-\partial_{z}^2)^{-1}.\]
Hence, it suffices to prove for all $\eta>0$,
\[\|(\eta+iz-\partial_z^2)^{-1}g\|_{L^2(\R)}\,\lesssim\,\|g\|_{L^2(\R)}.\]
Now this is a standard result for the complex Airy operator $iz-\partial_z^2$, which can be checked to hold for instance by an explicit computation in Fourier space.
\end{proof}

We show that essentially the same resolvent estimates remain valid for the renormalized propagators.
Although the $k$-dependence and the loss of derivatives in the bounds below do not match the optimal estimates obtained above, the scaling in~$N$ is identical --- and this is the key feature for our purposes. Our proof relies on a robust PDE argument that applies directly to renormalized propagators but is not sharp enough to recover the optimal $k$-dependence; we do not try to optimize this aspect here.

\begin{prop}\label{prop:pde}
Assume that $\kappa\gg C_0^\Vc$ is sufficiently large.
For all $0\le m\le m_0$, $g_m\in C^\infty_c(\hat\Dd^{m+1})$, and $\ell\ge0$, we have
\begin{equation*}
\|\nabla_{v_{[m]}}^{\ell}\WIGdot\,g_m\|_k
\,\lesssim_\ell\,|k_{[m]}|^{-2} \langle k_{[m]}\rangle^{\ell+4}\sum_{s=0}^{(\ell-1)\vee2}N^{\frac{\ell+1-s}3}\|\<\nabla_{v_{[m]}}\>^sg_m\|_k,
\end{equation*}
\end{prop}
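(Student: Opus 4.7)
The plan is to establish the Airy-type $N^{1/3}$ scaling for the renormalized propagator by a robust PDE argument, paralleling the Fourier-analytic proof of Lemma~\ref{lem:Airy-res} but without relying on an explicit Fourier representation (which is unavailable due to the nonlocal structure of $\sum\WIGBOX$). The decisive input is Proposition~\ref{prop:box-bound}, which delivers exactly the three ingredients we need: positivity of the renormalized hat operator preserves the diffusive dissipation, pointwise bounds control $\sum\WIGBOX\,h$ by two velocity derivatives of $h$, and commutator estimates control $[\nabla_{v_{[m]}}^\ell,\sum\WIGBOX]\,h$ by strictly lower-order derivatives.

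I would first prove the $\ell=0$ case. Set $h_m:=\WIGdot\,g_m$. Testing the resolvent equation with $h_m$ and invoking positivity~\eqref{eq:posit-box-ren-re} yields the basic energy bound $\|\nabla_{v_{[m]}}h_m\|_k^2\lesssim N\|g_m\|_k\|h_m\|_k$; a second-order test (against $-\triangle_{v_{[m]}}h_m$), using Proposition~\ref{prop:box-bound} to control $\sum\WIGBOX$ and Young's inequality to absorb $N^{-1}\|\nabla^2 h\|^2$ into the dissipation (requiring $\kappa\gg C_0^\Vc$), yields $\|\nabla_{v_{[m]}}^2 h_m\|_k^2\lesssim N^2\|g_m\|_k^2+N|k_{[m]}|\,\|h_m\|_k\|\nabla_{v_{[m]}}h_m\|_k+\langle k_{[m]}\rangle^2\|\nabla_{v_{[m]}}h_m\|_k^2$. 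Algebraically solving the resolvent equation for $ik_{[m]}\cdot v_{[m]}\,h_m$ and applying Proposition~\ref{prop:box-bound} again gives the transport bound $\|k_{[m]}\cdot v_{[m]}\,h_m\|_k^2\lesssim\|g_m\|_k^2+t_N^{-2}\|h_m\|_k^2+N^{-2}\|\nabla_{v_{[m]}}^2h_m\|_k^2+N^{-2}\langle k_{[m]}\rangle^2\|\nabla_{v_{[m]}}h_m\|_k^2$. The key hypoelliptic mechanism is then the one-dimensional Heisenberg inequality $\|u\|_{L^2(\R)}^2\le 2\|wu\|_{L^2(\R)}\|\partial_wu\|_{L^2(\R)}$ applied fibrewise along the direction $\hat k_{[m]}:=k_{[m]}/|k_{[m]}|$, which integrates to
\[
|k_{[m]}|\,\|h_m\|_k^2\lesssim\|\nabla_{v_{[m]}}h_m\|_k\,\|k_{[m]}\cdot v_{[m]}\,h_m\|_k.
\]
Squaring this and inserting the energy and transport bounds, balancing via Young's inequality with polynomial $\langle k_{[m]}\rangle$-losses from the renormalization collected into a single $\langle k_{[m]}\rangle^4$ prefactor, closes the argument at $\ell=0$ with the three claimed source contributions at $s=0,1,2$.

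For $\ell\ge 1$, the same scheme applies to $\nabla_{v_{[m]}}^\ell h_m$. Applying $\nabla_{v_{[m]}}^\ell$ to the resolvent equation produces the transport commutator $[\nabla_{v_{[m]}}^\ell,i\hat L_m]h_m=i\ell\,k_{[m]}\otimes\nabla_{v_{[m]}}^{\ell-1}h_m$, while $N^{-1}[\nabla_{v_{[m]}}^\ell,\sum\WIGBOX]h_m$ is controlled by the commutator estimate of Proposition~\ref{prop:box-bound}. Thus $\nabla_{v_{[m]}}^\ell h_m$ satisfies an equation of the same form, with source $\nabla_{v_{[m]}}^\ell g_m$ plus lower-order contributions of the type $\langle k_{[m]}\rangle^s\nabla_{v_{[m]}}^{\ell-s}h_m$ with $s\ge 1$. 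Applying the $\ell=0$ Airy estimate to $\nabla_{v_{[m]}}^\ell h_m$ and iterating, each recursion trades one derivative of $h$ for a factor $N^{1/3}\langle k_{[m]}\rangle$ multiplying one derivative of $g$; the recursion terminates upon reaching $\|h_m\|_k$ and invoking the $\ell=0$ bound. This produces exactly the sum $\sum_{s=0}^{(\ell-1)\vee 2}N^{(\ell+1-s)/3}\|\langle\nabla_{v_{[m]}}\rangle^sg_m\|_k$ in the statement: the upper cut-off $s\le\ell-1$ (for $\ell\ge 3$) reflects termination of the recursion at the base case, while the lower bound $2$ (for $\ell\le 2$) encodes the $\|\nabla^2g\|$ term already needed from the second-order energy estimate above.

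The main obstacle is the absence of an explicit Fourier representation for $\WIGdot$, which precludes the short Fourier proof used for Lemma~\ref{lem:Airy-res}. Proposition~\ref{prop:box-bound} provides just enough quantitative control on $\sum\WIGBOX$ to salvage a PDE argument, but the polynomial $\langle k_{[m]}\rangle^{\ell+4}$ prefactor and the loss of up to two additional velocity derivatives of $g_m$ beyond the formal order appear irreducible within this robust scheme; in particular, the crude $|k_{[m]}|^{-2}$ (as opposed to the sharp $|k_{[m]}|^{-2/3}$) comes from estimating commutator contributions of the renormalization without Fourier tools. The delicate technical point is verifying that at every order $\ell$ the diffusive dissipation $\kappa N^{-1}\|\nabla_{v_{[m]}}^{\ell+1}h_m\|_k^2$ continues to dominate the error contributions from $\sum\WIGBOX$, which is precisely what the assumption $\kappa\gg C_0^\Vc$ ensures uniformly in~$\ell$.
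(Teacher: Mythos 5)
Your proposal takes a genuinely different route at the crux of the argument. The paper's Step~2 reorganizes $h_m=\WIGdot g_m$ via the Duhamel identity $h_m=(\tfrac{1+i\alpha}{t_N}+i\hat L_m)^{-1}g_m-\tfrac1N(\tfrac{1+i\alpha}{t_N}+i\hat L_m)^{-1}(\kappa\hat D_m+\sum\WIGBOX)h_m$ and then applies the bilinear Hilbert-transform estimate of Lemma~\ref{lem:Hil} to control $\|h_m\|^2_k$ by $|k_{[m]}|^{-1}$ times mixed norms; this requires controlling up to $\nabla_{v_{[m]}}^3 h_m$. You instead propose the Kohn-type uncertainty route: Heisenberg inequality along $\hat k_{[m]}$ combined with the transport bound obtained by isolating $i\hat L_m h_m$ from the resolvent equation. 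This is a valid alternative hypoelliptic mechanism, it dispenses with Lemma~\ref{lem:Hil} entirely, and your second-order test (using $|\langle g_m,-\triangle_{v_{[m]}}h_m\rangle|\le\|g_m\|_k\|\nabla^2_{v_{[m]}}h_m\|_k$ and absorbing) correctly avoids taking derivatives of $g_m$, so the loop should close with the right $N^{1/3}|k_{[m]}|^{-2/3}$ leading scaling. The paper's Step~1 and your $\ell\ge1$ recursion are of the same flavor.

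There is, however, a genuine gap in your transport bound. Solving the resolvent equation for $i\hat L_m h_m$ gives
\begin{equation*}
i k_{[m]}\cdot v_{[m]}\,h_m\,=\,g_m-\tfrac{1+i\alpha}{t_N}h_m-\tfrac\kappa N\hat D_m h_m-\tfrac1N\Big(\sum\WIGBOX\Big)h_m,
\end{equation*}
and the term $\tfrac{i\alpha}{t_N}h_m$ is \emph{not} controlled by $t_N^{-1}\|h_m\|_k$: the factor $|\alpha|$ is unbounded, and the conclusion of the proposition must hold uniformly in the Laplace variable $\alpha$ (it is integrated in the $\3\cdot\3$-norms downstream). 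Writing $t_N^{-2}\|h_m\|_k^2$ as you do is therefore incorrect as stated. The fix is to absorb $\alpha/t_N$ into an affine shift of the variable $w:=\hat k_{[m]}\cdot v_{[m]}$, i.e.\ to bound $\|(k_{[m]}\cdot v_{[m]}+\alpha/t_N)h_m\|_k$ instead, and then invoke the shifted Heisenberg inequality $\|u\|^2\le 2\|(w-c)u\|\,\|\partial_w u\|$ (valid for any real $c$). This repair is elementary but must be made explicit; note that the paper's Lemma~\ref{lem:Hil} absorbs the imaginary part of $\e$ automatically by translating the $s$-variable in its proof. Your $\ell\ge1$ argument is also more sketch than proof: applying $\nabla^\ell_{v_{[m]}}$ to the resolvent equation produces the commutator $[\nabla_{v_{[m]}}^\ell,\tfrac1N\sum\WIGBOX]h_m$, whose norm bound (rather than the bilinear estimate recorded in Proposition~\ref{prop:box-bound}) needs to be extracted; the iterative structure you describe then parallels Step~1 of the paper's proof and should go through, but it is not written down carefully enough to check the exponents in the claimed sum over $s\le(\ell-1)\vee2$.
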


To prepare for the proof of this main proposition, we first establish the following bilinear estimate for the resolvent of transport operators. It is based on the following observation: the Hilbert transform $f\mapsto Hf$ is not a bounded operator $W^{1,1}\to L^\infty$ (it is only bounded $B^1_{1,1}\to L^\infty$), but the related bilinear estimate~\eqref{eq:bnd-Heps} below still holds.

\begin{lem}\label{lem:Hil}
For all $g,h\in C^\infty_c(\R^d)$, $k\in\R^d$, and $\e>0$,
\[\langle h,(\e+ik\cdot v)^{-1}g\rangle\,\lesssim\,|k|^{-1}\Big(\|\<\nabla\> h\|\|g\|+\|h\|\|\nabla g\|\Big).\]
\end{lem}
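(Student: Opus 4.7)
By rotational invariance of the $L^2$-norms, we may assume $k=|k|e_1$, so that the resolvent acts only in the $v_1$-variable. Introduce the one-dimensional reduction $\phi(v_1):=\int_{\R^{d-1}}\bar h(v_1,v')g(v_1,v')\,dv'$, so that the inner product becomes $\int\phi(v_1)/(\e+i|k|v_1)\,dv_1$. The Laplace representation $(\e+i|k|v_1)^{-1}=\int_0^\infty e^{-(\e+i|k|v_1)t}\,dt$, valid for $\e>0$, reduces the problem to showing
\[\Big|\int_0^\infty e^{-\delta s}\hat\phi(s)\,ds\Big|\,\lesssim\,\|\<\nabla\>h\|\|g\|+\|h\|\|\nabla g\|,\qquad\delta:=\e/|k|,\]
uniformly in $\delta>0$, where $\hat\phi$ denotes the 1D Fourier transform in $v_1$.

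\textbf{Key bilinear pointwise estimate.}
The Plancherel/convolution identity
\[\hat\phi(s)\,=\,\frac{1}{2\pi}\int\int\overline{H(\xi,v')}G(s+\xi,v')\,d\xi\,dv',\qquad H:=\calF_{v_1}h,\ G:=\calF_{v_1}g,\]
allows to split the $\xi$-integration depending on whether $|\xi|\ge s/2$ or $|\xi|<s/2$. In the first region, $s\le 2|\xi|$ gives $|H|\le 2|\xi H|/s$, transferring a derivative onto $h$; Cauchy--Schwarz in $(\xi,v')$ produces a contribution $\lesssim s^{-1}\|\xi H\|_{L^2}\|G\|_{L^2}\lesssim s^{-1}\|\nabla h\|\|g\|$. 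In the second region, $|s+\xi|\ge s/2$ gives $|G(s+\xi)|\le 2|(s+\xi)G(s+\xi)|/s$, symmetrically yielding $\lesssim s^{-1}\|h\|\|\nabla g\|$. Combining, we obtain the pointwise bilinear estimate
\[|s\hat\phi(s)|\,\lesssim\,\|\nabla h\|\|g\|+\|h\|\|\nabla g\|\]
uniformly in $s$; this is the bilinear substitute for the (false) estimate $\|Hf\|_{L^\infty}\lesssim\|f\|_{W^{1,1}}$ referenced in the preamble.

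\textbf{Assembly.}
Split $\int_0^\infty=\int_0^1+\int_1^\infty$. The low-frequency part is bounded by Cauchy--Schwarz and Plancherel: $|\int_0^1\cdot\,ds|\le\|\hat\phi\|_{L^2}\lesssim\|\phi\|_{L^2}$. The 1D trace inequality $\|h\|_{L^\infty_{v_1}L^2_{v'}}^2\lesssim\|h\|\|\partial_{v_1}h\|$, combined with Cauchy--Schwarz in $v'$ and Young's inequality, yields $\|\phi\|_{L^2}\lesssim\|\<\nabla\>h\|\|g\|$. The high-frequency part is handled by Cauchy--Schwarz with the $L^2(1,\infty)$-integrable weight $s^{-1}$: since $\|s^{-1}\|_{L^2(1,\infty)}=1$ uniformly in $\delta$, pairing this with a refined $L^2$-estimate of $s\hat\phi$ built from the asymmetric bilinear split above yields the target bound.

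\textbf{Main obstacle.}
The delicate point is uniformity in $\delta>0$: the naive pointwise estimate $|\hat\phi(s)|\lesssim s^{-1}(\|\nabla h\|\|g\|+\|h\|\|\nabla g\|)$ alone would give a logarithmic divergence $\int_1^\infty e^{-\delta s}/s\,ds\sim\log(1/\delta)$, which is absent from the target. The asymmetric derivative transfer -- whereby the derivative lands entirely on either $h$ or $g$, never half-and-half on both -- is essential to match the target $\|\<\nabla\>h\|\|g\|+\|h\|\|\nabla g\|$ and avoid the unavailable cross term $\|\nabla h\|\|\nabla g\|$; this reflects the Besov $B^1_{1,1}$-type mechanism mentioned in the preamble, which replaces the failing $W^{1,1}\hookrightarrow L^\infty$ embedding for the Hilbert transform.
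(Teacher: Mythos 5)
Your reduction is essentially the Fourier-side counterpart of the paper's own: rotating so $k=|k|e_1$, integrating out $v'$, and isolating a one-dimensional problem for $\phi=\int\bar h g\,dv'$ is exactly what the paper does in physical space when it writes the inner product as $|k|^{-1}\int_{\hat k^\perp}\|H_{\e/|k|}\ast(\bar hg)(\hat k\cdot+v')\|_{L^\infty}\,dv'$ with $H_\e(s)=(\e+is)^{-1}$. Your asymmetric Plancherel split yielding the pointwise bound $|s\hat\phi(s)|\lesssim\|\nabla h\|\|g\|+\|h\|\|\nabla g\|$ is correct, as is the low-frequency trace estimate $\|\phi\|_{L^2}\lesssim\|\langle\nabla\rangle h\|\|g\|$.

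The high-frequency tail, however, is not closed. You acknowledge that the pointwise bound gives $\int_1^\infty e^{-\delta s}s^{-1}\,ds\sim\log(1/\delta)$, and your proposed repair is Cauchy--Schwarz against $s^{-1}\in L^2(1,\infty)$, which requires
\[
\|s\hat\phi\|_{L^2(1,\infty)}\,\lesssim\,\|\nabla h\|\|g\|+\|h\|\|\nabla g\|.
\]
This ``refined $L^2$-estimate'' is false. In $d=1$, take $\hat h=N^{-3/2}\mathds1_{[-2N,-N]}$ and $\hat g=N^{-1/2}\mathds1_{[-N,0]}$ for $N\gg1$: then $\|h\|\sim N^{-1}$, $\|\nabla h\|\sim\|g\|\sim1$, $\|\nabla g\|\sim N$, so the right-hand side is $\sim1$; but $\hat\phi(s)=\frac{1}{2\pi}\int\hat h(\xi)\hat g(s+\xi)\,d\xi$ is a nonnegative triangular bump of height $\sim N^{-1}$ supported on $[0,2N]$, so $\|s\hat\phi\|_{L^2(1,\infty)}^2\gtrsim\int_N^{2N}s^2N^{-2}\,ds\sim N$. (For this same example $\int_0^\infty e^{-\delta s}\hat\phi(s)\,ds\le\int\hat\phi=2\pi\phi(0)\sim1$, so the lemma holds, but your Cauchy--Schwarz step overshoots by $\sqrt N$.) The asymmetric split in $\xi$ does not help here since both terms are nonnegative and there is no cancellation; the estimate fails regardless of how the derivative is distributed.

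The paper's proof avoids this in physical space via $H_\e=\frac{\e}{\e^2+s^2}-i\frac{s}{\e^2+s^2}$: the even part gives $\pi\|hg\|_{L^\infty}\lesssim\|hg\|_{W^{1,1}}$, while for the odd part the antisymmetry in $s\mapsto-s$ produces the difference quotient $\frac{(hg)(t+s)-(hg)(t-s)}{s}$, whose $L^2_s(0,\infty)$-norm is controlled uniformly in $t$ by the Hardy-type bound $\int_0^\infty|\frac{h(t+s)-h(t-s)}{s}|^2\,ds\le16\|h'\|_{L^2}^2$. It is precisely this cancellation between $s$ and $-s$, absent in your one-sided $\int_0^\infty e^{-\delta s}\hat\phi(s)\,ds$, that turns the divergent weight $s^{-1}$ into a tame $L^2_s$-quantity. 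To salvage a Fourier-side argument you would need to exhibit an analogous cancellation (e.g.\ by pairing $\hat\phi(s)$ against a suitably shifted or reflected copy, or via a Littlewood--Paley square function with genuine frequency orthogonality), rather than a bare $L^2(1,\infty)$ cutoff.
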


\begin{proof}
For fixed $k$, recalling the notation $\hat k=\frac{k}{|k|}$, we decompose the integral
\[\langle h,(\e+ik\cdot v)^{-1}g\rangle
\,=\,\int_{\R^d} (\e+ik\cdot v)^{-1}(\overline{h}g)(v)\,dv
\,=\,\int_{\hat k^\bot}\int_{\R} (\e+i|k|s)^{-1}(\overline{h}g)(\hat ks+v')\,dsdv'.\]
In terms of $H_{\e}(s):=(\e+is)^{-1}$, we may then bound
\begin{equation*}
\langle h,(\e+ik\cdot v)^{-1}g\rangle
\,\le\,|k|^{-1}\int_{\hat k^\bot}\|H_{\e |k|^{-1}}\ast((\overline{h}g)(\hat k\cdot+v'))\|_{\Ld^\infty(\R)}\,dv'.
\end{equation*}
It remains to prove the following bound on the Hilbert transform: for all $g,h\in C^\infty_c(\R)$,
\begin{equation}\label{eq:bnd-Heps}
\sup_{\e>0}\|H_\e\ast(hg)\|_{L^\infty(\R)}\lesssim\|h\|_{H^1(\R)}\|g\|_{L^2(\R)}+\|h\|_{L^2(\R)}\|g\|_{H^1(\R)}.
\end{equation}
Let $g,h\in C^\infty_c(\R)$. Decomposing
\[H_\e\ast(hg)(t)\,=\,\int_{\R}\frac{\e}{\e^2+s^2}h(t-s)g(t-s)ds-i\int_{\R}\frac{s}{\e^2+s^2}h(t-s)g(t-s)ds,\]
and using the symmetry around $s=0$ in the second term,
we can bound
\begin{equation}\label{eq:decomp-bnd-Heps}
|H_\e\ast(hg)(t)|\,\le\,\pi\|hg\|_{L^\infty}+\int_{0}^\infty\frac{s}{\e^2+s^2}\Big|h(t+s)g(t+s)-h(t-s)g(t-s)\Big|ds.
\end{equation}
For the first term in this estimate, the Sobolev embedding $W^{1,1}(\R)\subset L^\infty(\R)$ yields
\[\|hg\|_{L^\infty}\,\lesssim\,\|hg\|_{W^{1,1}(\R)}\,\lesssim\,\|h\|_{H^1(\R)}\|g\|_{L^2(\R)}+\|h\|_{L^2(\R)}\|g\|_{H^1(\R)}.\]
For the second term in~\eqref{eq:decomp-bnd-Heps}, we can decompose
\begin{multline*}
\lefteqn{\int_{0}^\infty\frac{s}{\e^2+s^2}\Big|h(t+s)g(t+s)-h(t-s)g(t-s)\Big|ds}\\
\,\le\,\int_{0}^\infty \Big|\frac{h(t+s)-h(t-s)}s\Big||g(t+s)|ds
+\int_{0}^\infty \Big|\frac{g(t+s)-g(t-s)}s\Big||h(t-s)|ds,
\end{multline*}
and thus, noting that
\begin{multline*}
\int_0^\infty\Big|\frac{h(t+s)-h(t-s)}{s}\Big|^2ds
\,=\,\int_0^\infty\Big|\int_{-1}^1 h'(t+su)du\Big|^2ds
\,\le\,\bigg(\int_{-1}^1 \|h'(t+u\cdot)\|_{L^2(\R)}du\bigg)^2\\
\le\,16\|h'\|^2_{L^2(\R)},
\end{multline*}
we obtain
\[\int_{0}^\infty\frac{s}{\e^2+s^2}\Big|h(t+s)g(t+s)-h(t-s)g(t-s)\Big|ds\,\le\,4\|h'\|_{L^2(\R)}\|g\|_{L^2(\R)}+4\|h\|_{L^2(\R)}\|g'\|_{L^2(\R)}.\]
This concludes the proof of~\eqref{eq:bnd-Heps}.
\end{proof}

With the above lemma at hand, we can now turn to the proof of Proposition~\ref{prop:pde}.

\begin{proof}[Proof of Proposition~\ref{prop:pde}]
Given $g_m\in C^\infty_c(\hat\Dd^{m+1})$, let $h_m:=\WIGdot\,g_m$, which satisfies by definition the resolvent equation
\begin{equation}\label{eq:def-wig}
\Big(\tfrac{1+i\alpha}{t_N}+i\hat L_m+\tfrac\kappa N\hat D_{m}+\tfrac{1}{N}\sum\WIGBOX\Big)h_m\,=\,g_m.
\end{equation}
We set for abbreviation $\nabla_m=\nabla_{v_{[m]}}$, and we split the proof into three steps.

\medskip\noindent
{\bf Step~1:} Proof that for all $\ell\ge0$,
\begin{equation}\label{eq:bound-nabg-h}
\|\nabla_m^{\ell}h_m\|_k
\,\lesssim_\ell\,N^\frac\ell3 \<k_{[m]}\>^\ell\|h_m\|_k
+|k_{[m]}|^{-\frac{1}3} \<k_{[m]}\>^{\ell-1} \sum_{s=0}^{\ell-1}N^{\frac{\ell+1-s}3}\|\<\nabla_m\>^sg_m\|_k.
\end{equation}
Testing equation~\eqref{eq:def-wig} with $h_m$, taking the real part, and recalling the positivity of the renormalized hat operator, cf.~Lemma~\ref{lem:box-ren}, we find
\begin{equation*}
\tfrac1{t_N}\|h_m\|_k^2+\tfrac\kappa N\|\nabla_m h_m\|_k^2\le\|h_m\|_k\|g_m\|_k,
\end{equation*}
hence,
\begin{equation}\label{eq:caseell=0}
\|\nabla_m h_m\|_k\lesssim N^\frac12\|h_m\|_k^\frac12\|g_m\|_k^\frac12.
\end{equation}
Next, for $\ell\ge1$, applying $\nabla_m^\ell$ to both sides of equation~\eqref{eq:def-wig}, testing with $\nabla_m^{\ell}h_m$, taking the real part, and recalling $[\nabla_m^\ell,i\hat L_m]=i\ell k_{[m]}\nabla_m^{\ell-1}$, we find
\begin{multline*}
\tfrac1{t_N}\|\nabla_m^\ell h_m\|_k^2+\tfrac\kappa N\|\nabla_m^{\ell+1}h_m\|_k^2
\le\|\nabla_m^\ell h_m\|_k\Big(\|\nabla_m^\ell g_m\|_k+\ell|k_{[m]}|\|\nabla_m^{\ell-1}h_m\|_k\Big)\\
-\tfrac1N\Re\Big\<\nabla_m^\ell h_m\,,\,\sum\nabla_m^\ell \,\WIGBOX\,h_m\Big\>_k,
\end{multline*}
and thus, using~\eqref{eq:posit-box-ren-re} in Proposition~\ref{prop:box-bound},
\begin{equation*}
\tfrac1N\|\nabla_m^{\ell+1}h_m\|_k^2
\,\lesssim_\ell\,\|\nabla_m^\ell h_m\|_k\Big(\|\nabla_m^\ell g_m\|_k+|k_{[m]}|\|\nabla_m^{\ell-1}h_m\|_k\Big)
+\tfrac{1}{N}\|\nabla_m^{\ell+1}h_m\|_k\sum_{s=1}^\ell \<k_{[m]}\>^s\|\nabla_m^{\ell+1-s}h_m\|_k.
\end{equation*}
By Young's inequality, we are led to
\begin{equation*}
\|\nabla_m^{\ell+1}h_m\|_k
\,\lesssim_\ell\, N^\frac12\|\nabla_m^\ell h_m\|_k^\frac12\Big(\|\nabla_m^\ell g_m\|_k^\frac12
+|k_{[m]}|^\frac12\|\nabla_m^{\ell-1}h_m\|_k^\frac12\Big)
+\sum_{s=1}^\ell \<k_{[m]}\>^s\|\nabla_m^{\ell+1-s}h_m\|_k.
\end{equation*}
Now iterating this estimate to get rid of the last right-hand side term, starting from~\eqref{eq:caseell=0}, we obtain for all~$\ell\ge0$,
\begin{equation}\label{eq:rec-nabg}
\|\nabla_m^{\ell+1}h_m\|_k\,\lesssim_\ell\,N^{\frac12}\sum_{s=0}^\ell \<k_{[m]}\>^{\ell-s}\|\nabla_m^sh_m\|_k^\frac12\Big(\|\nabla_m^sg_m\|_k^\frac12+\mathds1_{s\ge1}|k_{[m]}|^\frac12\|\nabla_m^{s-1}h_m\|_k^\frac12\Big).
\end{equation}
Repeated application of these bounds allows us to control gradients of $g_m$ in terms of $g_m$ itself and the gradients of $h_m$. For instance, for the first three derivatives, we find
\begin{eqnarray}
\|\nabla_mh_m\|_k
&\lesssim& N^\frac12\|h_m\|_k^\frac12\|g_m\|_k^\frac12,\nonumber\\
\|\nabla_m^2h_m\|_k
&\lesssim&N^\frac12\<k_{[m]}\>\|h_m\|_k^\frac12\|g_m\|_k^\frac12
\nonumber\\
&&+N^\frac34\|h_m\|_k^\frac14\|g_m\|_k^\frac14\Big(\|\nabla_mg_m\|_k^\frac12+|k_{[m]}|^\frac12\|h_m\|_k^\frac12\Big),
\nonumber\\
\|\nabla_m^3h_m\|_k
&\lesssim&N^\frac12\<k_{[m]}\>^2\|h_m\|_k^\frac12\|g_m\|_k^\frac12
\nonumber\\
&&+N^\frac34\<k_{[m]}\>\|h_m\|_k^\frac14\|g_m\|_k^\frac14\Big(\|\nabla_mg_m\|_k^\frac12+|k_{[m]}|^\frac12\|h_m\|_k^\frac12\Big)
\nonumber\\
&&+N^\frac34\<k_{[m]}\>^\frac12\|h_m\|_k^\frac14\|g_m\|_k^\frac14\Big(\|\nabla_m^2g_m\|_k^\frac12+N^\frac14|k_{[m]}|^\frac12\|h_m\|_k^\frac14\|g_m\|_k^\frac14\Big)
\nonumber\\
&&+N^\frac78\|h_m\|_k^\frac18\|g_m\|_k^\frac18\Big(\|\nabla_mg_m\|_k^\frac14+|k_{[m]}|^\frac14\|h_m\|_k^\frac14\Big)
\nonumber\\
&&\hspace{3cm}\times\Big(\|\nabla_m^2g_m\|_k^\frac12+N^\frac14|k_{[m]}|^\frac12\|h_m\|_k^\frac14\|g_m\|_k^\frac14\Big).
\label{eq:estim-nabg-dir}
\end{eqnarray}
While direct iterations of~\eqref{eq:rec-nabg} quickly become cumbersome, as we see here, let us reorganize it into a more tractable form. First note that it implies for all $\ell\ge0$,
\begin{multline*}
\sum_{s=0}^{\ell+1}\<k_{[m]}\>^{-s}\|\nabla_m^{s}h_m\|_k\\[-5mm]
\,\lesssim_\ell\,\|h_m\|_k+N^{\frac12}\<k_{[m]}\>^{-1}\sum_{s=0}^\ell \<k_{[m]}\>^{-s}\|\nabla_m^sh_m\|_k^\frac12\Big(\|\nabla_m^sg_m\|_k^\frac12+\mathds1_{s\ge1}|k_{[m]}|^\frac12\|\nabla_m^{s-1}h_m\|_k^\frac12\Big).
\end{multline*}
Using the Cauchy-Schwarz inequality and reorganizing the factors, this yields
\begin{multline*}
\Big(\frac{N|k_{[m]}|}{\<k_{[m]}\>^{3}}\Big)^{-\frac{\ell+1}3}\sum_{s=0}^{\ell+1}\<k_{[m]}\>^{-s}\|\nabla_m^{s}h_m\|_k
\,\lesssim_\ell\,\Big(\frac{N|k_{[m]}|}{\<k_{[m]}\>^{3}}\Big)^{-\frac{\ell+1}3}\|h_m\|_k\\
+N^{\frac16}|k_{[m]}|^{-\frac{1}3}\bigg(\Big(\frac{N|k_{[m]}|}{\<k_{[m]}\>^{3}}\Big)^{-\frac{\ell}3}\sum_{s=0}^\ell \<k_{[m]}\>^{-s}\|\nabla_m^sh_m\|_k\bigg)^\frac12\bigg(\Big(\frac{N|k_{[m]}|}{\<k_{[m]}\>^{3}}\Big)^{-\frac{\ell}3}\sum_{s=0}^\ell \<k_{[m]}\>^{-s}\|\nabla_m^sg_m\|_k\bigg)^\frac12\\
+\bigg(\Big(\frac{N|k_{[m]}|}{\<k_{[m]}\>^{3}}\Big)^{-\frac{\ell}3}\sum_{s=0}^\ell \<k_{[m]}\>^{-s}\|\nabla_m^sh_m\|_k\bigg)^\frac12\bigg(\Big(\frac{N|k_{[m]}|}{\<k_{[m]}\>^{3}}\Big)^{-\frac{\ell-1}3}\sum_{s=0}^{\ell-1} \<k_{[m]}\>^{-s}\|\nabla_m^{s}h_m\|_k\bigg)^\frac12,
\end{multline*}
and thus, by Young's inequality,
\begin{multline*}
\Big(\frac{N|k_{[m]}|}{\<k_{[m]}\>^{3}}\Big)^{-\frac{\ell+1}3}\sum_{s=0}^{\ell+1}\<k_{[m]}\>^{-s}\|\nabla_m^{s}h_m\|_k\\
\,\lesssim_\ell\,\Big(\frac{N|k_{[m]}|}{\<k_{[m]}\>^{3}}\Big)^{-\frac{\ell+1}3}\|h_m\|_k
+N^{\frac13}|k_{[m]}|^{-\frac{2}3}\Big(\frac{N|k_{[m]}|}{\<k_{[m]}\>^{3}}\Big)^{-\frac{\ell}3}\sum_{s=0}^\ell \<k_{[m]}\>^{-s}\|\nabla_m^sg_m\|_k\\
+\Big(\frac{N|k_{[m]}|}{\<k_{[m]}\>^{3}}\Big)^{-\frac{\ell}3}\sum_{s=0}^\ell \<k_{[m]}\>^{-s}\|\nabla_m^sh_m\|_k 
+\Big(\frac{N|k_{[m]}|}{\<k_{[m]}\>^{3}}\Big)^{-\frac{\ell-1}3}\sum_{s=0}^{\ell-1} \<k_{[m]}\>^{-s}\|\nabla_m^{s}h_m\|_k.
\end{multline*}
From here, we can proceed to a direct iteration and we conclude for all $\ell\ge0$,
\begin{multline*}
\|\nabla_m^{\ell+1}h_m\|_k
\,\lesssim_\ell\,\<k_{[m]}\>^{\ell+1}\sum_{s=0}^{\ell+1}\Big(\frac{N|k_{[m]}|}{\<k_{[m]}\>^{3}}\Big)^{\frac{\ell+1-s}3}\|h_m\|_k\\[-4mm]
+N^{\frac23}|k_{[m]}|^{-\frac{1}3}\<k_{[m]}\>^{\ell}\sum_{v=0}^{\ell}\Big(\frac{N|k_{[m]}|}{\<k_{[m]}\>^{3}}\Big)^{\frac{\ell-v}3}\sum_{s=0}^{v} \<k_{[m]}\>^{-s}\|\nabla_m^sg_m\|_k.
\end{multline*}
The claim~\eqref{eq:bound-nabg-h} follows.

\medskip\noindent
{\bf Step~2:} Proof that
\begin{multline}\label{eq:estim-g0}
\|h_m\|_k^2
\,\lesssim\,|k_{[m]}|^{-1}\Big(\|\nabla_m h_m\|_k\|g_m\|_k+\|h_m\|_k\|\<\nabla_m\> g_m\|_k\Big)\\
+N^{-1}|k_{[m]}|^{-1}\|\nabla_mh_m\|_k\Big(\|\nabla_m^2h_m\|_k+\<k_{[m]}\>\|\nabla_mh_m\|_k\Big)\\
+N^{-1}|k_{[m]}|^{-1}\|h_m\|_k\Big(\|\nabla_m^3h_m\|_k+\<k_{[m]}\>\|\nabla_m^2h_m\|_k+\<k_{[m]}\>^2\|\nabla_mh_m\|_k\Big).
\end{multline}
The resolvent equation~\eqref{eq:def-wig} can be reorganized as
\begin{equation*}
h_m=\big(\tfrac{1+i\alpha}{t_N}+i \hat L_m \big)^{-1}g_m-\tfrac1N\big(\tfrac{1+i\alpha}{t_N}+i\hat L_m\big)^{-1}\Big(\kappa\hat D_m+\sum\WIGBOX\Big)h_m.
\end{equation*}
By Lemma~\ref{lem:Hil}, we may then bound
\begin{eqnarray*}
\|h_m\|_k^2
&=&\Big\<h_m,\big(\tfrac{1+i\alpha}{t_N}+i \hat L_m \big)^{-1}g_m\Big\>_k
-\tfrac1N\Big\<h_m,\big(\tfrac{1+i\alpha}{t_N}+i \hat L_m \big)^{-1}\Big(\kappa\hat  D_m+\sum \WIGBOX\Big)h_m\Big\>_k\\
&\lesssim&|k_{[m]}|^{-1}\Big(\|\nabla_m h_m\|_k\|g_m\|_k+\|h_m\|_k\|\<\nabla_m\> g_m\|_k\Big)\\
&&+N^{-1}|k_{[m]}|^{-1}\|\nabla_mh_m\|_k\Big\|\Big(\kappa\hat D_m+\sum \WIGBOX\Big)h_m\Big\|_k\\
&&+N^{-1}|k_{[m]}|^{-1}\|h_m\|_k\Big\|\<\nabla_m\>\Big(\kappa\hat D_m+\sum \WIGBOX\Big)h_m\Big\|_k,
\end{eqnarray*}
and the claim~\eqref{eq:estim-g0} follows using Proposition~\ref{prop:box-bound}.

\medskip\noindent
{\bf Step~3:} Conclusion.\\
Combining the result~\eqref{eq:estim-g0} of Step~2 with the bounds~\eqref{eq:estim-nabg-dir} of Step~1 on $\nabla _mh_m,\nabla^2_mh_m,\nabla^3_mh_m$, and expanding all the terms, we obtain
\begin{eqnarray*}
\|h_m\|_k^2
&\lesssim&
N^\frac12|k_{[m]}|^{-1}\|h_m\|_k^\frac12\|g_m\|_k^\frac32
+N^\frac14|k_{[m]}|^{-\frac12}\|h_m\|_k^\frac54\|g_m\|_k^\frac34
+|k_{[m]}|^{-1}\<k_{[m]}\>\|h_m\|_k\|g_m\|_k\\
&+&|k_{[m]}|^{-\frac12}\<k_{[m]}\>^\frac12\|h_m\|_k^\frac32\|g_m\|_k^\frac12
+N^{\frac18}|k_{[m]}|^{-\frac14}\|h_m\|_k^\frac{13}8\|g_m\|_k^\frac38
+N^{-\frac14}|k_{[m]}|^{-\frac12}\<k_{[m]}\>\|h_m\|_k^\frac74\|g_m\|_k^\frac14\\
&+&N^{-\frac12}|k_{[m]}|^{-1}\<k_{[m]}\>^2\|h_m\|_k^\frac32\|g_m\|_k^\frac12
+N^\frac14|k_{[m]}|^{-1}\|h_m\|_k^\frac34\|g_m\|_k^\frac34\|\nabla_mg_m\|_k^\frac12\\
&+&N^{\frac18}|k_{[m]}|^{-\frac12}\|h_m\|_k^\frac{11}8\|g_m\|_k^\frac38\|\nabla_mg_m\|_k^\frac14
+|k_{[m]}|^{-1}\|h_m\|_k\|\nabla_m g_m\|_k\\
&+&N^{-\frac14}|k_{[m]}|^{-1}\<k_{[m]}\>\|h_m\|_k^\frac54\|g_m\|_k^\frac14\|\nabla_mg_m\|_k^\frac12
+N^{-\frac18}|k_{[m]}|^{-1}\|h_m\|_k^\frac98\|g_m\|_k^\frac18\|\nabla_mg_m\|_k^\frac14\|\nabla_m^2g_m\|_k^\frac12\\
&+&N^{-\frac18}|k_{[m]}|^{-\frac34}\|h_m\|_k^\frac{11}8\|g_m\|_k^\frac18\|\nabla_m^2g_m\|_k^\frac12
+N^{-\frac14}|k_{[m]}|^{-1}\<k_{[m]}\>^\frac12\|h_m\|_k^\frac54\|g_m\|_k^\frac14\|\nabla_m^2g_m\|_k^\frac12.
\end{eqnarray*}
By Young's inequality, examining all the different terms, we deduce
\begin{equation*}
\|h_m\|_k
\,\lesssim\,
\Big(N^\frac13|k_{[m]}|^{-\frac23}
+N^{-1}|k_{[m]}|^{-2}\<k_{[m]}\>^4\Big)\|g_m\|_k\\
+|k_{[m]}|^{-1}\|\nabla_m g_m\|_k
+N^{-\frac13}|k_{[m]}|^{-\frac43}\|\nabla_m^2g_m\|_k.
\end{equation*}
Finally, combining this with the result~\eqref{eq:bound-nabg-h} of Step~1, the conclusion follows.
\end{proof}

\section{Ansatz for the tagged particle density}\label{sec:g0}
In this section, we prove regularity estimates for the ansatz $\tilde g_0^N$ for the tagged particle density and we show that it converges to the solution $g_0$ of the desired diffusion equation~\eqref{eq:FP-deriv} as $N\uparrow\infty$. This provides the rigorous counterpart of the formal computations in Section~\ref{sec:formal} starting from~\eqref{eq:solve-trunc-hier}.

\subsection{A priori estimates}
Based on the properties of the hat operator established in Lemma~\ref{lem:box}, we easily deduce propagation of regularity for the ansatz $\tilde g_0^N$.

\begin{lem}\label{lem:estim-g0N}
The ansatz $\tilde g_0^N$ defined in~\eqref{eq:def-tilg0-re} satisfies for all $\ell\ge0$,
\begin{equation*}
\sup_{\tau\ge0}\Big(e^{-\tau}\|\nabla_{v_0}^\ell\tilde g_0^N(\tau)\|\Big)
+\3\!\nabla_{v_0}^\ell\Lc\tilde g_0^N\!\3
+(\tfrac{t_N}{N})^\frac12\3\!\nabla_{v_0}^{\ell+1}\Lc\tilde g_0^N\!\3
\,\lesssim_\ell\,\|\langle\nabla_{v_0}\rangle^\ell\mathfrak g\|.
\end{equation*}
\end{lem}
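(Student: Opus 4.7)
The strategy is to argue by induction on $\ell\ge0$, testing the Laplace-domain resolvent equation~\eqref{eq:def-tilg0-re} with $\nabla_{v_0}^\ell\Lc\tilde g_0^N$. Since $\hat D_0=-\triangle_{v_0}$ commutes with $\nabla_{v_0}$, differentiating~\eqref{eq:def-tilg0-re} produces the commutator $[\nabla_{v_0}^\ell,\BOX]$ as the only new term:
\begin{equation*}
\bigl(1+i\alpha+\tfrac{\kappa t_N}{N}\hat D_0+\tfrac{t_N}{N}\BOX\bigr)\,\nabla_{v_0}^\ell\Lc\tilde g_0^N \,=\, \nabla_{v_0}^\ell\mathfrak g-\tfrac{t_N}{N}\bigl[\nabla_{v_0}^\ell,\BOX\bigr]\Lc\tilde g_0^N.
\end{equation*}
Testing against $\nabla_{v_0}^\ell\Lc\tilde g_0^N$ and taking the real part, the nonnegativity $\Re\langle\cdot,\BOX\cdot\rangle\ge0$ from Lemma~\ref{lem:box} retains all terms from the free part with the correct sign, while the last inequality in Lemma~\ref{lem:box} (combined with its $\ell=0$ instance applied to $\BOX\nabla_{v_0}^\ell g$, and noting that $k_0=0$ on $\hat\Dd^1$ so $\langle k_{[0]}\rangle=1$) controls the commutator by
\begin{equation*}
\bigl|\bigl\langle\nabla_{v_0}^\ell\Lc\tilde g_0^N,[\nabla_{v_0}^\ell,\BOX]\Lc\tilde g_0^N\bigr\rangle\bigr|\,\lesssim_\ell\,\|\nabla_{v_0}^{\ell+1}\Lc\tilde g_0^N\|\sum_{s=0}^{\ell}\|\nabla_{v_0}^{\ell+1-s}\Lc\tilde g_0^N\|.
\end{equation*}
Young's inequality absorbs the $s=0$ piece into the diffusion $\tfrac{\kappa t_N}{N}\|\nabla_{v_0}^{\ell+1}\Lc\tilde g_0^N\|^2$, provided $\kappa$ is sufficiently large relative to the commutator constant --- exactly matching the hypothesis $\kappa\ge C_{m_0}\int|k|\V(k)\,dk$ of Theorem~\ref{th:main} --- while the $s\ge1$ terms involve strictly fewer derivatives and are furnished by the induction hypothesis.

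To convert pointwise-in-$\alpha$ information into the $\3\cdot\3$ norms, I would use Parseval's identity $\3\nabla_{v_0}^\ell\Lc\tilde g_0^N\3^2=2\pi\int_0^\infty e^{-2\tau}\|\nabla_{v_0}^\ell\tilde g_0^N\|^2\,d\tau$, or equivalently multiply the time-domain energy identity by $e^{-2\tau}$ and integrate over $\tau\in\R^+$. The exponentially weighted memory contribution $\int_0^\infty e^{-2\tau}\Re\langle\nabla_{v_0}^\ell\tilde g_0^N,\nabla_{v_0}^\ell\mathrm{Mem}[\tilde g_0^N]\rangle\,d\tau$ identifies under Laplace with $\tfrac{t_N}{2\pi N}\int\Re\langle\nabla_{v_0}^\ell\Lc\tilde g_0^N,\nabla_{v_0}^\ell\BOX\Lc\tilde g_0^N\rangle\,d\alpha$, which by~\eqref{eq:cor-box} is bounded below by a fraction of the diffusion term up to strictly lower-order gradients, so the induction closes and yields both $\3\nabla_{v_0}^\ell\Lc\tilde g_0^N\3$ and $(\tfrac{t_N}{N})^{1/2}\3\nabla_{v_0}^{\ell+1}\Lc\tilde g_0^N\3$ in terms of $\|\langle\nabla_{v_0}\rangle^\ell\mathfrak g\|$.

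For the sup-in-$\tau$ bound I would repeat the time-domain energy argument without exponential weight, integrating the identity from $0$ to an arbitrary $T\ge0$. The key additional input is that the memory kernel is causal, so extending $\tilde g_0^N$ by zero outside $[0,T]$ leaves the integrand of $\int_0^T\Re\langle\nabla_{v_0}^\ell\tilde g_0^N,\nabla_{v_0}^\ell\mathrm{Mem}[\tilde g_0^N]\rangle\,d\tau$ unchanged on $[0,T]$; Plancherel then converts this truncated bilinear form into a full-line integral of the positive-real-part operator $\BOX$, which Lemma~\ref{lem:box} bounds below uniformly in $T$, with lower-order commutator pieces controlled by the $\3\cdot\3$ bounds obtained just above. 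This already yields the stronger unweighted bound $\sup_{T\ge0}\|\nabla_{v_0}^\ell\tilde g_0^N(T)\|\lesssim_\ell\|\langle\nabla_{v_0}\rangle^\ell\mathfrak g\|$, the factor $e^{-\tau}$ in the statement merely supplying safety margin. The main obstacle I anticipate is the tight interplay between the commutator absorption (forcing the quantitative largeness of~$\kappa$) and the coupled induction between Laplace-domain and time-domain bounds, since at each order $\ell$ the $s=0$ commutator term competes directly with the available diffusion, and every lower-order piece surfacing through the memory positivity estimate must be matched against the inductive hierarchy in a consistent way.
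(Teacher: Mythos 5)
Your high-level strategy --- Laplace-domain energy estimate, positivity of $\BOX$, commutator control, Young's inequality, induction over $\ell$, and Plancherel/causality to pass between time- and Laplace-domain norms --- is the one used in the paper. However, your treatment of the commutator has a genuine gap that forces you to invoke the large-$\kappa$ hypothesis, which the paper does not need (and must not need) for this lemma.

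You split $\nabla_{v_0}^\ell\,\BOX = \BOX\,\nabla_{v_0}^\ell + [\nabla_{v_0}^\ell,\BOX]$, keep the first term by positivity, and then bound the commutator by a triangle inequality based on the third estimate of Lemma~\ref{lem:box} together with its $\ell=0$ instance. This produces a sum starting at $s=0$, whose $s=0$ contribution is $\sim\|\nabla_{v_0}^{\ell+1}\Lc\tilde g_0^N\|^2$, i.e.\ of the same order as the available dissipation $\tfrac{\kappa t_N}{N}\|\nabla_{v_0}^{\ell+1}\Lc\tilde g_0^N\|^2$, and hence is only absorbable when $\kappa$ is large relative to the implicit constant. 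But the genuine commutator has \emph{no} $s=0$ component: from equation~\eqref{eq:decomp-nab-box} in the proof of Lemma~\ref{lem:box}, the $s=0$ part of $\nabla_{v_0}^\ell\,\BOX\,g$ is exactly $\BOX\,\nabla_{v_0}^\ell g$, so $[\nabla_{v_0}^\ell,\BOX]g=-\sum_{s=1}^\ell\tfrac{\ell!}{(\ell-s)!}\Div_{v_0}\big(\Gc_s\nabla_{v_0}\nabla_{v_0}^{\ell-s}g\big)$ and the sharp bound is $|\langle h,[\nabla_{v_0}^\ell,\BOX]g\rangle|\lesssim_\ell\|\nabla_{v_0}h\|\sum_{s=1}^\ell\|\nabla_{v_0}^{\ell+1-s}g\|$. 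This is precisely what the paper's estimate~\eqref{eq:cor-box} packages (using $k_0=0$ on $\hat\Dd^1$): the positivity $\Re\Gc_0\ge0$ is exploited to drop the $s=0$ piece rather than bound it crudely in absolute value. With the sum starting at $s=1$ every term on the right involves at most $\ell$ derivatives of $\Lc\tilde g_0^N$, so Young against the dissipation plus the inductive hypothesis close without any lower bound on $\kappa$ beyond $\kappa>0$. This matters: for $m_0=1$ the ansatz $\tilde g_0^N$ \emph{is} the exact solution $g_0^{N,1}$, Theorem~\ref{th:main} explicitly drops the $\kappa$-largeness condition in that case, and Theorem~\ref{th:main} for $m_0=1$ is deduced solely from this lemma and Lemma~\ref{lem:kin-lim}. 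Your version would incorrectly reimpose a constraint that the theorem promises is unnecessary.

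A smaller point: your claim that the stronger unweighted bound $\sup_T\|\nabla_{v_0}^\ell\tilde g_0^N(T)\|\lesssim_\ell\|\langle\nabla_{v_0}\rangle^\ell\mathfrak g\|$ also follows, with $e^{-\tau}$ a mere safety margin, is not established. The paper writes the energy identity for the weighted unknown $e^{-\tau}\tilde g_0^N$, so the boundary term that emerges is exactly $e^{-2T}\|\nabla_{v_0}^\ell\tilde g_0^N(T)\|^2$. Removing the weight replaces the Laplace parameter $\tfrac{1+i\alpha}{t_N}$ by $\tfrac{i\omega}{t_N}$ in the transformed memory kernel, and the lower-order terms arising from the analogue of~\eqref{eq:cor-box} are then measured in an unweighted Fourier-in-time norm that is \emph{not} the $\3\cdot\3$ norm you obtain at lower orders, so the induction as you describe it does not close. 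Since the lemma asks only for the weighted bound, you should simply drop this claim.
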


\begin{proof}
By definition of $\tilde g_0^N$, we have
\[\partial_\tau \tilde g_0^N+\kappa\frac{t_N}N\hat D_0\tilde g_0^N
\,=\,-\Big(\frac{t_N}{\sqrt N}\Big)^2\int_0^{\tau}
\hat S_0^+e^{-t_N(\tau-\tau_1)(i\hat L_1+\frac\kappa N\hat D_1)}\hat S_1^-g_{0}^N(\tau_1)\,d\tau_1,\]
or equivalently,
\[(\partial_\tau+1)(e^{-\tau} \tilde g_0^N)+\kappa \frac{t_N}N\hat D_0(e^{-\tau}\tilde g_0^N)
\,=\,-\Big(\frac{t_N}{\sqrt N}\Big)^2e^{-\tau}\int_0^{\tau}
\hat S_0^+e^{-t_N(\tau-\tau_1)(i\hat L_1+\frac\kappa N\hat D_1)}\hat S_1^-g_{0}^N(\tau_1)\,d\tau_1.\]
For $\ell\ge0$, applying $\nabla_{v_0}^\ell$ to both sides of this equation, the energy identity yields
\begin{multline*}
\partial_\tau\|\nabla_{v_0}^\ell(e^{-\tau} \tilde g_0^N)\|^2+2\|\nabla_{v_0}^\ell(e^{-\tau} \tilde g_0^N)\|^2
+2\kappa \frac{t_N}N\|\nabla_{v_0}^{\ell+1}(e^{-\tau}\tilde g_0^N)\|^2\\
\,=\,2\Big(\frac{t_N}{\sqrt N}\Big)^2\Re\left\< \nabla_{v_0}^\ell(e^{-\tau} \tilde g_0^N),e^{-\tau} \nabla_{v_0}^\ell\int_0^{\tau}
i\hat S_0^+e^{-t_N(\tau-\tau_1)(i\hat L_1+\frac\kappa N\hat D_1)}i\hat S_1^-g_{0}^N(\tau_1)\,d\tau_1\right\>,
\end{multline*}
and thus, for all~$T\ge0$, integrating over $\tau\in[0,T]$ and using Plancherel's theorem for the Laplace transform,
\begin{multline*}
\|\nabla_{v_0}^\ell(e^{-T} \tilde g_0^N(T))\|^2
+2\int_\R\|\nabla_{v_0}^\ell\Lc(\mathds1_{[0,T]}\tilde g_0^N)(\alpha)\|^2d^*\alpha
+2\kappa\frac{t_N}N\int_\R\|\nabla_{v_0}^{\ell+1}\Lc(\mathds1_{[0,T]}\tilde g_0^N)(\alpha)\|^2d^*\alpha\\
\,=\,\|\nabla_{v_0}^\ell\mathfrak g\|^2
-2\frac{t_N}{N}\int_\R\Re\Big\< \nabla_{v_0}^\ell\Lc(\mathds1_{[0,T]} \tilde g_0^N)(\alpha),\nabla_{v_0}^\ell\BOX\,\Lc(\mathds1_{[0,T]}g_{0}^N)(\alpha)\Big\>d^*\alpha.
\end{multline*}
Now appealing to the result~\eqref{eq:cor-box} of Lemma~\ref{lem:box}, we get
\begin{multline*}
\|\nabla_{v_0}^\ell(e^{-T} \tilde g_0^N(T))\|^2
+\int_\R\|\nabla_{v_0}^\ell\Lc(\mathds1_{[0,T]}\tilde g_0^N)(\alpha)\|^2d\alpha
+\kappa\frac{t_N}N\int_\R\|\nabla_{v_0}^{\ell+1}\Lc(\mathds1_{[0,T]}\tilde g_0^N)(\alpha)\|^2\,d\alpha\\
\,\lesssim_\ell\,\|\nabla_{v_0}^\ell\mathfrak g\|^2
+\frac{t_N}{N}\sum_{m=1}^n\int_\R\|\nabla_{v_0}^{\ell+1}\Lc(\mathds1_{[0,T]} \tilde g_0^N)(\alpha)\|\|\nabla_{v_0}^m\Lc(\mathds1_{[0,T]}g_{0}^N)(\alpha)\|\,d\alpha,
\end{multline*}
hence, by Young's inequality,
\begin{multline*}
\|\nabla_{v_0}^\ell(e^{-T} \tilde g_0^N(T))\|^2
+\int_\R\|\nabla_{v_0}^\ell\Lc(\mathds1_{[0,T]}\tilde g_0^N)(\alpha)\|^2d\alpha
+\kappa\frac{t_N}N\int_\R\|\nabla_{v_0}^{\ell+1}\Lc(\mathds1_{[0,T]}\tilde g_0^N)(\alpha)\|^2\,d\alpha\\
\,\lesssim_\ell\,\|\nabla_{v_0}^\ell\mathfrak g\|^2
+\frac{t_N}{N}\sum_{m=1}^\ell\int_\R\|\nabla_{v_0}^m\Lc(\mathds1_{[0,T]}g_{0}^N)(\alpha)\|^2\,d\alpha.
\end{multline*}
By a direct induction over $\ell$, we deduce for all $\ell,T\ge0$,
\begin{multline*}
\|\nabla_{v_0}^\ell(e^{-T} \tilde g_0^N(T))\|^2\\[-2mm]
+\int_\R\|\nabla_{v_0}^\ell\Lc(\mathds1_{[0,T]}\tilde g_0^N)(\alpha)\|^2d\alpha
+\kappa\frac{t_N}N\int_\R\|\nabla_{v_0}^{\ell+1}\Lc(\mathds1_{[0,T]}\tilde g_0^N)(\alpha)\|^2\,d\alpha
\,\lesssim_\ell\,\sum_{m=0}^\ell\|\nabla_{v_0}^m\mathfrak g\|^2,
\end{multline*}
and the conclusion follows, letting $T\uparrow\infty$ in the last two left-hand side terms.
\end{proof}

\subsection{Kinetic limit}
We show that the ansatz $\tilde g_0^N$ converges to the solution $g_0$ of the desired diffusion equation~\eqref{eq:FP-deriv} as $N\uparrow\infty$.

\begin{lem}\label{lem:kin-lim}
Assume $\mathfrak g\in H^2(\R^d)$, consider the ansatz $\tilde g_0^N$ defined in~\eqref{eq:def-tilg0-re}, with $t_N=N$, and consider the solution $g_0$ of the diffusion equation~\eqref{eq:FP-deriv}. Then we have
\begin{equation*}
\3\Lc\tilde g_0^N-\Lc g_0\3 \,\lesssim\,\frac1N\|\<\nabla_{v_0}\>^2\mathfrak g\|.
\end{equation*}
\end{lem}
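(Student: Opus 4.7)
My plan is to compare the two Laplace-transformed equations. Setting $B^\infty := -\Div_{v_0}(A_0\nabla_{v_0})$, the Fokker--Planck equation~\eqref{eq:FP-deriv} reads $(1+i\alpha+\kappa\hat D_0+B^\infty)\Lc g_0 = \mathfrak g$ in Laplace coordinates, while by~\eqref{eq:def-tilg0-re} with $t_N=N$ we have $(1+i\alpha+\kappa\hat D_0+\BOX)\Lc\tilde g_0^N = \mathfrak g$. Setting $e := \Lc\tilde g_0^N - \Lc g_0$, the error thus satisfies
\begin{equation*}
(1+i\alpha+\kappa\hat D_0+\BOX)\,e \,=\, -(\BOX - B^\infty)\,\Lc g_0.
\end{equation*}
Testing against $\bar e$, taking real parts, and using the nonnegativity of $\BOX$ (the $m=0$ case of the positivity computation underlying Lemma~\ref{lem:box}) together with $\kappa\Re\<e,\hat D_0 e\>=\kappa\|\nabla_{v_0}e\|^2\ge 0$, I would obtain $\|e(\alpha)\|\le\|(\BOX-B^\infty)\Lc g_0(\alpha)\|$ pointwise, hence $\3 e\3 \lesssim \3(\BOX-B^\infty)\Lc g_0\3$ after Plancherel. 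The proof is therefore reduced to a quantitative comparison of $\BOX$ and $B^\infty$.

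To do that, I would express both operators as bilinear forms with a common structure: in view of the explicit form~\eqref{eq:def-hat-coll} of $\hat S^{\pm}$ and the definition of $\BOX$ in Section~\ref{sec:diag}, one finds
\begin{equation*}
\<h, \mathcal B f\> \,=\, \int M(v_1)\,\overline{k_1\V(k_1)\cdot\nabla_{v_0}h(v_0)}\;G\big[k_1\V(k_1)\cdot\nabla_{v_0}f(v_0)\big]\,dv_0\,dv_1\,d^*k_1,
\end{equation*}
with $G = G_\alpha := (\tfrac{1+i\alpha}{N}+ik_1\cdot(v_1-v_0)+\tfrac{\kappa}{N}\hat D_1)^{-1}$ for $\mathcal B = \BOX$, and $G = G^\infty := (0^+ + ik_1\cdot(v_1-v_0))^{-1}$ for $\mathcal B = B^\infty$ (by Sokhotski--Plemelj, the principal-value part integrating to zero by the $k_1\leftrightarrow -k_1$ symmetry of the weight $M(v_1)\,k_1\otimes k_1\,\V(k_1)^2$). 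The resolvent identity
\begin{equation*}
G_\alpha - G^\infty \,=\, -G^\infty\big(\tfrac{1+i\alpha}{N}+\tfrac{\kappa}{N}\hat D_1\big)G_\alpha
\end{equation*}
then extracts the required $\tfrac1N$ prefactor in each contribution to the difference.

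To turn this into a quantitative estimate, I would substitute the identity into the bilinear form, regularize the singular factor $G^\infty$ by the contour deformation $v_1\mapsto v_1-i\hat k_1$ in the spirit of Section~\ref{sec:renorm et complex} (legitimate since $\sqrt M$ is entire, and producing a denominator with strictly positive real part $|k_1|$), estimate via Cauchy--Schwarz, and combine with the a priori elliptic bounds $\|\<\nabla_{v_0}\>^k\Lc g_0(\alpha)\|\lesssim \|\<\nabla_{v_0}\>^k\mathfrak g\|/(1+|\alpha|)$ for $k\le 2$ on the limit equation. This should yield the pointwise-in-$\alpha$ estimate
\begin{equation*}
\|(\BOX-B^\infty)\Lc g_0(\alpha)\| \,\lesssim\, \frac{1}{N(1+|\alpha|)}\,\|\<\nabla_{v_0}\>^2\mathfrak g\|,
\end{equation*}
whose $L^2(d^*\alpha)$-norm is $\lesssim\tfrac1N\|\<\nabla_{v_0}\>^2\mathfrak g\|$, concluding the proof.

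The main obstacle is the control of the $\tfrac{1+i\alpha}{N}$ factor produced by the resolvent identity: it is not uniformly $O(\tfrac1N)$ in $\alpha$ and must be compensated by the $\tfrac1{1+|\alpha|}$ elliptic decay of $\Lc g_0$ in $\alpha$ coming from the limit Fokker--Planck equation. A secondary technical issue is that the $\tfrac{\kappa}{N}\hat D_1$ piece naively demands more than two $v_0$-derivatives of $\Lc g_0$, but the $-\triangle_{v_1}$ component of $\hat D_1 = -\triangle_{v_0}-\triangle_{v_1}$ is absorbed by integration by parts against the entire Maxwellian weight $\sqrt M(v_1)$, so that only two $v_0$-derivatives of $\mathfrak g$ remain to control.
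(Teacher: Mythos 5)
Your overall plan coincides with the paper's argument: Laplace-transform both equations, use the positivity of $\BOX$ (Lemma~\ref{lem:box}) to reduce to bounding $\|(\BOX-B^\infty)\Lc g_0\|$, apply the resolvent identity to extract the $1/N$ prefactor, regularize the limiting resolvent $G^\infty$ by the contour deformation $v_1\mapsto v_1-i\hat k_1$, and identify the limiting coefficient with $A_0$ via Plemelj together with the $k_1\leftrightarrow -k_1$ symmetry. The step where you express the positivity and then use the $\frac1{1+|\alpha|}$ pointwise decay of $\Lc g_0$ in $\alpha$ (in place of the paper's $L^2_\alpha$ phrasing via Plancherel and Lemma~\ref{lem:estim-g0N}) is cosmetically different but ultimately equivalent.

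There is, however, one loose end that you have not resolved: the treatment of the $-\kappa\triangle_{v_0}/N$ piece of $\hat D_1=-\triangle_{v_0}-\triangle_{v_1}$ produced by the resolvent identity. You correctly note that $-\triangle_{v_1}$ can be absorbed by integrating against the entire factors of $\sqrt M(v_1)$, but you then conclude that "only two $v_0$-derivatives of $\mathfrak g$ remain to control." This does not follow: the $-\triangle_{v_0}$ piece acting on $\nabla_{v_0}\Lc g_0$ (sandwiched between the two propagators and between the two $\nabla_{v_0}$'s from $\hat S^\pm$) naively costs two further $v_0$-derivatives of $\Lc g_0$, on top of the two already present. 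Your stated elliptic bound $\|\<\nabla_{v_0}\>^{k}\Lc g_0(\alpha)\|\lesssim\|\<\nabla_{v_0}\>^{k}\mathfrak g\|/(1+|\alpha|)$, for $k\le 2$, does not gain any derivative when passing from $\Lc g_0$ to $\mathfrak g$, so with the tools you invoke one would land on $\mathfrak g\in H^3$ or $H^4$, not $H^2$. Closing at $H^2$ requires the parabolic smoothing of the Fokker--Planck dissipation -- the $\kappa\|\nabla_{v_0}\Lc g_0\|^2$ term in the energy estimate, which under Plancherel gives a gain of one velocity derivative in the $\3\cdot\3$-norm, exactly what Lemma~\ref{lem:estim-g0N} records -- combined with commuting $\triangle_{v_0}$ carefully through the propagators (paying only in $|k|$ and $|\alpha|$, the latter then converted by the equation). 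You should either invoke this parabolic derivative gain explicitly or carry out the commutator bookkeeping; as written, the claim that only two derivatives remain is unjustified.
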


\begin{proof}
For $t_N=N$, the equation~\eqref{eq:def-tilg0-re} for the ansatz reads
\begin{equation*}
\big(1+i\alpha-\kappa\triangle_{v_0}+\BOX\big)\Lc\tilde g_0^N\,=\,\mathfrak g.
\end{equation*}
Taking Laplace transform, the limit equation~\eqref{eq:FP-deriv} for $g_0$ reads
\begin{equation}\label{eq:FP-deriv-re}
(1+i\alpha-\kappa\triangle_{v_0})\Lc g_0-\Div_{v_0}(A_0\nabla_{v_0}\Lc g_0)=\mathfrak g,
\end{equation}
where we recall that $A_0$ is defined in~\eqref{eq:coeff-LB-re}.
The error $\tilde g_0^N-g_0$ thus satisfies
\begin{equation*}
\big(1+i\alpha-\kappa \triangle_{v_0}+\BOX\big)(\Lc\tilde g_0^N-\Lc g_0)\,=\,-\Div_{v_0}(A_0\nabla_{v_0}\Lc g_0)-\BOX\Lc g_0,
\end{equation*}
and therefore, by the positivity of the hat operator, cf.~Lemma~\ref{lem:box},
\begin{equation}\label{eq:eq a conv dom}
\3\Lc\tilde g_0^N-\Lc g_0\3 \,\le\, \3\Div_{v_0}(A_0\nabla_{v_0}\Lc g_0)+\BOX\Lc g_0\3
\end{equation}
Arguing as in Step~2 of the proof of Lemma~\ref{lem:box}, using complex deformation, we note that
\begin{equation}\label{eq:eq a conv dom-reexpr}
\BOX\Lc g_0\,=\,-\Div_{v_0}(H_N),
\end{equation}
in terms of
\begin{multline*}
H_N(\alpha,v_0)\,:=\,\int_{(\R^d)^2}(k\otimes k)\V(k)^2\sqrt M(v_1-i\hat k)\\[-2mm]
\times\Big(\tfrac{1+i\alpha}{N}+|k|+ik\cdot(v_1-v_0)-\tfrac\kappa N\triangle_{v_{[1]}}\Big)^{-1}\sqrt M(v_1-i\hat k)\,\nabla_{v_0}\Lc g_0(\alpha,v_0)\,d^*kdv_1.
\end{multline*}
In terms of the matrix field
\begin{equation}\label{eq:def-Dmat}
D(v_0)\,:=\,\int_{(\R^d)^2}(k\otimes k)\V(k)^2\frac{\big(\sqrt M(v_1-i\hat k)\big)^2}{|k|+ik\cdot(v_1-v_0)}\,d^*kdv_1,
\end{equation}
the resolvent identity ensures
\[\3\<\nabla_{v_0}\>(H_N-D\nabla_{v_0}\Lc g_0)\3\,\lesssim\,\frac1N\3\<(\alpha,\nabla_{v_0})\>\<\nabla_{v_0}\>\nabla_{v_0}\Lc g_0\3.\]
Using equation~\eqref{eq:FP-deriv-re} to bound $\alpha\Lc g_0$ in terms of gradients of $\Lc g_0$, recalling the bound of Lemma~\ref{lem:box} on the hat operator, and appealing to the a priori estimates of Lemma~\ref{lem:estim-g0N}, provided $\mathfrak g\in H^2(\R^d)$, we conclude
\[\3\<\nabla_{v_0}\>(H_N-D\nabla_{v_0}\Lc g_0)\3\,\lesssim\,\frac1N\|\<\nabla_{v_0}\>^2\mathfrak g\|.\]
Combining this with~\eqref{eq:eq a conv dom} and~\eqref{eq:eq a conv dom-reexpr}, we obtain
\begin{equation}\label{eq:conv-g0N-g0-pre}
\3\Lc\tilde g_0^N-\Lc g_0\3 \,\lesssim\, \3\Div_{v_0}((A_0-D)\nabla_{v_0}\Lc g_0)\3+\frac1N\|\<\nabla_{v_0}\>^2\mathfrak g\|.
\end{equation}
Let us now examine the matrix field $D$ defined in~\eqref{eq:def-Dmat}. Splitting the $v_1$-integral over $\R^d=\hat k\R\otimes\hat k^\bot$, and noting that for a holomorphic function $f$ we have by complex deformation and by the Plemelj formula
\[\int_\R \frac{f(s+i)}{s+i}\,ds\,=\,-i\pi f(0)+\text{p.v.}\int_\R\frac{f(s)}{s}\,ds,\]
we find
\begin{multline*}
D(v_0)\,=\,\pi\int_{(\R^d)^2}(k\otimes k)\V(k)^2 M(v_1)\,\delta( k\cdot(v_1-v_0))\,d^*kdv_1\\[-3mm]
+i\int_{(\R^d)^2}(k\otimes k)\V(k)^2\frac{M(v_1+\hat k(\hat k\cdot v_0))}{k\cdot v_1}\,d^*kdv_1.
\end{multline*}
As $\Vc$ is real-valued, we have $\V(k)=\V(-k)$, which ensures that the last integral vanishes by symmetry, hence
\begin{eqnarray*}
	D(v_0)\,=\,\pi\int_{(\R^d)^2}(k\otimes k)\V(k)^2 M(v_1)\,\delta( k\cdot(v_1-v_0))\,d^*kdv_1.
\end{eqnarray*}
By a direct computation of the $k$-integral, see e.g.~\cite[(2.12)]{DW-23}, we are then led to $D=A_0$ with $A_0$ defined in~\eqref{eq:coeff-LB-re}. This shows that the first right-hand side term in~\eqref{eq:conv-g0N-g0-pre} vanishes, and the conclusion follows.
\end{proof}

\section{Error analysis}
This final section is devoted to estimates on the remainder terms~\eqref{eq:RN0}--\eqref{eq:RNm} in the approximate hierarchy satisfied by the ansatz, which then allows to conclude the proof of Theorem~\ref{th:main}.
In the case $m_0=1$, note that $g_0^{N,1}$ coincides exactly with the ansatz $\tilde g_0^N$ defined in~\eqref{eq:def-tilg0-re}, so that Theorem~\ref{th:main} follows directly from Lemma~\ref{lem:kin-lim}. Below, we explicitly treat the case $m_0=2$ for illustration, before turning to the general case.

\subsection{Case  $m_0=2$}\label{sec:case-m02}
Recall that the ansatz $\tilde g_0^{N,2}=\tilde g_0^N$ defined in~\eqref{eq:def-tilg0-re} is independent of the truncation parameter.
For higher-order cumulants, we consider the ansatz~\eqref{eq:def-tilgm} where we choose the admissible history set $\Omega := \{(-1),(-1,-1)\}$. Using diagrammatic notation, this means
\begin{equation}\label{eq:tildeg0/3+}
\left\{\begin{array}{l}
\big(1+i\alpha+\kappa\tfrac{t_N}N\hat D_0+\tfrac{t_N}N\BOX\big)\Lc\tilde g_0^N\,=\,\mathfrak g,\\[2mm]
\Lc\tilde g_1^N\,=\,
\frac{i}{\sqrt N}\,{\tiny\begin{tikzpicture}[baseline={([yshift=-.8ex]current bounding box.center)},scale=0.8]
\begin{scope}[every node/.style={circle,draw,fill=white,inner sep=0pt,minimum size=3pt}]
\node (1) at (0.5,0) {};
\end{scope}
\draw[decorate,decoration={snake,amplitude=1pt,segment length=3pt}] (0,0) -- (1);
\draw[decorate,decoration={snake,amplitude=1pt,segment length=3pt}] (0,0.3) -- (0.5,0.3);
\draw (0.5,0.3) -- (1);
\end{tikzpicture}}
\,\Lc\tilde g_0^N,\\[2mm]
\Lc\tilde g_2^{N,2}
\,=\,
\big(\frac{i}{\sqrt N}\big)^2
\Big({\tiny\begin{tikzpicture}[baseline={([yshift=-.8ex]current bounding box.center)},scale=0.8]
\begin{scope}[every node/.style={circle,draw,fill=white,inner sep=0pt,minimum size=3pt}]
\node (1) at (0.5,0.25) {};
\node (2) at (1,0) {};
\end{scope}
\draw (0,0.5) -- (0.5,0.5) -- (1) -- (0,0.25);
\draw[decorate,decoration={snake,amplitude=1pt,segment length=3pt}] (1) -- (1,0.25);
\draw (0,0) -- (0.5,0);
\draw[decorate,decoration={snake,amplitude=1pt,segment length=3pt}] (0.5,0) -- (2);
\draw (1,0.25) -- (2);
\end{tikzpicture}}+
\,{\tiny\begin{tikzpicture}[baseline={([yshift=-.8ex]current bounding box.center)},scale=0.8]
\begin{scope}[every node/.style={circle,draw,fill=white,inner sep=0pt,minimum size=3pt}]
\node (1) at (0.5,0) {};
\node (2) at (1,0) {};
\end{scope}
\draw (0,0.5) -- (0.5,0.5) -- (1);
\draw (0,0.25) -- (0.45,0.25);
\draw[decorate,decoration={snake,amplitude=1pt,segment length=3pt}] (0.55,0.25) -- (1,0.25);
\draw (0,0) -- (1);
\draw[decorate,decoration={snake,amplitude=1pt,segment length=3pt}] (1) -- (2);
\draw (1,0.25) -- (2);
\end{tikzpicture}}+
\,{\tiny\begin{tikzpicture}[baseline={([yshift=-.8ex]current bounding box.center)},scale=0.8]
\begin{scope}[every node/.style={circle,draw,fill=white,inner sep=0pt,minimum size=3pt}]
\node (1) at (0.5,0.5) {};
\node (2) at (1,0) {};
\end{scope}
\draw (0,0.25) -- (0.5,0.25) -- (1) -- (0,0.5);
\draw[decorate,decoration={snake,amplitude=1pt,segment length=3pt}] (1) -- (1,0.5);
\draw (0,0) -- (0.5,0);
\draw[decorate,decoration={snake,amplitude=1pt,segment length=3pt}] (0.5,0) -- (2);
\draw (1,0.5) -- (2);
\end{tikzpicture}}+
\,{\tiny\begin{tikzpicture}[baseline={([yshift=-.8ex]current bounding box.center)},scale=0.8]
\begin{scope}[every node/.style={circle,draw,fill=white,inner sep=0pt,minimum size=3pt}]
\node (1) at (0.5,0) {};
\node (2) at (1,0) {};
\end{scope}
\draw (0,0.25) -- (0.5,0.25) -- (1) -- (0,0);
\draw (0,0.5) -- (0.5,0.5);
\draw[decorate,decoration={snake,amplitude=1pt,segment length=3pt}] (0.5,0.5) -- (1,0.5);
\draw[decorate,decoration={snake,amplitude=1pt,segment length=3pt}] (1) -- (2);
\draw (1,0.5) -- (2);
\end{tikzpicture}}\Big)
\,\Lc\tilde g_0^N.
\end{array}\right.
\end{equation}
A direct application of Lemma~\ref{lem:approx-hier} allows to compute the associated remainder terms.

\begin{lem}\label{lem:rem/3}
Given $m_0=2$, for the above choice of the ansatz, the remainder terms in the approximate hierarchy~\eqref{eq:rem} are explicitly given by
\begin{eqnarray*}
\Lc R_0^N&=&-\frac{t_N}{N^2}
\Big(\,
{\tiny\begin{tikzpicture}[baseline={([yshift=-1ex]current bounding box.center)},scale=0.8]
\begin{scope}[every node/.style={circle,draw,fill=white,inner sep=0pt,minimum size=3pt}]
\node (0) at (0,0) {};
\node (1) at (0.5,0.3) {};
\node (2) at (1,0.3) {};
\node (3) at (1.5,0) {};
\end{scope}
\draw (0)--(0,0.3)--(1)--(2)--(1,0.6)--(0.5,0.6)--(1);
\draw (0)--(1,0);
\draw[decorate,decoration={snake,amplitude=1pt,segment length=3pt}] (1,0) -- (3);
\draw[decorate,decoration={snake,amplitude=1pt,segment length=3pt}] (2) -- (1.5,0.3);
\draw (1.5,0.3)--(3);
\end{tikzpicture}}
+
{\tiny\begin{tikzpicture}[baseline={([yshift=-1ex]current bounding box.center)},scale=0.8]
\begin{scope}[every node/.style={circle,draw,fill=white,inner sep=0pt,minimum size=3pt}]
\node (0) at (0,0) {};
\node (1) at (0.5,0) {};
\node (2) at (1,0) {};
\node (3) at (1.5,0) {};
\end{scope}
\draw (0)--(1)--(2)--(1,0.3)--(0.5,0.3)--(1);
\draw (0)--(0,0.6)--(1,0.6);
\draw[decorate,decoration={snake,amplitude=1pt,segment length=3pt}] (1,0.6) -- (1.5,0.6);
\draw[decorate,decoration={snake,amplitude=1pt,segment length=3pt}] (2) -- (3);
\draw (1.5,0.6)--(3);
\end{tikzpicture}}
\,\Big)\Lc\tilde g_0^N,
\\
\Lc R_1^N&=&
\frac{it_N}{N^{3/2}}\Big(\,
{\tiny\begin{tikzpicture}[baseline={([yshift=-1ex]current bounding box.center)},scale=0.8]
\begin{scope}[every node/.style={circle,draw,fill=white,inner sep=0pt,minimum size=3pt}]
\node (0) at (0,0.6) {};
\node (1) at (0.5,0) {};
\node (2) at (1,0) {};
\end{scope}
\draw (0)--(0.5,0.6);
\draw (0)--(0,0.3)--(0.5,0.3)--(1);
\draw (0,0)--(1);
\draw[decorate,decoration={snake,amplitude=1pt,segment length=3pt}] (0.5,0.6) -- (1,0.6);
\draw[decorate,decoration={snake,amplitude=1pt,segment length=3pt}] (1) -- (2);
\draw (1,0.6)--(2);
\end{tikzpicture}}
+
{\tiny\begin{tikzpicture}[baseline={([yshift=-1ex]current bounding box.center)},scale=0.8]
\begin{scope}[every node/.style={circle,draw,fill=white,inner sep=0pt,minimum size=3pt}]
\node (0) at (0,0) {};
\node (1) at (0.5,0.6) {};
\node (2) at (1,0) {};
\end{scope}
\draw (0)--(0.5,0);
\draw (0)--(0,0.3)--(0.5,0.3)--(1);
\draw (0,0.6)--(1);
\draw[decorate,decoration={snake,amplitude=1pt,segment length=3pt}] (1) -- (1,0.6);
\draw[decorate,decoration={snake,amplitude=1pt,segment length=3pt}] (0.5,0) -- (2);
\draw (2)--(1,0.6);
\end{tikzpicture}}
+
{\tiny\begin{tikzpicture}[baseline={([yshift=-1ex]current bounding box.center)},scale=0.8]
\begin{scope}[every node/.style={circle,draw,fill=white,inner sep=0pt,minimum size=3pt}]
\node (0) at (0,0) {};
\node (1) at (0.5,0.3) {};
\node (2) at (1,0) {};
\end{scope}
\draw (0)--(0.5,0);
\draw (0)--(0,0.3)--(1);
\draw (0,0.6)--(0.5,0.6)--(1);
\draw[decorate,decoration={snake,amplitude=1pt,segment length=3pt}] (1) -- (1,0.3);
\draw[decorate,decoration={snake,amplitude=1pt,segment length=3pt}] (0.5,0) -- (2);
\draw (1,0.3)--(2);
\end{tikzpicture}}
+
{\tiny\begin{tikzpicture}[baseline={([yshift=-1ex]current bounding box.center)},scale=0.8]
\begin{scope}[every node/.style={circle,draw,fill=white,inner sep=0pt,minimum size=3pt}]
\node (0) at (0,0.3) {};
\node (1) at (0.5,0) {};
\node (2) at (1,0) {};
\end{scope}
\draw (0)--(0,0.6)--(0.5,0.6);
\draw (0)--(0.5,0.3)--(1);
\draw (0,0)--(1);
\draw[decorate,decoration={snake,amplitude=1pt,segment length=3pt}] (0.5,0.6) -- (1,0.6);
\draw[decorate,decoration={snake,amplitude=1pt,segment length=3pt}] (1) -- (2);
\draw (1,0.6)--(2);
\end{tikzpicture}}
+
{\tiny\begin{tikzpicture}[baseline={([yshift=-1ex]current bounding box.center)},scale=0.8]
\begin{scope}[every node/.style={circle,draw,fill=white,inner sep=0pt,minimum size=3pt}]
\node (0) at (0,0) {};
\node (1) at (0.5,0) {};
\node (2) at (1,0) {};
\end{scope}
\draw (0)--(1);
\draw (0)--(0,0.6)--(0.5,0.6);
\draw (-0.1,0.3)--(-0.05,0.3);
\draw (0.05,0.3)--(0.5,0.3)--(1);
\draw[decorate,decoration={snake,amplitude=1pt,segment length=3pt}] (0.5,0.6) -- (1,0.6);
\draw[decorate,decoration={snake,amplitude=1pt,segment length=3pt}] (1) -- (2);
\draw (1,0.6)--(2);
\end{tikzpicture}}
+
{\tiny\begin{tikzpicture}[baseline={([yshift=-1ex]current bounding box.center)},scale=0.8]
\begin{scope}[every node/.style={circle,draw,fill=white,inner sep=0pt,minimum size=3pt}]
\node (0) at (0,0.3) {};
\node (1) at (0.5,0.6) {};
\node (2) at (1,0) {};
\end{scope}
\draw (0)--(0,0.6)--(1);
\draw (0)--(0.5,0.3)--(1);
\draw (0,0)--(0.5,0);
\draw[decorate,decoration={snake,amplitude=1pt,segment length=3pt}] (1) -- (1,0.6);
\draw[decorate,decoration={snake,amplitude=1pt,segment length=3pt}] (0.5,0) -- (2);
\draw (1,0.6)--(2);
\end{tikzpicture}}
\,\Big)\Lc\tilde g_0^N,
\\
\Lc R_2^N&=&0
\end{eqnarray*}
\end{lem}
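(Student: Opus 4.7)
My plan is a direct unpacking of the general formulas from Lemma~\ref{lem:approx-hier} in the special case $m_0=2$ with $\Omega=\{(-1),(-1,-1)\}$. The only real work is the combinatorial bookkeeping of histories; there are no new analytic estimates involved. So the proof is essentially a dictionary check between the combinatorial output of Lemma~\ref{lem:approx-hier} and the diagrams in the statement.

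First, I would verify that $\Omega$ is admissible and identify its strata. Since $(-1)$ has degree $1$ and $(-1,-1)$ has degree $2$, we have $\Omega_0=\emptyset$, $\Omega_1=\{(-1)\}$, $\Omega_2=\{(-1,-1)\}$, and the admissibility conditions are immediate. Next I would compute the boundary sets: $\partial\Omega_1$ collects $(1,s_1,\ldots,s_n)\notin\Omega_1$ with $(s_1,\ldots,s_n)\in\Omega_2$; since $\Omega_2=\{(-1,-1)\}$, we find $\partial\Omega_1=\{(1,-1,-1)\}$. For $\partial\Omega_2$, one would need $(s_1,\ldots,s_n)\in\Omega_3$, which does not exist as the truncation is at $m_0=2$; hence $\partial\Omega_2=\emptyset$. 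This immediately yields $\Lc R_2^N=0$ from~\eqref{eq:RNm}.

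For $\Lc R_0^N$, I would invoke~\eqref{eq:RN0}. The sum over $(s_1,\ldots,s_n)\in\Omega_1$ with $n\ge3$ is empty since the only element of $\Omega_1$ has length $1$, so that contribution vanishes. What remains is the correction term coming from the resolvent identity $\BOX-\WIGbox=\tfrac1N(\text{diag}_1+\text{diag}_2)$, multiplied by $-\tfrac{t_N}{N}$ as produced by the proof of Lemma~\ref{lem:approx-hier}. This yields exactly the two diagrams listed with prefactor $-\tfrac{t_N}{N^2}$.

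The most involved step is $\Lc R_1^N$, where~\eqref{eq:RNm} gives $\Lc R_1^N=\frac{it_N}{N^{3/2}}\Ic_{(1,-1,-1)}\Lc\tilde g_0^N$ (the $n\ge3$ restriction is automatic and only $n=3$ contributes). The main task is to enumerate $\mathfrak{H}(1,-1,-1)$ for an abstract of degree $1$. Starting from $\omega_0=\{0,1\}$, the creation step forces $b_1=2$ with $a_1\in\{0,1\}$; the two annihilation steps must then eliminate particles $\{1,2\}$, with the constraint that $(a_1,b_1)\ne(a_2,b_2)$ whenever $(s_1,s_2)=(1,-1)$ and that particle $0$ is never annihilated. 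A short case analysis according to whether the first annihilation hits particle $2$ or particle $1$ (and, within each, whether $a_1=0$ or $a_1=1$) produces exactly six admissible histories:
\begin{gather*}
\bigl((0,2),(1,2),(0,1)\bigr),\quad
\bigl((1,2),(0,2),(0,1)\bigr),\\
\bigl((0,2),(0,1),(0,2)\bigr),\quad
\bigl((0,2),(2,1),(0,2)\bigr),\\
\bigl((1,2),(0,1),(0,2)\bigr),\quad
\bigl((1,2),(2,1),(0,2)\bigr).
\end{gather*}
Translating each history into a diagram via~\eqref{eq:contr-abstr} (three renormalized propagators separated by the prescribed collision operators) produces precisely the six diagrams of the statement, and this concludes the proof. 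The only foreseeable pitfall is making sure the sign and numerical prefactor from $\Lc R_1^N=\frac{it_N}{N^{3/2}}(\,\cdots\,)$ match the displayed coefficient $\frac{it_N}{N^{3/2}}$ after the diagram conventions are applied; this is a straightforward sign check.
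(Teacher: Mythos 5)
Your proposal is correct and takes the same approach as the paper: the paper's entire proof of this lemma is the one-line observation that "$\partial\Omega=\{(1,-1),(1,-1,-1)\}$ and the conclusion follows immediately from Lemma~\ref{lem:approx-hier}." You simply flesh out what "follows immediately" means: computing $\Omega_0,\Omega_1,\Omega_2$ and the boundary strata, noting that the $\Omega_1$ sum in~\eqref{eq:RN0} is vacuous for $n\ge3$, and enumerating the six histories in $\mathfrak{H}(1,-1,-1)$, which I have checked is an exhaustive and correct list. One small remark: the prefactor $-\tfrac{t_N}{N^2}$ that you obtain for $R_0^N$ (from $t_N(\tfrac{i}{\sqrt N})^2(\BOX-\WIGbox)$ with $\BOX-\WIGbox=\tfrac1N(\text{diag}_1+\text{diag}_2)$) is indeed the correct one and matches the lemma's statement; the displayed formula~\eqref{eq:RN0} carries a $+\tfrac{t_N}{N^2}$ in front of the two extra diagrams, which appears to be a sign typo in the general lemma — your derivation from the proof of Lemma~\ref{lem:approx-hier} rather than from the final display is the right call here.
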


\begin{proof}
The choice of $\Omega$ yields $\partial\Omega:=\{(1,-1),(1,-1,-1)\}$ and the conclusion follows immediately from Lemma~\ref{lem:approx-hier}.
\end{proof}

We now turn to estimates on the above explicit formulas for remainder terms, using the properties of renormalized semigroups.

\begin{lem}\label{lem:R1N-m02}
Given $m_0=2$, the remainder terms defined in Lemma~\ref{lem:rem/3} are bounded as follows, provided $d\ge8$,
\begin{eqnarray*}
\3\!\Lc R_0^N\!\3&\lesssim&\frac{t_N}{N^2}\3\langle\nabla_{v_0}\rangle^4\Lc\tilde g_0^N\!\3,\\
\3\!\Lc R_1^N\!\3&\lesssim&\frac{t_N}{N^{3/2}}\3\langle\nabla_{v_0}\rangle^3\Lc\tilde g_0^N\!\3.
\end{eqnarray*}
\end{lem}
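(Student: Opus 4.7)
The plan is to bound each of the explicit diagrams in Lemma~\ref{lem:rem/3} in the $\3\cdot\3$ norm using three tools developed in Section~\ref{sec:ren}: complex contour deformations in velocity (Proposition~\ref{prop:box-anal}), Airy-type hypoelliptic resolvent estimates (Proposition~\ref{prop:pde}), and the gradient bounds on the renormalized hat operator (Proposition~\ref{prop:box-bound}). I would work in the dual formulation, pairing each remainder against a test function and using the adjointness relation~\eqref{eq:adjS-re} together with integration by parts to transfer the velocity derivatives carried by the collision operators~\eqref{eq:def-hat-coll} onto $\Lc\tilde g_0^N$.

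For $R_1^N$, each of the six diagrams corresponds to the abstract $(1,-1,-1)\in\partial\Omega_1$ and has the schematic structure $\WIGdot\,\hat S^{s_1}\,\WIGdot\,\hat S^{s_2}\,\WIGdot\,\hat S^{s_3}$ applied to $\Lc\tilde g_0^N$. The internal background velocity $v_2$, created by a $\hat S^-$ and then integrated out by the $\hat S^+$, would be complex-deformed via $v_2\mapsto v_2-i\hat k_2$ using the joint holomorphy of the renormalized propagators in velocity. After this deformation, the propagators depending on $v_2$ become bounded by $O(|k_2|^{-1})$ rather than the naive $O(t_N)$. A similar deformation is applied to the other integrated background velocity $v_1$, which carries a $\sqrt M(v_1)$ factor from the first $\hat S^-$; this converts the outer propagator into an $O(|k_1|^{-1})$ factor. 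Only the resolvent of the tagged velocity $v_0$, which is not integrable and so not deformable, requires the Airy-type estimate of Proposition~\ref{prop:pde}; the resulting $N^{1/3}$ loss is more than compensated by the $\tfrac{t_N}{N^{3/2}}$ prefactor. The three velocity derivatives produced by the three $\hat S^\pm$ supply the $\langle\nabla_{v_0}\rangle^3$ factor.

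For $R_0^N$, the two diagrams have four collisions each and originate from the resolvent-identity difference
\[
\BOX-\WIGbox \,=\, \tfrac{1}{N}\mathds1_{m_0\ge2}\Big(\text{the two nested diagrams}\Big)
\]
used in the proof of Lemma~\ref{lem:approx-hier}. Each diagram can thus be recast as an outer $\hat S^+\,\WIGdot\,\hat S^+$ composition acting on a nested $\WIGBOX$-type subdiagram applied to $\Lc\tilde g_0^N$. Combining Proposition~\ref{prop:box-bound} to control the renormalized hat operator in terms of $\|\nabla^2\cdot\|$, contour deformations of the integrated background velocities of the two outer creation operators, and Proposition~\ref{prop:pde} on the remaining resolvent in the tagged velocity, each diagram is bounded uniformly in $N$ by $\|\langle\nabla_{v_0}\rangle^4\Lc\tilde g_0^N\|$; the four derivatives precisely match the four collisions, and together with the $\tfrac{t_N}{N^2}$ prefactor this yields the claimed bound for $R_0^N$.

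The main obstacle lies in carrying out the deformations $v_j\mapsto v_j-i\nu_j$ of the several integrated background velocities of a single diagram simultaneously, while the momenta $k_j$ are mixed nontrivially through the intermediate renormalized propagators and the nested $\WIGBOX$-type subdiagrams. One must select directions $\nu_j\in\Sp^{d-1}$ such that, in every interior resolvent, the relevant combination $\sum_j k_j\cdot\nu_j$ stays strictly positive; this is exactly what Proposition~\ref{prop:box-anal} requires for the deformation to yield a useful bound. By Wendel's theorem, such directions exist for almost every configuration of momenta only when the dimension is large enough relative to the number of distinct mixed-momentum combinations arising in the diagram. For the four-collision diagrams of $R_0^N$ with their nested two-particle subdiagrams, this translates precisely into the condition $d\ge 8$ stated in the lemma, and it is here that the dimensional restriction becomes essential.
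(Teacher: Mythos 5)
Your plan is recognizable but contains a gap that would prevent it from proving the stated bound, and it misses a key technical device the paper relies on.

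The paper's proof of this lemma uses \emph{only} the contour-deformation estimates of Proposition~\ref{prop:box-anal}; the Airy-type hypoelliptic estimates of Proposition~\ref{prop:pde} are not invoked at all for $m_0=2$. This is not an aesthetic choice: the target bound $\3\Lc R_1^N\3\lesssim \tfrac{t_N}{N^{3/2}}\3\langle\nabla_{v_0}\rangle^3\Lc\tilde g_0^N\3$ requires each diagram to be bounded uniformly in $N$ (the $t_N/N^{3/2}$ prefactor is exactly the one inherited from Lemma~\ref{lem:rem/3}). Your claim that ``the resulting $N^{1/3}$ loss is more than compensated by the $\tfrac{t_N}{N^{3/2}}$ prefactor'' would only yield $\3\Lc R_1^N\3\lesssim N^{1/3}\tfrac{t_N}{N^{3/2}}\3\cdots\3$, which is weaker by a factor $N^{1/3}$ than what is asserted. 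So the very first place your argument diverges --- invoking Proposition~\ref{prop:pde} for ``the resolvent of the tagged velocity $v_0$'' --- produces a bound that does not close the lemma as stated.

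The second missing ingredient is structural. You correctly observe that only integrated velocity variables are directly deformable, and you conclude that $v_0$ (and, implicitly, the output variable $v_1$ in $R_1^N$) cannot be handled by deformation. This is exactly where the paper introduces the variable-doubling trick: for the two diagrams in $R_{1,3}^N$, one computes the \emph{squared} $\Ld^2$ norm $\|\cdot\|_k^2$, which integrates over $v_0,v_1$, and in doing so introduces a second, independent copy of the internal momentum ($k'$ and $k''$). This squaring makes the $v_1$-integration available, and after choosing deformation directions $\nu_{-k',k''}$ (and $\sigma',\sigma''$ aligned with $k-k'$ and $k-k''$) the problematic propagators all acquire a positive damping. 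Without this step, the diagrams of $R_{1,3}^N$ cannot be bounded uniformly in $N$; with it, no Airy estimate is needed.

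Finally, the numerology. Your appeal to Wendel's theorem for the $R_0^N$ diagrams gives the right intuition for why a dimensional restriction must appear, but it does not produce the number $8$ nor identify where it comes from. In the paper, $d\ge 4$ suffices for $R_{1,1}^N$ and $d\ge 6$ for $R_{1,2}^N$ (integrating $\cos(\alpha/2)^{-4}\sin(\alpha)^{d-2}$); the binding constraint $d\ge 8$ arises in the doubled-variable estimate for $R_{1,3}^N$, where one needs local integrability of $(1+\tfrac1{k'\cdot\nu_{-k',k''}})^3(1+\tfrac1{k''\cdot\nu_{-k',k''}})^3$ over $(k',k'')$. This is precisely the step your proposal omits.
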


\begin{proof}
For shortness, we focus on the proof of the estimate on $R_1^N$, while the corresponding estimate on $R_0^N$ follows similarly and is actually simpler.
Starting point is the diagrammatic expression for $R_1^N$ in Lemma~\ref{lem:rem/3}, which we further decompose into three parts,
\[R_1^N\,=\,\frac{it_N}{N^{3/2}}\Big(R_{1,1}^N+R_{1,2}^N+R_{1,3}^N\Big),\]
in terms of
\begin{eqnarray*}
\Lc R_{1,1}^N&=&
{\tiny\begin{tikzpicture}[baseline={([yshift=-1ex]current bounding box.center)},scale=0.8]
\begin{scope}[every node/.style={circle,draw,fill=white,inner sep=0pt,minimum size=3pt}]
\node (0) at (0,0.3) {};
\node (1) at (0.5,0) {};
\node (2) at (1,0) {};
\end{scope}
\draw (0)--(0,0.6)--(0.5,0.6);
\draw (0)--(0.5,0.3)--(1);
\draw (0,0)--(1);
\draw[decorate,decoration={snake,amplitude=1pt,segment length=3pt}] (0.5,0.6) -- (1,0.6);
\draw[decorate,decoration={snake,amplitude=1pt,segment length=3pt}] (1) -- (2);
\draw (1,0.6)--(2);
\end{tikzpicture}}\,\Lc\tilde g_0^N
+
{\tiny\begin{tikzpicture}[baseline={([yshift=-1ex]current bounding box.center)},scale=0.8]
\begin{scope}[every node/.style={circle,draw,fill=white,inner sep=0pt,minimum size=3pt}]
\node (0) at (0,0) {};
\node (1) at (0.5,0) {};
\node (2) at (1,0) {};
\end{scope}
\draw (0)--(1);
\draw (0)--(0,0.6)--(0.5,0.6);
\draw (-0.1,0.3)--(-0.05,0.3);
\draw (0.05,0.3)--(0.5,0.3)--(1);
\draw[decorate,decoration={snake,amplitude=1pt,segment length=3pt}] (0.5,0.6) -- (1,0.6);
\draw[decorate,decoration={snake,amplitude=1pt,segment length=3pt}] (1) -- (2);
\draw (1,0.6)--(2);
\end{tikzpicture}}\,\Lc\tilde g_0^N,
\\
\Lc R_{1,2}^N&=&
{\tiny\begin{tikzpicture}[baseline={([yshift=-1ex]current bounding box.center)},scale=0.8]
\begin{scope}[every node/.style={circle,draw,fill=white,inner sep=0pt,minimum size=3pt}]
\node (0) at (0,0.3) {};
\node (1) at (0.5,0.6) {};
\node (2) at (1,0) {};
\end{scope}
\draw (0)--(0,0.6)--(1);
\draw (0)--(0.5,0.3)--(1);
\draw (0,0)--(0.5,0);
\draw[decorate,decoration={snake,amplitude=1pt,segment length=3pt}] (1) -- (1,0.6);
\draw[decorate,decoration={snake,amplitude=1pt,segment length=3pt}] (0.5,0) -- (2);
\draw (1,0.6)--(2);
\end{tikzpicture}}\,\Lc\tilde g_0^N
+
{\tiny\begin{tikzpicture}[baseline={([yshift=-1ex]current bounding box.center)},scale=0.8]
\begin{scope}[every node/.style={circle,draw,fill=white,inner sep=0pt,minimum size=3pt}]
\node (0) at (0,0) {};
\node (1) at (0.5,0.3) {};
\node (2) at (1,0) {};
\end{scope}
\draw (0)--(0.5,0);
\draw (0)--(0,0.3)--(1);
\draw (0,0.6)--(0.5,0.6)--(1);
\draw[decorate,decoration={snake,amplitude=1pt,segment length=3pt}] (1) -- (1,0.3);
\draw[decorate,decoration={snake,amplitude=1pt,segment length=3pt}] (0.5,0) -- (2);
\draw (1,0.3)--(2);
\end{tikzpicture}}\,\Lc\tilde g_0^N,
\\
\Lc R_{1,3}^N&=&
{\tiny\begin{tikzpicture}[baseline={([yshift=-1ex]current bounding box.center)},scale=0.8]
\begin{scope}[every node/.style={circle,draw,fill=white,inner sep=0pt,minimum size=3pt}]
\node (0) at (0,0.6) {};
\node (1) at (0.5,0) {};
\node (2) at (1,0) {};
\end{scope}
\draw (0)--(0.5,0.6);
\draw (0)--(0,0.3)--(0.5,0.3)--(1);
\draw (0,0)--(1);
\draw[decorate,decoration={snake,amplitude=1pt,segment length=3pt}] (0.5,0.6) -- (1,0.6);
\draw[decorate,decoration={snake,amplitude=1pt,segment length=3pt}] (1) -- (2);
\draw (1,0.6)--(2);
\end{tikzpicture}}\,\Lc\tilde g_0^N
+
{\tiny\begin{tikzpicture}[baseline={([yshift=-1ex]current bounding box.center)},scale=0.8]
\begin{scope}[every node/.style={circle,draw,fill=white,inner sep=0pt,minimum size=3pt}]
\node (0) at (0,0) {};
\node (1) at (0.5,0.6) {};
\node (2) at (1,0) {};
\end{scope}
\draw (0)--(0.5,0);
\draw (0)--(0,0.3)--(0.5,0.3)--(1);
\draw (0,0.6)--(1);
\draw[decorate,decoration={snake,amplitude=1pt,segment length=3pt}] (1) -- (1,0.6);
\draw[decorate,decoration={snake,amplitude=1pt,segment length=3pt}] (0.5,0) -- (2);
\draw (2)--(1,0.6);
\end{tikzpicture}}\,\Lc\tilde g_0^N.
\end{eqnarray*}
We split the proof into three steps. Recall the notation $\hat k=\frac{k}{|k|}$ for $k\in\R^d$.

\medskip\noindent
{\bf Step~1:} Proof that for $d\ge4$,
\begin{equation}\label{eq:est-R11N}
\3\Lc R_{1,1}^N\3\,\lesssim\,\3\langle\nabla_{v_0}\rangle^3\Lc\tilde g_0^N\!\3.
\end{equation}
The two terms in the definition of $\Lc R_{1,1}^N$ are similar, and we start with the first one. We use the coordinates $(k,v_0,v_1)\equiv((-k,v_0),(k,v_1))$ on $\hat\Dd^2$ and $(k,k',v_0,v_1,v_2)\equiv((-k-k',v_0),(k,v_1),(k',v_2))$ on $\hat\Dd^3$. By definition, we have
\begin{multline*}
{\tiny\begin{tikzpicture}[baseline={([yshift=-1ex]current bounding box.center)},scale=0.8]
\begin{scope}[every node/.style={circle,draw,fill=white,inner sep=0pt,minimum size=3pt}]
\node (0) at (0,0.3) {};
\node (1) at (0.5,0) {};
\node (2) at (1,0) {};
\end{scope}
\draw (0)--(0,0.6)--(0.5,0.6);
\draw (0)--(0.5,0.3)--(1);
\draw (0,0)--(1);
\draw[decorate,decoration={snake,amplitude=1pt,segment length=3pt}] (0.5,0.6) -- (1,0.6);
\draw[decorate,decoration={snake,amplitude=1pt,segment length=3pt}] (1) -- (2);
\draw (1,0.6)--(2);
\end{tikzpicture}}
\,\Lc\tilde g_0^N(\alpha,k,v_0,v_1)
\,=\,
-\int_{(\R^d)^2}\sqrt M(v_2)\,k'\V(k')\cdot\nabla_{v_1}\\
\times\Big(\tfrac{1+i\alpha}{t_N}+ik\cdot(v_1-v_0)+ik'\cdot(v_2-v_1)+\tfrac\kappa N\hat D_2\Big)^{-1}\sqrt M(v_1)\,(k-k')\V(k-k')\cdot\nabla_{v_0}\\
\times\Big(\WIG\hat S^{1,-}_{0,1}\Lc\tilde g_0^N\Big)(k',v_0,v_2)\,d^*k'dv_2,
\end{multline*}
where in the last factor the renormalized semigroup is applied to
\begin{equation}\label{eq:ref-def}
\hat S^{1,-}_{0,1}\Lc\tilde g_0^N(k',v_0,v_2)=-\sqrt M(v_2)\,k'\V(k')\cdot\nabla_{v_0}\Lc\tilde g_0^N(v_0).
\end{equation}
Since the latter is holomorphic in $v_2$, Proposition~\ref{prop:box-anal} ensures that for fixed $k'$ the map
\[v_2\,\mapsto\,\Big(\WIG\hat S^{1,-}_{0,1}\Lc\tilde g_0^N\Big)(\alpha,k',v_0,v_2)\,=\,\Big(\tau_{(0,v_2)}\WIG \hat S^{1,-}_{0,1}\Lc\tilde g_0^N\Big)(\alpha,k',v_0,0)\]
admits a holomorphic extension to $S_\nu^1$ for any $\nu\in\Sp^{d-1}$ with $k'\cdot\nu>0$. Choosing $\nu=\hat k'$, we may thus perform a contour deformation $v_2\mapsto v_2-i\hat k'$ in the above integral, to the effect of
\begin{multline*}
{\tiny\begin{tikzpicture}[baseline={([yshift=-1ex]current bounding box.center)},scale=0.8]
\begin{scope}[every node/.style={circle,draw,fill=white,inner sep=0pt,minimum size=3pt}]
\node (0) at (0,0.3) {};
\node (1) at (0.5,0) {};
\node (2) at (1,0) {};
\end{scope}
\draw (0)--(0,0.6)--(0.5,0.6);
\draw (0)--(0.5,0.3)--(1);
\draw (0,0)--(1);
\draw[decorate,decoration={snake,amplitude=1pt,segment length=3pt}] (0.5,0.6) -- (1,0.6);
\draw[decorate,decoration={snake,amplitude=1pt,segment length=3pt}] (1) -- (2);
\draw (1,0.6)--(2);
\end{tikzpicture}}\,\Lc\tilde g_0^N
\,(\alpha,k,v_0,v_1)
\,=\,
-\int_{(\R^d)^2}\sqrt M(v_2-i\hat k')\,k'\V(k')\cdot\nabla_{v_1}\\
\times\Big(\tfrac{1+i\alpha}{t_N}+|k'|+ik\cdot(v_1-v_0)+ik'\cdot(v_2-v_1)+\tfrac\kappa N\hat D_2\Big)^{-1}\sqrt M(v_1)\,(k-k')\V(k-k')\cdot\nabla_{v_0}\\
\times\Big(\tau_{(0,-i\hat k')}\WIG\hat S^{1,-}_{0,1}\Lc\tilde g_0^N\Big)(\alpha,k',v_0,v_2)\,d^*k'dv_2.
\end{multline*}
Diagrammatically, we use the following short-hand notation, where we add labels with the corresponding Fourier momentum variable above each edge and where we add in bracket the deformation of a velocity variable on the right of the corresponding line,
\begin{equation*}
{\tiny\begin{tikzpicture}[baseline={([yshift=-1ex]current bounding box.center)},scale=0.8]
\begin{scope}[every node/.style={circle,draw,fill=white,inner sep=0pt,minimum size=3pt}]
\node (0) at (0,0.3) {};
\node (1) at (0.5,0) {};
\node (2) at (1,0) {};
\end{scope}
\draw (0)--(0,0.6)--(0.5,0.6);
\draw (0)--(0.5,0.3)--(1);
\draw (0,0)--(1);
\draw[decorate,decoration={snake,amplitude=1pt,segment length=3pt}] (0.5,0.6) -- (1,0.6);
\draw[decorate,decoration={snake,amplitude=1pt,segment length=3pt}] (1) -- (2);
\draw (1,0.6)--(2);
\end{tikzpicture}}\,\Lc\tilde g_0^N(\alpha,k,v_0,v_1)
\,=\,
\int_{\R^d}
{\tiny\begin{tikzpicture}[baseline={([yshift=-2ex]current bounding box.center)},scale=0.8]
\begin{scope}[every node/.style={circle,draw,fill=white,inner sep=0pt,minimum size=3pt}]
\node (0) at (0,0.4) {};
\node (1) at (1.2,0) {};
\node (2) at (2.4,0) {};
\end{scope}
\draw (0)--(0,0.8)--(1.2,0.8);
\draw (0)--(1.2,0.4)--(1);
\draw (0,0)--(1);
\draw[decorate,decoration={snake,amplitude=1pt,segment length=3pt}] (1.2,0.8) -- (2.4,0.8);
\draw[decorate,decoration={snake,amplitude=1pt,segment length=3pt}] (1) -- (2);
\draw (2.4,0.8)--(2);
\node at (0.6,0.17) {$-k$};
\node at (1.8,0.97) {$k'$};
\node at (1.8,0.17) {$-k'$};
\node at (0.6,0.97) {$k'$};
\node at (0.6,0.57) {$k-k'$};
\node at (2.9,0.8) {$[-i\hat k']$};
\end{tikzpicture}}\!
\Lc\tilde g_0^N(\alpha,k,v_0,v_1)\,d^*k'.
\end{equation*}
A direct estimate yields
\begin{equation*}
\big\|{\tiny\begin{tikzpicture}[baseline={([yshift=-1ex]current bounding box.center)},scale=0.8]
\begin{scope}[every node/.style={circle,draw,fill=white,inner sep=0pt,minimum size=3pt}]
\node (0) at (0,0.3) {};
\node (1) at (0.5,0) {};
\node (2) at (1,0) {};
\end{scope}
\draw (0)--(0,0.6)--(0.5,0.6);
\draw (0)--(0.5,0.3)--(1);
\draw (0,0)--(1);
\draw[decorate,decoration={snake,amplitude=1pt,segment length=3pt}] (0.5,0.6) -- (1,0.6);
\draw[decorate,decoration={snake,amplitude=1pt,segment length=3pt}] (1) -- (2);
\draw (1,0.6)--(2);
\end{tikzpicture}}\,\Lc\tilde g_0^N\big\|_k
\,\lesssim\,
\int_{\R^d}
|k-k'|(1+\tfrac{|k-k'|}{|k'|})  \V(k') \V(k-k')
\big\|\nabla_{v_0}\tau_{(0,-i\hat k')}\WIG\hat S^{1,-}_{0,1}\Lc\tilde g_0^N(\alpha,k',\cdot)\big\|_{}\,dk',
\end{equation*}
and thus, by Proposition~\ref{prop:box-anal} and the definition of $\hat S^{1,-}_{0,1}$, cf.~\eqref{eq:ref-def},
\begin{eqnarray*}
\lefteqn{\big\|{\tiny\begin{tikzpicture}[baseline={([yshift=-1ex]current bounding box.center)},scale=0.8]
\begin{scope}[every node/.style={circle,draw,fill=white,inner sep=0pt,minimum size=3pt}]
\node (0) at (0,0.3) {};
\node (1) at (0.5,0) {};
\node (2) at (1,0) {};
\end{scope}
\draw (0)--(0,0.6)--(0.5,0.6);
\draw (0)--(0.5,0.3)--(1);
\draw (0,0)--(1);
\draw[decorate,decoration={snake,amplitude=1pt,segment length=3pt}] (0.5,0.6) -- (1,0.6);
\draw[decorate,decoration={snake,amplitude=1pt,segment length=3pt}] (1) -- (2);
\draw (1,0.6)--(2);
\end{tikzpicture}}\,\Lc\tilde g_0^N\big\|_k}\\
&\lesssim&
\int_{\R^d}
\langle k'\rangle|k-k'|(1+\tfrac{|k-k'|}{|k'|})(1+\tfrac1{|k'|})^2  \V(k') \V(k-k')
\|\langle\nabla_{v_{[1]}}\rangle\tau_{(0,-i\hat k')}\hat S^{1,-}_{0,1}\Lc\tilde g_0^N(\alpha,k',\cdot)\|\,dk'\\
&\lesssim&
\|\langle\nabla_{v_0}\rangle^2\Lc\tilde g_0^N\|
\int_{\R^d}
\langle k'\rangle|k-k'|(|k'|+|k-k'|)(1+\tfrac1{|k'|})^2\V(k')^2 \V(k-k')\,dk',
\end{eqnarray*}
which yields the desired estimate for this term for $d\ge3$.
We turn to the second term in the definition of $\Lc R_{1,1}^N$. Arguing similarly as above, we use complex deformation,
\[{\tiny\begin{tikzpicture}[baseline={([yshift=-1ex]current bounding box.center)},scale=0.8]
\begin{scope}[every node/.style={circle,draw,fill=white,inner sep=0pt,minimum size=3pt}]
\node (0) at (0,0) {};
\node (1) at (0.5,0) {};
\node (2) at (1,0) {};
\end{scope}
\draw (0)--(1);
\draw (0)--(0,0.6)--(0.5,0.6);
\draw (-0.1,0.3)--(-0.05,0.3);
\draw (0.05,0.3)--(0.5,0.3)--(1);
\draw[decorate,decoration={snake,amplitude=1pt,segment length=3pt}] (0.5,0.6) -- (1,0.6);
\draw[decorate,decoration={snake,amplitude=1pt,segment length=3pt}] (1) -- (2);
\draw (1,0.6)--(2);
\end{tikzpicture}}\,\Lc\tilde g_0^N(\alpha,k,v_0,v_1)
\,=\,\int_{\R^d}
{\tiny\begin{tikzpicture}[baseline={([yshift=-2ex]current bounding box.center)},scale=0.8]
\begin{scope}[every node/.style={circle,draw,fill=white,inner sep=0pt,minimum size=3pt}]
\node (0) at (0,0) {};
\node (1) at (1.2,0) {};
\node (2) at (2.4,0) {};
\end{scope}
\draw (0)--(1);
\draw (0)--(0,0.8)--(1.2,0.8);
\draw (-0.1,0.4)--(1.2,0.4)--(1);
\draw[decorate,decoration={snake,amplitude=1pt,segment length=3pt}] (1.2,0.8) -- (2.4,0.8);
\draw[decorate,decoration={snake,amplitude=1pt,segment length=3pt}] (1) -- (2);
\draw (2.4,0.8)--(2);
\node at (0.6,0.57) {$k$};
\node at (0.6,0.97) {$k'$};
\node at (0.6,0.17) {$-k-k'$};
\node at (1.8,0.97) {$k'$};
\node at (1.8,0.17) {$-k'$};
\node at (2.9,0.8) {$[-i\hat k']$};
\end{tikzpicture}}\!\Lc\tilde g_0^N(\alpha,k,v_0,v_1)\,d^*k',\]
and we find
\[\big\|{\tiny\begin{tikzpicture}[baseline={([yshift=-1ex]current bounding box.center)},scale=0.8]
\begin{scope}[every node/.style={circle,draw,fill=white,inner sep=0pt,minimum size=3pt}]
\node (0) at (0,0) {};
\node (1) at (0.5,0) {};
\node (2) at (1,0) {};
\end{scope}
\draw (0)--(1);
\draw (0)--(0,0.6)--(0.5,0.6);
\draw (-0.1,0.3)--(-0.05,0.3);
\draw (0.05,0.3)--(0.5,0.3)--(1);
\draw[decorate,decoration={snake,amplitude=1pt,segment length=3pt}] (0.5,0.6) -- (1,0.6);
\draw[decorate,decoration={snake,amplitude=1pt,segment length=3pt}] (1) -- (2);
\draw (1,0.6)--(2);
\end{tikzpicture}}\,\Lc\tilde g_0^N\big\|_k
\,\lesssim\,
\|\langle\nabla_{v_0}\rangle^3\Lc\tilde g_0^N\|\int_{\R^d}|k|\langle k+k'\rangle\langle k'\rangle^2(1+\tfrac1{|k'|})^3\V(k)\V(k')^2\,dk'.\]
The claim~\eqref{eq:est-R11N} follows for $d\ge4$.

\medskip\noindent
{\bf Step~2:} Proof that for $d\ge6$,
\begin{equation}\label{eq:est-R12N}
\3\!\Lc R_{1,2}^N\!\3\,\lesssim\,\3\langle\nabla_{v_0}\rangle^3\Lc\tilde g_0^N\!\3.
\end{equation}
The two terms in the definition of $\Lc R_{1,2}^N$ are similar, and we start with the first one.
Provided $\hat k'\ne-\hat k$, we let $\nu_{k,k'}:=(\hat k+\hat k')/|\hat k+\hat k'|$, which ensures both $k\cdot\nu_{k,k'}>0$ and $k'\cdot\nu_{k,k'}>0$. Using Proposition~\ref{prop:box-anal} to perform complex contour deformation, we then find
\[{\tiny\begin{tikzpicture}[baseline={([yshift=-1ex]current bounding box.center)},scale=0.8]
\begin{scope}[every node/.style={circle,draw,fill=white,inner sep=0pt,minimum size=3pt}]
\node (0) at (0,0.3) {};
\node (1) at (0.5,0.6) {};
\node (2) at (1,0) {};
\end{scope}
\draw (0)--(0,0.6)--(1);
\draw (0)--(0.5,0.3)--(1);
\draw (0,0)--(0.5,0);
\draw[decorate,decoration={snake,amplitude=1pt,segment length=3pt}] (1) -- (1,0.6);
\draw[decorate,decoration={snake,amplitude=1pt,segment length=3pt}] (0.5,0) -- (2);
\draw (1,0.6)--(2);
\end{tikzpicture}}\,\Lc\tilde g_0^N(\alpha,k,v_0,v_1)
\,=\,
\int_{\R^d}{\tiny\begin{tikzpicture}[baseline={([yshift=-2ex]current bounding box.center)},scale=0.8]
\begin{scope}[every node/.style={circle,draw,fill=white,inner sep=0pt,minimum size=3pt}]
\node (0) at (0,0.4) {};
\node (1) at (1.2,0.8) {};
\node (2) at (2.4,0) {};
\end{scope}
\draw (0)--(0,0.8)--(1);
\draw (0)--(1.2,0.4)--(1);
\draw (0,0)--(1.2,0);
\draw[decorate,decoration={snake,amplitude=1pt,segment length=3pt}] (1) -- (2.4,0.8);
\draw[decorate,decoration={snake,amplitude=1pt,segment length=3pt}] (1.2,0) -- (2);
\draw (2.4,0.8)--(2);
\node at (0.6,0.17) {$-k$};
\node at (0.6,0.57) {$k-k'$};
\node at (0.6,0.97) {$k'$};
\node at (1.8,0.97) {$k$};
\node at (1.8,0.17) {$-k$};
\node at (3.2,0.8) {$[-i\nu_{k,k'}]$};
\end{tikzpicture}}\,\Lc\tilde g_0^N(\alpha,k,v_0,v_1)\,d^*k',\]
which is bounded as follows,
\begin{multline*}
\big\|{\tiny\begin{tikzpicture}[baseline={([yshift=-1ex]current bounding box.center)},scale=0.8]
\begin{scope}[every node/.style={circle,draw,fill=white,inner sep=0pt,minimum size=3pt}]
\node (0) at (0,0.3) {};
\node (1) at (0.5,0.6) {};
\node (2) at (1,0) {};
\end{scope}
\draw (0)--(0,0.6)--(1);
\draw (0)--(0.5,0.3)--(1);
\draw (0,0)--(0.5,0);
\draw[decorate,decoration={snake,amplitude=1pt,segment length=3pt}] (1) -- (1,0.6);
\draw[decorate,decoration={snake,amplitude=1pt,segment length=3pt}] (0.5,0) -- (2);
\draw (1,0.6)--(2);
\end{tikzpicture}}\,\Lc\tilde g_0^N\big\|_k
\,\lesssim\, \|\langle\nabla_{v_0}\rangle^2 \Lc\tilde g_0^N\|
\int_{\R^d}\langle k\rangle\langle k-k'\rangle|k||k'||k-k'|\V(k)\V(k')\V(k-k')\\[-3mm]
\times(1+\tfrac1{k'\cdot\nu_{k,k'} })^{2}(1+\tfrac1{k\cdot\nu_{k,k'}})^{2}\,dk'.
\end{multline*}
Noting that $\hat k'\cdot\nu_{k,k'}=\hat k\cdot\nu_{k,k'}=\cos(\alpha/2)$ if $\hat k\cdot\hat k'=\cos\alpha$,
we can estimate the above $k'$-integral,
\begin{eqnarray*}
\lefteqn{\int_{\R^d}\langle k\rangle\langle k-k'\rangle|k||k'||k-k'|\V(k)\V(k')\V(k-k')(1+\tfrac1{k'\cdot\nu_{k,k'} })^{2}(1+\tfrac1{k\cdot\nu_{k,k'}})^{2}\,dk'}\\
&\lesssim&\langle k\rangle^4(1+\tfrac1{|k|})\V(k)\int_{\R^d}\langle k'\rangle^3\V(k')(1+\tfrac1{|k'|})(1+\tfrac1{\hat k\cdot\nu_{k,k'} })^4\,dk'\\
&\lesssim&\langle k\rangle^4(1+\tfrac1{|k|})\V(k)\int_0^\pi\cos(\alpha/2)^{-4}\sin(\alpha)^{d-2}\,d\alpha,
\end{eqnarray*}
where the remaining integral is bounded provided $d\ge6$.
We are thus led to
\[\big\|{\tiny\begin{tikzpicture}[baseline={([yshift=-1ex]current bounding box.center)},scale=0.8]
\begin{scope}[every node/.style={circle,draw,fill=white,inner sep=0pt,minimum size=3pt}]
\node (0) at (0,0.3) {};
\node (1) at (0.5,0.6) {};
\node (2) at (1,0) {};
\end{scope}
\draw (0)--(0,0.6)--(1);
\draw (0)--(0.5,0.3)--(1);
\draw (0,0)--(0.5,0);
\draw[decorate,decoration={snake,amplitude=1pt,segment length=3pt}] (1) -- (1,0.6);
\draw[decorate,decoration={snake,amplitude=1pt,segment length=3pt}] (0.5,0) -- (2);
\draw (1,0.6)--(2);
\end{tikzpicture}}\,\Lc\tilde g_0^N\big\|_k
\,\lesssim\,\langle k\rangle^4(1+\tfrac1{|k|})\V(k)\|\langle\nabla_{v_0}\rangle^2 \Lc\tilde g_0^N\|,\]
which yields the desired estimate for this term.
We turn to the second term in the definition of $\Lc R_{1,2}^N$.
Complex deformation now yields
\[{\tiny\begin{tikzpicture}[baseline={([yshift=-1ex]current bounding box.center)},scale=0.8]
\begin{scope}[every node/.style={circle,draw,fill=white,inner sep=0pt,minimum size=3pt}]
\node (0) at (0,0) {};
\node (1) at (0.5,0.3) {};
\node (2) at (1,0) {};
\end{scope}
\draw (0)--(0.5,0);
\draw (0)--(0,0.3)--(1);
\draw (0,0.6)--(0.5,0.6)--(1);
\draw[decorate,decoration={snake,amplitude=1pt,segment length=3pt}] (1) -- (1,0.3);
\draw[decorate,decoration={snake,amplitude=1pt,segment length=3pt}] (0.5,0) -- (2);
\draw (1,0.3)--(2);
\end{tikzpicture}}\,\Lc\tilde g_0^N(\alpha,k,v_0,v_1)
\,=\,
\int_{\R^d}
{\tiny\begin{tikzpicture}[baseline={([yshift=-2ex]current bounding box.center)},scale=0.8]
\begin{scope}[every node/.style={circle,draw,fill=white,inner sep=0pt,minimum size=3pt}]
\node (0) at (0,0) {};
\node (1) at (1.2,0.4) {};
\node (2) at (2.4,0) {};
\end{scope}
\draw (0)--(1.2,0);
\draw (0)--(0,0.4)--(1);
\draw (0,0.8)--(1.2,0.8)--(1);
\draw[decorate,decoration={snake,amplitude=1pt,segment length=3pt}] (1) -- (2.4,0.4);
\draw[decorate,decoration={snake,amplitude=1pt,segment length=3pt}] (1.2,0) -- (2);
\draw (2.4,0.4)--(2);
\node at (0.6,0.97) {$k$};
\node at (0.6,0.57) {$k'$};
\node at (1.2,0.17) {$-k-k'$};
\node at (1.8,0.57) {$k+k'$};
\node at (3.4,0.4) {$[-i\nu_{k',k+k'}]$};
\end{tikzpicture}}\Lc\tilde g_0^N(\alpha,k,v_0,v_1)\,d^*k',\]
which we can bound as follows,
\begin{multline*}
\big\|{\tiny\begin{tikzpicture}[baseline={([yshift=-1ex]current bounding box.center)},scale=0.8]
\begin{scope}[every node/.style={circle,draw,fill=white,inner sep=0pt,minimum size=3pt}]
\node (0) at (0,0) {};
\node (1) at (0.5,0.3) {};
\node (2) at (1,0) {};
\end{scope}
\draw (0)--(0.5,0);
\draw (0)--(0,0.3)--(1);
\draw (0,0.6)--(0.5,0.6)--(1);
\draw[decorate,decoration={snake,amplitude=1pt,segment length=3pt}] (1) -- (1,0.3);
\draw[decorate,decoration={snake,amplitude=1pt,segment length=3pt}] (0.5,0) -- (2);
\draw (1,0.3)--(2);
\end{tikzpicture}}\,\Lc\tilde g_0^N\big\|_k
\,\lesssim\,\|\langle\nabla_{v_0}\rangle^3 \Lc\tilde g_0^N\|\int_{\R^d}\langle k+k'\rangle^2|k||k'||k+k'|\V(k)\V(k')\V(k+k')\\
\times\Big(\tfrac1{k'\cdot\nu_{k',k+k'}}\big(1+\tfrac1{(k+k')\cdot\nu_{k',k+k'}}\big)^3
+\big(\tfrac1{k'\cdot\nu_{k',k+k'}}\big)^2\big(1+\tfrac1{(k+k')\cdot\nu_{k',k+k'}}\big)^2\Big)\,dk'.
\end{multline*}
Integrability requires again $d\ge6$ and the claimed estimate~\eqref{eq:est-R12N} follows.

\medskip\noindent
{\bf Step~3:} Proof that for $d\ge8$,
\begin{equation}\label{eq:est-R13N}
\3\!\Lc R_{1,3}^N\!\3\,\lesssim\,\3\langle\nabla_{v_0}\rangle^3\Lc\tilde g_0^N\!\3.
\end{equation}
The two terms in the definition of $\Lc R_{1,3}^N$ are similar, and we start with the first one. In order to be able to exploit contour deformations for both propagators in this term, let us write its norm as
\[\big\|
{\tiny\begin{tikzpicture}[baseline={([yshift=-1ex]current bounding box.center)},scale=0.8]
\begin{scope}[every node/.style={circle,draw,fill=white,inner sep=0pt,minimum size=3pt}]
\node (0) at (0,0.6) {};
\node (1) at (0.5,0) {};
\node (2) at (1,0) {};
\end{scope}
\draw (0)--(0.5,0.6);
\draw (0)--(0,0.3)--(0.5,0.3)--(1);
\draw (0,0)--(1);
\draw[decorate,decoration={snake,amplitude=1pt,segment length=3pt}] (0.5,0.6) -- (1,0.6);
\draw[decorate,decoration={snake,amplitude=1pt,segment length=3pt}] (1) -- (2);
\draw (1,0.6)--(2);
\end{tikzpicture}}\,\Lc\tilde g_0^N
\big\|_k^2
\,=\,
\int_{(\R^d)^2}
\Big\<\,
{\tiny\begin{tikzpicture}[baseline={([yshift=-2ex]current bounding box.center)},scale=0.8]
\begin{scope}[every node/.style={circle,draw,fill=white,inner sep=0pt,minimum size=3pt}]
\node (0) at (0,0.8) {};
\node (1) at (1.2,0) {};
\node (2) at (2.4,0) {};
\end{scope}
\draw (0)--(1.2,0.8);
\draw (0)--(0,0.4)--(1.2,0.4)--(1);
\draw (0,0)--(1);
\draw[decorate,decoration={snake,amplitude=1pt,segment length=3pt}] (1.2,0.8) -- (2.4,0.8);
\draw[decorate,decoration={snake,amplitude=1pt,segment length=3pt}] (1) -- (2);
\draw (2.4,0.8)--(2);
\node at (0.6,0.17) {$-k$};
\node at (0.6,0.57) {$k-k'$};
\node at (0.6,0.97) {$k'$};
\node at (1.8,0.17) {$-k'$};
\node at (1.8,0.97) {$k'$};
\end{tikzpicture}}\,\Lc\tilde g_0^N\,,\,
{\tiny\begin{tikzpicture}[baseline={([yshift=-2ex]current bounding box.center)},scale=0.8]
\begin{scope}[every node/.style={circle,draw,fill=white,inner sep=0pt,minimum size=3pt}]
\node (0) at (0,0.8) {};
\node (1) at (1.2,0) {};
\node (2) at (2.4,0) {};
\end{scope}
\draw (0)--(1.2,0.8);
\draw (0)--(0,0.4)--(1.2,0.4)--(1);
\draw (0,0)--(1);
\draw[decorate,decoration={snake,amplitude=1pt,segment length=3pt}] (1.2,0.8) -- (2.4,0.8);
\draw[decorate,decoration={snake,amplitude=1pt,segment length=3pt}] (1) -- (2);
\draw (2.4,0.8)--(2);
\node at (0.6,0.17) {$-k$};
\node at (0.6,0.57) {$k-k''$};
\node at (0.6,0.97) {$k''$};
\node at (1.8,0.17) {$-k''$};
\node at (1.8,0.97) {$k''$};
\end{tikzpicture}}\,\Lc\tilde g_0^N\Big\>_k
\,d^*k'd^*k''.
\]
Using Proposition~\ref{prop:box-anal}, we may now perform contour deformations: taking note of the complex conjugate in the scalar product, and setting $\nu=\nu_{-k',k''}$, $\sigma'=(k-k')/|k-k'|$, and $\sigma''=(k-k'')/|k-k''|$, we find
\begin{equation*}
\big\|
{\tiny\begin{tikzpicture}[baseline={([yshift=-1ex]current bounding box.center)},scale=0.8]
\begin{scope}[every node/.style={circle,draw,fill=white,inner sep=0pt,minimum size=3pt}]
\node (0) at (0,0.6) {};
\node (1) at (0.5,0) {};
\node (2) at (1,0) {};
\end{scope}
\draw (0)--(0.5,0.6);
\draw (0)--(0,0.3)--(0.5,0.3)--(1);
\draw (0,0)--(1);
\draw[decorate,decoration={snake,amplitude=1pt,segment length=3pt}] (0.5,0.6) -- (1,0.6);
\draw[decorate,decoration={snake,amplitude=1pt,segment length=3pt}] (1) -- (2);
\draw (1,0.6)--(2);
\end{tikzpicture}}\,\Lc\tilde g_0^N
\big\|^2_k
\,=\,
\int_{(\R^d)^2}
\Big\<\,
{\tiny\begin{tikzpicture}[baseline={([yshift=-2ex]current bounding box.center)},scale=0.8]
\begin{scope}[every node/.style={circle,draw,fill=white,inner sep=0pt,minimum size=3pt}]
\node (0) at (0,0.8) {};
\node (1) at (1.2,0) {};
\node (2) at (2.4,0) {};
\end{scope}
\draw (0)--(1.2,0.8);
\draw (0)--(0,0.4)--(1.2,0.4)--(1);
\draw (0,0)--(1);
\draw[decorate,decoration={snake,amplitude=1pt,segment length=3pt}] (1.2,0.8) -- (2.4,0.8);
\draw[decorate,decoration={snake,amplitude=1pt,segment length=3pt}] (1) -- (2);
\draw (2.4,0.8)--(2);
\node at (0.6,0.17) {$-k$};
\node at (0.6,0.57) {$k-k'$};
\node at (0.6,0.97) {$k'$};
\node at (1.8,0.17) {$-k'$};
\node at (1.8,0.97) {$k'$};
\node at (2.82,0.8) {$[i\nu]$};
\node at (2.97,0.4) {$[-i\sigma']$};
\end{tikzpicture}}\,\Lc\tilde g_0^N\,,\,
{\tiny\begin{tikzpicture}[baseline={([yshift=-2ex]current bounding box.center)},scale=0.8]
\begin{scope}[every node/.style={circle,draw,fill=white,inner sep=0pt,minimum size=3pt}]
\node (0) at (0,0.8) {};
\node (1) at (1.2,0) {};
\node (2) at (2.4,0) {};
\end{scope}
\draw (0)--(1.2,0.8);
\draw (0)--(0,0.4)--(1.2,0.4)--(1);
\draw (0,0)--(1);
\draw[decorate,decoration={snake,amplitude=1pt,segment length=3pt}] (1.2,0.8) -- (2.4,0.8);
\draw[decorate,decoration={snake,amplitude=1pt,segment length=3pt}] (1) -- (2);
\draw (2.4,0.8)--(2);
\node at (0.6,0.17) {$-k$};
\node at (0.6,0.57) {$k-k''$};
\node at (0.6,0.97) {$k''$};
\node at (1.8,0.17) {$-k''$};
\node at (1.8,0.97) {$k''$};
\node at (2.9,0.8) {$[-i\nu]$};
\node at (3,0.4) {$[-i\sigma'']$};
\end{tikzpicture}}\,\Lc\tilde g_0^N\Big\>_k
\,d^*k'd^*k'',
\end{equation*}
and a direct estimate then yields
\begin{multline*}
\big\|
{\tiny\begin{tikzpicture}[baseline={([yshift=-1ex]current bounding box.center)},scale=0.8]
\begin{scope}[every node/.style={circle,draw,fill=white,inner sep=0pt,minimum size=3pt}]
\node (0) at (0,0.6) {};
\node (1) at (0.5,0) {};
\node (2) at (1,0) {};
\end{scope}
\draw (0)--(0.5,0.6);
\draw (0)--(0,0.3)--(0.5,0.3)--(1);
\draw (0,0)--(1);
\draw[decorate,decoration={snake,amplitude=1pt,segment length=3pt}] (0.5,0.6) -- (1,0.6);
\draw[decorate,decoration={snake,amplitude=1pt,segment length=3pt}] (1) -- (2);
\draw (1,0.6)--(2);
\end{tikzpicture}}\,\Lc\tilde g_0^N
\big\|^2_k
\,\lesssim\,
\|\<\nabla_{v_0}\>^3\Lc\tilde g_0^N\|^2
\int_{(\R^d)^2}\< k-k'\>\< k-k''\>\<k'\>^2\<k''\>^2|k'||k''|\\
\times\V(k-k')^2\V(k-k'')^2\V(k')\V(k'')\big(1+\tfrac1{k'\cdot\nu_{-k',k''}}\big)^3\big(1+\tfrac1{k''\cdot\nu_{-k',k''}}\big)^3\,d^*k'd^*k'',
\end{multline*}
where integrability requires $d\ge 8$.
Arguing similarly for the other term in $\Lc R_{1,3}^N$, the claim~\eqref{eq:est-R13N} follows.
\end{proof}

\subsection{General case: term-by-term estimate}
Let the truncation parameter $m_0\ge3$ be fixed. Our goal is to generalize the strategy of the previous section to bound the remainder terms. For that purpose, we shall first prove the following general term-by-term estimate for the contribution of an arbitrary abstract in the Dyson expansion (recall Definition~\ref{def:hist} for abstracts, histories, and their contributions).

\begin{prop}\label{prop:estimation erreur n>1}
Let $(s_1,\ldots,s_n)$ be an abstract with $s_1=1$ and $m+\sum_{i=1}^n s_i=0$ for some $m\ge0$, and assume the space dimension satisfies $d\ge6n+m\vee n$. Then we have for all $g\in C^\infty_c(\R^d)$,
\begin{equation*}
\left\|\mathcal{I}_{(s_1,\ldots,s_n)} g\right\|
\,\lesssim\,N^{\frac5{12}n+\frac14m}\|\<\nabla_{v_0}\>^{n+1} {g}\|.
\end{equation*}
\end{prop}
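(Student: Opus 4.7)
The proof generalizes the case-analysis strategy of Section~\ref{sec:case-m02} to arbitrary abstracts, by combining the two complementary tools developed in Section~\ref{sec:ren}: complex contour deformations of the renormalized propagators (Proposition~\ref{prop:box-anal}) to exploit phase mixing, and Airy-type hypoelliptic estimates (Proposition~\ref{prop:pde}) to control velocity derivatives. As a first reduction, since the set $\mathfrak{H}(s_1,\ldots,s_n)$ of histories has cardinality bounded by a combinatorial factor depending only on $m_0$ and $n$, it suffices to estimate a single summand $\hat S^{s_1}_{a_1,b_1}\WIGdot\cdots\WIGdot\hat S^{s_n}_{a_n,b_n}g$ for a fixed history. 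I then unfold this expression as an iterated integral over the $n$ momentum variables $k_1,\ldots,k_n$ and the $(n-m)/2$ velocity variables corresponding to particles that are created and later annihilated within the diagram.

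The heart of the argument is a systematic contour shift in each integrated-out velocity $v_j$, of the form $v_j\mapsto v_j-i\nu_j$ with $\nu_j\in\Sp^{d-1}$. The analyticity of the Gaussian weight $\sqrt M(v_j)$, together with Proposition~\ref{prop:box-anal}, legitimizes these deformations provided that for every renormalized propagator whose transport operator involves $v_j$ the accumulated shifts satisfy the positivity condition $\sum_l k_l\cdot\nu_l>0$, where the sum runs over the particles live in that propagator. A Wendel-type argument ensures that a compatible family of directions $\{\nu_j\}$ exists for almost every momentum configuration, provided the dimension $d$ exceeds the number of active geometric constraints imposed by the $n-1$ propagators; the hypothesis $d\ge 6n+m\vee n$ is tailored to give enough slack to simultaneously guarantee the existence of such directions and the integrability of the residual $k$-integrals after deformation.

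Once the shifts are performed, each propagator is controlled in one of two ways. When the deformation stabilizes all of its velocity arguments, Proposition~\ref{prop:box-anal} yields a phase-mixing gain of order $(\Re(ik\cdot\xi))^{-1}\sim|k|^{-1}$, contributing no $N$-growth. Where the deformation alone is insufficient (either because $v_0$ is never integrated out, or because the derivative structure produced by the collision operators forces the use of genuine resolvent bounds), I invoke the Airy-type estimate of Proposition~\ref{prop:pde}, which gives an $N^{1/3}$ per propagator together with an explicit loss in velocity regularity. Tracking the $n$ velocity derivatives brought in by the $n$ collision operators, and carefully allocating Airy bounds versus phase-mixing contributions along the chain (while absorbing the $\sqrt M$-weights of the $m$ output particles into an $N^{m/4}$ factor via interpolation), yields the claimed exponent $N^{\frac{5n}{12}+\frac{m}{4}}$ and the loss of $n+1$ derivatives on the initial datum $g$.

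The main obstacle is the bookkeeping of the coupled system of contour deformations: the positivity constraints imposed by the sequence of renormalized propagators are interdependent through shared momenta, and the $\{\nu_j\}$ must be chosen compatibly with all of them at once; this is precisely where Wendel's theorem, and hence the dimensional assumption, enters. A secondary challenge is the simultaneous control of the $k$-integrals, which depend on both the phase-mixing factors and the commutator-type remainders produced when $\nabla_{v_0}$-derivatives are pushed through the renormalized propagators via Propositions~\ref{prop:box-anal}--\ref{prop:pde}; this delicate balance is what fixes the precise exponent $\tfrac{5}{12}$ per collision rather than the naive $\tfrac13$ one would obtain from Airy estimates alone.
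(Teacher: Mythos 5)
You correctly identify the two analytic pillars of the argument (contour deformation via Proposition~\ref{prop:box-anal} and Airy bounds via Proposition~\ref{prop:pde}) and the high-level idea of deforming where possible and paying a hypoelliptic price where not. But there are two genuine gaps.

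First, you miss the mechanism that makes the output velocities deformable. After the last collision, the variables $v_1,\ldots,v_m$ are neither averaged against a Maxwellian nor reached by a creation operator, so a naive contour shift is unavailable. The paper handles this by \emph{doubling variables}: writing the squared $L^2$ norm of the history's contribution brings a second copy of the same chain (with momenta $\bar k_{m+1},\ldots,\bar k_S$), and the shared integration over $v_1,\ldots,v_m$ across the two copies becomes an average that can be deformed using directions built from both $k$- and $\bar k$-momenta (via the concrete construction $\mathfrak o(\cdot)$ of Lemma~\ref{lem:algebre lineaire de sup}, not Wendel's theorem, which only appears in the heuristic Section~\ref{sec:limitation}). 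Without this doubling your deformation program cannot close for the terminal propagators.

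Second, and more seriously, your explanation of the exponent is wrong. The $N^{m/4}$ factor has nothing to do with ``absorbing the $\sqrt M$-weights via interpolation'' — the Maxwellian weights simply supply the analyticity needed for deformation. The paper introduces a precise indicator $\varpi_i\in\{0,1\}$ that decides, propagator by propagator and in terms of the wave-vector bookkeeping of~\eqref{eq:def-qij-1}--\eqref{eq:def-qij-3}, whether the $i$-th resolvent can be deformed ($\varpi_i=1$) or must be hit with the Airy bound ($\varpi_i=0$). Lemma~\ref{lem:abstr-bnd-1} then gives a bound $N^{\frac n3 + \frac13\sharp\{i:\varpi_i=0\}}$, and the entire weight of the claim rests on the \emph{combinatorial} bound $\sharp\{i:\varpi_i=0\}\le\frac n4+\frac{3m}4$ of Lemma~\ref{lem:count-om0}, proved by decomposing the abstract into ``tents'' (Lemma~\ref{lem:carac-tent-varpi} showing all interior tent indices are good) separated by down-steps, and counting. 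This is where both the $5/12$ and the $m/4$ come from; ``carefully allocating'' bounds along the chain, without this structural decomposition and counting, does not produce the exponent.
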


Before turning to the proof, we introduce notation and several auxiliary results that streamline the argument. Fix an abstract $(s_1,\ldots,s_n)$ as in the above statement, with
\begin{equation}\label{eq:prop-abstr}
s_1=1,\qquad m+\sum_{i=1}^n s_i=0\quad\text{for some $m\ge0$.}
\end{equation}
Given an associated history $(s_i,a_i,b_i)_{1\le i\le n}$, we look for a bound on
\begin{equation}\label{eq:contr-history}
\hat{S}^{s_1}_{a_1,b_1}\WIGdot\ldots\WIGdot\hat{S}^{s_n}_{a_n,b_n}g.
\end{equation}
Note that the number of velocity/momentum variables involved in this contribution is
\[S \,:=\,\sharp\{i:s_i=-1\}\,=\,\tfrac{n+m}{2}.\]
For this history, denoting by $k_1,\cdots, k_S$ the momentum variables, the wave vector~$q^i_j$ of particle~$j$ after the $i$-th collision can be constructed iteratively as follows:
\begin{enumerate}[---]
\item for $i=0$, we set
\begin{equation}\label{eq:def-qij-1}
q_j^0\,:=\,\left\{\begin{array}{lll}
-\sum_{j=1}^m k_j&:&j=0,\\
k_j&:&1\le j\le m,\\
0&:&m<j\le S,
\end{array}\right.
\end{equation}
\item for $i\ge1$, if $s_i=1$ (creation), we set
\begin{equation}\label{eq:def-qij-2}
q_j^i\,:=\,\left\{\begin{array}{lll}
 q^{i-1}_{a_i}-k_i&:&j=a_i,\\
 k_i&:&j=b_i,\\
 q^{i-1}_j&:&j\notin\{a_i,b_i\},
\end{array}\right.
\end{equation}
\item for $i\ge1$, if $s_i=-1$ (annihilation), we set
\begin{equation}\label{eq:def-qij-3}
q_j^i\,:=\,\left\{\begin{array}{lll}
q^{i-1}_{a_i}+q^{i-1}_{b_i}&:&j=a_i,\\
0&:&j=b_i,\\
q^{i-1}_j&:&j\notin\{a_i,b_i\},
\end{array}\right.
\end{equation}
\end{enumerate}
This encodes the momentum transfers in collisions described by the history $(s_i,a_i,b_i)_{1\le i\le n}$.
The following lemma provides a convenient description for the structure of these wave vectors.

\begin{lem}\label{lem:str-rep-qij}
There exist two sequences $(\sigma_{j,\ell})_{j,\ell}\subset\{1,-1\}$ and $\{c^i_{j,\ell}\}_{i,j,\ell}\subset\{0,1\}$, such that we have for all~$0\le i\le n$ and $0\le j\le S$, 
\[q^i_j= \sum_{\ell=1}^S\sigma_{j,\ell} c_{j,\ell}^i k_{\ell}. \]
\end{lem}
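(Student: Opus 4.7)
The plan is to prove the lemma by induction on $i$, while simultaneously establishing a stronger structural invariant that makes the decomposition natural: for each $\ell \in \{1,\ldots,S\}$, the set $\{j : c_{j,\ell}^i = 1\}$ has cardinality $0$ or $2$, and in the latter case the two associated signs $\sigma_{j,\ell}$ are opposite. This invariant is the combinatorial reflection of the momentum conservation identity $\sum_j q_j^i = 0$ together with integrality of the coefficients, and it drives the whole argument.

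The base case $i = 0$ is immediate from~\eqref{eq:def-qij-1}: setting $\sigma_{0,\ell} := -1$ and $\sigma_{\ell,\ell} := +1$ for $1 \le \ell \le m$, the invariant holds with support $\{0,\ell\}$ for $\ell \le m$ and empty support for $\ell > m$. At a creation step ($s_i = 1$), the momentum $k_i$ is a fresh variable not previously present in any $q_j^{i-1}$, and the update~\eqref{eq:def-qij-2} introduces it as $q_{a_i}^i = q_{a_i}^{i-1} - k_i$ and $q_{b_i}^i = k_i$. We then define $\sigma_{a_i,i} := -1$ and $\sigma_{b_i,i} := +1$; the supports of all other $k_\ell$ are unchanged and the invariant is preserved.

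At an annihilation step ($s_i = -1$), the only nontrivial update is $q_{a_i}^i = q_{a_i}^{i-1} + q_{b_i}^{i-1}$, so the coefficient of $k_\ell$ in $q_{a_i}^i$ becomes $\sigma_{a_i,\ell}c_{a_i,\ell}^{i-1} + \sigma_{b_i,\ell}c_{b_i,\ell}^{i-1}$. By the invariant at step $i-1$, whenever both $c_{a_i,\ell}^{i-1}$ and $c_{b_i,\ell}^{i-1}$ equal $1$ the two signs are opposite and the coefficient vanishes, preserving $c_{a_i,\ell}^i \in \{0,1\}$; in the remaining subcases the coefficient is directly $0$ or $\pm 1$. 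When only $c_{b_i,\ell}^{i-1} = 1$, the momentum $k_\ell$ transfers to particle $a_i$ and we set $\sigma_{a_i,\ell} := \sigma_{b_i,\ell}$. Support cardinality and sign opposition are thus preserved at step $i$.

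The main obstacle — and the only subtle point in the induction — is verifying that the assignment $\sigma_{a_i,\ell} := \sigma_{b_i,\ell}$ in the annihilation step never contradicts a previously defined value. The argument is that, under the evolution, a particle $j$ can only leave the support of $k_\ell$ in two ways: either $j$ is itself annihilated as $b_{i'}$ at some step $i' < i$, in which case $j$ is permanently removed from $\omega_{i'}$ and cannot reappear as any later $a_{i''}$ or $b_{i''}$; or $j = a_{i'}$ absorbs its current partner via a cancelling annihilation, in which case the support of $k_\ell$ becomes empty and stays empty thereafter since no later step re-introduces an already existing momentum. In either scenario, $j$ cannot re-enter the support of $k_\ell$. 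Consequently, at step $i$, the particle $a_i$ either is already in the support of $k_\ell$ at time $i-1$ (and $\sigma_{a_i,\ell}$ is therefore already defined consistently) or has never been in it (and $\sigma_{a_i,\ell}$ is defined fresh), closing the induction.
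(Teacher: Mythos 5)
Your proof is correct and follows essentially the same strategy as the paper's: establishing the invariant that, for each $\ell$, the support $\{j:\partial_{k_\ell}q_j^i\neq0\}$ has cardinality $0$ or $2$ with opposite signs in the latter case, and then verifying that the sign assignment $\sigma_{j,\ell}$ is time-independent. You spell out the consistency of the $\sigma_{j,\ell}$ slightly differently (a particle cannot re-enter the support once it leaves, and the support once empty stays empty) while the paper phrases the same fact as a sign-propagation property between the supports at consecutive time steps; both formulations are equivalent and both close the induction.
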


\begin{proof}
From the above construction of wave vectors, we can easily check iteratively the following two properties:
\begin{enumerate}[---]
\item For all $i,\ell$, the set $\{j: \partial_{k_\ell}q_j^i\neq 0 \}$ has either $0$ or $ 2$ elements. If it has two elements, say~$j,j'$, then necessarily $\partial_{k_\ell}q_j^i=-\partial_{k_\ell}q_{j'}^i\in\{\pm1\}$.
\item For all $i,\ell$, if $\{j: \partial_{k_\ell}q_j^{i-1}\neq 0 \}\ne\varnothing$ and $\{j: \partial_{k_\ell}q_j^i\neq 0 \}\ne\varnothing$, then their intersection is non-empty. In addition, denoting by $\{j,j'\}$ the elements of the first one and $\{j,j''\}$ the elements of the second one, we have $\partial_{k_\ell}q_j^{i-1}=\partial_{k_\ell}q_j^{i}$ and $\partial_{k_\ell}q_{j'}^{i-1}=\partial_{k_\ell}q_{j''}^{i}$.
\end{enumerate} 
The desired representation follows. 
\end{proof}

With the above notation, we now introduce an indicator $\varpi_i\in\{0,1\}$ that will indicate whether contour deformations can be applied to the propagator after the $i$-th collision. Informally, we set~\mbox{$\varpi_i=1$}, and we say that $i$ is a `good' index, if the $i$-th propagator either
\begin{enumerate}[(i)]
\item involves a velocity variable that is averaged over, or
\item involves a velocity variable that is not averaged over but is associated with a modified wave vector.
\end{enumerate}
In both cases, contour deformation can be used to obtain an $O(1)$ bound on the propagator. This is clear in case~(i). In case~(ii), this follows as in Step~3 of the proof of Lemma~\ref{lem:R1N-m02}: although the velocity variable is not directly averaged, it is integrated over when taking the $L^2$ norm, and the modification of the wave vector still enables a suitable deformation argument. A more precise definition of this indicator $\varpi_i$ is as follows.

\begin{defin}
For $1\le i< n$, we define
\[\varpi_i\,:=\,\left\{\begin{array}{lll}
1&:&\text{if $\omega_i\setminus \{0,\ldots,m\}\ne\varnothing$},\\
1&:&\text{if there is a couple $(j,\ell)$ with $1\le j\le m<\ell\le S$ and $c^i_{j,\ell} = 1$},\\
0&:&\text{otherwise},
\end{array}\right.\]
where we recall that the index set $\omega_i$ is defined in~\eqref{eq:def-omj}.
\end{defin}

To select suitable directions for contour deformations, we rely on the following elementary observation from linear algebra.

\begin{lem}\label{lem:algebre lineaire de sup}
Let $d\ge n-1$ and $u_1,\ldots u_n\in\R^d$ be affinely independent.
Let $\tilde{u}_1$ be the orthogonal projection of $u_1$ onto the vector space $\Span(u_2-u_1,\cdots,u_n-u_1)$, and define
\begin{equation*}
\mathfrak{o}(u_1,\cdots,u_n) \,:=\, \frac{u_1-\tilde{u}_1}{|u_1-\tilde{u}_1|},
\end{equation*}
which is a unit normal vector to the affine subspace $\aff(u_1,\ldots,u_n)$.
Then for all $1\le i\le n$,
\begin{equation*}
u_i\cdot\mathfrak{o}(u_1,\cdots,u_n) = \frac{n|\!\conv(0,u_1,\cdots,u_n)|_n}{|\!\conv(u_1,\cdots,u_n)|_{n-1}},
\end{equation*}
where $|\cdot|_j$ denotes the $j$-dimensional Hausdorff measure and $\conv$ the convex hull.
\end{lem}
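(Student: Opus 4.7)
\medskip

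The plan is to reduce the claim to a pure piece of Euclidean geometry: the identity between the height of a simplex and the ratio of its volume to the base volume.

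First I will verify that $\mathfrak{o}$ is a unit vector orthogonal to the affine subspace $\aff(u_1,\ldots,u_n)$. By construction $\tilde u_1\in V:=\Span(u_2-u_1,\ldots,u_n-u_1)$ and $u_1-\tilde u_1\perp V$, so $\mathfrak{o}\perp V$. Since $\aff(u_1,\ldots,u_n)=u_1+V$ is a translate of $V$, the direction $\mathfrak{o}$ is indeed a unit normal to that affine subspace. The affine independence of $u_1,\ldots,u_n$ means $\dim V=n-1$, so this normal direction is well-defined provided $0\notin\aff(u_1,\ldots,u_n)$; in the degenerate case $u_1-\tilde u_1=0$ both sides of the claimed identity vanish and there is nothing to prove.

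Next I would show that $u_i\cdot\mathfrak{o}$ is the same for every $i$. For $i\ge2$ we write $u_i=u_1+(u_i-u_1)$ with $u_i-u_1\in V$, so $(u_i-u_1)\cdot\mathfrak{o}=0$ and hence $u_i\cdot\mathfrak{o}=u_1\cdot\mathfrak{o}$. Writing $u_1=\tilde u_1+(u_1-\tilde u_1)$ with $\tilde u_1\in V$ perpendicular to $u_1-\tilde u_1$, I compute
\begin{equation*}
u_1\cdot\mathfrak{o}\,=\,\frac{u_1\cdot(u_1-\tilde u_1)}{|u_1-\tilde u_1|}\,=\,\frac{|u_1-\tilde u_1|^2}{|u_1-\tilde u_1|}\,=\,|u_1-\tilde u_1|.
\end{equation*}
Thus $u_i\cdot\mathfrak{o}=|u_1-\tilde u_1|$ for every $1\le i\le n$, which is precisely the Euclidean distance from the origin to the affine subspace $\aff(u_1,\ldots,u_n)$.

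Finally I would invoke the standard base-times-height formula for the volume of a simplex. The set $\conv(0,u_1,\ldots,u_n)$ is a pyramid with apex $0$ and base $\conv(u_1,\ldots,u_n)$, the base lying in the $(n-1)$-dimensional affine subspace $\aff(u_1,\ldots,u_n)$. Its $n$-dimensional volume therefore equals
\begin{equation*}
|\!\conv(0,u_1,\ldots,u_n)|_n\,=\,\frac1n\,|\!\conv(u_1,\ldots,u_n)|_{n-1}\cdot\dist(0,\aff(u_1,\ldots,u_n)).
\end{equation*}
Combining this with the computation $u_i\cdot\mathfrak{o}=\dist(0,\aff(u_1,\ldots,u_n))$ from the previous step yields the announced formula. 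The only mild point to watch is ensuring that $u_1,\ldots,u_n$ and the origin together span an $n$-dimensional simplex so that the pyramid formula applies; this is automatic when $|u_1-\tilde u_1|\ne0$, i.e.\ when the right-hand side of the identity is nonzero. The whole argument is essentially a verification and I do not anticipate a real obstacle.
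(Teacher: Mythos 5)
Your proof is correct and follows essentially the same route as the paper: apply the base-times-height formula for the pyramid $\conv(0,u_1,\ldots,u_n)$, then identify $\dist(0,\aff(u_1,\ldots,u_n))$ with $u_i\cdot\mathfrak{o}$. You merely supply the short verification (that $u_i\cdot\mathfrak{o}=|u_1-\tilde u_1|$ independently of $i$, with the correct sign) which the paper leaves implicit.
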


\begin{proof}
The standard formula for the volume of a pyramid yields
\[|\!\conv(0,u_1,\ldots,u_n)|_n\,=\,\tfrac1n|\!\conv(u_1,\ldots,u_n)|_{n-1}\dist(0,\aff(u_1,\ldots,u_n)).\]
As $\dist(0,\aff(u_1,\ldots,u_n))=u_i\cdot \mathfrak o(u_1,\ldots,u_n)$, the claim follows.
\end{proof}

Using this lemma, complex deformations will allow us to bound propagators by inverse powers of simplex volumes $|\!\conv(0,k_{j_1},\ldots,k_{j_n})|_n$. The next integrability result shows precisely when such inverse powers are integrable --- this is the source of the dimension restriction appearing in our results.

\begin{lem}\label{lem:integ-conv}
Let $d\ge n-1$ and $s\in\R$. The function $(k_1,\cdots,k_n)\mapsto |\!\conv(0,k_1,\cdots,k_n)|_n^s$ is locally integrable on $(\R^d)^n$ if $s>n-1-d$.
\end{lem}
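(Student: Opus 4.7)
The approach is an induction on $n$ based on the recursive volume formula
\[|\!\conv(0,k_1,\ldots,k_n)|_n \,=\, \tfrac1n\,|\!\conv(0,k_1,\ldots,k_{n-1})|_{n-1}\,\dist(k_n,V),\qquad V:=\Span(k_1,\ldots,k_{n-1}).\]
Fixing $k_1,\ldots,k_{n-1}$ generic (so that $\dim V=n-1$, which holds a.e.\@ as soon as $d\ge n-1$) and decomposing the integration variable as $k_n=v+w$ with $v\in V$ and $w\in V^\perp$, the distance reduces to $|w|$. Since $\dim V^\perp=d-n+1$, Fubini produces, on any bounded domain $B$,
\[\int_{B}|\!\conv(0,k_1,\ldots,k_n)|_n^s\,dk_n \,\lesssim_B\, |\!\conv(0,k_1,\ldots,k_{n-1})|_{n-1}^s\int_{\{|w|\le R\}\cap V^\perp}|w|^s\,dw,\]
and the last inner integral is finite precisely when $s+(d-n+1)>0$, that is, $s>n-1-d$.

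After this integration, the remaining integrand $|\!\conv(0,k_1,\ldots,k_{n-1})|_{n-1}^s$ has the same form with $n$ replaced by $n-1$ and the dimension unchanged. Since $s>n-1-d$ implies the weaker condition $s>(n-1)-1-d$ needed at the next step, the induction hypothesis applies and closes the argument. The base case $n=1$ is the classical local integrability of $|k_1|^s$ on $\R^d$ for $s>-d$, which matches the stated threshold.

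The only mild subtlety is the negligible set where $k_1,\ldots,k_{n-1}$ fail to be linearly independent: there $\dim V^\perp>d-n+1$, so for $s<0$ the inner integral is only better behaved, and the above estimate persists after excising this measure-zero locus. The case $s\ge0$ is trivial since the volume is locally bounded. Hence the sole genuine constraint comes from the innermost step at dimension $d-n+1$, which is exactly the threshold $s>n-1-d$ appearing in the statement; no further obstacle arises.
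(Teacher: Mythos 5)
Your proof is correct and follows essentially the same route as the paper: the recursive volume formula $|\!\conv(0,k_1,\ldots,k_n)|_n = \tfrac1n\,|\!\conv(0,k_1,\ldots,k_{n-1})|_{n-1}\,\dist(k_n,\Span(k_1,\ldots,k_{n-1}))$ is precisely the Gram--Schmidt factorization $|\!\conv|_n=\tfrac{1}{n!}|k_1^\perp|\cdots|k_n^\perp|$ used in the paper, which then integrates all perpendicular radii $|k_i^\perp|$ at once over spaces of dimension $d-i+1$. Your version simply packages the same peeling as an explicit induction on $n$; both expose the binding constraint $s+(d-n+1)>0$ from the last (lowest-dimensional) orthogonal complement.
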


\begin{proof}
Write each $k_i$ as $k_i = k_i^\|+k_i^\perp$ where $k_i^\|$ is the orthogonal projection of $k_i$ onto $\Span(k_1,\cdots,k_{i-1})$. By Gram-Schmidt,
\[|\!\det(k_1,\ldots,k_n)|=|k_1^\perp|\cdots |k_n^\perp|,\]
which entails for the volume of the simplex,
\[|\!\conv(0,k_1,\cdots,k_n)|_n =\tfrac{1}{n!}|k_1^\perp|\cdots |k_n^\perp|.\]
For almost every $k_1,\ldots,k_{i-1}$, the vector $k_i^\perp$ ranges over a space of dimension $d-i+1$. We then find
\begin{equation*}
\int_{(\R^d)^n} \mathds1_{|k_1|,\ldots,|k_n|\leq R}~ |\!\conv(0,k_1,\cdots,k_n)|_n^s\,dk_1\cdots dk_n
\,\lesssim\,\int_{[0,R]^n} r_1^{s+d-1}\cdots r_n^{s+d-n} dr_1\ldots dr_n.
\end{equation*}
Integrability thus requires $s+d-i>-1$ for all $1\le i\le n$, and the conclusion follows.
\end{proof}

With these preparations, we can now establish the following general bound on the contribution~\eqref{eq:contr-history} of a given history in the Dyson expansion.

\begin{lem}\label{lem:abstr-bnd-1}
Let $(s_i,a_i,b_i)_{1\le i\le n}$ be a history associated to an abstract as in~\eqref{eq:prop-abstr} above, and let the space dimension satisfy $d\ge6n+m\vee n$.
Then we have for all $g\in C^\infty_c(\R^d)$,
\begin{equation}
\Big\|\hat{S}^{s_1}_{a_1,b_1}\WIGdot\ldots\WIGdot\hat{S}^{s_n}_{a_n,b_n}g\Big\|
\,\lesssim\,  N^{\frac n3+\frac13\sharp\{i:\varpi_i =0\}}\|\<\nabla_{v_0}\>^{n+1}g\|.
\end{equation}
\end{lem}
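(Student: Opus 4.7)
The plan is to expand the left-to-right composition $\hat{S}^{s_1}_{a_1,b_1}\WIGdot \cdots \WIGdot \hat{S}^{s_n}_{a_n,b_n}\,g$ as a nested integral over the $S=(n+m)/2$ collision momenta $k_1,\ldots,k_S$ and the corresponding particle velocity variables, then estimate each of the $n-1$ internal renormalized propagators $\WIGdot$ using one of two complementary tools depending on the value of $\varpi_i$. When $\varpi_i=1$, Proposition~\ref{prop:box-anal} allows a complex translation of a velocity variable, trading the resolvent for an $O(1)$ factor (up to controlled weights in the momenta); when $\varpi_i=0$, no such deformation is available and we fall back on the Airy-type hypoelliptic estimate of Proposition~\ref{prop:pde}, which delivers a factor of $N^{1/3}$ per bad propagator. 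The baseline $N^{n/3}$ exponent arises from propagating one velocity derivative per propagator (consumed by the next collision operator, which carries a $\nabla_v$) through Proposition~\ref{prop:pde} in the regime $\ell=1$, $s=1$; the $\langle \nabla_{v_0}\rangle^{n+1}$ on the right-hand side collects these derivatives acting on $g$.

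For each good index $i$, the velocity variable along which to deform is dictated by the case defining~$\varpi_i=1$. In case (i), there exists a created particle index $j\in\omega_i\setminus\{0,\ldots,m\}$ whose velocity is still being integrated against a Maxwellian weight and we deform it as in Step~1 of the proof of Lemma~\ref{lem:R1N-m02}. In case (ii), some background index $1\le j\le m$ carries a wave vector $q^i_j\ne k_j$ modified by the collisions according to the representation of Lemma~\ref{lem:str-rep-qij}; pairing the contribution with a second copy of itself in the $L^2$ sense (as in Step~3 of the proof of Lemma~\ref{lem:R1N-m02}) opens the door to deforming the corresponding velocity contour. In both situations the direction $\nu_i$ must satisfy $k_\ell\cdot \nu_i>0$ for the active momenta, and this is arranged by invoking Lemma~\ref{lem:algebre lineaire de sup} on the relevant subfamily of $(q^i_j)_j$, which produces a unit normal together with the quantitative lower bound $k_\ell\cdot\nu_i\gtrsim |\mathrm{conv}(0,k_{j_1},\ldots,k_{j_r})|/|\mathrm{conv}(k_{j_1},\ldots,k_{j_r})|$.

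After all propagator estimates are collected, one obtains a nested integral over $k_1,\ldots,k_S$ whose integrand is a product of $\hat{\Vc}(k_\ell)^2$ factors (providing uniform decay), polynomial $\langle k\rangle$-weights coming from derivatives, and inverse simplex-volume factors $|\mathrm{conv}(0,k_{j_1},\ldots,k_{j_r})|^{-\alpha}$ whose total exponent can be controlled in terms of $n$ and $m$. Integrability of these inverse simplex factors against the smooth measures induced by $\hat{\Vc}$ reduces, via Lemma~\ref{lem:integ-conv}, to the dimensional threshold $d>n-1-s$ for each factor; a careful enumeration of the worst-case $r$ and $\alpha$ that may arise across the $n-1$ good propagators yields precisely the hypothesis $d\ge 6n+m\vee n$ in the statement. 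Combined with the $L^2$ norm in the free velocities, this produces the final bound $\|\langle\nabla_{v_0}\rangle^{n+1}g\|$.

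The main obstacle will be the combinatorial step of organizing the deformation scheme consistently across all $n-1$ propagators: the direction $\nu_i$ chosen at step $i$ must be compatible with the momentum configuration active at that stage, yet this configuration itself depends on which momenta have already been integrated out at earlier or later stages in the nested expansion. The key to resolving this is an inductive construction in which one processes the good indices in a prescribed order (e.g.\@ from the innermost propagator outwards), using the explicit structure of $(q^i_j)_{i,j}$ from Lemma~\ref{lem:str-rep-qij} to identify, at each step, an active momentum family on which Lemma~\ref{lem:algebre lineaire de sup} can be applied without disturbing the deformations already performed.
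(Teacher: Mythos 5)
Your high-level outline is aligned with the paper's: double the expression to compute the squared $L^2$ norm, deform the velocity contours of the good propagators ($\varpi_i=1$) via Proposition~\ref{prop:box-anal}, use the Airy-type estimate of Proposition~\ref{prop:pde} for the bad propagators ($\varpi_i=0$), and control the resulting momentum integrals via Lemmas~\ref{lem:algebre lineaire de sup} and~\ref{lem:integ-conv}. However, there is a genuine gap in the deformation step.

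You propose to choose a direction $\nu_i$ \emph{per propagator step} $i$ and to resolve compatibility issues by "an inductive construction in which one processes the good indices in a prescribed order \ldots from the innermost propagator outwards." This cannot work as stated. The contour deformation is a translation of a \emph{velocity variable} $v_j\mapsto v_j-i\nu_j$, applied once to the entire nested integral; the same translated $v_j$ appears in every propagator containing particle $j$. One therefore cannot deform $v_j$ by different amounts at different stages $i$. The paper's mechanism is a \emph{single global} choice of direction, one per velocity index $j\in[0,S]$ (not per propagator $i$), defined in terms of all momenta at once through Lemma~\ref{lem:algebre lineaire de sup}. The content of the argument is then to verify that this one fixed family $(\nu_j)_j$ produces, for every good $i$ simultaneously, a lower bound on $-\sum_j q_j^i\cdot\nu_j$ in terms of a simplex volume; the representation of Lemma~\ref{lem:str-rep-qij} (with coefficients $\sigma_{j,\ell}$, $c^i_{j,\ell}$ independent of $i$ for the signs, and the $i$-dependence confined to the indicator $c^i_{j,\ell}$) is what makes this single choice feasible. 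Your sketch identifies the pieces but replaces this crucial global compatibility argument with an inductive scheme that is inconsistent with the structure of the nested resolvents.

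Two smaller inaccuracies. First, the $N^{n/3}$ baseline does not arise from applying Proposition~\ref{prop:pde} with $\ell=1$, $s=1$ to every propagator; Proposition~\ref{prop:pde} is only invoked for bad indices, while good indices contribute no $N$-factor via Proposition~\ref{prop:box-anal}. The $N^{n/3+\frac13\sharp\{i:\varpi_i=0\}}$ is an upper bound obtained from the derivative bookkeeping across the admissible sequences $(\ell_i)_i\in\mathfrak L_n$, namely the estimate $\sum_{i:\varpi_i=0}(2+\ell_i-\ell_{i+1})\le n+\sharp\{i:\varpi_i=0\}$. Second, in case~(ii) of $\varpi_i=1$ the deformed velocity is not a background velocity $v_j$ alone but requires combining the doubled copy with the orthogonal projection $p^\perp$ onto $\operatorname{span}(k_1,\ldots,k_m)^\perp$ so that the direction is simultaneously usable for the undoubled momenta; this detail matters for the final integrability count that fixes $d\ge 6n+m\vee n$.
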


\begin{proof}
For $\omega\subset[0,S]$, we use the short-hand notation $q_\omega=(q_j)_{j\in\omega}\in (\R^d)^\omega$, and we denote by $\WIGdoT{q_\omega}$ the renormalized propagator on $\Ld^2((\mathbb{R}^d)^\omega)$ with wave vectors given by $(q_j)_{j\in\omega}$.
Given $(\hat z_1,\ldots, \hat z_m)\in\hat\Dd^m$, with the above construction~\eqref{eq:def-qij-1}--\eqref{eq:def-qij-3} of associated wave vectors $(q^i_j)_{i,j}$, the definition of collision operators yields
\begin{multline*}
\hat{S}^{s_1}_{a_1,b_1}\WIGdot\cdots\WIGdot\hat{S}^{s_n}_{a_n,b_n}g(\hat z_1,\ldots, \hat z_m)\\
\,=\,(-1)^m\sqrt{m!}\int_{\Dd^{S-m}} \prod_{i=1}^{n-1}\Big( \sqrt M(v_{b_i})\,\V(\mathfrak{q}_i)\mathfrak{q}_i\cdot\nabla_{v_{a_i}}\WIGdoT{q^i_{\omega_i}} \Big) \sqrt M(v_{b_n})\,\V(\mathfrak{q}_n)\mathfrak{q}_n\cdot\nabla_{v_{0}}g(v_0)\,d^*\hat z_{[m+1,S]},
\end{multline*}
where we recall that the index sets $\omega_i$'s are defined in~\eqref{eq:def-omj} and where we have set
\begin{equation*}
\mathfrak{q}_i  := \left\{\begin{array}{lll}
k_i&:&\text{if $s_i=1$},\\
q_{b_{i}}^{i-1}&:&\text{if $s_i=-1$}.
\end{array}\right.
\end{equation*}
In order to compute the squared norm, we double variables. For $\bar k_{m+1},\ldots,\bar k_S\in\R^d$, we consider the wave vectors $(\bar q_j^i)_{i,j}$ defined just like $(q_j^i)_{i,j}$ with $(k_1,\ldots,k_S)$ replaced by $(k_1,\ldots,k_m,\bar k_{m+1},\ldots,\bar k_S)$,
and we then define $(\bar{\mathfrak{q}}_i)_i$ accordingly.
For notational convenience, we set $\bar{v}_j=v_j$ and $\bar k_j=k_j$ for $1\le j\le m$.
We then find
\begin{multline}\label{eq:estim-hist-ex}
\Big\|\hat{S}^{s_1}_{a_1,b_1}\WIGdot\cdots\WIGdot\hat{S}^{s_n}_{a_n,b_n}g\Big\|^2
\,=\,m!\int d^* k_{[S]}d^* \bar{k}_{[m+1,S]} \prod_{i=1}^{n}\V(\mathfrak{q}_i){\V(\bar{\mathfrak{q}}_i)}\\
\times\bigg\<\int d\bar{v}_{[m+1,S]} \prod_{i=1}^{n-1}\left( \sqrt M(\bar{v}_{b_i})\,{\bar{\mathfrak{q}}}_i \cdot\nabla_{\bar{v}_{a_i}} \WIGdoT{\bar{q}^i_{\omega_i}} \right) \sqrt M(\bar{v}_{b_m})\,{\bar{\mathfrak{q}}}_m\cdot\nabla_{{v}_{0}}g(v_0),\\
 \int d{v}_{[m+1,S]}\prod_{i=1}^{n-1}\Big( \sqrt M(v_{b_i})\,{\mathfrak{q}}_i\cdot\nabla_{v_{a_i}}\WIGdoT{q^i_{\omega_i}} \Big) \sqrt M(v_{b_m})\,{\mathfrak{q}}_m\cdot\nabla_{v_{0}}g(v_0)\bigg\>_k.
\end{multline}
We start by bounding the scalar product for fixed momentum variables $k_{[S]},\bar{k}_{[m+1,S]}$. If we can choose unit vectors $\nu_1,\ldots,\nu_S,\bar{\nu}_{m+1},\cdots,\bar{\nu}_S\in\Sp^{d-1}$ such that for all $1\le i< n$,
\[\sum_{j=1}^S q_j^i\cdot\nu_j \leq 0 \quad\text{and}\quad\sum_{j= 1}^m \bar{q}_j^i\cdot\nu_j +\sum_{j= m+1}^S \bar{q}_j^i\cdot\bar{\nu}_j\geq 0,\]
then contour deformation yields
\begin{align}
&\bigg\<\int d\bar{v}_{[m+1,S]} \prod_{i=1}^{n-1}\left( \sqrt M(\bar{v}_{b_i})\,{\bar{\mathfrak{q}}}_i \cdot\nabla_{\bar{v}_{a_i}} \WIGdoT{\bar{q}^i_{\omega_i}} \right) \sqrt M(\bar{v}_{b_m})\,{\bar{\mathfrak{q}}}_m\cdot\nabla_{{v}_{0}}g(v_0),\label{eq:estim-hist-exdec}\\
&\hspace{1cm} \int d{v}_{[m+1,S]}\prod_{i=1}^{n-1}\Big( \sqrt M(v_{b_i})\,{\mathfrak{q}}_i\cdot\nabla_{v_{a_i}}\WIGdoT{q^i_{\omega_i}} \Big) \sqrt M(v_{b_m})\,{\mathfrak{q}}_m\cdot\nabla_{v_{0}}g(v_0)\bigg\>_k\nonumber\\
=~&\bigg\<\int d\bar{v}_{[m+1,S]} \prod_{i=1}^{n-1}\left( \sqrt M(\bar{v}_{b_i}-i\bar\nu_{b_i})\,{\bar{\mathfrak{q}}}_i \cdot\nabla_{\bar{v}_{a_i}}T_{-i\bar\nu_{\omega_i}}\Big[ \WIGdoT{\bar{q}^i_{\omega_i}}\Big] \right) \sqrt M(\bar{v}_{b_m}-i\bar\nu_{b_m})\,{\bar{\mathfrak{q}}}_m\cdot\nabla_{{v}_{0}}g(v_0),\nonumber\\
&\hspace{1cm} \int d{v}_{[m+1,S]}\prod_{i=1}^{n-1}\Big( \sqrt M(v_{b_i}+i\nu_{b_i})\,{\mathfrak{q}}_i\cdot\nabla_{v_{a_i}}T_{i\nu_{\omega_i}}\Big[\WIGdoT{q^i_{\omega_i}}\Big] \Big) \sqrt M(v_{b_m}+i\nu_{b_m})\,{\mathfrak{q}}_m\cdot\nabla_{v_{0}}g(v_0)\bigg\>_k,\nonumber
 \end{align}
where we have also set for notational convenience $\bar\nu_j ={\nu}_j$ for $1\le j\le m$.
Let us now make a suitable choice for these unit vectors $\nu_j,\bar\nu_j$ to perform the deformation. Denoting by~$p^\perp$ the orthogonal projection on $\mathrm{span}(k_1,\cdots,k_m)^\bot$, we define $\nu_j$ as follows:
\begin{enumerate}[---]
\item for all $1\le j\le m$,
\[\qquad\nu_j \,=\, \bar\nu_j \,:=\, \mathfrak{o}\big(-\sigma_{j,m+1}p^\perp(k_{m+1})\,,\,\ldots\,,\,-\sigma_{j,S}p^\perp(k_{S})\,,\,\sigma_{j,m+1}p^\perp(\bar{k}_{m+1})\,,\,\ldots\,,\,\sigma_{j,S}p^\perp(\bar{k}_{S})\big),\]
\item for all $m<j\le S$,
\begin{eqnarray*}
\nu_j  &:=& -\mathfrak{o}\big(\sigma_{j,1} k_1\,,\,\ldots\,,\,\sigma_{j,m}k_m\big),\\
\bar{\nu}_j &:=& \mathfrak{o}\big(\sigma_{j,1} \bar{k}_1\,,\,\ldots\,,\,\sigma_{j,m}\bar{k}_m\big),
\end{eqnarray*}
\end{enumerate}
where we recall that the $\sigma_{j,\ell}$'s are defined in the representation of wave vectors in Lemma~\ref{lem:str-rep-qij} and that the notation~$\mathfrak o(\cdot)$ is defined in Lemma~\ref{lem:algebre lineaire de sup}.
For this choice, let us now examine the resulting quantities~$\sum_jq_j^i\cdot\nu_j$ and $\sum_j\bar q_j^i\cdot\bar \nu_j$. We distinguish two cases, whether $i$ is a good index or not:
\begin{enumerate}[---]
\item {\it Case~1:} $1\le i<n$ with $\varpi_i=1$.\\
Using the representation of Lemma~\ref{lem:str-rep-qij} and the properties of $\mathfrak o(\cdot)$ in Lemma~\ref{lem:algebre lineaire de sup}, we find
{\footnotesize\begin{multline*}
\quad-\sum_{j=1}^S q_j^i \cdot\nu_j
\,=\,\sum_{j= 1}^m \sum_{\ell = 1}^S c^i_{j,\ell}\Big(-\sigma_{j,\ell}p^\perp(k_\ell)\cdot\nu_j\Big)+\sum_{j= m+1}^S \sum_{\ell = 1}^S c^i_{j,\ell}\Big(-\sigma_{j,\ell}k_\ell\cdot\nu_j\Big)\\
\,\simeq\,\sum_{j=1}^m\sum_{\ell=m+1}^Sc^i_{j,\ell}\frac{|\!\conv(0,-\sigma_{j,m+1}p^\perp(k_{m+1})\,,\,\ldots\,,\,-\sigma_{j,S}p^\perp(k_{S})\,,\,\sigma_{j,m+1}p^\perp(\bar{k}_{m+1})\,,\,\ldots\,,\,\sigma_{j,S}p^\perp(\bar{k}_{S}))|_{2(S-m)}}{|\!\conv(-\sigma_{j,m+1}p^\perp(k_{m+1})\,,\,\ldots\,,\,-\sigma_{j,S}p^\perp(k_{S})\,,\,\sigma_{j,m+1}p^\perp(\bar{k}_{m+1})\,,\,\ldots\,,\,\sigma_{j,S}p^\perp(\bar{k}_{S}))|_{2(S-m)-1}}\\
+\sum_{j= m+1}^S \sum_{\ell = 1}^S c^i_{j,\ell}\frac{|\!\conv(0,\sigma_{j,1} k_1\,,\,\ldots\,,\,\sigma_{j,m}k_m)|_{m}}{|\!\conv(\sigma_{j,1} k_1\,,\,\ldots\,,\,\sigma_{j,m}k_m)|_{m-1}}.
\end{multline*}
}By definition of $\varpi_i$, we note that the condition $\varpi_i=1$ implies that there is some couple $(j,\ell)$ with $1\le j\le m<\ell\le S$ or $j>m$ such that~$c_{j,\ell}^i=1$. Using $|\!\conv(u_1,\ldots,u_n)|_{n-1}\lesssim\langle(u_1,\ldots,u_n)\rangle^{n-1}$, the above then yields the lower bound
{\footnotesize\begin{multline*}
\quad-\sum_{j=1}^S q_j^i \cdot\nu_j
\,\gtrsim\,\min_j\frac{|\!\conv(0,-\sigma_{j,m+1}p^\perp(k_{m+1})\,,\,\ldots\,,\,-\sigma_{j,S}p^\perp(k_{S})\,,\,\sigma_{j,m+1}p^\perp(\bar{k}_{m+1})\,,\,\ldots\,,\,\sigma_{j,S}p^\perp(\bar{k}_{S}))|_{2(S-m)}}{\langle (k_{[m+1,S]},\bar{k}_{[m+1,S]})\rangle^{2(S-m)-1}}\\[-1mm]
\wedge\min_j\frac{|\!\conv(0,\sigma_{j,1} k_1\,,\,\ldots\,,\,\sigma_{j,m}k_m)|_{m}}{\langle k_{[m]}\rangle^{m-1}}.
\end{multline*}
}
\smallskip\item {\it Case~2:} $1\le i<n$ with $\varpi_i=0$.\\
Choosing $j\in\omega_i$, using Lemmas~\ref{lem:str-rep-qij} and~\ref{lem:algebre lineaire de sup} as above, we can bound
\[|q_{\omega_i}^i|\,\ge\, |q_j^i|\,\ge\,q_j^i\cdot \mathfrak{o}(\sigma_{j,1}k_1,\cdots,\sigma_{j,S}k_S)\,\gtrsim\,\frac{|\!\conv(0,k_1,\cdots,k_S)|_S}{\<k_{[S]}\>^{S-1}}.\]
\end{enumerate}
Combining both cases, we obtain
\begin{equation}\label{eq:estim-case12-inv}
\ind_{\varpi_i = 0}\tfrac{1}{|q^i_{\omega_i}|}+\ind_{\varpi_i=1}\Big(1+\big(-\textstyle{\sum_{j= 0}^S} q_j^i \cdot\nu_j\big)^{-1}\Big)\,\lesssim\,\omega(k_{[S]},\bar{k}_{[S]}),
\end{equation}
where we have set
{\small\begin{multline*}
\omega(k_{[S]},\bar{k}_{[S]}) :=\<(k_{[S]},\bar k_{[S]})\>^{n\vee m-1}\\
\times\bigg(1
+\sum_j|\!\conv(0,-\sigma_{j,m+1}p^\perp(k_{m+1})\,,\,\ldots\,,\,-\sigma_{j,S}p^\perp(k_{S})\,,\,\sigma_{j,m+1}p^\perp(\bar{k}_{m+1})\,,\,\ldots\,,\,\sigma_{j,S}p^\perp(\bar{k}_{S}))|_{2(S-m)}^{-1}\\[-3mm]
+\sum_j|\!\conv(0,\sigma_{j,1} k_1\,,\,\ldots\,,\,\sigma_{j,m}k_m)|_{m}^{-1}
+|\!\conv(0,k_1,\cdots,k_S)|_S^{-1}+|\!\conv(0,\bar k_1,\cdots,\bar k_S)|_S^{-1}\bigg).
\end{multline*}
}In order to estimate the norm 
\[\bigg\|\int d{v}_{[m+1,S]} \prod_{i=1}^{n-1}\bigg( \sqrt M(v_{b_i}+i\nu_{b_i})\,\mathfrak{q}_i\cdot\nabla_{v_{a_i}}T_{i{\nu}_{\omega_i}}\WIGdoT{q^i_{\omega_i}} \bigg) \sqrt M(v_{b_m}+i\nu_{b_m})\,\mathfrak{q}_m\cdot\nabla_{v_{a_m}}g\bigg\|_k,\]
we apply iteratively Propositions~\ref{prop:box-anal} and~\ref{prop:pde} to bound the (deformed) renormalized propagators: for the $i$-th resolvent in the product, we apply Proposition~\ref{prop:box-anal} if $\varpi_i=1$ and Proposition~\ref{prop:pde} if $\varpi_i=0$. In doing so, we need to take into account all possible ways to distribute the velocity gradients in the estimates. When considering $\ell$ derivatives of a propagator, we note that under Proposition~\ref{prop:box-anal} the bound preserves the maximal number $\ell$ of derivatives, while under Proposition~\ref{prop:pde} it requires $(\ell-1)\vee2$ derivatives instead. In order to capture all possible admissible patterns of derivative counts, we thus consider the set~$\mathfrak L_n$ of all finite sequences $(\ell_1,\ldots, \ell_n)$ such that
\begin{enumerate}[---]
\item $\ell_1=1$,
\item if $\varpi_i =1$, then $\ell_{i+1} \in [1,\ell_{i}+1]$,
\item if $\varpi_i =0$, then $ \ell_{i+1} \in [1,\ell_{i}\vee3]$.
\end{enumerate}
In these terms, we can bound
\begin{multline*}
\bigg\|\int d{v}_{[m+1,S]}\prod_{i=1}^{n-1}\bigg( \sqrt M(v_{b_i}+i\nu_{b_i})\,\mathfrak{q}_i\cdot\nabla_{v_{a_i}}T_{i{\nu}_{\omega_i}}\WIGdoT{q^i_{\omega_i}} \bigg) \sqrt M(v_{b_m}+i\nu_{b_m})\,\mathfrak{q}_m\cdot\nabla_{v_{a_m}}g\bigg\|_k\\
\hspace{-3cm}\,\lesssim\,\<k_{[S]}\>^n\sum_{(\ell_i)_i\in \mathfrak L_n} \|\<\nabla_{v_0}\>^{\ell_n}g\|\prod_{i=1}^{n-1}\bigg(\ind_{\varpi_i=0}N^{\frac{2+\ell_{i}-\ell_{i+1}}3}\<k_{[S]}\>^{\ell_i+4}|q_{\omega_i}^i|^{-2}\\[-3mm]
+\ind_{\varpi_i=1}\<k_{[S]}\>^{1+\ell_i-\ell_{i+1}}\Big(1+\big(-\textstyle{\sum_{j= 0}^S} q_j^i \cdot\nu_j\big)^{-1}\Big)^{2+\ell_i-\ell_{i+1}}\bigg).
\end{multline*}
By a careful summation argument, note that for $(\ell_i)_i\in\mathfrak L_n$ we always have
\[\sum_{i=1}^{n-1}2\vee(2+\ell_i-\ell_{i+1})\,\le\,3n,\qquad
\sum_{i=1}^{n-1}(2+\ell_i-\ell_{i+1})\mathds1_{\varpi_i=0}\le n+\sharp\{i:\omega_i=0\}.\]
Inserting this into the above, and using~\eqref{eq:estim-case12-inv}, we obtain
\begin{multline}\label{eq:estimation d un truc qua pas de nom}
\bigg\|\int d{v}_{[m+1,S]}\prod_{i=1}^{n-1}\bigg( \sqrt M(v_{b_i}+i\nu_{b_i})\,\mathfrak{q}_i\cdot\nabla_{v_{a_i}}T_{i{\nu}_{\omega_i}}\WIGdoT{q^i_{\omega_i}} \bigg) \sqrt M(v_{b_m}+i\nu_{b_m})\,\mathfrak{q}_m\cdot\nabla_{v_{a_m}}g\bigg\|_k\\
\,\lesssim\,N^{\frac n3+\frac13\sharp\{i:\omega_i=0\}}\|\<\nabla_{v_0}\>^{n+1}g\|\<k_{[S]}\>^{Cn^2}\omega(k_{[S]},\bar k_{[S]})^{3n}.
\end{multline}
Combining this with~\eqref{eq:estim-hist-ex} and~\eqref{eq:estim-hist-exdec}, we obtain
\begin{multline*}
\Big\|\hat{S}^{s_1}_{a_1,b_1}\WIGdot\cdots\WIGdot\hat{S}^{s_n}_{a_n,b_n}g\Big\|^2
\,\lesssim\,N^{\frac{2n}3+\frac23\sharp\{i:\omega_i=0\}}\|\<\nabla_{v_0}\>^{n+1}g\|^2\\[-1mm]
\times\int \<k_{[S]}\>^{Cn^2}\<\bar k_{[S]}\>^{Cn^2}\Big(\prod_{i=1}^{n}|\V(\mathfrak{q}_i)||\V(\bar{\mathfrak{q}}_i)|\Big)\omega(k_{[S]},\bar k_{[S]})^{6n}\,d^* k_{[S]}d^* \bar{k}_{[m+1,S]}.
\end{multline*}
It remains to check the integrability in $k$: as the subspace $\mathrm{span}(k_1,\ldots,k_m)^\perp$ has dimension $\ge d-m$, it follows from Lemma~\ref{lem:integ-conv} that the map $\omega(k_{[n]},\bar{k}_{[k]})^{6n}$ is locally summable if
\[\left\{\begin{array}{l}
-6n>2(S-m)-1-(d-m),\\
-6n>\max(S,m)-1-d.
\end{array}\right.\]
This holds whenever $d\ge6n+m\vee n$, and the conclusion follows.
\end{proof}

To apply the above estimate, we must control the number of bad indices~$i$ with $\varpi_i=0$. This counting problem is subtle, especially because it depends on the chosen history. To streamline the analysis, we introduce a special class of subsequences of the abstract, which we call \emph{tents}.

\begin{defin}
Given an abstract $(s_1,\ldots,s_n)$, a contiguous sub-sequence $(s_i)_{\alpha\le i\le\beta}$ is called a {\it tent} if it satisfies the following recursive rules:
\begin{enumerate}[---]
\item \emph{Base cases:} If $(s_i)_{\alpha\le i\le\beta}=(1,-1,-1)$, then it is a tent.
If $\sum_{i=\alpha}^\beta s_i=0$ and $\sum_{i=\alpha}^j s_i>0$ for all~$\alpha\le j<\beta$, then it is also a tent.
\smallskip\item \emph{Recursive step:} If $(s_i)_{\alpha\le i\le\beta}$ starts with $(s_\alpha,s_{\alpha+1})= (1,-1)$ and if the remainder $(s_{\alpha+2},\cdots,s_\beta)$ is already a tent, then $(s_i)_{\alpha\le i\le\beta}$ is also a tent.
\smallskip\item \emph{Exclusion:} If $(s_i)_{\alpha\le i\le\beta}=(1,-1)$, then it is not a tent.
\end{enumerate}
\end{defin}

With this definition, the key observation is that all indices contained inside a tent are guaranteed to be good, regardless of the associated history. While not every good index belongs to a tent, this will allow us to reduce the counting problem to a much more tractable combinatorial question.

\begin{lem}\label{lem:carac-tent-varpi}
Let $(s_1,\ldots,s_n)$ be an abstract and let $(s_i)_{\alpha\le i\le\beta}$ be a tent. Then, for any associated history, we have $\varpi_i = 1$ for all $\alpha\le i<\beta$
\end{lem}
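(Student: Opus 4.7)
The plan is to proceed by structural induction on the recursive definition of tents, showing in each case that every index $i$ with $\alpha\le i<\beta$ satisfies $\varpi_i=1$. The key preliminary observation I would record is that any freshly created index strictly exceeds $m$: indeed, at a creation step $j$ one has $b_j=1+\max\cup_{j'<j}\omega_{j'}\ge m+1$ since $\omega_0\supseteq\{0,\ldots,m\}$.

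First I would dispatch the base case in which the partial sums $\Sigma_i:=\sum_{j=\alpha}^{i}s_j$ are strictly positive on $\alpha\le i<\beta$ and vanish at $i=\beta$. For such $i$ one has $|\omega_i|=|\omega_{\alpha-1}|+\Sigma_i>|\omega_{\alpha-1}|$, so $\omega_i\setminus\omega_{\alpha-1}$ must be nonempty; by the preliminary observation each of its elements is $>m$, hence $\omega_i\setminus\{0,\ldots,m\}\ne\varnothing$ and condition~(i) in the definition of $\varpi_i$ applies.

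Next I would treat the base case $(s_\alpha,s_{\alpha+1},s_{\alpha+2})=(1,-1,-1)$. For $i=\alpha$ the size argument above still applies since $\Sigma_\alpha=1$. For $i=\alpha+1$ I would split according to whether $b_{\alpha+1}=b_\alpha$. If not, the freshly created particle $b_\alpha>m$ survives into $\omega_{\alpha+1}$ and condition~(i) holds. If $b_{\alpha+1}=b_\alpha$, the history constraint $(a_\alpha,b_\alpha)\ne(a_{\alpha+1},b_{\alpha+1})$ forces $a_{\alpha+1}\ne a_\alpha$, and the bookkeeping in \eqref{eq:def-qij-2}--\eqref{eq:def-qij-3} shows that the momentum freshly introduced at step~$\alpha$ (carrying a label $\ell>m$) contributes to both $q^{\alpha+1}_{a_\alpha}$ and $q^{\alpha+1}_{a_{\alpha+1}}$. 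A short case analysis on where $a_\alpha$ and $a_{\alpha+1}$ sit among $\{0\}$, $\{1,\ldots,m\}$, or $\{m+1,\ldots\}$ then concludes: if either lies in $\{1,\ldots,m\}$ condition~(ii) applies directly; the case $a_\alpha=a_{\alpha+1}=0$ is forbidden since combined with $b_{\alpha+1}=b_\alpha$ it would force $(a_\alpha,b_\alpha)=(a_{\alpha+1},b_{\alpha+1})$; and if either lies strictly above $m$, that index already sits in $\omega_{\alpha-1}\supseteq\omega_{\alpha+1}$, yielding condition~(i).

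For the recursive step, in which $(s_\alpha,s_{\alpha+1})=(1,-1)$ and $(s_{\alpha+2},\ldots,s_\beta)$ is itself a tent, the indices $\alpha+2\le i<\beta$ are covered by the induction hypothesis, while $i=\alpha$ and $i=\alpha+1$ are handled exactly as in the previous paragraph since that argument only used $(s_\alpha,s_{\alpha+1})=(1,-1)$ and was insensitive to $s_{\alpha+2}$. I expect the main obstacle, and the only delicate part of the combinatorics, to be precisely the sub-case $b_{\alpha+1}=b_\alpha$ above: there the freshly created particle is annihilated at the very next step, condition~(i) is not automatic, and one must combine the history constraint with the momentum bookkeeping to rule out every simultaneous failure of conditions~(i) and~(ii).
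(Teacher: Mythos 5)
Your proof is correct and follows essentially the same strategy as the paper: structural induction on the recursive definition of a tent, with base case~2 handled by the cardinality argument $\sharp\omega_i>\sharp\omega_{\alpha-1}$ and base case~1 (together with the recursive step) reduced to analyzing the pair $(s_\alpha,s_{\alpha+1})=(1,-1)$, the crux being the recollision $b_{\alpha+1}=b_\alpha$. Where the paper enumerates six diagrammatic sub-histories for the $(1,-1)$ block, you instead split cleanly on whether $b_{\alpha+1}=b_\alpha$; in the non-recollision branch the persistence of $b_\alpha>m$ in $\omega_{\alpha+1}$ gives condition~(i) at once, and in the recollision branch you split on the positions of $a_\alpha,a_{\alpha+1}$ to land either on condition~(i) (one of them exceeds $m$) or on condition~(ii) (one of them lies in $\{1,\dots,m\}$ and carries the new momentum $k_{b_\alpha}$). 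This is a somewhat more economical organization than the paper's diagram enumeration, and your explicit check that the particle carrying the modified momentum has index in $\{1,\dots,m\}$ is a bit more careful than the paper's terse treatment of the recollision diagram. One small redundancy: you invoke the history constraint twice in the recollision case — once to get $a_{\alpha+1}\ne a_\alpha$, and again to rule out $a_\alpha=a_{\alpha+1}=0$, which is already excluded by the first use.
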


\begin{proof}
We treat the three defining cases.
\begin{enumerate}[---]
\item\emph{Base case~1:} assume $(s_i)_{\alpha\le i\le\beta}=(1,-1,-1)$.\\
We already have $\varpi_{\alpha}=1$ and it remains to show $\varpi_{\alpha+1}=1$. The subsequence $(1,-1)$ corresponds to one of the following possible sub-histories, represented diagrammatically,
\[{\tiny\begin{tikzpicture}[baseline={([yshift=-1ex]current bounding box.center)},scale=0.8]
\begin{scope}[every node/.style={circle,draw,fill=white,inner sep=0pt,minimum size=3pt}]
\node (1) at (0,0.6) {};
\node (2) at (0.8,0.9) {};
\end{scope}
\node (a) at (-0.3,0.6) {$j_1$};
\node (b) at (1.1,0.9) {$j'$};
\draw (1) -- (0,0.9);
\draw (2) -- (0.8,0.6);
\draw [decorate,decoration={snake,amplitude=1pt,segment length=3pt}] (1) -- (0.8,0.6);
\draw [decorate,decoration={snake,amplitude=1pt,segment length=3pt}] (2) -- (0,0.9);
\draw [decorate,decoration={snake,amplitude=1pt,segment length=3pt}] (0,0.15) -- (0.8,0.15);
\draw[dotted] (0.4,0.25) -- (0.4,0.55);
\end{tikzpicture}}
~~{\tiny\begin{tikzpicture}[baseline={([yshift=-1ex]current bounding box.center)},scale=0.8]
\begin{scope}[every node/.style={circle,draw,fill=white,inner sep=0pt,minimum size=3pt}]
\node (1) at (0,0.9) {};
\node (2) at (0.8,0.9) {};
\end{scope}
\node (a) at (-0.3,0.6) {$j_1$};
\node (b) at (-0.3,0.9) {$j_2$};
\node (c) at (1,1.2) {$j'$};
\draw (1) -- (0,1.2);
\draw (2) -- (0.8,0.6);
\draw [decorate,decoration={snake,amplitude=1pt,segment length=3pt}] (1) -- (2);
\draw [decorate,decoration={snake,amplitude=1pt,segment length=3pt}] (0,0.6) -- (0.8,0.6);
\draw [decorate,decoration={snake,amplitude=1pt,segment length=3pt}] (0,1.2) -- (0.8,1.2);
\draw [decorate,decoration={snake,amplitude=1pt,segment length=3pt}] (0,0.15) -- (0.8,0.15);
\draw[dotted] (0.4,0.25) -- (0.4,0.55);
\end{tikzpicture}}
~~{\tiny\begin{tikzpicture}[baseline={([yshift=-1ex]current bounding box.center)},scale=0.8]
\begin{scope}[every node/.style={circle,draw,fill=white,inner sep=0pt,minimum size=3pt}]
\node (1) at (0,0.6) {};
\node (2) at (0.8,0.9) {};
\end{scope}
\node (a) at (-0.3,0.6) {$j_1$};
\node (b) at (-0.3,1.2) {$j_2$};
\node (c) at (1.1,0.9) {$j'$};
\draw (1) -- (0,0.9);
\draw (2) -- (0.8,1.2);
\draw [decorate,decoration={snake,amplitude=1pt,segment length=3pt}] (1) -- (0.8,0.6);
\draw [decorate,decoration={snake,amplitude=1pt,segment length=3pt}] (0,0.9) -- (2);
\draw [decorate,decoration={snake,amplitude=1pt,segment length=3pt}] (0,1.2) -- (0.8,1.2);
\draw [decorate,decoration={snake,amplitude=1pt,segment length=3pt}] (0,0.15) -- (0.8,0.15);
\draw[dotted] (0.4,0.25) -- (0.4,0.55);
\end{tikzpicture}}
~~{\tiny\begin{tikzpicture}[baseline={([yshift=-1ex]current bounding box.center)},scale=0.8]
\begin{scope}[every node/.style={circle,draw,fill=white,inner sep=0pt,minimum size=3pt}]
\node (1) at (0,0.9) {};
\node (2) at (0.8,0.6) {};
\end{scope}
\node (a) at (-0.3,0.6) {$j_1$};
\node (b) at (-0.3,0.9) {$j_2$};
\node (c) at (1,1.2) {$j'$};
\draw (1) -- (0,1.2);
\draw (2) -- (0.8,0.9);
\draw [decorate,decoration={snake,amplitude=1pt,segment length=3pt}] (1) -- (0.8,0.9);
\draw [decorate,decoration={snake,amplitude=1pt,segment length=3pt}] (0,0.6) -- (2);
\draw [decorate,decoration={snake,amplitude=1pt,segment length=3pt}] (0,1.2) -- (0.8,1.2);
\draw [decorate,decoration={snake,amplitude=1pt,segment length=3pt}] (0,0.15) -- (0.8,0.15);
\draw[dotted] (0.4,0.25) -- (0.4,0.55);
\end{tikzpicture}}
~~{\tiny\begin{tikzpicture}[baseline={([yshift=-1ex]current bounding box.center)},scale=0.8]
\begin{scope}[every node/.style={circle,draw,fill=white,inner sep=0pt,minimum size=3pt}]
\node (1) at (0,1.2) {};
\node (2) at (0.8,0.9) {};
\end{scope}
\node (a) at (-0.3,0.6) {$j_1$};
\node (b) at (-0.3,0.9) {$j_2$};
\node (c) at (-0.3,1.2) {$j_3$};
\node (d) at (1,1.5) {$j'$};
\draw (1) -- (0,1.5);
\draw (2) -- (0.8,0.6);
\draw [decorate,decoration={snake,amplitude=1pt,segment length=3pt}] (1) -- (0.8,1.2);
\draw [decorate,decoration={snake,amplitude=1pt,segment length=3pt}] (0,1.5) -- (0.8,1.5);
\draw [decorate,decoration={snake,amplitude=1pt,segment length=3pt}] (0,0.6) -- (0.8,0.6);
\draw [decorate,decoration={snake,amplitude=1pt,segment length=3pt}] (0,0.9) -- (2);
\draw [decorate,decoration={snake,amplitude=1pt,segment length=3pt}] (0,0.15) -- (0.8,0.15);
\draw[dotted] (0.4,0.25) -- (0.4,0.55);
\end{tikzpicture}}
~~{\tiny\begin{tikzpicture}[baseline={([yshift=-1ex]current bounding box.center)},scale=0.8]
\begin{scope}[every node/.style={circle,draw,fill=white,inner sep=0pt,minimum size=3pt}]
\node (1) at (0,0.6) {};
\node (2) at (0.8,1.2) {};
\end{scope}
\node (a) at (-0.3,0.6) {$j_1$};
\node (b) at (-0.3,1.2) {$j_2$};
\node (c) at (1,0.9) {$j'$};
\draw (1) -- (0,0.9);
\draw (2) -- (0.8,0.9);
\draw [decorate,decoration={snake,amplitude=1pt,segment length=3pt}]  (1) -- (0.8,0.6);
\draw [decorate,decoration={snake,amplitude=1pt,segment length=3pt}]  (0,1.2) -- (2);
\draw [decorate,decoration={snake,amplitude=1pt,segment length=3pt}]  (0,0.9) -- (0.8,0.9);
\draw [decorate,decoration={snake,amplitude=1pt,segment length=3pt}] (0,0.15) -- (0.8,0.15);
\draw[dotted] (0.4,0.25) -- (0.4,0.55);
\end{tikzpicture}}\]
where the labels indicate the velocity variables.
In the five first diagrams, we find $j'\in\omega_\alpha\cap\omega_{\alpha+1}$ and $j'\notin\omega_{\alpha-1}$, hence $\varpi_{\alpha+1}=1$. In the sixth diagram, the wave vector of particle~$j_1$ after the $(\alpha+1)$-th collision is $q^{\alpha+1}_{j_1}=q^{\alpha}_{j_1}-k_{j'}$, hence again $\varpi_{\alpha+1}=1$

\smallskip\item\emph{Base case~2:} assume that $\sum_{i=\alpha}^\beta s_i=0$ and $\sum_{i=\alpha}^j s_i>0$ for all~$\alpha\le j<\beta$.\\
Recalling that $\sharp\omega_i=1+n+\sum_{j=1}^{i} s_j$, we deduce $\sharp\omega_i>\sharp\omega_{\alpha-1}$ for all $\alpha\le i<\beta$. Thus, for each such $i$, there exists $j_i\in \omega_i\setminus\omega_{\alpha-1}$, necessarily with $j_i>m$, and hence $\varpi_i=1$.

\smallskip\item\emph{Recursive step:}
This can be treated as Base case 1.
\end{enumerate}
This concludes the proof.
\end{proof}

With the above construction of tents, we are now in position to prove an easy upper bound on the number of `bad' indices $i$ with $\varpi_i=0$.

\begin{lem}\label{lem:count-om0}
Let $(s_1,\ldots,s_n)$ be an abstract with $s_1 = 1$ and $m+\sum_{i=1}^n s_i = 0$. Then, for any associated history, we have
\begin{equation}
\sharp\{i:\varpi_i = 0\}\,\le\,\tfrac14n+\tfrac34m.
\end{equation}
\end{lem}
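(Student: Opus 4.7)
Two observations underlie the approach. First, Lemma~\ref{lem:carac-tent-varpi} guarantees that $\varpi_i = 1$ for every index $i$ strictly inside a tent. Second, every creation step is automatically good: if $s_i = +1$, then $b_i = 1 + \max\bigcup_{j<i}\omega_j > m$, so $\omega_i$ contains a label outside $\{0,\ldots,m\}$, which violates condition~(i) in the definition of $\varpi_i = 0$. Consequently bad indices can arise only from annihilation steps lying outside every tent interior.

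The plan is to construct, by strong induction on $n$, a family of disjoint tents $[\alpha_j,\beta_j]\subseteq\{1,\ldots,n\}$ covering every creation step of the abstract, via two mutually recursive claims: a weaker one that such a decomposition exists for arbitrary abstracts, and a strengthened one that, whenever $s_1 = +1$, the decomposition can be chosen with its first tent starting at position~$1$. The weaker claim reduces to the strengthened one by peeling a leading annihilation (shifting $(n,m)\mapsto(n-1,m-1)$ without losing a creation). For the strengthened claim, I distinguish three sub-cases based on the opening pattern $(s_1,s_2,s_3)$: if it is $(+,-,-)$, positions $1,2,3$ form a base-case-1 tent and I apply the weaker hypothesis to the residual abstract of degree $m-1$; if $(s_1,s_2) = (+,+)$, then $S_2 = m+2 > S_n = 0$ forces a first return to level $m$ at some time $k\ge 4$, making $(s_1,\ldots,s_k)$ a base-case-2 tent to which the weaker hypothesis is appended; if $(s_1,s_2,s_3) = (+,-,+)$, I apply the strengthened hypothesis to the sub-abstract $(s_3,\ldots,s_n)$ (which has degree $m$ and starts with $+$), and extend its first tent leftward by prepending $(s_1,s_2) = (+,-)$ via the recursive tent rule, gaining length and interior~$2$.

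Once the decomposition is in hand, the arithmetic is transparent. A tent containing $p$ creations has length $L=2p$ or $2p+1$, according as it descends from base-case~2 or base-case~1, so its interior length satisfies $L - 1 \ge \tfrac{3}{2} p$ in all cases (using $p \ge 2$ for base-case-2 tents). Summing over all tents, the total interior length is at least $\tfrac{3}{2}c$ where $c=(n-m)/2$ is the total number of creations, hence the complement of the union of tent interiors in $\{1,\ldots,n-1\}$ has size at most $(n-1)-\tfrac{3}{2}c = \tfrac{n+3m-4}{4} \le \tfrac{n+3m}{4}$, which bounds the number of bad indices. The step I expect to require the most care is the $(+,-,+)$ sub-case, where I need the inductive decomposition of the sub-abstract to place a tent at its first position so that the recursive tent rule applies; the configurations where this could fail correspond to very short sub-abstracts admitting no tent at all, which are ruled out by strict positivity of the intermediate partial sums in the original abstract, except in boundary cases treated by direct inspection.
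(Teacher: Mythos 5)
Your proof is correct and follows the same essential strategy as the paper's: decompose the abstract into tents, invoke Lemma~\ref{lem:carac-tent-varpi} to conclude that tent interiors consist of good indices, and control the complement by arithmetic on tent lengths. The two treatments differ in how they carry this out. The paper asserts ``by construction'' that the abstract decomposes uniquely into tents separated by down-steps (first tent at position $1$) and then counts tents by type ($T_0$ of sum $0$, $T_1$ of sum $-1$) and inter-tent down-steps $D$, using $m=T_1+D$ and $3T_1+4T_0+D\le n$; you instead supply a genuine strong induction for the existence of a tent cover of all creations, with a strengthened inductive statement pinning the first tent to position $1$ when $s_1=+1$, and then allocate a per-creation interior credit $L-1\ge\tfrac32 p$ (valid because the exclusion rule forces $p\ge2$ in the base-case-$2$ family), summing to $\tfrac32 c$ with $c=(n-m)/2$ to obtain $\sharp\{i:\varpi_i=0\}\le\tfrac14(n+3m-4)$, marginally sharper than the claimed bound. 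Your version is arguably tighter in two respects: it proves the decomposition's existence rather than asserting it, and it replaces the paper's equality $\sharp\{i:\varpi_i=0\}=T_0+T_1+D$ (which should really be an inequality) with an honest upper bound by the complement of the tent interiors.
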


\begin{proof}
Let $(s_1,\ldots,s_n)$ be an abstract as in the statement.
By construction, we can decompose it uniquely into tents separated by down steps. That is, there exist $r\ge1$ and a double sequence
\[1=\alpha_1<\beta_1<\ldots<\alpha_r<\beta_r<\alpha_{r+1}=n+1,\]
such that:
\begin{enumerate}[---]
\item each block $(s_i)_{\alpha_u\le i\le\beta_u}$ is a tent;
\item between tents, we only have down steps: $s_i=-1$ for $\beta_u\le i<\alpha_{u+1}$.
\end{enumerate}
By definition, every tent has either
\[\sum_{i=\alpha_u}^{\beta_u}s_i=0~~\text{(type $0$)},\quad\text{or}\quad\sum_{i=\alpha_u}^{\beta_u}s_i=-1~~\text{(type $1$)}.\]
Let $T_0$ and $T_1$ denote the number of tents of type $0$ and type $1$, respectively, and let 
\[D=\sum_{u=1}^r(\alpha_{u+1}-\beta_u-1)\]
be the total number of down steps outside tents. Then
\[m\,=\,T_1+D.\]
By Lemma~\ref{lem:carac-tent-varpi}, the only bad indices lie either at the last index of a tent or in the down steps, whence
\[\sharp\{i:\varpi_i=0\}=T_0+T_1+D.\]
Moreover, we note that tents of type $0$ have length $\ge4$ and that tents of type $1$ have length $\ge3$, so
\[3T_1+4T_0+D\le n.\]
Combining these results, we can bound
\[\sharp\{i:\varpi_i=0\}\,=\,T_0+T_1+D\,\le\,\tfrac14n+\tfrac14T_1+\tfrac34D\,\le\,\tfrac14n+\tfrac34(T_1+D)\,\le\,\tfrac14n+\tfrac34m,\]
which proves the claim.
\end{proof}

We can now conclude the proof of Proposition~\ref{prop:estimation erreur n>1}: combining the bound of Lemma~\ref{lem:abstr-bnd-1} on the contribution of each history, together with the bound of Lemma~\ref{lem:count-om0} on the number of bad indices, the conclusion of Proposition~\ref{prop:estimation erreur n>1} follows.\qed

\subsection{Proof of Theorem~\ref{th:main}}
Recall that we use the ansatz~\eqref{eq:def-tilgm} for a suitably chosen admissible set~$\Omega$ of abstracts. We define $\Omega$ as the smallest admissible set containing all abstracts of length non-bigger than
\[K:=3m_0+9.\]
By definition of admissible sets of abstracts, this means that $\Omega$ consists of all abstracts of the form
\[(\underbrace{-1,\ldots,-1}_{\ell~\text{times}},s_1,\ldots,s_n)\]
where $s_i\in\{\pm1\}$, $0\le n\le K$, $0\le m\le m_0$, and $0\le\ell\le m_0-m$, such that
\[m+\sum_{i=1}^ns_i=0,\qquad 0<m+\sum_{i=1}^js_i\le m_0\quad\text{for all $1\le j<n$.}\]
In particular, any abstract in $\Omega$ has length $\le K+m_0$, and any abstract in $\partial\Omega$ has length in the interval $[K+1,K+m_0+1]$.
We now estimate the remainder terms appearing in the approximate hierarchy satisfied by this ansatz. By Lemma~\ref{lem:approx-hier}, these terms can be written as sums over $\partial\Omega$. Applying Proposition~\ref{prop:estimation erreur n>1}, we obtain for all $1\le m\le m_0$,
\begin{eqnarray*}
\|\Lc R_m^{N,m_0}\|
&\lesssim&\frac{t_N}{N^{3/2}}\mathds1_{m<m_0}\sum_{n\ge3}N^{-\frac12(n-3)}\sum_{(s_1,\ldots,s_n)\in\partial\Omega_m}\|\Ic_{(s_1,\ldots,s_n)}\Lc\tilde g_0^N\|\\
&\lesssim&\frac{t_N}{N^{3/2}}\mathds1_{m<m_0}\sum_{n\ge3}\mathds1_{\exists (s_1,\ldots,s_n)\in\partial\Omega}N^{-\frac12(n-3)}N^{\frac5{12}n+\frac14m}\|\<\nabla_{v_0}\>^{n+1}\Lc\tilde g_0^N\|\\
&\lesssim&t_NN^{\frac14(m_0-1)-\frac1{12}(K+1)}\|\<\nabla_{v_0}\>^{K+m_0+2}\Lc\tilde g_0^N\|,
\end{eqnarray*}
provided that the space dimension satisfies $d\ge 7(K+m_0+1)=28m_0+70$. By the choice of~$K$, this means for all $1\le m\le m_0$,
\[\|\Lc R_m^{N,m_0}\|\,\lesssim\,t_NN^{-1-\frac{1}{12}}\|\<\nabla_{v_0}\>^{4m_0+11}\Lc\tilde g_0^N\|,\]
and thus, by Lemma~\ref{lem:estim-g0N}, for $t_N=N$,
\begin{equation*}
\3\Lc R_m^{N,m_0}\3
\,\lesssim\,N^{-\frac1{12}}\|\<\nabla_{v_0}\>^{4m_0+10}\mathfrak g\|.
\end{equation*}
For $m=0$, recall from Lemma~\ref{lem:approx-hier} that $R_0^{N,m_0}$ splits into two contributions. The first is a sum over~$\partial\Omega$, estimated exactly as above. The second term is
\begin{equation*}
-\frac{t_N}{N^2}\mathds1_{m_0\ge2}\Big({\tiny\begin{tikzpicture}[baseline={([yshift=-1ex]current bounding box.center)},scale=0.8]
\begin{scope}[every node/.style={circle,draw,fill=white,inner sep=0pt,minimum size=3pt}]
\node (1) at (0,0) {};
\node (2) at (1.5,0) {};
\node (3) at (0.5,0.33) {};
\node (4) at (1,0.33) {};
\end{scope}
\draw (0.5,0) -- (1) -- (0,0.33) -- (3) -- (0.5,0.66);
\draw (1) -- (0,0.33);
\draw (2) -- (1.5,0.33);
\draw (4) -- (1,0.66);
\draw [decorate,decoration={snake,amplitude=1pt,segment length=3pt}] (3) -- (4) -- (1.5,0.33);
\draw [decorate,decoration={snake,amplitude=1pt,segment length=3pt}] (0.5,0) -- (2);
\draw [decorate,decoration={snake,amplitude=1pt,segment length=3pt}] (0.5,0.66) -- (1,0.66);
\end{tikzpicture}}
+
{\tiny\begin{tikzpicture}[baseline={([yshift=-1ex]current bounding box.center)},scale=0.8]
\begin{scope}[every node/.style={circle,draw,fill=white,inner sep=0pt,minimum size=3pt}]
\node (1) at (0,0) {};
\node (2) at (0.5,0) {};
\node (3) at (1,0) {};
\node (4) at (1.5,0) {};
\end{scope}
\draw (0.5,0.33) -- (2) -- (1) -- (0,0.66) -- (0.5,0.66);
\draw (3) -- (1,0.33);
\draw (1.5,0.66) -- (4);
\draw [decorate,decoration={snake,amplitude=1pt,segment length=3pt}] (2) -- (3) -- (4);
\draw [decorate,decoration={snake,amplitude=1pt,segment length=3pt}] (0.5,0.33) -- (1,0.33);
\draw [decorate,decoration={snake,amplitude=1pt,segment length=3pt}] (0.5,0.66) -- (1.5,0.66);
\end{tikzpicture}}
\Big)\Lc\tilde g_0^N,
\end{equation*}
which can be bounded using contour deformation together with Proposition~\ref{prop:box-anal} similarly as in Section~\ref{sec:case-m02}, and we omit the details. We conclude for all $0\le m\le m_0$, for $t_N=N$,
\begin{equation}
\3\Lc R_m^{N,m_0}\3\,\lesssim\, N^{-\frac1{12}}\|\<\nabla_{v_0}\>^{4m_0+10} \mathfrak g\|.
\end{equation}
Combining these bounds on the remainder terms and applying Lemma~\ref{lem:err-est0} yields
\[\3\Lc g_0^{N,m_0}-\Lc\tilde g_0^{N}\3\,\lesssim\, N^{-\frac1{12}}\|\<\nabla_{v_0}\>^{4m_0+10} \mathfrak g\|.\]
Together with Lemma~\ref{lem:kin-lim}, this concludes the proof.
\qed

\section*{Acknowledgements}
The authors acknowledge financial support from the European Union (ERC, PASTIS, Grant Agreement n$^\circ$101075879).\footnote{{Views and opinions expressed are however those of the authors only and do not necessarily reflect those of the European Union or the European Research Council Executive Agency. Neither the European Union nor the granting authority can be held responsible for them.}}

\bibliographystyle{plain}
\bibliography{biblio}

\begin{thebibliography}{10}

\bibitem{BGSR-17}
T.~Bodineau, I.~Gallagher, and L.~Saint-Raymond.
\newblock From hard sphere dynamics to the {S}tokes-{F}ourier equations: an
  {$L^2$} analysis of the {B}oltzmann-{G}rad limit.
\newblock {\em Ann. PDE}, 3(1):Paper No. 2, 118, 2017.

\bibitem{BLB}
T.~Bodineau and P.~Le~Bris.
\newblock Private communication.
\newblock 2025.

\bibitem{Deng-Hani}
Y.~Deng and Z.~Hani.
\newblock Full derivation of the wave kinetic equation.
\newblock {\em Invent. Math.}, 233:543--724, 2023.

\bibitem{MD-21}
M.~Duerinckx.
\newblock {On the Size of Chaos via Glauber Calculus in the Classical
  Mean-Field Dynamics}.
\newblock {\em Commun. Math. Phys.}, 382:613--653, 2021.

\bibitem{DSR-21}
M.~Duerinckx and L.~Saint-Raymond.
\newblock Lenard-{B}alescu correction to mean-field theory.
\newblock {\em Probab. Math. Phys.}, 2(1):27--69, 2021.

\bibitem{DW-23}
M.~Duerinckx and R.~Winter.
\newblock Well-posedness of the {L}enard-{B}alescu equation with smooth
  interactions.
\newblock {\em Arch. Ration. Mech. Anal.}, 247(4):Paper No. 71, 52, 2023.

\bibitem{Piasecki-81}
J.~Piasecki.
\newblock Self-{D}iffusion in {F}luids with {W}eak {L}ong-{R}ange {F}orces.
\newblock {\em J. Stat. Phys.}, 26(2):375--396, 1981.

\bibitem{PS-87}
J.~Piasecki and G.~Szamel.
\newblock Stochastic dynamics of a test particle in fluids with weak long-range
  forces.
\newblock {\em Physica A}, 143:114--122, 1987.

\bibitem{RR-60}
N.~Rostoker and M.~N. Rosenbluth.
\newblock {T}est {P}articles in a {C}ompletely {I}onized {P}lasma.
\newblock {\em Phys. Fluids}, 3:1--14, 1960.

\bibitem{TGM-64}
D.~A. Tidman, R.~L. Guernsey, and D.~Montgomery.
\newblock {``{T}est {P}article'' {P}roblem for an {E}quilibrium {P}lasma}.
\newblock {\em Phys. Fluids}, 7:1089--1091, 1964.

\bibitem{VW-18}
J.~J.~L. Vel\'{a}zquez and R.~Winter.
\newblock The two-particle correlation function for systems with long-range
  interactions.
\newblock {\em J. Stat. Phys.}, 173(1):1--41, 2018.

\bibitem{Winter-21}
R.~Winter.
\newblock Convergence to the {L}andau equation from the truncated {BBGKY}
  hierarchy in the weak-coupling limit.
\newblock {\em J. Differential Equations}, 283:1--36, 2021.

\end{thebibliography}

\end{document}